\newtheorem{lemma}{Lemma}
\numberwithin{defn}{section}
\numberwithin{property}{section}
\numberwithin{lemma}{section}
\numberwithin{cor}{section}
\numberwithin{equation}{section}
\newcommand{\zEps}{\ensuremath{z_\epsilon}}
\newcommand{\indFct}[1]{\ensuremath{\mathbbm{I}\left(#1\right)}}
\newcommand{\E}[1]{\ensuremath{\mathbbm{E}\left(#1\right)}}
\newcommand{\Var}[1]{\ensuremath{\textrm{Var}\left(#1\right)}}
\newcommand{\invlogit}[1]{\ensuremath{\psi^{-1}\left(#1\right)}}
\newcommand{\vM}[3]{\ensuremath{f_{\mathcal{V}}\left(#1 \mid #2, #3\right)}}
\newcommand{\PN}[3]{\ensuremath{f_{N}\left(#1 \mid #2, #3\right)}}
\renewcommand{\v}[1]{\ensuremath{\bm{#1}}}
\long\def\comment#1{}
\begin{document}

% \begin{frontmatter}
\title{Modeling Random Directions in 2D Simplex Data}

\author{
  Rayleigh Lei\\
  Department of Statistics\\
  University of Michigan\\
  Ann Arbor, MI 48109 \\
  \texttt{rayleigh@umich.edu} \\
  %% examples of more authors
  \And
 XuanLong Nguyen \\
  Department of Statistics\\
  University of Michigan\\
  Ann Arbor, MI 48109 \\
  \texttt{xuanlong@umich.edu} \\
  %% \AND
  %% Coauthor \\
  %% Affiliation \\
  %% Address \\
  %% \texttt{email} \\
  %% \And
  %% Coauthor \\
  %% Affiliation \\
  %% Address \\
  %% \texttt{email} \\
  %% \And
  %% Coauthor \\
  %% Affiliation \\
  %% Address \\
  %% \texttt{email} \\
}

\maketitle

\begin{abstract}
We propose models and algorithms for learning about random directions in two-dimensional simplex data, and apply our methods to the study of income level proportions and their changes over time in a geostatistical area. There are several notable challenges in the analysis of simplex-valued data: the measurements must respect the simplex constraint and the changes exhibit spatiotemporal smoothness while allowing for possible heterogeneous behaviors. To that end, we propose Bayesian models that rely on and expand upon building blocks in circular and spatial statistics by exploiting a suitable transformation based on the polar coordinates for circular data. Our models also account for spatial correlation across locations in the simplex and the heterogeneous patterns via mixture modeling. We describe some properties of the models and model fitting via MCMC techniques. Our models and methods are illustrated via a thorough simulation study, and applied to an analysis of movements and trends of income categories using the Home Mortgage Disclosure Act data. 
% While we do not discuss it in much details, these models can then be extended for random directions in higher dimensions.
%Modeling the year to year changes observed in income level proportions in Census tracts from Los Angeles County can be challenging because the proportions lies in a simplex. Fortunately, the change observed in the income level proportions can be thought of as a random movement, which can be broken down into the direction and magnitude. If we focus on modeling the random direction, there are still difficulties in this approach. Indeed, we cannot necessarily assume uniform random direction because the mean direction might differ depending on the point’s location in a simplex. Instead, we propose a model that spatially correlates the observed average pattern of the random direction. To enable the model to capture multiple patterns observed in the random directions, we also introduce mixture versions of our model. We then describe some properties of the model, fit our models to the random directions observed in Home Mortgage Disclosure Act data, and interpret the results from the fitted models.
\end{abstract}

\section{Introduction}
In many real-world problems, some or all the primary quantities of interest are non-negative proportions that sum up to one and thus lie in a probability simplex. For instance, to study a collection of rocks and sediments, geologists might analyze the chemical, mineral, and/or fossil percentages of each sample to determine the effects of various natural processes or which samples are related \citep{AitchisonStatisticalAnalysisCompositional1982}. As another example, microbiolologists might examine microbiome data from a set of volunteers because of the microbiome's effect on human health \citep{AllabandEtAlMicrobiome101Studying2019}. Here, the microbiome refers to the proportions of various microbes in the human gut \citep{AllabandEtAlMicrobiome101Studying2019}. 

Modeling the variation and change of measurements that lie on the simplex can be challenging. To be concrete, we let $\Delta^D$ denote the $D$-dimensional probability simplex, i.e., the subset of elements in $\mathbbm{R}^{D + 1}$ whose components are non-negative and sum to one. Given a data set represented by a collection of random samples $x_1,\ldots,x_N \in \Delta^D$, each data point $x_{i}$ is composed of the $D$ components $x_{i,j}$ for $j=1,\ldots, D$. %In other words, the data set has a collection of random variables, $x_1, x_2, \ldots x_N \in \mathbbm{R}^{D+}$ for some $N \in \mathbbm{N}$ such that $\sum_j x_{i, j} = 1$ for any $i = 1, 2, \ldots, N$. 
The most immediate difficulty is the simplex constraint imposed by the proportions because they must sum up to one \citep{vpg2007}. For this constraint to be maintained, a change in one proportion necessarily involves a change in the other proportions. If this is not handled correctly, the model might detect what Pearson called "spurious correlation" \citep{PearsonMathematicalContributionsTheory1896}. Another difficulty may arise if any of the proportions are zero. Customary composition data approaches rely on a standard transformation technique, e.g., taking the ratio of $x_{i, j}$ and the geometric mean of $x_i$ or the ratio of $x_{i, j}$ and $x_{i, J}$ for $j \in 1, 2, \ldots D$ and for some reference $J \in 1, 2, \ldots D$ and then applying the log transform to these ratios \citep{vpg2007}. As a result, these "log-ratio" transformations convert the data from $\Delta^D$ to $\mathbbm{R}^{D - 1}$. Note that if any of the proportions are zero, then some or all of these ratios are either $0$ or $\infty$ and the transformation becomes undefined. 
% \LN{what is $x_J?$}
%Finally, because of the relevance of this problem to other fields, it should be easy to explain the results of the model to non-statisticians.

%Finally, as we will show later in this paper, understanding the random direction can still give a sense of what the data is trending towards and be interpreted accordingly.
%
% \RL{Should we split the paragraph here? I feel that this one paragraph has too many ideas.} \LN{I think this is ok; I intentionally merged your previous two paragraphs as I felt they were too long; better to possibly shorten this paragraph to make it more concise, less verbose}

To deal with this difficulty, we might model how the data shift as random movements of points within the simplex. Despite the potential challenge in modeling movements that respect the simplicial constraint, such an approach has several fundamental advantages. It would avoid picking up "spurious correlation" because all components of an observation would be dealt with simultaneously. Moreover, while the range of movements is restricted, boundary points could be treated in a similar framework as interior points.

%\noindent\textbf{Modeling random directions}
To help develop such a model for the random movement, this paper will examine an important aspect: modeling the random directions of such movements. This is a key step for two main reasons. First, at a high level, any movement in the simplex, including points on the boundary, can be decomposed into the direction and the magnitude. For a movement to respect the simplex constraint, the magnitude of a point's movement is bounded by the maximum distance between the point and the boundary point in a given direction. Thus, not only is there greater flexibility in modeling the two aspects separately, but also doing so allows us to model the magnitude with greater care. The second reason is that even this modeling task presents potential non-trivial difficulties. An illustrative example is a data set of income proportions for all Census tracts or "neighborhoods" in Los Angeles County from 1990 to 2010. The proportions have been binned into three categories: \$0 to \$100,000, \$100,000 to \$200,000, and greater than \$200,000. Using the Census tract IDs, we can track how the income proportions for a neighborhood changes from one year to the next. As we will demonstrate later, this shift appears dependent on the current income proportions and not on the neighborhood's physical location. Then, Figure \ref{real_data_example} shows that we cannot assume a uniform random direction in this change because there may be clearly preferred directions. We need a way to assign probability to these random directions. In order to start thinking about how to accomplish this, we first examine the 2D simplex case, which shall be the main focus of our paper. There, the simplex is essentially a two dimensional surface for our points to move in. It suffices to assume that these random directions are random angles and lie in the interval $[0, 2\pi)$. One naive approach to assign probability to such an interval is by assigning probability to some random variable $z \in \mathbbm{R}$ and using the inverse logit function to transform $z$ to that interval. Such an approach would be problematic because the endpoints of the interval and values near the end points are mapped near their respective $\pm\infty$ and are far apart. Meanwhile, the end points for the random angle's interval, $0$ and $2\pi$, denote the same direction and should not be so far apart. The issue here, mathematically, is that the inverse logit function that maps $\mathbb{R}$ to the angles $[0,2\pi)$, while continuous and one-to-one, does not possess a continuous inverse.
% note that because there are only three income categories for our data set,

\noindent\textbf{Existing work}
Several useful techniques for modeling (random) angles and directions come from directional statistics \citep{MardiaJuppDirectionalStatistics2010}. In particular, we can think of the random angle as circular data because these angles can be mapped to a point on the unit circle and circular data are observations that lie on a unit circle. There are several basic distributions on circular data that we can utilize for our purposes: \textit{the von Mises distribution} and \textit{the projected normal distribution} in two dimensions. The former can be loosely viewed as the circular version of the normal distribution \citep{LeeCircularData2010}. Meanwhile, the latter is a bivariate Gaussian distribution that is transformed to a distribution on the circle using the usual polar coordinate transform and integrating out the radius \citep{MardiaJuppDirectionalStatistics2010}. There is another distribution that we might use. As seen in Figure \ref{real_data_example}, it is possible that nearby locations move in mostly similar directions. In other words, the random directions, depending on where the start points are, appear to be spatially correlated. Because the change in the pattern of random direction is also smooth as we pass across the simplex in this figure, an appropriate model might be the Gaussian process \citep{RasmussenWilliamsGaussianProcessesMachine2006}. Even though we cannot use the inverse logit function to transform a Gaussian process because of what we discussed earlier about angles, we can use the circular version of the Gaussian process, i.e. the projected Gaussian process \citep{WangGelfandDirectionalDataAnalysis2013}. %Then, 

% There are at least two basic distributions on circular data that we can utilize for our purposes: \textit{the von Mises distribution} and \textit{the projected normal distribution} in two dimensions. Valid on any interval of length $2\pi$ and defined by two parameters, a mean angle and a concentration parameter, the von Mises distribution can be loosely viewed as the circular version of the normal distribution because the von Mises distribution is symmetric and unimodal around its mean angle \citep{LeeCircularData2010}. In addition, the concentration parameter controls how tightly the distribution is centered around the mean angle. The von Mises distribution is the circular version because while the distribution is defined for any interval of length $2\pi$, angles near the end points are treated as close to each other. On the other hand, the projected normal distribution in two dimensions transforms a bivariate Gaussian distribution to a distribution on the circle using the usual polar coordinate transform and integrating out the radius \citep{MardiaJuppDirectionalStatistics2010}. Since the underlying transformation is homeomorphic (continuous, one-to-one and has a continuous inverse), the projected normal distribution is a reasonably good modeling choice, even if its distributional properties are somewhat less explicit in comparison to the von Mises distribution. 

\begin{figure}[!tbp]
\centering
\begin{subfigure}{.24\textwidth}
\centering
\includegraphics[width = 0.9\textwidth]{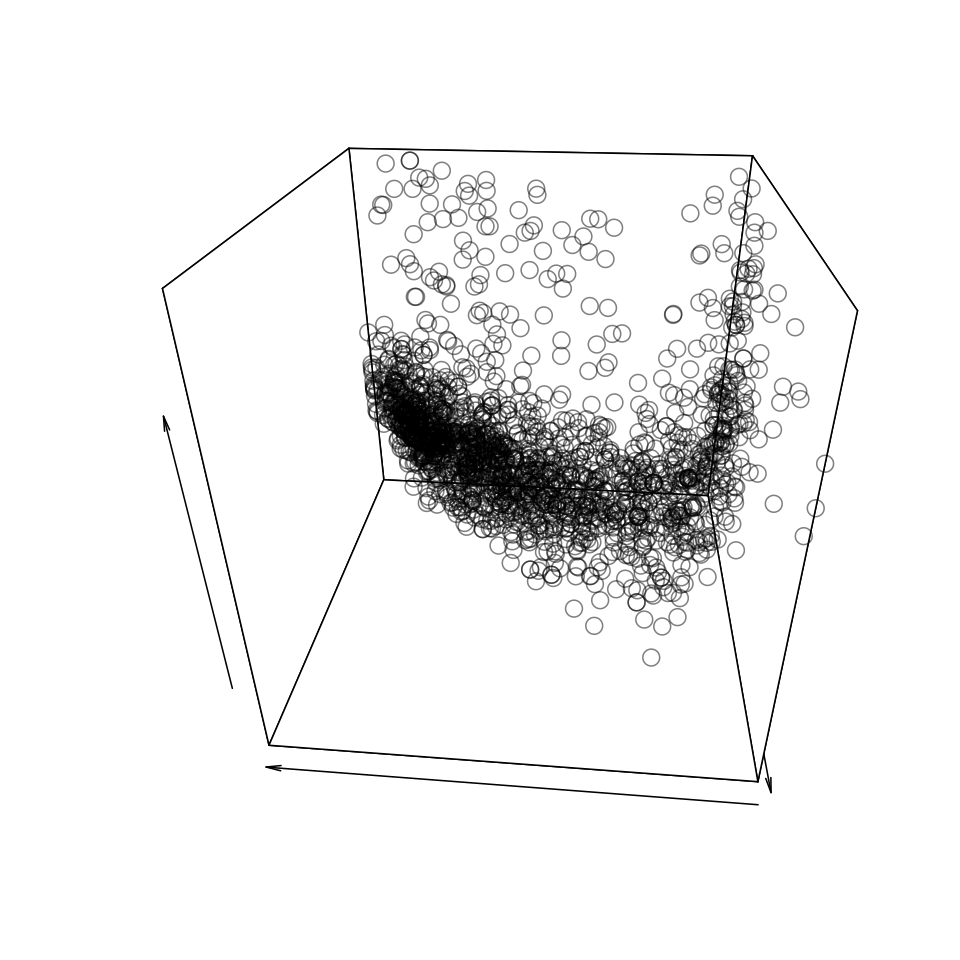}
\caption{2003-2004}
\label{fig:real_data_example_one_comp}
\end{subfigure}
\begin{subfigure}{.24\textwidth}
\centering
\includegraphics[width = 0.9\textwidth]{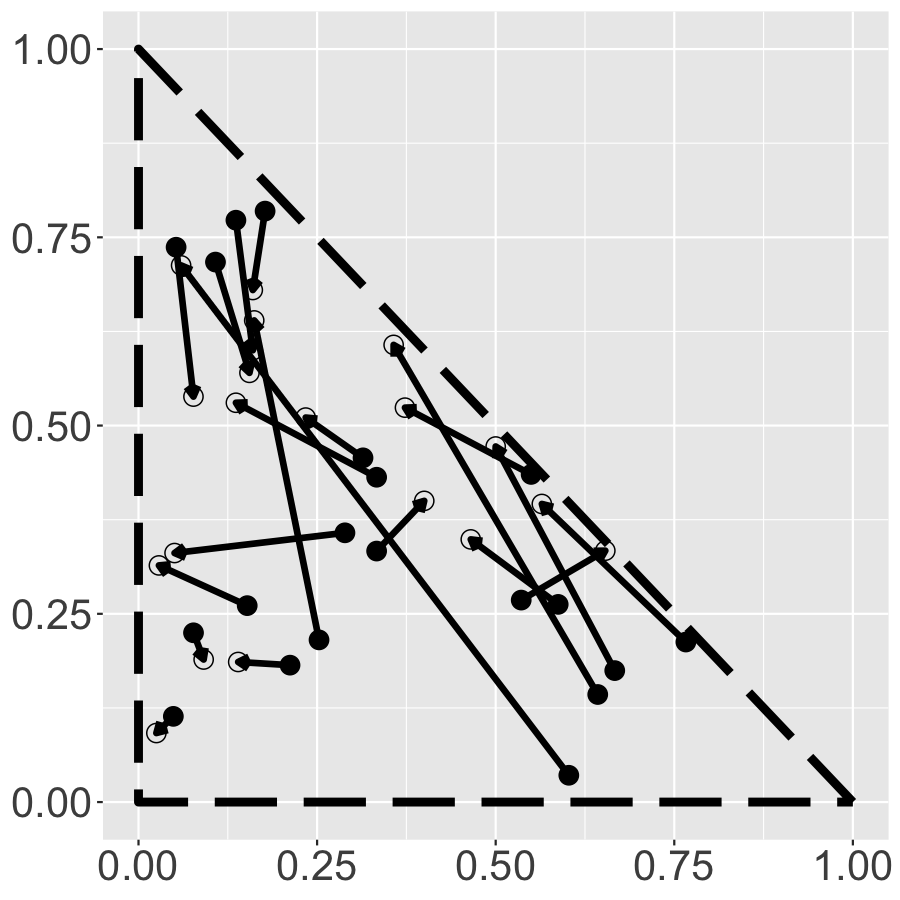}
\caption{2003-2004}
\label{fig:real_data_example_one_comp_start_end}
\end{subfigure}
\begin{subfigure}{.24\textwidth}
\centering
\includegraphics[width = 0.9\textwidth]{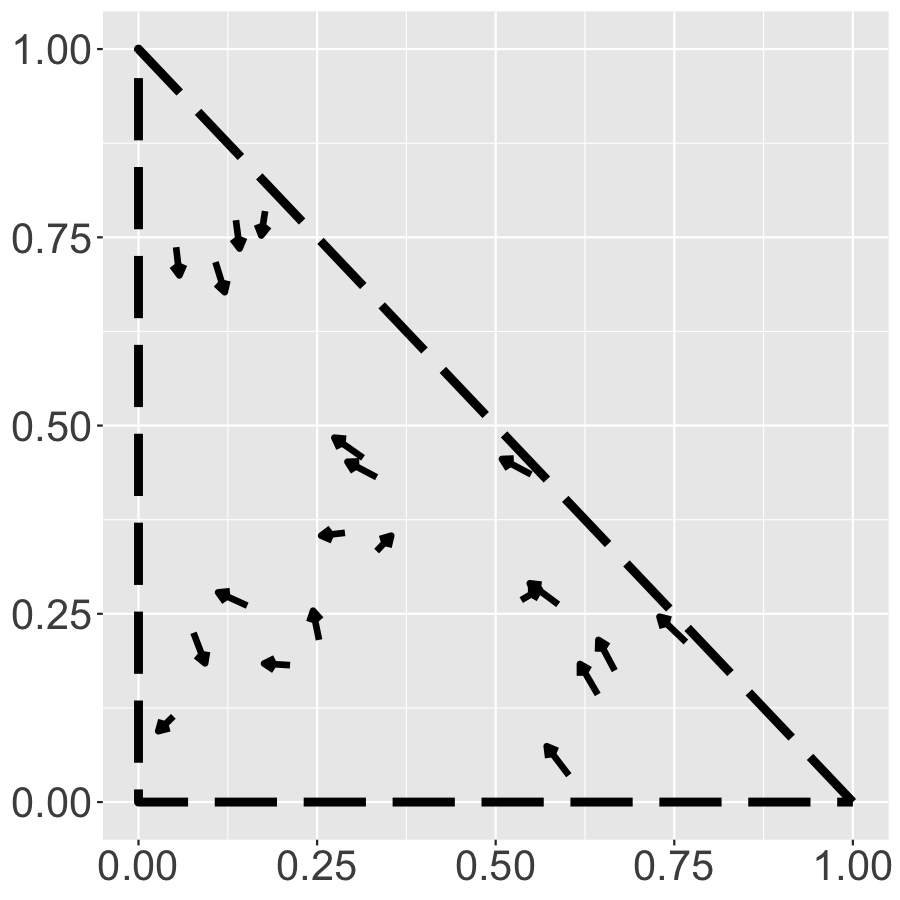}
\caption{2003-2004}
\label{fig:real_data_example_one_comp_subset}
\end{subfigure}
\begin{subfigure}{.24\textwidth}
\centering
\includegraphics[width = 0.9\textwidth]{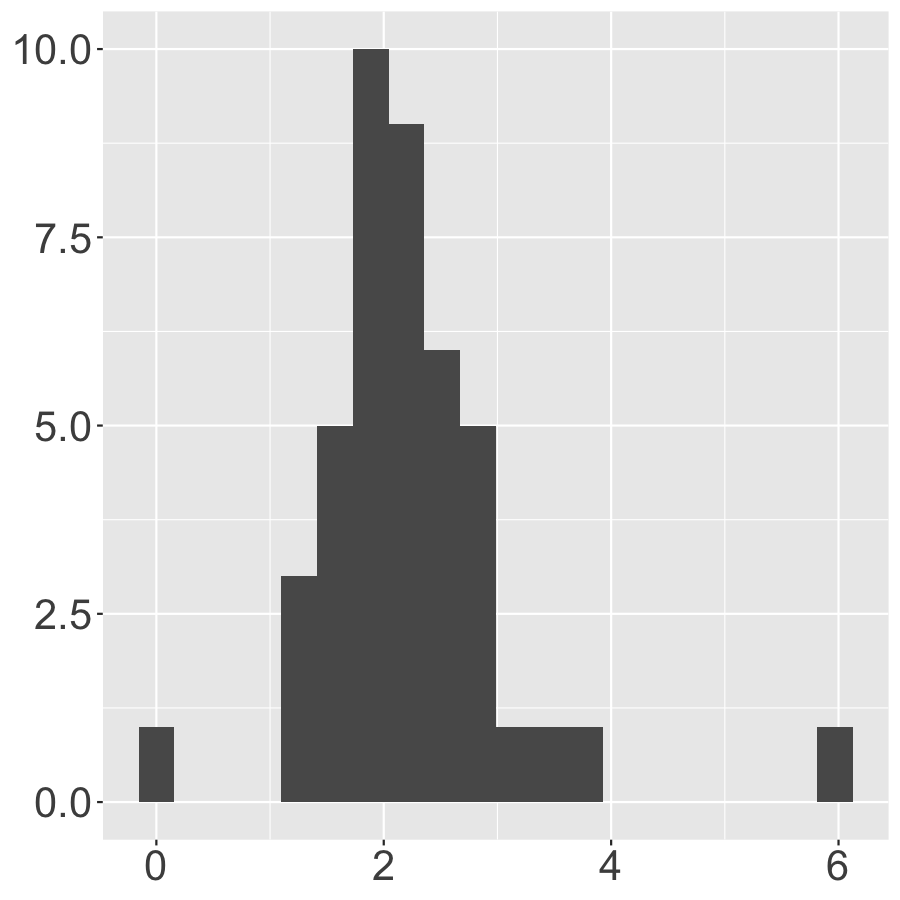}
\caption{2003-2004}
\label{fig:real_data_example_one_comp_hist}
\end{subfigure}\\
\begin{subfigure}{.24\textwidth}
\centering
\includegraphics[width = 0.9\textwidth]{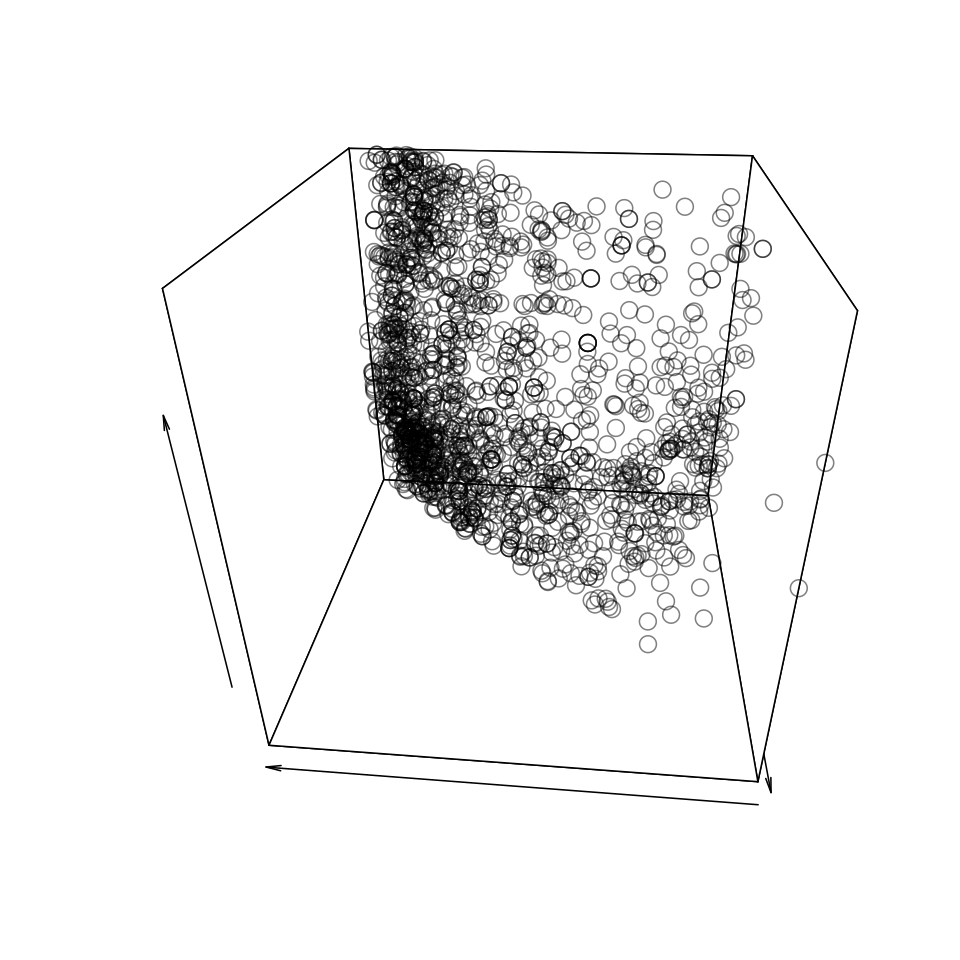}
\caption{1998-1999}
\label{fig:real_data_example_mixture}
\end{subfigure}
\begin{subfigure}{.24\textwidth}
\centering
\includegraphics[width = 0.9\textwidth]{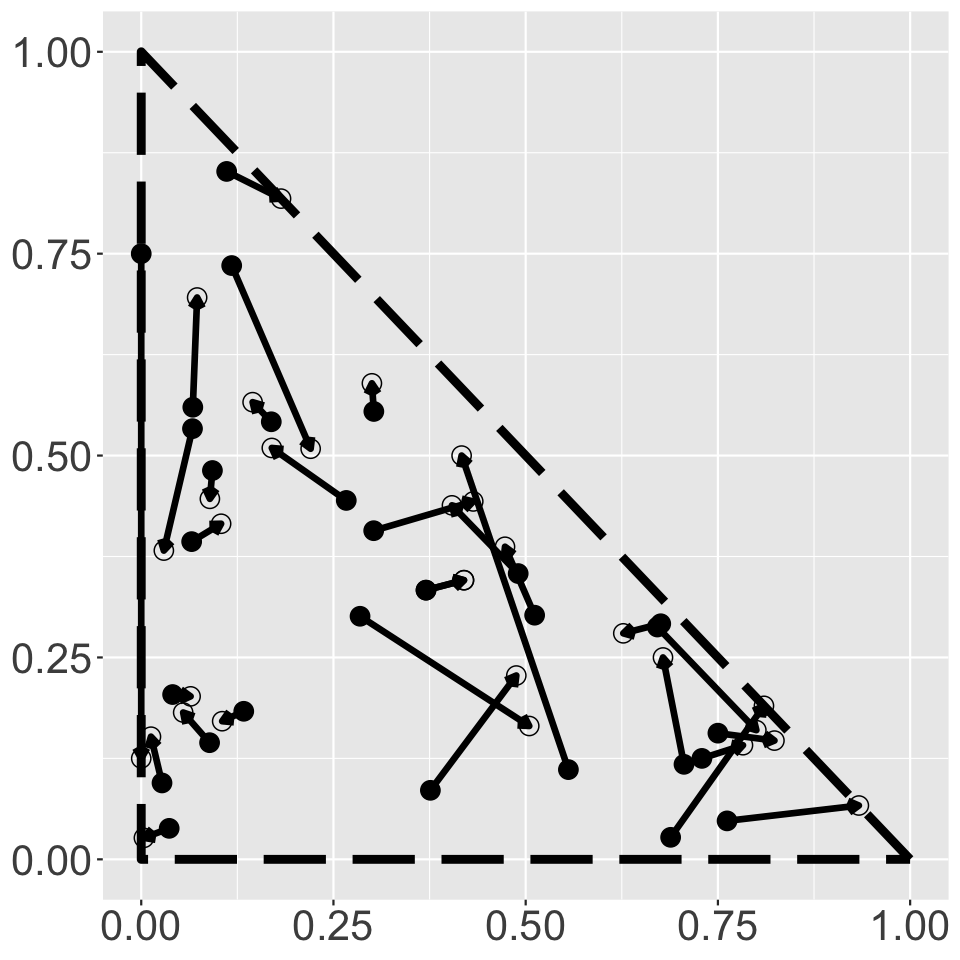}
\caption{1998-1999}
\label{fig:real_data_example_mixture_start_end}
\end{subfigure}
\begin{subfigure}{.24\textwidth}
\centering
\includegraphics[width = 0.9\textwidth]{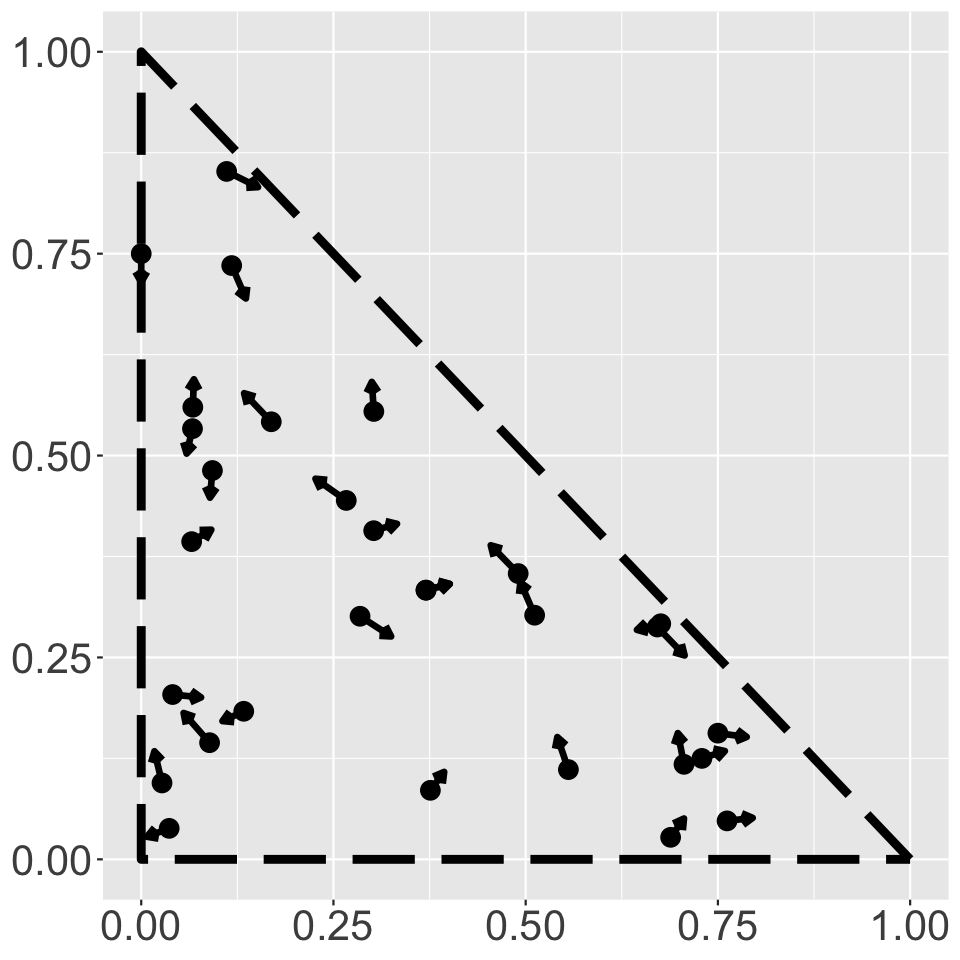}
\caption{1998-1999}
\label{fig:real_data_example_mixture_subset}
\end{subfigure}
\begin{subfigure}{.24\textwidth}
\centering
\includegraphics[width = 0.9\textwidth]{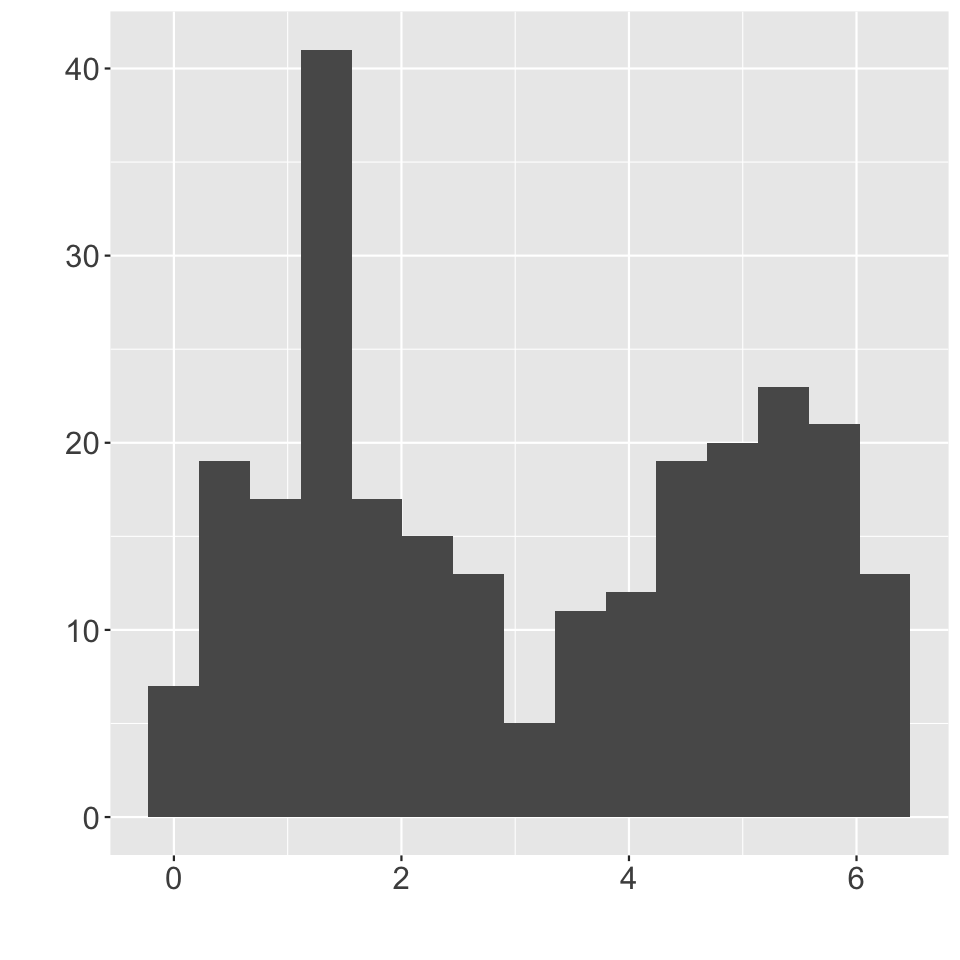}
\caption{1998-1999}
\label{fig:real_data_example_mixture_hist}
\end{subfigure}
\caption{Random direction plots for different years. The $x-y$ coordinates represent the income proportion in the first two categories. If the plot is three dimensional, the z coordinate is the random direction divided by $2\pi$ to lie in $[0, 1)$. The two leftmost plots show all random directions for the years listed below. The middle plots show the movement for a subset of 20-30 "locations" and the corresponding extracted directions. The two rightmost plots display a histogram of the random directions associated with locations that are within $0.05$ $L_2$-distance of $(0.39, 0.56, 0.05)$ for 2003-2004 and of $(0.85, 0.10, 0.05)$ for 2008-2009.}
\label{real_data_example}
\end{figure}

\noindent \textbf{Our approach} To model the random direction of movements for simplex-valued data, we will leverage and expand upon the building blocks advocated by~\cite{MardiaJuppDirectionalStatistics2010,RasmussenWilliamsGaussianProcessesMachine2006} and the techniques of~\cite{WangGelfandDirectionalDataAnalysis2013,WangGelfandModelingSpaceSpaceTime2014}. As the first step of our approach, we assume that the observed random directions are distributed according to a von Mises distribution. We then correlate the von Mises distributions' means with a projected Gaussian process. In other words, this is a circular version of the Gaussian process with Gaussian white noise. This modeling choice seems appropriate because there appears to be an unimodal empirical distribution for the random directions associated with the nearby locations for any location in the simplex. For example, Figure \ref{fig:real_data_example_one_comp_hist} shows such a distribution for the location in the simplex, $(0.39, 0.56, 0.05)$. Such a choice also makes sense because the mean of the von Mises distribution can be thought of as a vector on the unit circle, which is what a projected Gaussian process outputs. In addition, it allows both the prior and likelihood to recognize the geometry of angles. Not only can this basic model harness the power of Gaussian processes to spatially correlate random directions of nearby locations and handle noisy directions, but also it does so in an interpretable way. After fitting our model, we can make a posterior prediction of the mean preferred direction for any location. 

Extensions to the basic model described above are required to account for "heterogenous" patterns for random directions, which are evidently illustrated in Figure \ref{fig:real_data_example_mixture}. Because the basic approach proves to be inadequate, the modeling extensions form a substantial portion of this paper. To capture such heterogeneity, we shall extend the basic model via several mixture modeling techniques. In particular, the observation is distributed according to one of $K\geq 1$ von Mises distributions with some probability. Each of the $K$ distributions has a projected Gaussian process to correlate its mean. This simple extension still retains the advantages mentioned previously because each component now represents a mean preferred direction. We also discuss another modeling alternative. While there still are $K$ von Mises distributions, we extend the inverse logit to transform $K - 1$ Gaussian processes to the mixing probability for each component. 

% While the proposed models can capture complex patterns on the simplex, model fitting using Markov Chain Monte Carlo presents non-trivial challenges. Not only does the sampler have to deal with the covariance matrix of Gaussian processes or projected Gaussian process, it also has to deal with how to propose random angles using this information while taking into consideration the geometry of the sample space for angles. In particular, for the models that correlate the mean angle using the projected Gaussian process, it makes sense to propose new angles via a proposal in $\mathbbm{R}^2$ and then transforming that proposal to an angle. However, an accepted proposed move in $\mathbbm{R}^2$ may not necessarily result in new angles being sampled. A marginal move along a ray from the origin in $\mathbbm{R}^2$ might be accepted, but the angle remains unchanged. Fortunately, we can somewhat remedy this issue with Elliptical Slice Sampling \citep{MurrayEtAlEllipticalSliceSampling2010}. Indeed, using a combination of that approach and Hamiltonian Monte Carlo as implemented in Stan \citep{StanDevelopmentTeam.StanCoreLibrary2018}, we are able to fit these models correctly and efficiently. To help with sampling, we also introduce a regularized Expectation Maximization algorithm for all our models.

We have applied and evaluated these models to a simulated data set and the aforementioned data set of income proportions. For model selection, we compare the fitted models' posterior predictive probabilities calculated on data withheld from each year. The chosen models enable us to discover several patterns of interest on the year-to-year income proportion movements. There appear to be four phases based on the patterns we discover, and interestingly enough, these phases correspond to the economic cycles observed during that time period. The first two phases charts the end of one cycle, the third phase follows the dot com bubble and the recession after the bubble, and the last phase represents the subsequent housing bubble and the housing market crash. Because we can also assign meaning to the random directions with respect to the income categories, interpreting the results from the models gives us further information about the year to year change during these phases. 
% \LN{can we say more about using the model to make posterior prediction and evaluation in this data example?} 
% \RL{We just use the models to calculate the posterior predictive probability on withheld data. Other than that, we don't make any predictions.}

% These models can then be extended to random directions in higher dimensions. However, because there already is a lot to cover, we focus on random directions in two dimensions and organize the rest of the paper in the following manner. 
% Some properties of these transformed distributions will also be discussed. We extend the projected bivariate normal distribution to the projected Gaussian process and introduce higher dimensional analogues of these transformed distributions. While we do not discuss how to do so in this paper, the latter suggests how we can extend our models to random directions in higher dimensions. \LN{what are "other manifolds?}
These modeling techniques can also be extended to random directions in higher dimensions. In addition, it might be applied to data that lie on manifolds other than the simplex because any change in the data can be represented as a random movement, which can be decomposed accordingly.  %However, introducing, understanding, and fitting these models already requires significant work. 
However, in this paper we shall restrict our attention to two dimensional simplex data in order to focus on the heterogeneous modeling and inference in the circular representation, which is already quite intricate. The rest of the paper is organized as follows. In Section \ref{background}, we describe how the polar coordinate transformation can be used to transform distributions in $\mathbbm{R}^2$ to distributions on the circle, such as the von Mises distribution or the projected normal distribution in two dimensions. We also discuss some properties of these transformed distributions and extend the projected bivariate normal distribution to the projected Gaussian process. While it is not a focus of this paper, we then introduce the higher dimension version of these distributions. Next, we more formally introduce our models in the Section \ref{Model} and describe a few properties of these models using ideas from directional statistics. We then briefly examine how to fit these models in Section \ref{sec:model_fitting}. We fit them to simulated and real data in Section \ref{results} and discuss these results. Section~\ref{sec:conclusion} discusses possible extensions.

\section{Polar coordinate transformations for circular data}
\label{background}
In this section, we examine how the polar coordinate transformation can be used to transform distributions in Euclidean domains, such as $\mathbbm{R}^2$, to distributions on non-Euclidean domains, such as the unit circle. %such as the von Mises distribution or the projected normal distribution in two dimensions. 
Because our paper is focused on random directions in $\Delta^2$, we begin by discussing tools from circular statistics and then introduce the higher dimension analogues. In circular statistics, the data are observations that lie on an unit circle \citep{MardiaJuppDirectionalStatistics2010}, so they can be represented by their angles, which will be denoted by random variable $Y\in [0,2\pi)$. 

%Although the idea can be extended to specifications of random directions in higher dimensions, in this paper, we focus on random directions in the three-dimensional simplex $\Delta^2$, for which the tools from circular statistics are most relevant. In this paper, we are interested in modeling these random angles, to be denoted by $Y\in [0,2\pi)$. 
%Because they are the observations, we will use $y$ to represent them. While any interval of length $2\pi$ is valid, we assume that $y \in [0, 2\pi)$.

Given any $(z_1,z_2) \in \mathbbm{R}^2 \setminus \{0\}$, there is a unique polar representation in terms of the corresponding radius and angles, namely $(r,y) \in \mathbbm{R}^+ \times [0,2\pi)$, where $z_1 = r\cos y$ and $z_2 = r\sin y$. In fact, this representation establishes a bijection between $(r, y) \in \mathbbm{R}^+ \times [0,2\pi)$ and $(z_1, z_2) \in \mathbbm{R}^2\setminus \{0\}$. This suggests a natural recipe for assigning probability for (random) angles $y \in [0,2\pi)$: this can be done by endowing a distribution for $(z_1,z_2)$, which induces a distribution for the pair $(r,y)$. We obtain a valid distribution on angles $[0,2\pi)$ by either \emph{marginalizing} out $r$ or \emph{conditioning} on a particular value of $r$.

The aforementioned bijection between $(z_1,z_2)$ and $(r,y)$ induces a function that maps $(z_1,z_2)\in \mathbbm{R}^2\setminus \{0\}$ to $y\in [0,2\pi)$. This function is denoted by $y = \arctan^*(z_1, z_2)$, where the notation $\arctan^*$ is used to indicate a modified version of the standard arctan function. Recall that the standard $\textrm{arctan}$ function maps the real line $\mathbb{R}$ to the open interval $(-\pi/2,\pi/2)$. We may extend this continuous function to the closed domain $\overline{\mathbb{R}} := [-\infty,+\infty]$ and the closed range of angles $[0,2\pi)$. The function $\arctan^*$ satisfies the following (which is commonly used as a definition): for $(z_1,z_2) \in \mathbb{R}^2\setminus \{0\}$,
\begin{align}
    \textrm{arctan}^*(z_1, z_2) &=
    \begin{cases}
        \textrm{arctan}(\frac{z_2}{z_1}) & z_1 \geq 0, z_2 \geq 0\\
        \textrm{arctan}(\frac{z_2}{z_1}) + 2\pi & z_1 \geq 0, z_2 < 0 \\
        \textrm{arctan}(\frac{z_2}{z_1}) + \pi & z_1 < 0.
    \end{cases}
\label{fct:arctan_star}
\end{align}

%We can then assign probability to the angles in the following manner. First, we introduce random variables, $z_1$, $z_2$, and $r$ such that $(z_1, z_2) \in \mathbbm{R}^{2} \setminus \{0\}$ and $r \in \mathbbm{R}^+$. Then, we use the usual transformation, i.e. $z_1 = r\cos y$ and $z_2 = r\sin y$, to define a bijection between $(r, y) \in \mathbbm{R}^+ \times [0,2\pi)$ and $(z_1, z_2) \in \mathbbm{R}^2\setminus \{0\}$. As a result of this transformation, we obtain $y = \arctan^*(z_1, z_2)$, where the notation $\arctan^*$ is used to indicate a modified version of the usual arctan function. Equivalently, let $\textrm{arctan}$ be the standard function mapping $\mathbbm{R}$ to the open interval $(-\pi/2,\pi/2)$, the modified version is defined as following:
%\begin{align}
%    \textrm{arctan}^*(z_1, z_2) &=
%    \begin{cases}
%        \textrm{arctan}(\frac{z_2}{z_1}) & z_1 \geq 0, z_2 \geq 0\\
%        \textrm{arctan}(\frac{z_2}{z_1}) + \pi & z_1 < 0\\
%        \textrm{arctan}(\frac{z_2}{z_1}) + 2\pi & z_1 \geq 0, z_2 < 0.
%    \end{cases}
%\label{fct:arctan_star}
%\end{align}
%This bijection also allows us to transform a probability distribution defined on $\mathbbm{R}^{2} \setminus {0}$ or $\mathbbm{R}^{2}$ to a probability distribution defined on $(r, y)$. If we set $r$ to a constant value or integrate out $r$, this becomes a probability distribution on the random angles, $y$.

Using these representations, we can derive the von Mises and projected normal distributions, which will be useful building blocks in our subsequent modeling.
%We can also extend both the framework and the latter distribution to obtain the \textit{projected Gaussian process}.

\subsection{von Mises Distributions}
\label{ssection:vM}
A natural choice for constructing a probability distribution of $(z_1,z_2) \in \mathbbm{R}^{2}$ is the bivariate Gaussian distribution with mean, $(\cos(\alpha), \sin(\alpha))^T$, and variance, $\rho^{-1}\mathbbm{I}_{2 \cross 2}$, for some $\alpha \in [0, 2\pi)$ and $\rho \in \mathbbm{R}^+$. %While the choice of parameters, $\alpha$ and $\rho$, may seem arbitrary, these parameters do have special meaning that we will discuss in the next paragraph. 
The precise meaning of parameters $\alpha$ and $\rho$ will be clear shortly. Applying change of variables, we obtain that for $r>0$,
\begin{align*}
    p(r, y \mid (\cos(\alpha), \sin(\alpha))^T, \rho^{-1}\mathbbm{I}_{2 \times 2}) &= r\frac{\rho}{2\pi}\exp\left(-\frac{\rho}{2}\big((r\cos(y) - \cos(\alpha))^2 + \right.\\
    & \qquad\qquad\qquad\qquad (r\sin(y) - \sin(\alpha))^2\big)\bigg)\\ 
    &= \frac{r\rho}{2\pi}\exp\left(-\frac{\rho}{2}(r^2 - 2r\cos(y - \alpha) + 1)\right)
\end{align*}
\[
p(r, y \mid (\cos(\alpha), \sin(\alpha))^T, \rho^{-1}\mathbbm{I}_{2 \cross 2}) = \frac{r\rho}{2\pi}\exp\left(-\frac{\rho}{2}(r^2 - 2r\cos(y - \alpha) + 1)\right).
\]
By conditioning on $r=1$, we arrive at the von Mises distribution on the interval $[0,2\pi)$ \citep{MardiaJuppDirectionalStatistics2010}:
\[p(y|r=1, \alpha, \rho) \propto 
\exp(\rho\cos(y-\alpha)).
\]

%If we set $r = 1$ because circular data lie on an unit circle, $p(1, y \mid (\cos(\alpha), \sin(\alpha))^T, \rho^{-1}\mathbbm{I}_{2 \cross 2})$ is proportional to the probability density function of the \textit{von Mises distribution} defined on the interval $[0, 2\pi)$ \citep{MardiaJuppDirectionalStatistics2010}.

%Even though the von Mises distribution is valid for any interval of length $2\pi$ on the real line, we will focus on the interval $[0, 2\pi)$ because of our earlier assumption that $y \in [0, 2\pi)$. 
In its original form, the von Mises distribution is defined by two parameters: a mean angle, $\alpha \in [0, 2\pi)$, and a concentration parameter, $\rho \in \mathbbm{R}^+$. Given these parameters, the von Mises distribution has a density function of the following form:
\begin{equation}
    \vM{y}{\alpha}{\rho} := \frac{\textrm{e}^{\rho \cos(y - \alpha)}}{2\pi I_0(\rho)}.
\end{equation}
Appearing in the normalizing constant, $I_0(\rho)$ is the modified Bessel function of the first kind and of order 0. In general, the modified Bessel function of the first kind and of order $n$ on the interval of $[0, 2\pi)$ is defined to be
\begin{equation}
    I_n(\rho) := \frac{1}{2\pi} \int_0^{2\pi} \cos(n y) \textrm{e}^{\rho \cos(y)} d y.
    \label{eq:bessel}    
\end{equation}

Since the domain of the angles, the unit circle, is non-Euclidean, one has to be careful when speaking of notions such as mean and variance. Here are several properties we wish to highlight. First, the distribution is unimodal and symmetric around its mean, $\alpha$, if $\rho > 0$. This makes sense because as discussed earlier, the density function is proportional and restricted to a bivariate Gaussian centered at a point on the unit circle. If $\rho = 0$, then the von Mises distribution becomes the uniform distribution on an interval of length $2\pi$. Next, the mean angle, $\alpha$, is different from the typical definition because for the von Mises distribution with $\rho > 0$, it is the $\alpha \in [0,2\pi)$ such that the following quantity
\begin{equation}
\E{e^{iY}} := \int e^{i y} \vM{y}{\alpha}{\rho} dy, 
\end{equation}
where $i$ denotes the complex number, satisfies the following identity
\begin{equation}
    \E{e^{iY}} = \frac{I_1(\rho)}{I_0(\rho)}e^{i\alpha}.
    \label{eq:vM_mean}
\end{equation}
Intuitively, the circular mean is a weighted average of points on the circle and the angles that correspond to them instead of the values of the angles. Indeed, if $f_{\mathcal{C}}(y)$ is some distribution on the circle, the circular mean in general is the $\alpha \in [0, 2\pi)$ such that
\begin{equation}
    \E{e^{iY}} := \int e^{i y} f_{\mathcal{C}}(y) dy
\end{equation}
satisfies the following equation for some $r \in \mathbbm{R}^+$,
\begin{equation}
    \E{e^{iY}} = r e^{i\alpha}.
    \label{eq:circ_mean}
\end{equation}
For convenience, we may refer to the circular mean as the mean or average later on. It should be obvious from context if we refer to the typical or circular mean.

Finally, despite the circular variance still using the mean direction $\alpha$, the circular variance is different as well. It is defined on the interval $[0, 2\pi)$ to be 
\begin{equation}
\Var{Y} := 1 - \int \cos(y - \alpha) f_{\mathcal{C}}(y) dy.
\label{eq:circ_var}
\end{equation}
We may also write $\Var{Y} = 1 - \E{\cos(Y - \alpha)}$ \citep{MardiaJuppDirectionalStatistics2010}, where the left hand side refers to the circular variance, and the right hand side makes use of the usual expectation (of a scalar random variable). It should be obvious from context which type of variance we are discussing.
%Note that we use the typical notation for variance as the notation for the circular variance. We may also refer to the circular variance as simply the variance. However, it should again be obvious from context if we are discussing the variance or the circular variance of a random variable. 
Due to how circular variance is defined, it takes values between 0 and 1. At a high level, if $Y$ is tightly concentrated around the mean direction $\alpha$, the variance will be close to zero. Conversely, if $Y$ is dispersed around the circle, $\E{\cos(Y - \alpha)}$ will be 0 so the variance will be 1. Because the variance indicates the spread of the data, $\E{\cos(Y - \alpha)}$ can be seen as a measure of concentration for the distribution $f_{\mathcal{C}}$. For a von Mises distribution, the circular variance is
\begin{equation}
\Var{Y} = 1 - \frac{I_1(\rho)}{I_0(\rho)}.
\label{eq:vM_var}
\end{equation}

%Look at 2.1 what if 2.1 is not integrable; r = 0?
\subsection{Projected Normal Distribution}
\label{ssection:pn2}
The von Mises distribution is not the only distribution that we can derive from transforming a bivariate Gaussian distribution to a distribution on random angles. Let us consider a bivariate Gaussian with any mean, $\v{\mu} \in \mathbbm{R}^2$, covariance matrix, $\Sigma$.
%and the exception that the probability at $(0, 0)$ is 0. Note that the bivariate Gaussian without the origin is still a valid probability distribution. 
% because the probability of any point is 0 for a continuous distribution. 
Then, if we apply the polar coordinate transformation to this distribution, we have that 
\begin{align*}
    p(r, y \mid \v{\mu}, \Sigma) &= \frac{r}{2\pi\abs{\Sigma}^{-\frac{1}{2}}} \exp\left(-\frac{1}{2}\big((r\cos(y), r\sin(y))^T - \v{\mu}\big)^T\Sigma^{-1} \big(r\cos(y), r\sin(y))^T - \v{\mu}\big)\right).
\end{align*}
The projected normal distribution in two dimensions immediately follows from this because it is the marginal distribution of $Y$, i.e.
\begin{equation}
\PN{y}{\v{\mu}}{\Sigma} = \int_{r > 0} p(r, y \mid \v{\mu}, \Sigma) dr.
\label{eq:pn_2}
\end{equation}
% In other words, the probability of an angle is the sum of the probability of a bivariate Gaussian along a ray of that angle from the origin.

Even though this integral is tractable \citep{MardiaJuppDirectionalStatistics2010}, we focus on the case $\Sigma = \mathbbm{I}_{2\times 2}$ because its properties are well-understood. Given $\v{\mu}$, let $\mu_0 \in \mathbbm{R}^+$ and $\alpha \in [0, 2\pi)$ be the polar coordinate transform of $\v{\mu}$. As shown by Wang and Gelfand \citep{WangGelfandDirectionalDataAnalysis2013}, the distribution in this scenario has the following closed-form expression:
\begin{align*}
    \PN{y}{\v{\mu}}{\mathbbm{I}_{2\times 2}} &= \int_{r > 0} \frac{r}{2\pi} \exp\left(-\frac{1}{2}\big((r\cos(y) - \mu_0\cos(\alpha))^2 + (r\sin(y) - \mu_0\sin(\alpha))^2\big)\right) dr\\
    &= \frac{1}{\sqrt{2\pi}}\exp\left(-\frac{1}{2}\mu_0^2\sin^2(y - \alpha)\right)\frac{1}{\sqrt{2\pi}}\int_{r > 0} r\exp\left(-\frac{1}{2}\big(r - \mu_0\cos(y - \alpha)\big)^2\right) dr\\
    &= \phi(\mu_0\sin(y - \alpha))\big(\phi(\mu_0\cos(y - \alpha)) + \mu_0\cos(y - \alpha)\Phi(\mu_0\cos(y - \alpha))\big)  \stepcounter{equation}\tag{\theequation}\label{eq:pn_2_id}.
\end{align*}
Here, $\phi(x)$ and $\Phi(x)$ denote the standard normal PDF and CDF. If we also then set $\v{\mu} = 0$, the projected normal distribution becomes uniform. For $\v{\mu} \neq 0$, the projected normal distribution with the identity matrix is symmetric and unimodal around $\alpha$ like the von Mises distribution. It is not surprising that based on this fact, the circular mean is $\alpha$ \citep{WangGelfandDirectionalDataAnalysis2013}. If $\alpha$ is undefined due to $\v{\mu} = 0$, then the mean angle is $\pi$ because the projected normal distribution is a uniform distribution. Meanwhile, we can use Kendall's technique to compute the circular variance \citep{KendallPoleSeekingBrownianMotion1974}. Despite Kendall only considering the case in which $\mu_1 \in \mathbbm{R}^{+}$ and $\mu_2 = 0$, it is straightforward to extend his result and show that the circular variance is $1 - \frac{1}{2}\sqrt{2\pi\beta}\exp(-\beta)(I_0(\beta) + I_1(\beta))$, where $\beta = \frac{\mu_0^2}{4}$ and $I_0(\beta)$ and $I_1(\beta)$ are the modified Bessel functions defined in \eqref{eq:bessel}. Again, like the von Mises distribution, the circular variance is independent of the mean angle. Then, we can use the asymptotics of the Bessel function to show that the circular variance is 1 when $\beta \rightarrow 0$ and 0 when $\beta \rightarrow \infty$ \citep{WatsonTreatiseTheoryBessel1995}. As discussed before, the former occurs when $\mu_0 \rightarrow 0$ and the distribution becomes uniform. The latter happens when $\mu_0 \rightarrow \infty$ and the probability (density) of other angles tend to 0.

We cannot use these results to derive closed-form expressions for the circular mean and variance in the general case because the projected normal distribution in that case has a more complicated closed-form expression. Fortunately, as we will demonstrate later in this paper, these results do allow us to compute the circular mean and variance of the models that we will introduce.

\subsection{Projected Gaussian process}
\label{ssection:pgp}
It is of interest to define a stochastic process of random directions indexed in a general domain
$\Omega$. This is possible by exploiting the polar coordinate transformation as above, whereas the bivariate normal distribution for $(z_1,z_2)$ can be extended into a stochastic process defined on $\Omega$; a common choice is to use the Gaussian process. This idea was studied by \citep{WangGelfandModelingSpaceSpaceTime2014}. We will present a simpler version of this idea.

Gaussian process is a powerful modeling tool for spatio-temporal data \citep{CressieWikleStatisticsSpatiotemporalData2011,BanerjeeEtAlHierarchicalModelingAnalysis2015,RasmussenWilliamsGaussianProcessesMachine2006}.
Given observations $z_1, z_2, \dots z_N$ indexed by the corresponding locations $x_1, x_2, \dots, x_N \in \Omega$, we assume that these observations are realizations of a stochastic process $\{Z(x) |x \in \Omega\}$, i.e., $z_\ell = Z(x_{\ell})$ for $\ell=1, 2, \ldots, N$. 
To account for the spatial dependence of these observations, one may assume that $Z$ is a Gaussian process, which is parameterized by a mean function $m$ and a covariance function $K$ on $\Omega$. Abusing notation, we let $\v{m}$ denote the mean function applied to every location such that $m_\ell = m(x_\ell)$. If $\Sigma$ is a $\mathbbm{R}^{N \cross N}$ matrix such that $\Sigma_{\ell, \ell'} = K(x_{\ell}, x_{\ell'})$ for $\ell, \ell' \in 1, 2, \ldots, N$, we will denote this as $\v{Z} \sim \textrm{GP}(\v{m}, \Sigma)$. We write it this way because $p(\v{Z} \mid \v{m}, \Sigma) = \mathcal{N}(\v{Z} \mid \v{m}, \Sigma)$.

% \LN{identifiability of what parameters in particular?}
We describe how to convert the Gaussian process to the projected Gaussian process. Suppose that $\v{Z_1} \sim \textrm{GP}(\v{m_1}, \Sigma_1)$ and $\v{Z_2} \sim \textrm{GP}(\v{m_2}, \Sigma_2)$ with $\v{Z_1}, \v{Z_2} \in \mathbbm{R}^N$. For identifiability of the angle-valued stochastic process, we might assume that $\begin{pmatrix}
\Sigma_1 & 0 \\
0 & \Sigma_2
\end{pmatrix}$ is equal to $\begin{pmatrix}
1 & 0 \\
0 & \sigma^2
\end{pmatrix} \otimes \Sigma$ for 
some fixed $\sigma^2 > 0$ and covariance matrix, $\Sigma$. While this may appear restrictive, it is similar in spirit to what Wang and Gelfand proposed to identify the projected normal distribution in two dimensions \citep{WangGelfandDirectionalDataAnalysis2013}. %Then, to convert this into a stochastic process, we set the probability of $(z_{1, \ell}; z_{2, \ell}) = (0, 0)\big) $ to be 0 for $\ell = 1, 2, \ldots, N$. This does not affect the validity of the Gaussian processes because the probability of any point is 0 due to continuity and the probability of the union of events is less than the sum of each event's probability. 
%LN: I commented out the above somewhat confusing sentence to avoid this technicality.
Then, let $\v{\mathcal{A}} \in [0, 2\pi)^N$ denote the angle-valued stochastic process observed at the $N$ index points so that $\mathcal{A}_{\ell} = \mathcal{A}(x_{\ell})$ for $\ell = 1, 2, \ldots, N$ and $\v{r} \in (\mathbbm{R}^{+})^{N}$ be a latent variable. We apply the polar coordinate transformation element-wise such that $z_{1, \ell} = r_\ell\cos(\mathcal{A}_{\ell})$ and $z_{2, \ell} = r_\ell\sin(\mathcal{A}_{\ell})$ for $\ell = 1, 2, \ldots, N$. As a shorthand, define $\cos(\mathcal{A})$ and $\sin(\mathcal{A})$ to be the vectors such that $\cos(\mathcal{A})_\ell = \cos(\mathcal{A}_\ell)$ and $\sin(\mathcal{A})_\ell = \sin(\mathcal{A}_\ell)$ for $\ell = 1, 2, \ldots, N$. Since the Jacobian of this transformation is $\prod_\ell r_\ell$, we have that
\begin{align}
    p(\v{\mathcal{A}}&, \v{r} \mid \v{m_1}, \v{m_2}, \Sigma_1, \Sigma_2) = \frac{1}{\sqrt{(2\pi)^N\abs{\Sigma_1}}}\exp\left(-\frac{1}{2}(\v{r}\cos(\v{\mathcal{A}}) - \v{m_1})^T\Sigma_1^{-1}(\v{r}\cos(\v{\mathcal{A}}) - \v{m_1})\right)\nonumber\\
    &\quad\frac{1}{\sqrt{(2\pi)^N\abs{\Sigma_2}}}\exp\left(-\frac{1}{2}(\v{r}\sin(\v{\mathcal{A}}) - \v{m_2})^T\Sigma_2^{-1}(\v{r}\sin(\v{\mathcal{A}}) - \v{m_2})\right) \left(\prod_{\ell} \v{r}\right)
\end{align}
The projected Gaussian process is the marginal distribution of $\v{\mathcal{A}}$:
\begin{align}
    p(\v{\mathcal{A}} \mid m_1, m_2, \Sigma_1, \Sigma_2) &= \int_{\v{r} > 0} p(\v{\mathcal{A}}, \v{r} \mid \v{m_1}, \v{m_2}, \Sigma_1, \Sigma_2) d\v{r}.
\end{align}

% If we sum over all possible combinations of angles, the total probability will be 1.
% While this integral is intractable, we can interpret it at a high level and calculate its properties marginally for $\mathcal{A}_{\ell}$ to understand this process. Given a collection of angles associated with locations, we are again totaling the probability of points along rays of those angles from the origin. Unlike before, the probability is determined based on the combination of points across the different rays due to the two Gaussian processes. We can also compute properties for models that use

While this integral is intractable, we can still calculate its properties marginally for $\mathcal{A}_{\ell}$ to understand this process. Because for $\ell = 1, 2, \dots, N$, $\mathcal{A}_\ell$ is marginally distributed according to $f_N\left((m_{1, \ell}; m_{2, \ell})^T, \left(\begin{smallmatrix} (\Sigma_1)_{\ell, \ell} & 0 \\ 0 & (\Sigma_2)_{\ell, \ell}\end{smallmatrix}\right)\right)$. If $(\Sigma_1)_{\ell, \ell}, (\Sigma_2)_{\ell, \ell} = 1$, we can use the previous section's discussion to do so. Otherwise, we will explain the computation later.

\subsection{Higher Dimension Analogues}
\label{ssection:higher_dim}
An alternative way to describe these distributions is to view them as distributions on vectors "projected" onto the unit sphere. In other words, given $\v{z} \in \mathbbm{R}^2$, the von Mises distribution and the projected normal distribution in two dimensions are distributions on $\frac{\v{z}}{\norm{\v{z}}_2}$. From this perspective, because $r = \norm{\v{z}}_2$ in the polar coordinate transform, it further explains why the von Mises distribution is proportional to a particular bivariate Gaussian conditioned on $r = 1$. Meanwhile, the projected normal distribution arises by computing the probability of $\frac{\v{z}}{\norm{\v{z}}_2}$ if $\v{z}$ is distributed according to a bivariate Gaussian with mean $\v{\mu} \in \mathbbm{R}^2$ and $\Sigma \in \mathbbm{R}^2 \times \mathbbm{R}^2$. In other words, we have that
\begin{align}
    \PN{\frac{\v{z}}{\norm{\v{z}}_2}}{\v{\mu}}{\Sigma} &= \int_{\frac{\v{z'}}{\norm{\v{z'}}_2} = \frac{\v{z}}{\norm{\v{z}}_2}} p(\v{z'} \mid \v{\mu}, \Sigma) d\v{z'}.
    \label{eq:pn_2_alt}
\end{align}
This calculation is easier to do with polar coordinates.

This perspective also suggests how to generalize the von Mises distribution and the projected normal distribution for higher dimensions. First, for the von Mises distribution, suppose that $\v{z} \in \mathbbm{R}^D$, $D > 2$. Let $\v{z}$ be distributed according to a Gaussian distribution of dimension $d$ with mean $\v{\mu}$ and covariance matrix, $\rho^{-1} \mathbbm{I}_D$. Here, $\v{\mu} \in \mathbbm{R}^D$ such that $\norm{\v{\mu}}_2 = 1$ and $\rho \in \mathbbm{R}^+$. Then, 
% \LN{are you mixed up between D and d below?}
\begin{align*}
p(\v{z} \mid \v{\mu}, \rho^{-1} \mathbbm{I}_D) &= \left(\frac{\rho}{2\pi}\right)^{\frac{d}{2}}\exp(-\frac{\rho}{2}(\v{z} - \v{\mu})^2)\\
&= \left(\frac{\rho}{2\pi}\right)^{\frac{d}{2}}\exp(-\frac{\rho}{2}(\v{z}^2 + 1 - 2 \v{\mu}^T \v{z})).
\end{align*}
If we condition on $\norm{\v{z}}_2 = 1$, we get the von Mises-Fisher distribution:
\begin{align*}
    p(\v{z} \mid \norm{\v{z}}_2 = 1, \v{\mu}, \rho) &\propto \exp(\rho \v{\mu}^T \v{z}).
\end{align*}
The normalization constant for this distribution is $(\frac{\rho}{2\pi})^{\frac{D}{2}}\frac{1}{\rho I_{\frac{D}{2} - 1}(\rho)}$ where $I_{\frac{D}{2} - 1}(\rho)$ is the modified Bessel function defined in \eqref{eq:bessel} \citep{MardiaJuppDirectionalStatistics2010}. Note that this is identical to the von Mises distribution in two dimensions because if $\v{\mu} = (\cos(\alpha), \sin(\alpha))$ and $\v{z} = (\cos(y), \sin(y))$ for $y, \alpha \in [0, 2\pi)$,  
\begin{align*}
    \v{\mu}^T \v{z} &= (\cos(\alpha), \sin(\alpha))^T (\cos(y), \sin(y))\\
    &= \cos(\alpha)\cos(y) + \sin(\alpha)\sin(y)\\
    &= \cos(y - \alpha).
\end{align*}

The projected normal distribution in higher dimension can be derived in a similar manner from its lower dimensional analogue. The projected normal distribution is given by computing the probability of $\frac{\v{z}}{\norm{\v{z}}_2}$ if $\v{z}$ is now distributed according to a Gaussian of dimension $D$ with mean, $\v{\mu} \in \mathbbm{R}^D$, and covariance matrix, $\Sigma \in \mathbbm{R}^D \times \mathbbm{R}^D$. The calculation is the same as \eqref{eq:pn_2_alt} except with $\v{\mu}$ and $\Sigma$ of the appropriate dimension. While this distribution might not be identifiable \citep{WangGelfandDirectionalDataAnalysis2013}, $\Sigma$ can be set in certain ways to simplify fitting the distribution to data \citep{Hernandez-StumpfhauserEtAlGeneralProjectedNormal2017}.
% Again, the computation is easier if we use spherical coordinates in higher dimensions and integrate out $r = \norm{z}_2$.

This then provides a template for defining the projected Gaussian process in higher dimensions. Suppose that for all $d = 1, 2, \ldots, D$, $\v{Z}_d \sim GP(\mu_d, \Sigma_d)$. Again, $\v{\mu}_d \in \mathbbm{R}^N$ and $\Sigma_d \in \mathbbm{R}^N \times \mathbbm{R}^N$ for any $d$. Set $\mathcal{C} \in \mathbbm{R}^N \times \mathbbm{R}^D$, $\norm{\v{\mathcal{C}_{n, \cdot}}} = 1$ to be a stochastic process on the unit circle. We then element-wise project each to the unit circle such that $\mathcal{C}_{n, d} = \frac{Z_{d, n}}{\sqrt{\sum_{d'} Z_{d', n}^2}}$. If we abuse notation and denote this element-wise projection as $\mathcal{C}(\v{Z_1}, \v{Z_2}, \ldots, \v{Z_D})$, then the probability is the following:
\begin{align*}
    p(\mathcal{C} \mid \v{\mu_1}, \v{\mu_2}, \ldots, \v{\mu_D}, \Sigma_1, \Sigma_2, \ldots, \Sigma_D) &= \int_{\mathcal{C}(Z_1, Z_2, \ldots, Z_D) = \mathcal{C}} \prod_d p(\v{Z_d} \mid \v{\mu_d}, \Sigma_d) d \v{Z_d}.
\end{align*}

\section{Modeling Random Directions} \label{Model}
\subsection{Model Description}
\label{ssection:model_description}
%Both von Mises distribution and projected Gaussian process are useful building blocks.
The purpose of our models is to elucidate the patterns observed in random directions. 
We first introduce the notation that we intend to use.

% As seen in Figure \ref{graph_models},
 We are given a location $x_\ell \in \Delta^2$ and an observation $y_\ell \in [0, 2\pi)$ for that location for $\ell = 1, 2, \dots, N$. We use $\ell$ as our index because of the spatial aspects of the data and model. To compare two observations, we will use $\ell'$ to indicate a location and observation different from the location and observation corresponding to $\ell$. If there are $K$ von Mises distributions indexed by $k$, $k \in 1, 2, \dots, K$, each observation, $y_\ell$, is assumed to be distributed according to one of the von Mises distributions with mean parameter, $m_{k, \ell} \in [0, 2\pi)$, and a concentration parameter, $\rho_{k, \ell} \in \mathbbm{R}^+$. We will use $\zeta_\ell$ to indicate which von Mises observation is associated with. If there is one von Mises distribution, we will drop $k$ from the parameters' index and will not use $\zeta_\ell$. We will remove $\ell$ from the index if the parameters are the same regardless of the location or if we wish to refer to the entire vector. While this overloading is unfortunate, any vector will be displayed in bold. Even though context should make it clear which definition we are referring to, we will specify the space the variable lies in.

We also need a set of notation for Gaussian processes because all the models that we will introduce use it. We let $\v{z_k} \in \mathbbm{R}^{N}$ be a draw from a Gaussian process with mean $\v{\mu_k} \in \mathbbm{R}^{N}$ and covariance matrix $\Sigma_k$. For models that need two Gaussian processes for each component, we will further index these vectors as $\v{z_{k, 1}}$ and $\v{z_{k, 2}}$ and their parameters as $\v{\mu_{k, 1}}$, $\v{\mu_{k, 2}}$, $\Sigma_{k, 1}$, and $\Sigma_{k, 2}$. The specific value corresponding to observation $y_\ell$ will be denoted by $z_{k, \ell}$ or $z_{k, 1, \ell}$ and $z_{k, 2, \ell}$. Again, we may drop $k$ if there is only one Gaussian process. In both cases, our choice of indices is to make the link between the specific Gaussian process and the particular von Mises distribution and observation more explicit.

\begin{table}[!tbp]
    \centering
    \begin{tabular}{c c | c | c}
         & & \multicolumn{2}{c}{\textbf{Observed Pattern}}\\
         & &  Homogeneous & Heterogeneous \\
         \hline
         \multirow{9}{*}{\rotatebox{90}{\textbf{Spatial}}} & & &  \\
         & \multirow{2}{*}{Ind.} & \multirow{2}{*}{\textit{iV} \eqref{model:iV}} & \multirow{2}{*}{\textit{iVM} \eqref{model:iVM}}\\
         & & &  \\ 
         & & &  \\ \cline{2-4}
         & & &  \\
         & \multirow{2}{*}{Var.} & \multirow{2}{*}{\textbf{\textit{SvM}} \eqref{model:SvM}} & \multirow{2}{*}{\shortstack[c]{\textbf{\textit{SvM-c}} \eqref{model:SvM-c}\\
         \textbf{\textit{SvM-p}} \eqref{model:SvM-p}, \eqref{model:SvM-p-2}}} \\
         & & &  \\ 
         & & & \\
    \end{tabular}
    \caption{A table showing the relationship between the models discussed in Section \ref{ssection:model_description} and the supplementary material. The models introduced in this paper are in bold.}
    \label{table:all_models_relation}
\end{table}

\noindent\textbf{Heterogenous Spatial Random Direction Model} There are two ways to incorporate spatial information for heterogeneous spatial random direction patterns. The first way is to integrate spatial information in the means of the von Mises distributions. We call this model the \textit{Spatially varying von Mises component mixture} model or \textit{SvM-c}. More specifically,
% As an analogy, the \textit{SvM-c} is to \textit{SvM} as the \textit{iVM} is to the \textit{iV}.  %\textit{SvM-c} is the following.
\begin{align*}
    \v{z_{k, 1}} &\sim \textrm{GP}(\cdot \mid \mu_{k, 1}, \Sigma_k), & k = 1, 2, \ldots, K\\
    \v{z_{k, 2}} &\sim \textrm{GP}(\cdot \mid \mu_{k, 2}, \Sigma_k), & k = 1, 2, \ldots, K\\
    \v{m_k} &= \textrm{arctan}^*(\v{z_{k, 1}}, \v{z_{k, 2}}), & k = 1, 2, \ldots, K \stepcounter{equation}\tag{\theequation}\label{model:SvM-c}\\
    % \v{\nu_k} &\sim \textrm{N}(0, \tau^2)\\
    \v{\varphi_k} &\stackrel{iid}{\sim} \textrm{N}(\cdot \mid \nu_k, \varsigma^2), & k = 1, 2, \ldots, K\\
    \v{\rho_k} &= \exp{\v{\varphi_k}}, & k = 1, 2, \ldots, K\\
    \zeta_\ell \mid \lambda_1, \lambda_2, \ldots, \lambda_K &\stackrel{iid}{\sim} \textrm{Cat}(\cdot \mid \lambda_1, \lambda_2, \ldots, \lambda_K)\\
    y_\ell \mid \zeta_\ell = k, m_{k, \ell}, \rho_{k, \ell} &\sim \vM{\cdot}{m_{k, \ell}}{\rho_{k, \ell}}, & \ell = 1, 2, \ldots, N.
\end{align*}

% \begin{equation}
%     \begin{aligned}
%     % \Sigma_{m, n} &= \sigma^2\exp\left(-\frac{\norm{(x_m - x_n)}_2^2}{2\omega^2}\right)\\ 
%     \v{z_{k, 1}} &\sim \textrm{GP}(\cdot \mid \mu_{k, 1}, \Sigma_k), & k = 1, 2, \ldots, K\\
%     \v{z_{k, 2}} &\sim \textrm{GP}(\cdot \mid \mu_{k, 2}, \Sigma_k), & k = 1, 2, \ldots, K\\
%     \v{m_k} &= \textrm{arctan}^*(\v{z_{k, 1}}, \v{z_{k, 2}}), & k = 1, 2, \ldots, K\\
%     % \v{\nu_k} &\sim \textrm{N}(0, \tau^2)\\
%     \v{\varphi_k} &\sim \textrm{N}(\cdot \mid \nu_k, \varsigma^2), & k = 1, 2, \ldots, K\\
%     \v{\rho_k} &= \exp{\v{\varphi_k}}, & k = 1, 2, \ldots, K\\
%     \zeta_\ell \mid \lambda_1, \lambda_2, \ldots, \lambda_K &\sim \textrm{Cat}(\cdot \mid \lambda_1, \lambda_2, \ldots, \lambda_K)\\
%     y_\ell \mid \zeta_\ell = k &\sim \vM{\cdot}{m_{k, \ell}}{\rho_{k, \ell}}, & \ell = 1, 2, \ldots, N\\
%     \end{aligned}
%     \label{model:SvM-c}
% \end{equation}
According to this model specification, each observation may be distributed by one of $K$ von Mises distributions with probability $\lambda_k$ regardless of its location. Each distribution's mean parameters, $\v{m_k} \in [0, 2\pi)^N$, are transformed from two draws, $\v{z_{k, 1}}, \v{z_{k, 2}} \in \mathbbm{R}^N$, from one of $K$ Gaussian process with potentially its own mean $\v{\mu}_k \in \mathbbm{R}^N$ and covariance matrix $\Sigma_k$. This transformation is accomplished using the $\textrm{arctan}^*$ function element-wise. A distribution's concentration parameters, $\v{\rho_k}$, are again random variables, $\v{\varphi_k}$, that have been exponentiated. These random variables are distributed according to a hierarchical normal distribution. At a lower level, they are distributed according to a normal distribution with the same standard deviation, $\varsigma$, but with different hierarchical means, $\nu_k$. These hierarchical means, $\nu_k$, are given the same hyperprior, $\textrm{N}(0, \tau)$.

In other words, for the concentration parameter in \textit{SvM-c}, we use the following hierarchical prior:
\begin{align*}
    \nu_k &\sim \textrm{N}(\cdot \mid 0, \tau^2),\\
    \v{\varphi_k} &\stackrel{iid}{\sim} \textrm{N}(\cdot \mid \nu_k, \varsigma^2).
\end{align*}
We use the hierarchical prior for the concentration parameter because it is a compromise between assigning an individual and a global concentration parameter. Using a global parameter $\rho$ might affect the estimates of the mean if the variances differ significantly because the model cannot adjust the concentration parameter. Conversely, assigning an individual parameter makes the model too flexible. This might negatively affect the model's ability to spatially correlate the observations. This concern also leads us to set the standard deviation for the lower term, $\varsigma$, to a small value instead of sampling for it. Even with a tight prior on $\varsigma$, the variance of the lower terms will be greater if we sample for the standard deviation. In addition, we do not use another Gaussian process to model the variance parameter $\varphi_\ell$ for computation reasons and for fears of making the model too rich. Still, a normal distribution is useful because it will allow us to separately sample the hierarchical mean, $\nu_k$, from the lower term, $\varphi_{k, \ell}$.

\textit{SvM-c} included spatial information through the components. An alternative approach to meld spatial information into our models is through the mixing probability. In this approach, we assume that observations nearby are likely to belong to the same von Mises distribution instead of being distributed according to a von Mises distribution with similar means. This alternative model is still interpretable because the results describe the preferred von Mises distributions of certain regions.

More explicitly, we call this model the \textit{Spatially varying mixing probability for von Mises distributions} model. Keeping with our previous convention, we will shorten it to \textit{SvM-p}. For this model, $y_\ell$ is assumed to be distributed to according to one of $K$ von Mises distribution with mean parameter, $m_k \in [0, 2\pi)$, and concentration parameter, $\rho_k \in \mathbbm{R}^+$. Both parameters are assumed to be the same for a particular von Mises distribution regardless of the observations' location. The mean parameters across all von Mises distributions are consequently given a prior of $\textrm{Unif}(0, 2\pi)$ whereas the concentration parameters are given a prior of $\Gamma(1, 1)$. If there is more prior knowledge about either parameter or if we want to force identifiability among the parameters, a von Mises distribution and a stronger Gamma distribution could instead be used for the mean parameter and concentration parameter respectively. Meanwhile, because the mixing probabilities sum up to 1, we only need to use the generalized inverse logit function, $\Psi^{-1}()$, to element-wise transform draws from $K - 1$ Gaussian processes with mean $\v{\mu_k}$ and covariance matrix $\Sigma_k$ to the mixing probabilities. We define the generalized inverse logit function, $\Psi^{-1}(z_{1, \ell}, z_{2, \ell}, \ldots, z_{k - 1, \ell})$, as a mapping from $\mathbbm{R}^{K - 1}$ to $\Delta^{K}$ in the following manner:  
\begin{equation*}
    \begin{aligned}
        \Psi^{-1}(z_{1, \ell}, z_{2, \ell}, \ldots, z_{k - 1, \ell})_k &= \frac{\exp(z_{k, \ell})}{1 + \sum_{k' = 1}^{K - 1} \exp(z_{k', \ell})}, & k = 1, 2, \ldots, K - 1 \\ 
        \\Psi^{-1}(z_{1, \ell}, z_{2, \ell}, \ldots, z_{k - 1, \ell})_K &= \frac{1}{1 + \sum_{k' = 1}^{K - 1} \exp(z_{k, \ell})}.
    \end{aligned}
    \label{fct:gen_inv_logit}
\end{equation*}
We will still use $k$ as our index for the Gaussian process draws to make clear which draw links to which von Mises distribution. In short, \textit{SvM-p} can be described as following.
\begin{align*}
    m_k &\stackrel{iid}{\sim} \textrm{Unif}(\cdot \mid 0, 2\pi), & k = 1, 2, \ldots, K\\
    \rho_k &\stackrel{iid}{\sim} \Gamma(\cdot \mid 1, 1), & k = 1, 2, \ldots, K\\
    \v{z}_k &\sim \textrm{GP}(\cdot \mid \v{\mu_k}, \Sigma_k), & k = 1, 2, \ldots, K - 1 \stepcounter{equation}\tag{\theequation}\label{model:SvM-p}\\
    \v{\lambda_1}, \v{\lambda_2}, \dots, \v{\lambda_K} &= \Psi^{-1}(\v{z_1}, \v{z_2}, \ldots, z_{k - 1}) \\ 
    \zeta_\ell \mid \lambda_{1, \ell}, \lambda_{2, \ell}, \dots, \lambda_{K, \ell},  &\sim \textrm{Cat}(\cdot \mid \lambda_{1, \ell}, \lambda_{2, \ell}, \dots, \lambda_{K, \ell})\\
    y_\ell \mid \zeta_\ell = k, m_k, \rho_k &\sim \vM{\cdot}{m_k}{\rho_k}, & \ell = 1, 2, \ldots, N.
\end{align*}

% \begin{equation}
% \begin{aligned}
%     m_k &\sim \textrm{Unif}(\cdot \mid 0, 2\pi), & k = 1, 2, \ldots, K\\
%     \rho_k &\sim \Gamma(\cdot \mid 1, 1), & k = 1, 2, \ldots, K\\\\
%     \v{z}_k &\sim \textrm{GP}(\cdot \mid \v{\mu_k}, \Sigma_k) & k = 1, 2, \ldots, K - 1\\
%     \lambda_{k, \ell} &= \frac{\exp(z_{k, \ell})}{1 + \sum\limits_{k' = 1}^{K - 1} \exp(z_{k', \ell})} & k = 1, 2, \ldots, K - 1\\ 
%     \lambda_{K, \ell} &= \frac{1}{1 + \sum_{k' = 2}^{K} \exp(-z_{k, \ell})}\\
%     \zeta_\ell \mid \lambda_{1, \ell}, \lambda_{2, \ell}, \dots, \lambda_{K, \ell}  &\sim \textrm{Cat}(\cdot \mid \lambda_{1, \ell}, \lambda_{2, \ell}, \dots, \lambda_{K, \ell})\\
%     y_\ell \mid \zeta_\ell = k &\sim \vM{\cdot}{m_k}{\rho_k} & \ell = 1, 2, \ldots, N\\
%     \label{model:SvM-p}
% \end{aligned}
% \end{equation}
If there are only two von Mises distributions, the general inverse logit function simplifies to the inverse logit and the model only needs one Gaussian process. The inverse logit function, $\psi^{-1}(z)$, is defined to be $\psi^{-1}(z) = \frac{\exp(z)}{1 + \exp(z)}$. The function is a map from $\mathbbm{R}$ to $(0, 1)$. \textit{SvM-p} reduces down to the following, which we will denote as \textit{SvM-p-2}.
\begin{align*}
    m_k &\stackrel{iid}{\sim} \textrm{Unif}(\cdot \mid 0, 2\pi), & k = 1, 2\\
    \rho_k &\stackrel{iid}{\sim} \Gamma(\cdot \mid 1, 1), & k = 1, 2\\
    \v{z} &\sim \textrm{GP}(\cdot \mid 0, \Sigma), \stepcounter{equation}\tag{\theequation}\label{model:SvM-p-2}\\
    \lambda_{1, \ell} &= \invlogit{z_\ell}, & \ell = 1, 2, \ldots, N\\
    P(\zeta_\ell = 1) &= \lambda_{1, \ell}, & \ell = 1, 2, \ldots, N\\
    y_\ell \mid \zeta_\ell = k, m_k, \rho_k &\sim \vM{\cdot}{m_k}{\rho_k}, & \ell = 1, 2, \ldots, N.
\end{align*}
% \begin{equation}
% \begin{aligned}
%     % \Sigma_{m, n} &= \sigma^2\exp\left(-\frac{\norm{(x_m - x_n)_2^2}}{2\omega^2}\right)\\
%     \v{z} &\sim \textrm{GP}(\cdot \mid 0, \Sigma)\\
%     \v{\lambda_1} &= \invlogit{z_\ell}\\
%     P(\zeta_\ell = 1) &= \lambda_{1, \ell}\\
%     m_k &\sim \textrm{Unif}(\cdot \mid 0, 2\pi), & k = 1, 2\\
%     \rho_k &\sim \Gamma(\cdot \mid 1, 1), & k = 1, 2\\
%     y_\ell \mid \zeta_\ell = k, m_k, \rho_k &\sim \vM{\cdot}{m_k}{\rho_k}, & \ell \in 1, 2, \ldots, N\\
% \end{aligned}    
% \label{model:SvM-p-2}
% \end{equation}

We make a few more comments before we discuss the properties of the models. \textit{SvM-c} and \textit{SvM-p} can also be thought of in the framework of the Dependent Dirichlet Process or DDP \citep{MacEachernStevenN.DependentDirichletProcess2000}. To understand the DDP, we need to briefly introduce the Dirichlet Process. Given a base distribution, $F_0$, and for $i \in 1, 2, \dots$, a Dirichlet process assigns weights, $w_i$, to atoms or draws, $\xi_i$, from $F_0$. If $v_i \sim \textrm{Beta}(1, m)$, then the weight for $\xi_i$ is given by $v_i \prod_{j = 1}^{i - 1} (1 - v_j)$ in the stick breaking representation \citep{SethuramanJayaramConstructiveDefinitionDirichlet1994}. A Dependent Dirichlet Process extends the Dirichlet Process by attaching separate stochastic processes to $v_i$ and $\xi_i$. If a stochastic process is based on location, nearby distributions that use DDP atoms can resemble each other \citep{MacEachernStevenN.DependentDirichletProcess2000}. While we assume a finite number of components, we view $\xi_i$ as the means of a von Mises distributions for our models. Then, in this framework, \textit{SvM-c} only attaches a stochastic process to $\xi_i$ whereas \textit{SvM-p} only places a stochastic process on $v_i$. Note that we do not endow stochastic processes to both $\xi_i$ and $v_i$ in this work because such a model might be too rich to depend only on the spatial information. 

% \LN{Please update the bibtex file so that only the authors' last names may appear. Note that the google scholar's bibtex entries are not always accurate.}
We also wish to discuss some notation. Because of our convention with \textit{SvM-p-2}, we will denote the number of von Mises distributions after the model if we need to specify $K$. For instance, \textit{SvM-c-3} indicates the model \textit{SvM-c} with $K = 3$. The one exception to this guideline is \textit{SvM}, which is \textit{SvM-c} with $K = 1$ and introduced in the supplementary material 
(cf, \eqref{model:SvM}).

Finally, despite the focus on the circular case, the framework we introduce can easily be extended into directions of higher dimensions. Based on the discussion in Section \ref{ssection:higher_dim}, we can plug in the von Mises-Fisher in place of the von Mises distribution. We can also use the higher dimension version of the projected Gaussian process as a prior for the von Mises-Fisher for \textit{SvM-c}. The output from that process is marginally a unit vector, which can serve as the mean parameter for a von Mises-Fisher distribution.

\subsection{Model Properties for Spatially Dependent von Mises Model}
\label{ssection:model_prop}
%  Using the law of iterated expectations, we can condition on the $\zeta$ and repeat the \textit{SvM} computations for each component to calculate the circular mean, variance, and correlation of \textit{SvM-c}. Instead,

To better understand our models, we shall derive the prior circular mean, circular variance, and circular correlation for \textit{SvM} and \textit{SvM-p-2}. Due to space constraints we do not display the results for \textit{SvM-c} because one can easily combine the iterated expectation formula with the results for the special case \textit{SvM} to obtain such results. Due to lack of conjugacy, deriving properties for posterior distributions is difficult. Nonetheless, the following results are useful because practitioners can use them when setting informative priors. While the definition for the circular mean and variance are given in equations \eqref{eq:circ_mean} and \eqref{eq:circ_var} respectively, we will calculate a notion of correlation proposed by \citep{JammalamadakaSarmaCorrelationCoefficientAngular1988}. If $y_\ell$ and $y_{\ell'}$ have means $\alpha_\ell$ and $\alpha_{\ell'}$ respectively, then the correlation is defined to be
\begin{equation}
\textrm{Corr}(y_\ell, y_\ell') = \frac{\E{\sin(y_\ell - \alpha_\ell)\sin(y_{\ell'} - \alpha_{\ell'})}}{\sqrt{\E{\sin^2(y_\ell - \alpha_\ell)}\E{\sin^2(y_{\ell'} - \alpha_{\ell'})}}}.
\end{equation}
These expectations are taken over the interval $[0, 2\pi)$ with respect to some distribution on the circle, $f_{\mathcal{C}}(y_\ell, y_{\ell'})$, or its marginal distributions. As noted by Jammalamadaka and Sarma, 
this is equivalent to
\begin{equation}
    \textrm{Corr}(y_\ell, y_\ell') = \frac{\E{\cos(y_\ell - \alpha_\ell - (y_{\ell'} - \alpha_{\ell'})) - \cos(y_\ell - \alpha_\ell + y_{\ell'} - \alpha_{\ell'})}}{2\sqrt{\E{\sin^2(y_\ell - \alpha_\ell)}\E{\sin^2(y_{\ell'} - \alpha_{\ell'})}}}.
\end{equation}
At a high level, the circular correlation is comparing how tightly $y_\ell - \alpha_\ell$ concentrates around $y_{\ell'} - \alpha_{\ell'}$ versus how tightly $y_\ell - \alpha_\ell$ concentrates around $-(y_{\ell'} - \alpha_{\ell'})$. Then, the correlation defined in this manner has the usual desired properties \citep{JammalamadakaSarmaCorrelationCoefficientAngular1988}. For instance, if $y_\ell$ and $y_{\ell'}$ are independent, the correlation is 0. 
% If the variables are strongly positively correlated, the former will be close to 1 because when $y_\ell$ is close to its mean direction $\alpha_\ell$, $y_{\ell'}$ should also be close to its mean direction $\alpha_{\ell'}$. On the other hand, if the variables are strong negatively correlated, then the latter should be close to -1. If $y_\ell$ differs from $\alpha_\ell$ by a certain amount, then $y_{\ell'}$ should differ from $\alpha_{\ell'}$ by a negative amount.

\noindent\textbf{\textit{SvM}}
\label{ssection:SvM_circular_properties}
For \textit{SvM} specified in \eqref{model:SvM}, we will assume that there is a global $\rho$. 
This is not unreasonable because when we use a hierarchical prior for $\rho_\ell$, we set $\varsigma^2$ to a small value. Then, the circular mean, variance, and correlation are the following.
\begin{lemma}
If $Y_{\ell}$ and $Y_{\ell'}$ are generated according to \textit{SvM} outlined in \eqref{model:SvM} with the random variables associated with them labeled accordingly and $(Z_1, Z_2) \sim \mathcal{N}((\mu_1, \mu_2), \sigma^2\mathbbm{I}_2)$ with $\mu_1 = \mu_0\cos(\alpha_\mu)$ and $\mu_2 = \mu_0\cos(\alpha_\mu)$, $\mu_0 \in \mathbbm{R}^+$, $\alpha_\mu \in [0, 2\pi)$, then
\begin{align}
    \E{Y_{\ell}} &= \alpha_\mu,
    \label{lemma:SvM_e}\\
    \Var{Y_{\ell}} &= 1 - \frac{I_1(\rho)}{I_0(\rho)}\left(\frac{\pi\beta}{2}\right)^{\frac{1}{2}}\exp(-\beta)(I_0(\beta) + I_1(\beta)) \qquad \beta = \frac{\mu_0^2}{4\sigma^2},
    \label{lemma:SvM_var}\\
    \textrm{Corr}(Y_\ell, Y_{\ell'}) &= \frac{\left(\frac{I_1(\rho)}{I_0(\rho)}\right)^2(\E{\cos(m_\ell - m_{\ell'})} - \E{\cos(m_\ell + m_{\ell'} - 2\alpha_\mu)})}{\sqrt{\left(1 - \frac{I_2(\rho)}{I_0(\rho)}\E{\cos(2(m_\ell - \alpha_\mu))}\right)\left(1 - \frac{I_2(\rho)}{I_0(\rho)}\E{\cos(2(m_{\ell'} - \alpha_\mu))}\right)}}.
    \label{lemma:SvM_corr}
\end{align}
\label{lemma:SvM_model_prop}
\end{lemma}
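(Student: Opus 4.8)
The plan is to establish all three identities by iterated expectation, peeling the hierarchy off one layer at a time: first condition on the von Mises mean $m_\ell$ (a deterministic function of the Gaussian draws), invoke the trigonometric moments of the von Mises law, and then average the resulting expression over $m_\ell$ using the projected normal facts recorded in Section~\ref{ssection:pn2}. Two preliminary observations make this clean. First, since $\arctan^*$ is invariant under scaling by a positive constant, the marginal law of $m_\ell = \arctan^*(Z_1,Z_2)$ is unchanged if $(Z_1,Z_2)$ is divided by $\sigma$; hence $m_\ell$ is projected normal with identity covariance and polar mean parameters $(\mu_0/\sigma,\alpha_\mu)$, so its circular mean is $\alpha_\mu$ and, with $\beta=\mu_0^2/(4\sigma^2)$, $\E{\cos(m_\ell-\alpha_\mu)} = (\pi\beta/2)^{1/2}\exp(-\beta)(I_0(\beta)+I_1(\beta))$ by \eqref{eq:pn_2_id} and the ensuing discussion. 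Second, for $W\sim\vM{\cdot}{m}{\rho}$ one reads off from \eqref{eq:bessel} applied to $W-m$ (using periodicity and the evenness of $e^{\rho\cos(\cdot)}$) that $\E{\cos(n(W-m))}=I_n(\rho)/I_0(\rho)$ and $\E{\sin(n(W-m))}=0$; I will need $n=1,2$.

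For the circular mean I would condition on $m_\ell$, apply \eqref{eq:vM_mean} to get $\E{e^{iY_\ell}\mid m_\ell}=(I_1(\rho)/I_0(\rho))e^{im_\ell}$, and then take the outer expectation; since the circular mean of $m_\ell$ is $\alpha_\mu$, $\E{e^{iY_\ell}}$ is a nonnegative multiple of $e^{i\alpha_\mu}$, so by \eqref{eq:circ_mean} the circular mean of $Y_\ell$ is $\alpha_\mu$, giving \eqref{lemma:SvM_e}. For the circular variance I would write $Y_\ell-\alpha_\mu=(Y_\ell-m_\ell)+(m_\ell-\alpha_\mu)$, expand the cosine with the angle-addition formula, take the conditional expectation given $m_\ell$ (the sine term vanishes, the cosine term contributes $I_1(\rho)/I_0(\rho)$), and then average over $m_\ell$; this yields $\E{\cos(Y_\ell-\alpha_\mu)}=(I_1(\rho)/I_0(\rho))\E{\cos(m_\ell-\alpha_\mu)}$ and hence \eqref{lemma:SvM_var}.

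The circular correlation is the most involved piece and where I expect the bookkeeping to be delicate. Note first that, by the circular-mean computation, both $Y_\ell$ and $Y_{\ell'}$ have circular mean $\alpha_\mu$, so the Jammalamadaka--Sarma formula is evaluated at $\alpha_\ell=\alpha_{\ell'}=\alpha_\mu$. For the numerator I would condition on the pair $(m_\ell,m_{\ell'})$; under \textit{SvM} the observations are conditionally independent given the means, so the expectation factors, and the same angle-addition step gives $\E{\sin(Y_\ell-\alpha_\mu)\mid m_\ell}=(I_1(\rho)/I_0(\rho))\sin(m_\ell-\alpha_\mu)$. Averaging and applying the product-to-sum identity $\sin A\sin B=\tfrac12(\cos(A-B)-\cos(A+B))$ with $A=m_\ell-\alpha_\mu$, $B=m_{\ell'}-\alpha_\mu$ produces the numerator $\tfrac12(I_1(\rho)/I_0(\rho))^2(\E{\cos(m_\ell-m_{\ell'})}-\E{\cos(m_\ell+m_{\ell'}-2\alpha_\mu)})$. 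For each factor under the square root I would use $\sin^2 x=\tfrac12(1-\cos 2x)$ together with $\E{\cos(2(Y_\ell-\alpha_\mu))}=(I_2(\rho)/I_0(\rho))\E{\cos(2(m_\ell-\alpha_\mu))}$ (obtained exactly as before, now with the $n=2$ von Mises moment), giving $\E{\sin^2(Y_\ell-\alpha_\mu)}=\tfrac12(1-(I_2(\rho)/I_0(\rho))\E{\cos(2(m_\ell-\alpha_\mu))})$, and similarly for $\ell'$. The common factor of $\tfrac12$ in numerator and denominator cancels, leaving \eqref{lemma:SvM_corr}.

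The main obstacle is therefore not conceptual but a matter of care in the correlation step: keeping the angle-addition and product-to-sum identities straight, using the conditional independence of $Y_\ell$ and $Y_{\ell'}$ given $(m_\ell,m_{\ell'})$ \emph{before} the product-to-sum expansion, and remembering that both marginal circular means have collapsed to the single value $\alpha_\mu$. Everything downstream of $m_\ell$ reduces to von Mises moments, and everything at the level of $m_\ell$ is left as the expectations $\E{\cos(m_\ell- m_{\ell'})}$, $\E{\cos(m_\ell+ m_{\ell'}-2\alpha_\mu)}$ and $\E{\cos(2(m_\ell-\alpha_\mu))}$, i.e. the marginal and joint projected normal quantities that the lemma deliberately leaves unexpanded.
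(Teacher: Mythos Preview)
Your proposal is correct and follows essentially the same route as the paper: iterated expectation to peel off the von Mises layer via its trigonometric moments $I_n(\rho)/I_0(\rho)$, angle-addition identities, scale-invariance of $\arctan^*$ to reduce to the identity-covariance projected normal, and Kendall's formula for $\E{\cos(m_\ell-\alpha_\mu)}$. The only cosmetic difference is in the correlation numerator, where you factor by conditional independence first and then apply the product-to-sum identity on the $m$'s, while the paper applies product-to-sum on the $Y$'s first and then conditions; both orders yield the same expression.
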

% $(\cos(m), \sin(m)) = \left(\frac{Z_1}{\sqrt{Z_1^2 + Z_2^2}}, \frac{Z_2}{\sqrt{Z_1^2 + Z_2^2}}\right)$
% We begin by discussing how we derive Lemma \ref{lemma:SvM_model_prop}. After using the law of total expectation and trigonometric identities, we see that calculating the properties of $Y_{\ell}$'s are reduced to that of the corresponding $m_{\ell}$. Because $(z_{1. \ell}, z_{2, \ell}) \sim \mathcal{N}((\mu_1, \mu_2), \sigma^2\mathbbm{I}_2)$ and $m_{\ell} = \arctan^{*}(z_{1. \ell}, z_{2, \ell})$, $m_{\ell}$ is distributed according to $f_N((\mu_1, \mu_2), \sigma^2 \mathbbm{I}_2)$ \citep{MardiaJuppDirectionalStatistics2010}. We can extend the techniques in \cite{WangGelfandDirectionalDataAnalysis2013} to demonstrate that $m_{\ell}$ is equivalently distributed according to $f_N(\frac{1}{\sigma^2}(\mu_1, \mu_2), \mathbbm{I}_2)$. With this insight, we can use the discussion in Section \ref{ssection:pn2} to compute the circular mean and variance.

% \LN{Doing what?}
The above lemma reveals the following characteristics of \textit{SvM}. First, the prior circular mean for $m_\ell$ is $\alpha_\mu$. It is nice that this mean can be set so easily. Based on this, we say that if $\mu_1 = c_1\mathbbm{1}$ and $\mu_2 = c_2\mathbbm{1}$, then $\textit{SvM}$ is \textit{centered at} or has mean $\arctan^*(\frac{c_2}{c_1})$. Because this holds for any $\mu_{k, 1} = c_{k, 1}\mathbbm{1}$ and $\mu_{k, 2} = c_{k, 2}\mathbbm{1}$, we might also say that $m_k$ of $\textit{SvM-c}$ is centered at $\arctan^*(\frac{c_{k, 2}}{c_{k, 1}})$. Second, the prior circular variance involves multiplying the concentrations of the von Mises' distribution and mean parameter according to the projected normal distribution. This demonstrates a few ways the Gaussian process can change the prior circular variance without changing the prior circular mean. Recall that $\beta = \frac{\mu_0}{4\sigma^2}$ in \eqref{lemma:SvM_var}. As $\beta \rightarrow 0$, $\Var{Y_\ell} \rightarrow 1$ whereas the asymptotics of Bessel functions can be used to show that $\Var{Y_\ell} \rightarrow 0$ when $\beta \rightarrow \infty$ and $\rho \rightarrow \infty$ \citep{WatsonTreatiseTheoryBessel1995}. The former occurs if $(\mu_1, \mu_2) = (0, 0)$ or $\sigma^2 \rightarrow \infty$. As a result, the respective distribution for $m_\ell$ or $Z_1$ and $Z_2$ becomes uniform. On the other hand, $\beta \rightarrow \infty$ when $\sigma^2 \rightarrow 0$ or $\mu_0 \rightarrow \infty$ while $\alpha_\mu$ remains unchanged. In other words, $(Z_1, Z_2)$ becomes more concentrated at $(\mu_1, \mu_2)$ or $\textrm{P}(m_\ell \neq \alpha_\mu) \rightarrow 0$. Note however that even if $m_\ell = \alpha_\mu$ with probability 1, but $\rho$ is finite, the circular variance of $Y_\ell$ is that of a von Mises distribution with concentration parameter, $\rho$. If $\rho = 0$, the circular variance is 1 and $Y_\ell$ will be uniformly distributed. This further supports our earlier observation that when \textit{SvM} is considered in a generative sense, the von Mises distribution adds noise. Finally, it is difficult to compute the exact form of the circular correlation due to the interactions between the terms in the expectation and the projected normal distribution. Fortunately, it is straightforward to compute these values by simulation. Wang and Gelfand briefly explored this in their 2014 paper on projected Gaussian processes. 

\noindent\textbf{\textit{SvM-p}}
\label{ssection:SvM-p_circular_properties}
We next compute the circular properties of \textit{SvM-p}. Unlike before, there is no explicit distribution for the inverse logit or general inverse logit transformation of a normally distributed variable. Consequently, we will first derive a formula for any choice of hyperparameters to allow practitioners to understand how their choice will affect these quantities. We do so under the assumption that the von Mises parameters, $m_1, m_2, \ldots, m_K$, and $\rho_1, \rho_2, \ldots, \rho_K$, are fixed. This assumption is a simplification because we can set $m_1$, $m_2$, $\rho_1$, and $\rho_2$ to their prior expected values. We then will calculate them for the choices we specified in this paper for the \textit{SvM-p-2} model. Under these assumptions, we have the following lemma.

% it remains difficult to use trigonometric identities to explain the exact weighting. Still, because the Gaussian process controls $P(\zeta = k)$, the hyperparameters of the Gaussian process can be adjusted to achieve a different mean angle $\alpha$ for \textit{SvM-p}.
\begin{lemma}
If $Y_{\ell}$ and $Y_{\ell'}$ are generated according to \textit{SvM-p} specified in \eqref{model:SvM-p} with the random variables associated with them labeled accordingly, $\arctan^*$ is defined in \eqref{fct:arctan_star}, $\alpha = \E{Y_{\ell}}$, which is given in \eqref{lemma:SvM-p_e}, and $s(\zeta_\ell) = 1 - \sum\limits_k P(\zeta_\ell = k) \frac{I_2(\rho_k)}{I_0(\rho_k)}\cos(2(m_k - \alpha))$, 
% then
\begin{align}
    \E{Y_{\ell}} &= \arctan^*\left(\frac{\sum\limits_{k} P(\zeta = k)\frac{I_1(\rho_k)}{I_0(\rho_k)}\sin(m_k)}{\sum\limits_{k} P(\zeta = k)\frac{I_1(\rho_k)}{I_0(\rho_k)}\cos(m_k)}\right),
    \label{lemma:SvM-p_e}\\
    \Var{Y_\ell} &= 1 - \sum\limits_k P(\zeta = k)\frac{I_1(\rho_k)}{I_0(\rho_k)}\cos(m_k - \alpha),
    \label{lemma:SvM-p_var}\\
    \textrm{Corr}(Y_\ell, Y_{\ell'}) &= \frac{1}{\sqrt{s(\zeta_\ell) s(\zeta_{\ell'})}} \sum\limits_{k'}\sum\limits_{k} P(\zeta_\ell = k, \zeta_{\ell'} = k')\frac{I_1(\rho_k)}{I_0(\rho_k)}\frac{I_1(\rho_{k'})}{I_0(\rho_{k'})}\nonumber\\
    & \qquad\qquad\qquad\qquad\qquad\quad(\cos(m_k - m_{k'}) - \cos(m_k + m_{k'} - 2 \alpha)).
    \label{lemma:SvM-p_corr}
\end{align}
\label{lemma:SvM-p_model_prop}
\end{lemma}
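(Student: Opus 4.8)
The plan is to condition on the mixture indicators $\zeta_\ell$ and reduce everything to the von Mises computations already available. For $Y_\ell$, conditional on $\zeta_\ell = k$ (and on the fixed $m_k,\rho_k$), we have $Y_\ell \sim f_{\mathcal V}(\cdot\mid m_k,\rho_k)$, so by \eqref{eq:vM_mean} its characteristic function satisfies $\E{e^{iY_\ell}\mid \zeta_\ell=k} = \frac{I_1(\rho_k)}{I_0(\rho_k)}e^{im_k}$. The law of iterated expectations then gives $\E{e^{iY_\ell}} = \sum_k P(\zeta_\ell = k)\frac{I_1(\rho_k)}{I_0(\rho_k)}e^{im_k}$, a complex number whose argument, via the $\arctan^*$ function of \eqref{fct:arctan_star} applied to its real and imaginary parts, is precisely $\alpha := \E{Y_\ell}$ as stated in \eqref{lemma:SvM-p_e}; this uses the definition of circular mean in \eqref{eq:circ_mean}.

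For the circular variance I would use $\Var{Y_\ell} = 1 - \E{\cos(Y_\ell - \alpha)}$ from \eqref{eq:circ_var}, write $\cos(Y_\ell - \alpha) = \cos Y_\ell \cos\alpha + \sin Y_\ell \sin\alpha$, take expectations component-wise, and again condition on $\zeta_\ell$: $\E{\cos(Y_\ell-\alpha)\mid\zeta_\ell=k} = \frac{I_1(\rho_k)}{I_0(\rho_k)}\cos(m_k-\alpha)$ by the same von Mises moment identity. Averaging over $k$ yields \eqref{lemma:SvM-p_var}. For the correlation, I would start from the Jammalamadaka–Sarma form, whose numerator is $\tfrac12\E{\cos(Y_\ell-\alpha_\ell-(Y_{\ell'}-\alpha_{\ell'})) - \cos(Y_\ell-\alpha_\ell+Y_{\ell'}-\alpha_{\ell'})}$ with $\alpha_\ell=\alpha_{\ell'}=\alpha$ here. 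Since the only source of dependence between $Y_\ell$ and $Y_{\ell'}$ is the joint law of $(\zeta_\ell,\zeta_{\ell'})$ (induced by the projected Gaussian process on the mixing weights), conditioning on $(\zeta_\ell,\zeta_{\ell'}) = (k,k')$ makes $Y_\ell$ and $Y_{\ell'}$ independent von Mises variables. Then each cosine expectation factorizes: $\E{\cos(Y_\ell-m_k^{\text{shift}})}\,\E{\cos(Y_{\ell'}-\cdots)}$, with the relevant first moments again being $I_1(\rho_k)/I_0(\rho_k)$ and $I_1(\rho_{k'})/I_0(\rho_{k'})$, using product-to-sum trig identities to land on $\cos(m_k - m_{k'}) - \cos(m_k + m_{k'} - 2\alpha)$. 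Averaging over $(k,k')$ with weights $P(\zeta_\ell=k,\zeta_{\ell'}=k')$ gives the numerator. For the denominator, one needs $\E{\sin^2(Y_\ell-\alpha)} = \tfrac12(1 - \E{\cos(2(Y_\ell-\alpha))})$; conditioning on $\zeta_\ell=k$ and using the order-2 von Mises moment $\E{\cos(2(Y-m_k))} = I_2(\rho_k)/I_0(\rho_k)$ together with a shift identity produces $s(\zeta_\ell) = 1 - \sum_k P(\zeta_\ell=k)\frac{I_2(\rho_k)}{I_0(\rho_k)}\cos(2(m_k-\alpha))$, and similarly for $\ell'$; the factor of $\tfrac12$ in the numerator cancels against the $\tfrac12$'s in the denominator.

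The routine but slightly delicate parts are (i) verifying the order-2 von Mises moment $\int e^{2iy} f_{\mathcal V}(y\mid m,\rho)\,dy = \frac{I_2(\rho)}{I_0(\rho)}e^{2im}$, which follows from \eqref{eq:bessel} by the same change of variables $y\mapsto y-m$ used to derive \eqref{eq:vM_mean}, and (ii) keeping the angle shifts straight when turning $\E{\cos(2(Y_\ell-\alpha))}$ into a statement about $\cos(2(m_k-\alpha))$. The main conceptual obstacle is establishing cleanly that \emph{conditional on the indicators} $(\zeta_\ell,\zeta_{\ell'})$ the two observations are independent: this is where the structure of \textit{SvM-p} matters, since there the $z$-process enters only through the mixing probabilities $\lambda_{k,\ell}$ and not through the von Mises parameters themselves (contrast with \textit{SvM-c}), so once $\zeta_\ell,\zeta_{\ell'}$ are fixed the sampling of $Y_\ell$ and $Y_{\ell'}$ is fully independent. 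Everything else is bookkeeping with iterated expectations and trigonometric identities, exactly parallel to the \textit{SvM} derivation sketched above.
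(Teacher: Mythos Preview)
Your proposal is correct and follows essentially the same route as the paper: condition on the indicators $\zeta_\ell$ (and jointly on $(\zeta_\ell,\zeta_{\ell'})$ for the correlation), reduce to von Mises trigonometric moments $I_n(\rho_k)/I_0(\rho_k)$, and stitch the pieces back together with product-to-sum identities and the law of iterated expectations. The only cosmetic difference is that the paper expands $\cos(Y_\ell-\alpha)$ as $\cos\bigl((Y_\ell-m_k)+(m_k-\alpha)\bigr)$ rather than your $\cos Y_\ell\cos\alpha+\sin Y_\ell\sin\alpha$, which makes the vanishing of the sine term via von Mises symmetry slightly more explicit, but the computations are otherwise identical.
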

We now discuss the properties in Lemma \ref{lemma:SvM-p_model_prop}. In $\eqref{lemma:SvM-p_e}$, \textit{SvM-p}'s mean direction is an average of the components' mean direction weighted by how concentrated each component is around its mean direction and the probability that an observation belongs to a component. If there are two components that are equally concentrated around their mean direction and equally likely of being selected, we can use trigonometric sum to product formulas to show that $\alpha$ will be the average of the components' mean angles. Otherwise, simulation must be used to understand how changing the hyperparameters of the Gaussian process affects $P(\zeta = k)$ and the mean angle $\alpha$. The circular variance in \eqref{lemma:SvM-p_var} is similar to the circular variance of \textit{SvM} described in \eqref{lemma:SvM_var}. From each component, we get the product of how tightly the observation concentrates around that component's mean direction and how close that component's mean direction is to the model's mean direction. The circular variance in \eqref{lemma:SvM-p_var} is also like the circular mean of \textit{SvM-p} in \eqref{lemma:SvM-p_e} because the probability parameters again serves as the weights for a convex combination of each component's concentrations. %This also shows how the Gaussian process can affect the circular variance. even though $P(\zeta = k)$ serves as the weights for a convex combination of the two components' mean points on the circle

The numerator of the circular correlation in \eqref{lemma:SvM-p_corr} is of interest. Each term in the numerator's sum is comprised of three things. First, it has a comparison between the means concentrating around each other and the difference of one mean from the model mean $\alpha_2$ concentrating around its negative counterpart. Next, it holds the product of the observations' concentration around the mean directions with respect to the observations' cluster. Finally, it contains the joint probability of the observations' cluster memberships. It is this probability that will be affected by the spatial information and Gaussian process. At a high level, if observations are "nearby", they are likely to belong to the same cluster. More weight will be placed on the terms coming from the observations belonging to the same cluster. What observations are considered closed is determined by the hyperparameters of the Gaussian process.

\noindent\textbf{\textit{SvM-p-2}}
% because the mixing probability is independent of the component's parameters, the spatial information determines if the correlation of the two clusters belonging to the same cluster or belonging to different clusters dominates the numerator. The former will be more important if observations are "nearby" whereas the latter will be more important if observations are "further away". It is also possible that the circular correlation may not have the same sign as the original correlation.We will primarily use $f(z_\ell)$ in the lemma statements, but both functions are useful when deriving . 
% We cannot be more precise because according to \textit{SvM-p}, we have that $P(\zeta = k) = \E{\geninvlogit{Z_k}}$ and $P(\zeta_\ell = k, \zeta_{\ell'} = k') = \E{\geninvlogit{Z_k}\geninvlogit{Z_{k'}}}$ for $k = 1, 2, \ldots K - 1$. Without loss of generality, we can also show that $P(\zeta = K) = 1 - \sum\limits_k \E{\geninvlogit{Z_k}}$ and $P(\zeta_\ell = k, \zeta_{\ell'} = K) = \E{\geninvlogit{Z_k}(1 - \sum\limits_{k'} \geninvlogit{Z_{k'}})}$ for $k = 1, 2, \ldots K - 1$. 
We now calculate these properties for the \textit{SvM-p-2} model. Using the notation of the model \eqref{model:SvM-p-2}, we will also assume that $Z, Z_{\ell}, Z_{\ell'} \sim \textrm{N}(0, 1)$ and $s = \textrm{Cov}(Z_\ell, Z_{\ell'})$. One technical issues arising in computing these quantities is that $P(\zeta = k) = \E{\invlogit{Z_k}}$, which is intractable. Instead, we will bound any expectation involving the logistic function with the functions, $f(z_\ell)$ and $g(z_\ell)$. We define $f(z_\ell) = g(z_\ell) + \frac{1}{2}$, where $g(z_\ell)$ is defined as following:
\begin{equation}
g(z_\ell) = 
\begin{cases}
-\frac{1}{2} \quad z_\ell < -\zEps\\
\frac{1}{2\zEps} z_\ell \quad z_\ell \in [-\zEps, \zEps]\\
\frac{1}{2} \quad z_\ell > \zEps,
\end{cases}.
\label{eq:logit_approx}
\end{equation}
We will additionally assume that $\zEps \in (0, 2]$ to ensure that $f(z_\ell) < \invlogit{z_\ell}$ for $z < 0$ and $f(z_\ell) > \invlogit{z_\ell}$ for $z > 0$. Under such assumptions, $\E{f(Z)} = \E{\invlogit{Z}}$. As a result, these functions, $f(z_\ell)$ and $g(z_\ell)$, are not a bad choice.

We can also use these functions to prove that $\E{\invlogit{Z}} = \frac{1}{2}$ under our assumptions. This fact and Lemma \ref{lemma:SvM-p_model_prop} then gives us the following lemma.
\begin{lemma}
If $Y_{\ell}$, and $Y_{\ell'}$ are generated according to \textit{SvM} outlined in \eqref{model:SvM-p-2} with the random variables associated with them labeled accordingly, $Z \sim N(0, 1)$, $(Z_\ell, Z_{\ell'}) \sim N(0, \Sigma)$, $\zEps \in (0, 2]$, and $\alpha = \E{Y_\ell}$, which is given in \eqref{lemma:SvM-p_version_e}, then
\begin{align}
    \E{Y_\ell} &= \arctan^*\left(\frac{\frac{I_1(\rho_1)}{I_0(\rho_1)}\sin(m_1) + \frac{I_1(\rho_2)}{I_0(\rho_2)}\sin(m_2)}{\frac{I_1(\rho_1)}{I_0(\rho_1)}\cos(m_1) + \frac{I_1(\rho_2)}{I_0(\rho_2)}\cos(m_2)}\right)
    \label{lemma:SvM-p_version_e}\\
    \Var{Y_\ell} &= 1 - \frac{1}{2}\left(\frac{I_1(\rho_1)}{I_0(\rho_1)}\cos(m_1 - \alpha) + \frac{I_1(\rho_2)}{I_0(\rho_2)}\cos(m_2 - \alpha)\right)
    \label{lemma:SvM-p_version_var}\\
    \textrm{Corr}(Y_{\ell}, Y_{\ell'}) &= \frac{2\E{\sin(y_\ell - \alpha)\sin(y_{\ell'} - \alpha)}}{1 - \frac{1}{2}\left(\frac{I_2(\rho_1)}{I_0(\rho_1)}\cos(2(m_1 - \alpha)) + \frac{I_2(\rho_2)}{I_0(\rho_2)}\cos(2(m_2 - \alpha))\right)}.
    \label{lemma:SvM-p_version_corr}
\end{align}
\label{lemma:SvM-p_version_prop}
\end{lemma}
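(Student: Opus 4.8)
The plan is to derive Lemma~\ref{lemma:SvM-p_version_prop} as the $K=2$ specialization of Lemma~\ref{lemma:SvM-p_model_prop}, the only genuinely new input being the marginal cluster probabilities for \textit{SvM-p-2}. First I would note that, by the marginalization property of the Gaussian process (with the standing normalization that the covariance function has unit variance), the latent field value at a single location is $Z\sim\textrm{N}(0,1)$, so $\lambda_{1,\ell}=\invlogit{Z}$ and $P(\zeta_\ell=1)=\E{\invlogit{Z}}$. The key small step is that this equals $\tfrac12$, which I would prove via the functions $f$ and $g$ of \eqref{eq:logit_approx}: $g$ is odd, so $\E{g(Z)}=0$ and hence $\E{f(Z)}=\tfrac12$; and $\invlogit{z}-f(z)$ is an odd function of $z$ (since $\invlogit{-z}=1-\invlogit{z}$ and, for $\zEps\in(0,2]$, $f(-z)=1-f(z)$), so it integrates to zero against the symmetric $\textrm{N}(0,1)$ density—equivalently, the one-sided bounds $f<\invlogit{\cdot}$ on $\{z<0\}$ and $f>\invlogit{\cdot}$ on $\{z>0\}$ are mirror images under $z\mapsto-z$ and their contributions cancel. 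Thus $\E{\invlogit{Z}}=\E{f(Z)}=\tfrac12$, so $P(\zeta_\ell=1)=P(\zeta_\ell=2)=\tfrac12$, and the same holds at location $\ell'$ since $Z_{\ell'}\sim\textrm{N}(0,1)$ as well. (This is the one place the hypothesis $\zEps\in(0,2]$ is used.)

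Next I would substitute $P(\zeta=k)=\tfrac12$ into the three formulas of Lemma~\ref{lemma:SvM-p_model_prop} with $K=2$ and with the von Mises parameters $m_1,m_2,\rho_1,\rho_2$ held fixed, which is the regime under which that lemma was established. In \eqref{lemma:SvM-p_e} the common factor $\tfrac12$ cancels between the numerator and denominator of the $\arctan^*$ argument, giving \eqref{lemma:SvM-p_version_e}. In \eqref{lemma:SvM-p_var} the substitution is immediate and yields \eqref{lemma:SvM-p_version_var}. For the correlation, the quantity $s(\zeta_\ell)$ appearing in Lemma~\ref{lemma:SvM-p_model_prop} becomes $1-\tfrac12\bigl(\tfrac{I_2(\rho_1)}{I_0(\rho_1)}\cos(2(m_1-\alpha))+\tfrac{I_2(\rho_2)}{I_0(\rho_2)}\cos(2(m_2-\alpha))\bigr)$, and since $Y_\ell$ and $Y_{\ell'}$ have the same marginal law (both $\zeta_\ell,\zeta_{\ell'}$ are Bernoulli$(\tfrac12)$) we get $s(\zeta_\ell)=s(\zeta_{\ell'})$, so $\sqrt{s(\zeta_\ell)s(\zeta_{\ell'})}=s(\zeta_\ell)$, which is exactly the denominator in \eqref{lemma:SvM-p_version_corr}.

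To recast the numerator in the symbolic form stated in \eqref{lemma:SvM-p_version_corr} I would return to the definition of the Jammalamadaka--Sarma circular correlation: using $\sin^2\theta=\tfrac12(1-\cos2\theta)$ together with the von Mises trigonometric-moment identity $\E{e^{inY}}=\tfrac{I_n(\rho)}{I_0(\rho)}e^{in m}$ (generalizing \eqref{eq:vM_mean}) conditionally on $\zeta_\ell$, one obtains $\E{\sin^2(Y_\ell-\alpha)}=\tfrac12\bigl(1-\sum_k P(\zeta_\ell=k)\tfrac{I_2(\rho_k)}{I_0(\rho_k)}\cos(2(m_k-\alpha))\bigr)=\tfrac12 s(\zeta_\ell)$, so the normalizer $\sqrt{\E{\sin^2(Y_\ell-\alpha)}\E{\sin^2(Y_{\ell'}-\alpha)}}$ equals $\tfrac12 s(\zeta_\ell)$ and therefore $\textrm{Corr}(Y_\ell,Y_{\ell'})=2\E{\sin(Y_\ell-\alpha)\sin(Y_{\ell'}-\alpha)}/s(\zeta_\ell)$. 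I would add a sentence confirming internal consistency: applying $2\sin(y_\ell-\alpha)\sin(y_{\ell'}-\alpha)=\cos(y_\ell-y_{\ell'})-\cos(y_\ell+y_{\ell'}-2\alpha)$, conditioning on $(\zeta_\ell,\zeta_{\ell'})$, and using conditional independence of $Y_\ell,Y_{\ell'}$ given the labels with the same moment identity reproduces the double sum in \eqref{lemma:SvM-p_corr}; it is left in expectation form here because $P(\zeta_\ell=k,\zeta_{\ell'}=k')$ is a product-type expectation of $\invlogit{Z_\ell},\invlogit{Z_{\ell'}}$ depending on $s=\textrm{Cov}(Z_\ell,Z_{\ell'})$ and has no closed form (it is precisely the expectation the $f,g$ bounds are built to control).

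The work here is essentially bookkeeping rather than a real obstacle; the few points requiring care are (i) invoking Lemma~\ref{lemma:SvM-p_model_prop} under exactly its "fixed von Mises parameters'' hypothesis, which is in force by assumption; (ii) the well-definedness of $\arctan^*$ in \eqref{lemma:SvM-p_version_e}, whose argument is a ratio of sine- and cosine-weighted Bessel ratios and degenerates only in the antipodal/zero-concentration configuration; and (iii) justifying the marginal-symmetry identity $s(\zeta_\ell)=s(\zeta_{\ell'})$, which holds because $Z_\ell$ and $Z_{\ell'}$ are each $\textrm{N}(0,1)$ so $Y_\ell$ and $Y_{\ell'}$ are identically distributed. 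The single step carrying real mathematical content is $\E{\invlogit{Z}}=\tfrac12$, and the oddness observations above make it routine.
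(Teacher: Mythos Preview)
Your proposal is correct and follows essentially the same route as the paper: establish $\E{\invlogit{Z}}=\tfrac12$ (the paper does this via the same symmetry/cancellation argument using $f$ and $g$, written out as explicit integrals in the appendix), then specialize Lemma~\ref{lemma:SvM-p_model_prop} to $K=2$ with $P(\zeta=1)=P(\zeta=2)=\tfrac12$ and rewrite the correlation denominator using $\E{\sin^2(Y_\ell-\alpha)}=\tfrac12 s(\zeta_\ell)$. One very minor remark: the pure oddness argument you give for $\E{\invlogit{Z}}=\E{f(Z)}$ actually works for any $\zEps>0$, so the hypothesis $\zEps\in(0,2]$ is not strictly needed for this lemma (it enters only in the subsequent approximation bounds).
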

We make the following comments based on the lemma above. First, because both component's are equally favored, $m_1$, $m_2$, $\rho_1$, and $\rho_2$ for the prior mean and variance become more important. The \textit{SvM-p-2} model's prior mean direction is completely determined by these parameters. For the model's prior circular variance, the concentration term is the average of the two components' concentration terms, which underscores the importance of $\rho_1$ and $\rho_2$. If we want the Gaussian process to have more of an effect on the prior mean or variance, it is necessary to change the mean for the Gaussian process. It can be shown that $\E{f(Z)} = \E{\invlogit{Z}}$ for $Z$ distributed according to any normal distribution with mean 0. Next, $\E{\sin(y_\ell - \alpha)\sin(y_{\ell'} - \alpha)}$ in \eqref{lemma:SvM-p_version_corr} can be expanded and written as a function of $\E{\invlogit{z_\ell}\invlogit{z_\ell'}}$. Doing so shows that the numerator again represents the difference between the concentration of two observations belonging to the same component and the concentration of two observations belonging to different components. It is then possible to bound $\E{\invlogit{z_\ell}\invlogit{z_\ell'}}$ with $\E{f(z_\ell)f(z_\ell')}$. Due to the length and technical aspects, we leave the details of this derivation and the expansion to the supplementary material.

\section{Posterior Inference}
\label{sec:model_fitting}
We now describe how to fit our models. Fitting \textit{SvM-c} can be difficult because of the underlying non-Euclidean geometry of its parameter space. We will first describe the basic sampling schemes based on the Hamiltonian MCMC approach, and then discuss techniques for exploiting the geometric structures via a suitable application of  elliptical slice sampling.

\noindent\textbf{Sampling algorithms} We use the following approaches to sample from our models. For \textit{SvM-p}, we use Hamiltonian Monte Carlo (HMC) to sample from the following posterior:
\[
\prod_{\ell = 1}^N \left(\sum_{k = 1}^K \lambda_k \vM{y_\ell}{m_k}{\rho_k}\right) \prod_{k = 1}^K \Gamma(\rho_k \mid 1, 1) \prod_{k = 1}^K \textrm{N}\left(\widetilde{z}_k \mid 0, \mathbbm{I}_{N \times N}\right).
\]
Here, $\v{\widetilde{z}}_k$ is the vector such that if $\Sigma = LL^T$, then $\v{z}_k = L\v{\widetilde{z}}_k$. Further, note that the probability for a single observation $y_\ell$ is the following:
\[
p(y_\ell \mid \v{\widetilde{z}_{., \ell}}, \v{m}, \v{\rho}) = \sum\limits_k \lambda_{k, \ell}\vM{y_\ell}{m_k}{\rho_k}.
\]
In other words, we marginalize out the label. As a result, we are using HMC to sample for $m_1, m_2, \ldots, m_K$; $\rho_1, \rho_2, \ldots, \rho_K$; and $\v{\widetilde{z}}_1, \v{\widetilde{z}}_2, \ldots, \v{\widetilde{z}}_{K - 1}$ according to the non-centered parametrization.

Then, the gradients for the log likelihood and thus the update for the momentum vector are given below.
\begin{itemize}
    \item For $k = 1, 2, \ldots, K - 1$ and $\ell = 1, 2, \ldots, N$, the gradient for $\widetilde{z}_{k, \ell}$ is the following:
    \begin{align}
    -\frac{\partial U}{\partial q_\ell}(q(t)) &= \sum_{\ell'} L_{\ell', \ell} \lambda_{k, \ell'}\frac{\vM{y_{\ell'}}{m_k}{\rho_k} - \sum_{k' = 1}^K\lambda_{k', \ell'}\vM{y_{\ell'}}{m_{k'}}{\rho_{k'}}}{p(y_\ell' \mid \v{\widetilde{z}_{., \ell'}}, \v{m}, \v{\rho})}  -  \widetilde{z}_\ell..
    \label{lemma:mixture_prob_noncentered_m_update}
    \end{align}
    In the two parameter case, this reduces to the following:
    \begin{align}
         -\frac{\partial U}{\partial q_\ell}(q(t)) &= \sum_{\ell'} \lambda_{1, \ell'} (1 -  \lambda_{1, \ell'}) \frac{\vM{y_{\ell'}}{m_1}{\rho_1} - \vM{y_{\ell'}}{m_2}{\rho_2}}{p(y_{\ell'} \mid z_{\ell'}, m_1, m_2, \rho_1, \rho_2)} L_{\ell', \ell}  -  \widetilde{z}_\ell.
        \label{lemma:mixture_prob_noncentered_m_update-2}
    \end{align}
    \item For $k = 1, 2 \ldots, K$, the gradient for $m_k$ is the following:
    \begin{align}
    -\frac{\partial U}{\partial q_k}(q(t)) &= \sum_{\ell'} \frac{\vM{y_{\ell'}}{m_k}{\rho_k}\sin(y_{\ell'} - m_k)}{p(y_\ell' \mid \v{\widetilde{z}_{., \ell'}}, \v{m}, \v{\rho})}.
    \label{lemma:mixture_prob_angle_m_update}
    \end{align}
    \item For $k = 1, 2 \ldots, K$, the gradient for $\rho_k$ is the following:
    \begin{align}
    -\frac{\partial U}{\partial q_k}(q(t)) &= \sum_{\ell'} \frac{\vM{y_{\ell'}}{m_k}{\rho_k}}{p(y_\ell' \mid \v{\widetilde{z}_{., \ell'}}, \v{m}, \v{\rho})}\left(\cos(y_{\ell'}- m_k) - \frac{\textrm{I}_{-1}(\rho_k)}{\textrm{I}_{0}(\rho_k)}\right) - 1.
    \label{lemma:mixture_prob_rho_update}
    \end{align}
\end{itemize}

On the other hand, for \textit{SvM-c}, we use the following blocked Gibbs approach: 

% For \textit{SvM}, we use the following blocked Gibbs approach if we suppress all hyperparameters from the notation:
% \begin{itemize}
%     \item Sample $(Z_1 - \mu_1, Z_2 - \mu_2)$ given $\v{y}$, $\v{\varphi}$, and $\nu$ using the elliptical slice sampling outlined in Algorithm 2 \citep{MurrayEtAlEllipticalSliceSampling2010}.
%     \item Sample $\v{\varphi}$ and $\nu$ given $\v{y}$, $Z_1$, and $Z_2$ using HMC. 
% \end{itemize}
% except that we additionally need to sample for $\lambda_1, \lambda_2, \ldots, \lambda_K$.  For \textit{SvM-c}, we use the following blocked approach, again 

\begin{itemize}
    \item For $\ell = 1, 2, \ldots, N$, sample $\zeta_\ell$ given $y_\ell$; $z_{1, 1, \ell}, z_{1, 2, \ell}, z_{2, 1, \ell}, z_{2, 2, \ell} \ldots, z_{K, 1, \ell}, z_{K, 2, \ell}$; $\v{\varphi}_{1, \ell}, \v{\varphi}_{2, \ell}, \ldots, \v{\varphi}_{K, \ell}$; and $\nu_1, \nu_2, \ldots, \nu_K$ according to the following probability:
    \[
    p(\zeta_\ell = k \mid y_\ell; z_{k, 1, \ell}, z_{k, 2, \ell}; \v{\varphi}_\ell; \nu_\ell) \propto \lambda_k\vM{y_n}{m_{k, \ell}}{\rho_{k, \ell}}.
    \]
    % Because we are using notation from earlier, $m_{k, n} = \arctan^*(z_{k, 1, n}, z_{k, 2, n})$ and $\rho_{k, n} = \exp(\varphi_{k, n})$.
    \item For $k = 1, 2, \ldots K$, sample $(\v{z_{k, 1}} - \v{\mu_{k, 1}}, \v{z_{k, 2}} - \v{\mu_{k, 2}})$ given $\v{y}$, $\zeta_1, \zeta_2, \ldots, \zeta_N$; $\v{\varphi}_1, \v{\varphi}_2,$ $\ldots \v{\varphi}_K$; and $\nu_1, \nu_2, \ldots, \nu_K$ using the elliptical slice sampling outlined in Algorithm 2 \citep{MurrayEtAlEllipticalSliceSampling2010}. If we use the terminology of the paper, the normal prior and the likelihood that we are sampling are $\mathcal{N}\left(0, \begin{pmatrix} \Sigma_{k, 1} & 0\\
    0 & \Sigma_{k, 2}
    \end{pmatrix}\right)$ and $\prod_\ell (\vM{y_n}{m_{k, \ell}}{\rho_{k, \ell}})^{\indFct{\zeta_\ell = k}}$ respectively.
    \item Sample $\v{\varphi}_1, \v{\varphi}_2, \ldots, \v{\varphi}_K$ and $\nu_1, \nu_2, \ldots, \nu_K$ given $\v{y}$; $\zeta_1, \zeta_2, \ldots, \zeta_N$; and $\v{z_{1, 1}}, \v{z_{1, 2}}, \ldots,$ $\v{z_{K, 1}}, \v{z_{K, 2}}$ using HMC and the following probability:
    \begin{align*}
        p(\v{\varphi}_1, & \v{\varphi}_2, \ldots, \v{\varphi}_K, \nu_1, \nu_2, \ldots, \nu_K \mid \v{y}; \zeta_1, \zeta_2, \ldots, \zeta_N; \v{z_{1, 1}}, \v{z_{1, 2}}, \ldots, \v{z_{K, 1}}, \v{z_{K, 2}}) \propto\\
        & \prod_{\ell} (\vM{y_\ell}{m_{k, \ell}}{\rho_{k, \ell}})^{\indFct{\zeta_\ell = k}} \prod_k \left(\prod_\ell \textrm{N}(\varphi_{k, \ell} \mid \nu_k, \varsigma^2)\right) \textrm{N}(\nu_k \mid 0, \tau^2).
    \end{align*}
%using HMC as implemented in Stan  \citep{StanDevelopmentTeam.StanCoreLibrary2018}.
\end{itemize}
Unlike $\textit{SvM-p}$, we sample for the labels, $\zeta_1, \zeta_2, \ldots, \zeta_N$, because the mixing appeared to be better according to the trace plots.

To help the samplers, we used initial values obtained via a regularized version of Expectation Maximization algorithm derived from \textit{SvM-c} and \textit{SvM-p}. %Each approach seeks to maximize the unnormalized posterior of its respective model. Because these methods seek only to capture the mode, it is straightforward to derive the expectation and maximization steps. 
We leave the details for these algorithms to the supplementary material.

\noindent\textbf{Exploiting geometric structure} To sample \textit{SvM-c} in an efficient manner, we need to sample for the mean angles while respecting the spatial information, which is embedded in the Gaussian process' covariance matrix. HMC has trouble doing both even if we change the model parametrization. In one parametrization, the mean angles can be sampled using HMC, but at the cost of having to invert the covariance matrix. An alternative parametrization allows HMC to propose moves according to the covariance matrix without inverting it. However, these moves marginally lie in $\mathbbm{R}^2$ and may not result in new angles being sampled even if the proposal is accepted. For example, a marginal move along a ray from the origin might be accepted even though $m_{\ell}$ remains the same. Note that this is not the case for \textit{SvM-p} because the Gaussian processes are mapped to probabilites via the general inverse logit function. Hence, each marginal move represents a different mixing probability for each observation. More discussion of why HMC struggles for \textit{SvM-c} and not \textit{SvM-p} and the exact parametrization discussed are given in the supplementary material. 

\begin{figure}[!tp]
\captionsetup[subfigure]{justification=centering}
\centering
\begin{subfigure}{.4\textwidth}
\centering
\includegraphics[width = 1\textwidth]{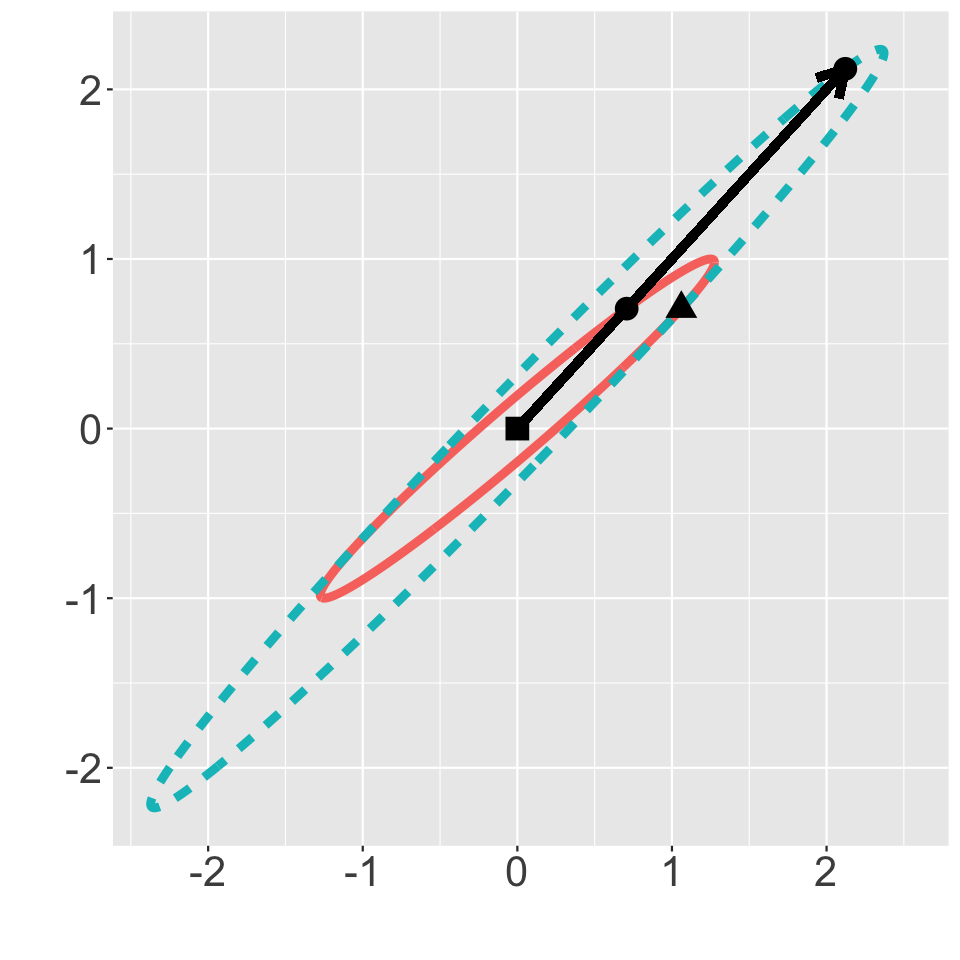}
\end{subfigure}
\begin{subfigure}{.4\textwidth}
\centering
\includegraphics[width = 1\textwidth]{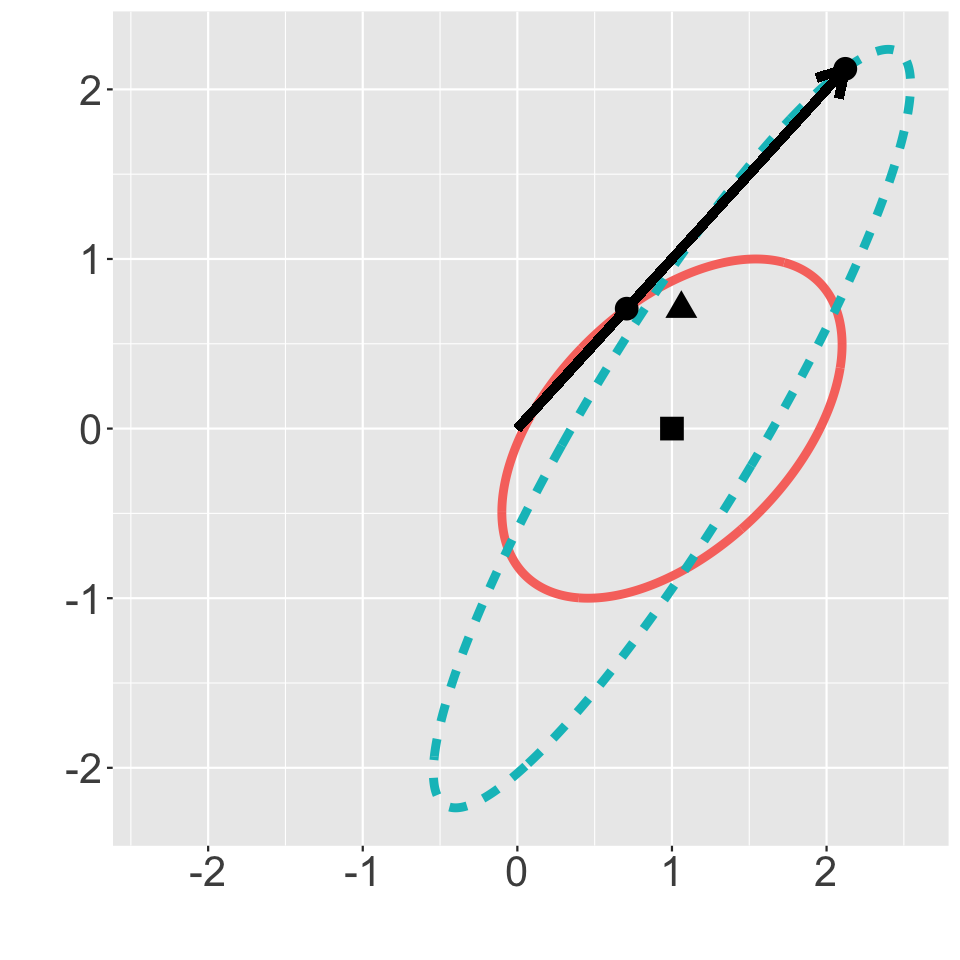}
\end{subfigure}\\
\caption{Plots showing the elliptical slice sampler's proposal ellipse for different locations (shown as circles) and means for the Gaussian prior (shown as the squares), but the same elliptical proposal point (shown as the triangle). These ellipses represent the set of angles that can be proposed by the elliptical slice sampler. The arrow demonstrates that the locations have the same angle in polar coordinates, but different radii.}
\label{fig:ess_proposal_ellipse}
\end{figure}

We turn to elliptical slice sampling instead \citep{MurrayEtAlEllipticalSliceSampling2010}. This technique allows us to draw samples for a random variable distributed according to a Gaussian prior with zero mean and any likelihood. 
For \textit{SvM-c}, we use the elliptical slice sampler to sample for $(Z_{k, 1} - \mu_{k, 1}, Z_{k, 2} - \mu_{k, 2}) \mid \v{\varphi}_k, \nu_k, \v{\zeta}$ for $k \in 1, 2, \ldots K$. In this case, the Gaussian prior has covariance $\mathbbm{I}_2 \otimes \Sigma$. Following the notation of the paper, we let $\v{f}$ represent the current location of $(Z_{k, 1} - \mu_{k, 1}, Z_{k, 2} - \mu_{k, 2})$. If we use the transformation specified in \eqref{fct:arctan_star} on $\v{f} + (\mu_1, \mu_2)$ to get $\v{m}$, then the likelihood is defined as follows: The likelihood becomes the following:
\[
L(\v{y} \mid \v{m}_k, \v{\varphi}_k, \nu_k, \v{\zeta}) = \prod_{\ell} (\vM{y_\ell}{m_\ell}{\rho_\ell})^{\indFct{\zeta_\ell = k}}.
\]
Note that even if $y_\ell$ is not included in the likelihood because $\zeta_\ell \neq k$, we still propose $Z_{k, 1, \ell}$ and $Z_{k, 2, \ell}$. This will prove useful for predictive purposes.

As seen in Figure \ref{fig:ess_proposal_ellipse}, the elliptical slice sampler is marginally a sensible way to explore the space if $\mu_1 = 0$ and $\mu_2 = 0$. We can then explore the entire space of $m_\ell$ with each run of the slice sampler. Unfortunately, if we set $\mu_1$ and $\mu_2$ to be $0$, $m_\ell$ will be marginally distributed according to a uniform distribution. By not setting $\mu_1$ and $\mu_2$ to be $0$, Figure \ref{fig:ess_proposal_ellipse} shows us that not only can the sampler potentially explore a subset of random angles, but also the subset might depend on its current location in $\mathbbm{R}^2$ and the point used to determine the ellipse. This might affect the next draw because for the same value of $m_\ell$, the range of the next $m_\ell$ may be different. Still, even with this limitation, the elliptical slice sampler allows us to propose a valid move in the space of $\v{m}$ that respects the covariance information from the Gaussian process. Due to it being hard to characterize the dependence of the next draw on the location, we thin our MCMC chain to mitigate this issue. Luckily, the elliptical slice sampler runs reasonably quickly so this does not impose too much of an additional computational burden.

\section{Simulations and Results}
\label{results}
In this section we shall present a simulation study for the introduced models and then an application to the analysis of the income proportion data set. 
\subsection{Simulation study}
\label{ssection:sim_results}
\textbf{Simulation overview} The simulation study is conducted with following goals in mind. First, we wanted evidence that if we used our models for inference, our models could correctly recover the model parameters for data generated according to their respective models. Next, with this menagerie of simulated data, we could examine how these models behave in mis-specified settings and how to address model selection. Then, while fitting these models, we might notice inefficiencies in our sampling scheme and develop better approaches. Finally, inspecting the results might yield additional insights about the behaviors of these models.

%\mu for SvM to be (-1, 0); \mu for SvM-c to be (0, 1), (0, -1)
%and ignoring the spatial information 
%This too ignored the spatial information.
%to ensure good separation
To that end, we used a uniform distribution to generate 500 "locations" in $\Delta^2$. We created observations in six different ways. First, we generated observations using Von Mises($\pi$, 5). Next, we created them using a mixture of Von Mises($\frac{\pi}{2}$, 5) and Von Mises($\frac{3\pi}{2}$, 10) with mixing probability of 0.3 and 0.7 respectively. As discussed in the supplementary material, these models are the \textit{iV} and \textit{iVM} models respectively. While these two methods did not use the spatial information, the remaining methods do. Third, we simulated observations according to \textit{SvM} specified in Model \eqref{model:SvM} with a prior mean of $\pi$ for all locations. Fourth, we created them according to \textit{SvM-c} specified in Model \eqref{model:SvM-c}. One component's mean was centered at $\frac{\pi}{2}$ and the other component's mean at $\frac{3\pi}{2}$. There was equal probability of using either component. Next, we used the \textit{SvM-p} model, \eqref{model:SvM-p}, to simulate observations. We again used means of $\frac{\pi}{2}$ and $\frac{3\pi}{2}$. Finally, we simulated according to \textit{SvM} specified in Model \eqref{model:SvM} with a prior mean of $0$  for all locations. This will test whether our model can handle data that appears to be separated due to where we set zero, but is actually connected.

\noindent\textbf{Simulation model fitting} We fit the \textit{SvM}, \textit{SvM-c}, and \textit{SvM-p} models to these simulated observations using the approaches outlined in the previous section. Because there was at most two components, we only fit the two component versions of \textit{SvM-c} and \textit{SvM-p} to our simulated observations. Then, for the Gaussian processes, we used the squared exponential kernel with $\omega = 0.1$. For \textit{SvM-p}, we set $\sigma = 1$ and $\mu$ = 0. Meanwhile, for \textit{SvM} and \textit{SvM-c} model, we set $\sigma = 0.5$. We set $\mu = (-1, 0)$ for \textit{SvM} and $\mu_1 = (0, 1)$ and $\mu_2 = (0, -1)$ for \textit{SvM-c}. These values correspond to $\pi$, $\frac{\pi}{2}$, and $\frac{3\pi}{2}$ respectively. We set $\varsigma = 0.05$ and $\tau = 5$ for \textit{SvM} and \textit{SvM-c} concentration parameter's hierarchical prior. Our goal for these choices of parameters was to capture a larger range of probabilities for \textit{SvM-p} whereas we wanted components to be more distinct for \textit{SvM} and \textit{SvM-c} models. 

% We wanted any hierarchical mean to be possible, but we wanted the terms to be tightly concentrated around the mean.
% As discussed earlier in Section \ref{ssection:model_description}, the mean angle for \textit{SvM} and \textit{SVM-c} models is determined by the angle of the ray from the origin to the point. Due to the L2 distances of the means from the origin for \textit{SvM} and \textit{SVM-c} models being only one, our current choices make certain mean angles more likely. In particular, mean angles between $\frac{\pi}{2}$ and $\frac{3\pi}{2}$ are more likely for \textit{SvM}. On the other hand, mean angles between $0$ and $\pi$ are more likely for one component of \textit{SvM-c} whereas mean angles between $\pi$ and $2\pi$ are more likely for the other component. Increasing $\sigma$ might make a larger range of angles more possible. Along these lines, we set $\varsigma = 0.05$ and $\tau = 5$ for \textit{SVM} and \textit{SVM-c} concentration parameter's hierarchical prior. We wanted any hierarchical mean to be possible, but we wanted the terms to be tightly concentrated around the mean. We did not sample for $\varsigma$ because the posterior variance for the concentration parameters ended up being too large. Finally, to help the MCMC samplers fit the models, we initialized them with results from the respective model's Expectation Maximization algorithm. 

\begin{figure}[!tb]
\captionsetup[subfigure]{justification=centering}
\centering
\begin{subfigure}{.23\textwidth}
\centering
\includegraphics[width = 1\textwidth]{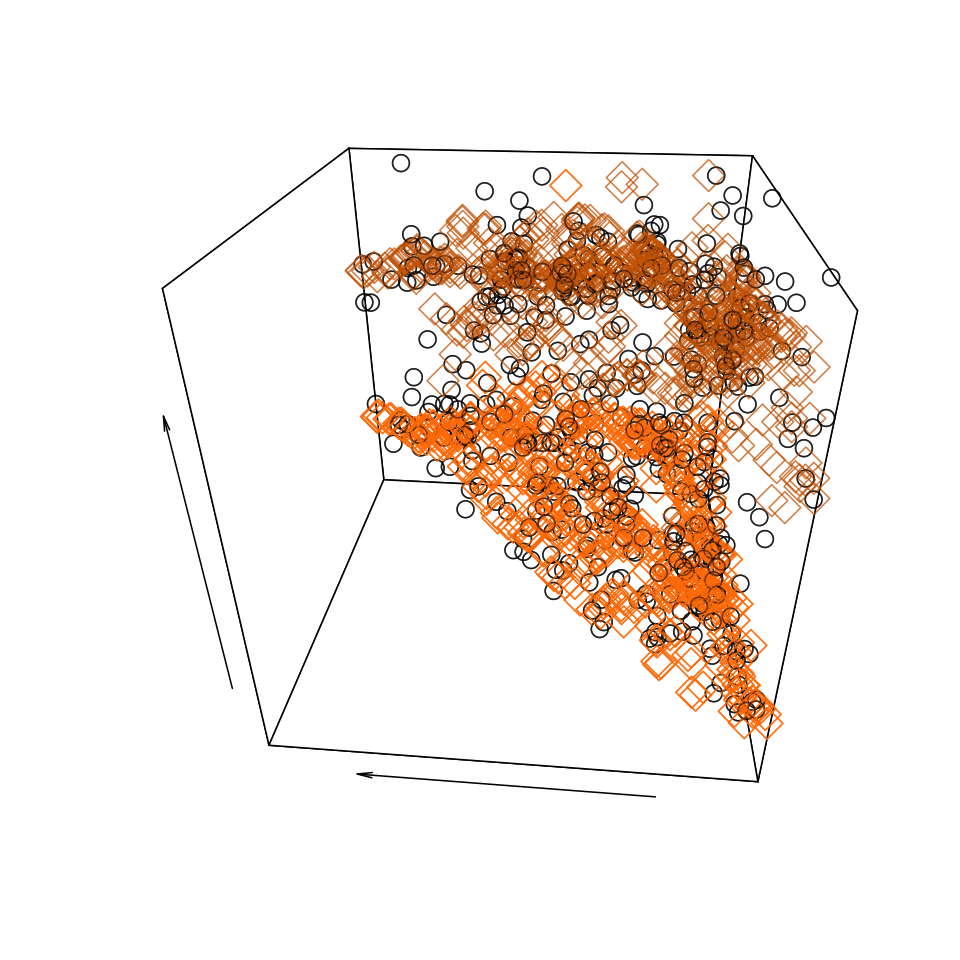}
\caption{Simulated data with\\means in orange}
\end{subfigure}
% \begin{subfigure}{.23\textwidth}
% \centering
% \includegraphics[width = .8\textwidth]{}
% \caption{iVM model,\\cf.~\eqref{model:iVM}}
% \end{subfigure}
\begin{subfigure}{.23\textwidth}
\centering
\includegraphics[width = 1\textwidth]{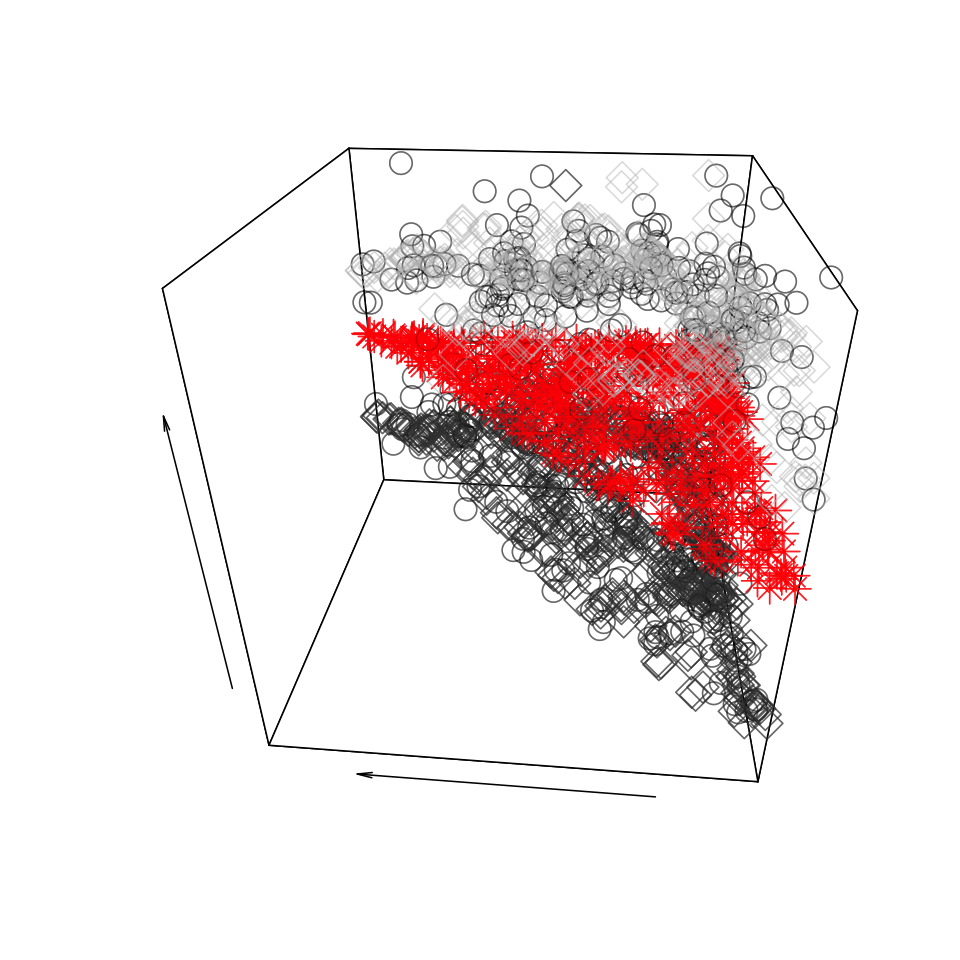}
\caption{SvM model (cf.~\eqref{model:SvM})\\fitted means in red}
\end{subfigure}
\begin{subfigure}{.23\textwidth}
\centering
\includegraphics[width = 1\textwidth]{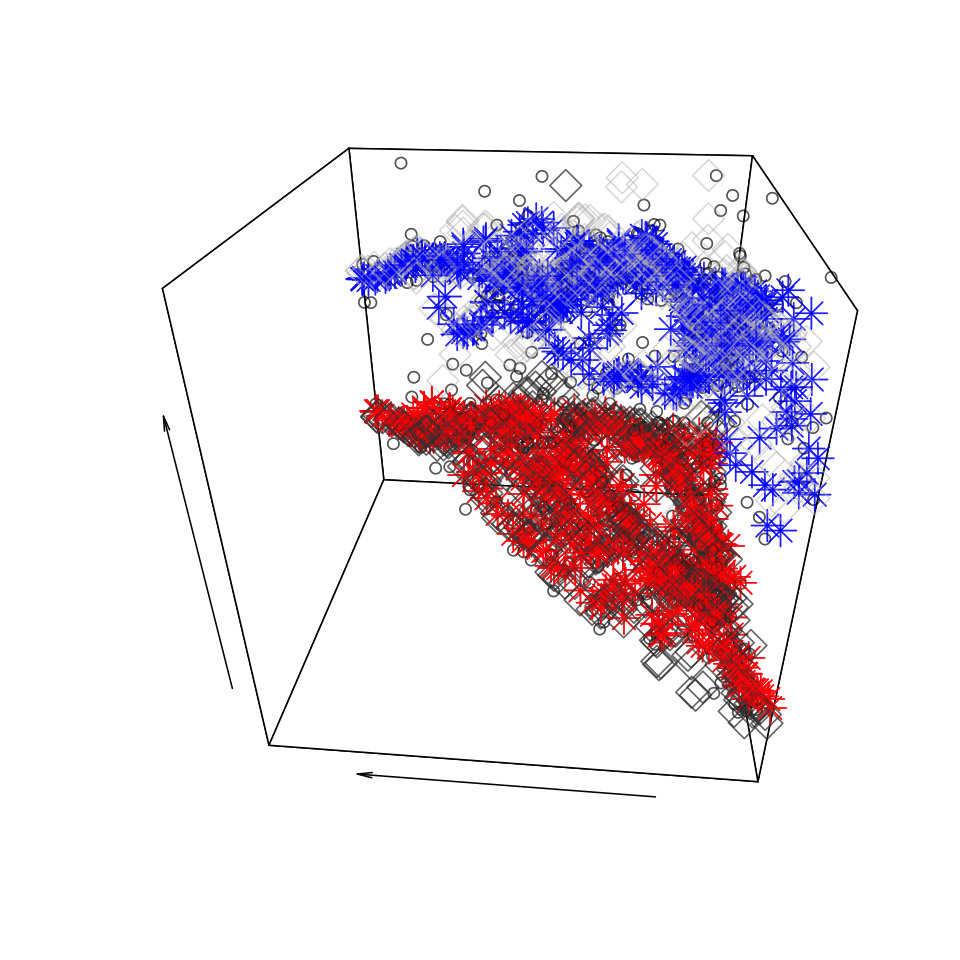}
\caption{SvM-c model (cf.~\eqref{model:SvM-c})\\fitted means in red and blue}
\end{subfigure}
\begin{subfigure}{.23\textwidth}
\centering
\includegraphics[width = 1\textwidth]{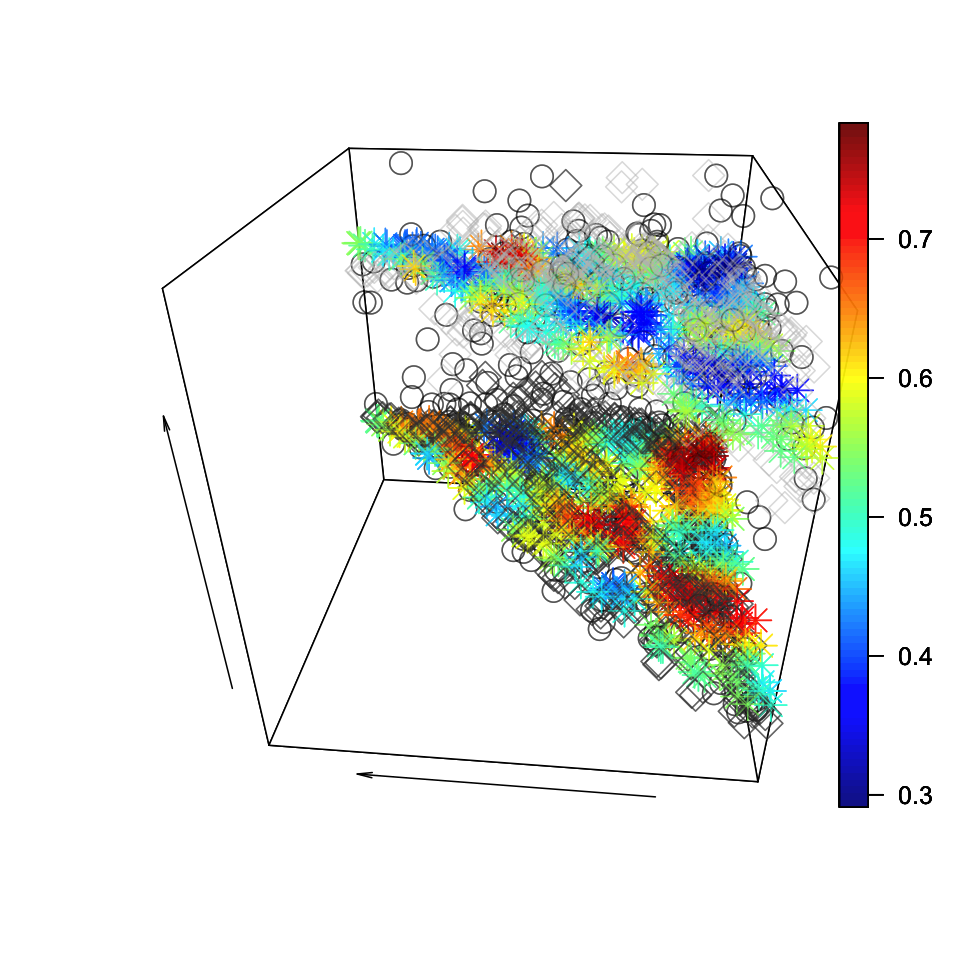}
\caption{SvM-p model (cf.~\eqref{model:SvM-p})\\fitted means colored by $\lambda_1$}
\end{subfigure}\\
\caption{The right three plots show the mean component fitted by different models for random directions simulated according to \textit{SvM-c} in \eqref{model:SvM-c}. The fitted mean components for the different models are shown in colored asterisks with the simulated random directions as circles and the simulated means as grey rhombuses. The leftmost plot is just the simulated data and the mean surface used to generate that data. The x-y axis show the locations in the first two dimensions of a three dimensional simplex; the z-axis represent the direction.}
\label{fig:model_fitted_SvM-c_plots}

\resizebox{0.9\textwidth}{!}{
\centering
\begin{tabular}{@{\extracolsep{5pt}} cccc} 
\\[-1.8ex]\hline 
\hline \\[-1.8ex] 
 & $\bm{m}$ & $\bm{\rho}$ & $\bm{\lambda}$ \\ 
\hline \\[-1.8ex] 
\multirow{2}{*}{\textbf{Simulation}} & $\overline{m}_1$ = 1.36 (0.41, 2.16) & $\overline{\rho}_1 = 3.01 (2.56, 3.50)$ & $\lambda_1 = 0.50$\\
& $\overline{m}_2$ = 4.85 (3.53, 5.64) & $\overline{\rho}_2 = 8.00 (6.98, 9.00)$ & $\lambda_2 = 0.50$\\
& & &\\
\textit{SvM} cf.~\eqref{model:SvM} & $\overline{m}$ = 3.15 (1.93, 4.35) & $\overline{\rho}$ = 0.01 (0.00, 0.04) & ---\\
& & &\\
\multirow{2}{*}{\shortstack[c]{\textit{SvM-c} cf.~\eqref{model:SvM-c}}} & $\overline{m}_1$ = 1.42 (0.90, 1.97) & $\overline{\rho}_1$ = 2.79 (2.17, 3.52) & $\lambda_1$ =  0.54 (0.49, 0.58)\\
& $\overline{m}_2$ = 4.79 (4.38, 5.20) & $\overline{\rho}_2$ = 8.01 (6.11, 10.52) & $\lambda_2$ = 0.46 (0.42, 0.51)\\
& & &\\
\multirow{2}{*}{\shortstack[c]{\textit{SvM-p} cf.~\eqref{model:SvM-p}}} & $m_1$ = 1.38 (1.24, 1.51) & $\rho_1$ = 1.44 (1.07, 1.91) & $\overline{\lambda_1}$ = 0.58 (0.32, 0.82)\\
& $m_2$ = 4.77 (4.68, 4.86) & $\rho_2$ = 4.08 (2.75, 5.55) & $\overline{\lambda_2}$ = 0.42 (0.18, 0.68)\\
& & &\\
\hline \\[-1.8ex] 
\end{tabular} 
}
\captionof{table}{Circular mean posterior values and 95\% credible intervals for the parameters in von Mises distribution and mixing probability are shown for the simulation data in Figure \ref{fig:model_fitted_SvM-c_plots}. Parameters with bars over them were averaged across all locations.}
\end{figure}

For posterior inference, any step involving HMC was implemented in Stan \citep{StanDevelopmentTeam.StanCoreLibrary2018}. Otherwise, we implemented our approach in R. Using this code\footnote{Our code can be found at \url{https://github.com/rayleigh/housing\_price}.}, we ran four chains and 2000 iterations of the HMC sampler for \textit{SvM-p} with 1000 iterations as burn-in. Meanwhile, for \textit{SvM} and \textit{SvM-c} sampler, we ran four chains and 10000 iterations, using every fifth iteration after the first 5000 iterations. On a cluster using 2x 3.0 GHz Intel Xeon Gold 6154 as its processor and with 192 GB as its RAM, \textit{SvM-c} took between four and a half to eight and a half hours whereas \textit{SvM-p} took between 8 and 20 minutes. We checked for convergence by primarily looking at trace plots of parameters and examining Rhat values for non-circular values as computed by Stan \citep{GelmanRubinInferenceIterativeSimulation1992}.

\noindent\textbf{Simulation model findings} These are a number of interesting results to report from these experiments. First, as exemplified by Figure \ref{fig:model_fitted_SvM-c_plots} and other such figures in the appendix, the models appeared to correctly capture the mean surfaces for data simulated from their respective models. For example, the fitted mean surface for \textit{SvM} matches the simulated mean surface, even when the simulated surface is centered at $0$. While the numerical summaries of the random variables were slightly off, this could be due to the posterior credible intervals being averaged across all locations. Next, homogeneous models may have trouble capturing heterogeneous patterns. For instance, the fitted mean surface from \textit{SvM} tries to match the observations as much as possible in the case of \textit{SvM-p} or essentially gives up and becomes a uniform distribution in the case of \textit{SvM-c}. It does not recognize the simulated mean pattern of either model. Finally, the results demonstrate that \textit{SvM} and \textit{SvM-c} are in a different class of models compared to the class of models \textit{SvM-p} belongs to. As seen in Figure \ref{fig:model_fitted_SvM-c_plots}, \textit{SvM-p} tries to put the von Mises distributions at the two components' overall mean with small concentration parameters in the case of \textit{SvM-c}. The fitted model is trying to make up for the variation in the observations by returning two diffuse von Mises distribution. A similar result happens in the \textit{SvM} case, except that the two von Mises are placed at an "lower" and "upper" mean. In contrast, \textit{SvM-c} correctly recovers the simulated mean surfaces in the \textit{SvM-p} case, but it can only roughly capture the average probability of each surface. Interestingly enough, in the \textit{SvM} case, \textit{SvM-c} matches one fitted mean surface to the simulated mean surface and essentially "zeroes" out the other surface.

\noindent\textbf{Simulation model selection} We decided to compute the posterior predictive probability for another 50 random locations and their observations simulated according to the different scenarios in order to model select. Let $x^*$ represent the withheld locations, $y^*$ the withheld data, $\theta$ a posterior draw for the parameters based on $x$ and $y$, and $\theta^*$ a draw for the parameters for $x^*$ and $y^*$. The posterior predictive probability is given as follows.
\begin{align}
p(\v{y^*} \mid x, x^*, \v{y}) := \int \int \textrm{p}(y^* \mid \theta^*) \textrm{p}(\theta^* \mid \theta, x, x^*)  \textrm{p}(\theta \mid x, y) d\theta^* d\theta.
\label{eq:post_pred_prob}    
\end{align}
We discuss how to compute this probability in the supplementary material because it is straightforward to compute $\textrm{p}(y^* \mid \theta^*)$ according to our models defined in $\eqref{model:SvM-c}$ and $\eqref{model:SvM-p}$. It is also simple to sample from $\textrm{p}(\theta^* \mid \theta, x, x^*)$ due to the conditional formula for the Multivariate normal distribution to generate these draws from the Gaussian processes. We take advantage of this fact by sampling 100 times from $\textrm{p}(\theta^* \mid \theta, x, x^*)$ to create our posterior predictive draws. To ensure that both the posterior draws and the posterior predictive draws speak equally, we first averaged across posterior draws. We then averaged these means across the posterior predictive draws. This gives us a sense how likely our posterior is on withheld data.

\begin{table}[!tp]
    \centering
    \begin{tabular}{c c c c c c c}
    \hline \\[-1.8ex]
    & iV & iVM & \textit{SvM} Mean $\pi$ & \textit{SvM-c} & \textit{SvM-p} & \textit{SvM} Mean $0$\\
    \hline \\[-1.8ex] 
    % \textit{iV} & \textbf{-39.85} & -78.20 & -68.08 & -89.47 & -92.05 & -$89.20^*$\\
    \textit{iV} & \textbf{-39.85} & -78.20 & -68.08 & -91.99 & -92.05 & -$89.20^*$\\
    & \multicolumn{5}{c}{}\\
    % \textit{iVM} & \textbf{-39.92} & \textbf{-47.60} & -68.24 & -84.34 & \textbf{-61.80} & -85.69\\
    \textit{iVM} & \textbf{-39.92} & \textbf{-47.60} & -68.24 & -86.47 & \textbf{-61.80} & -85.69\\
    & \multicolumn{5}{c}{}\\
    \textit{SvM} & -42.42 & -84.28 & \textbf{-59.34} & -92.43 & -97.86 & -63.49 \\
    & \multicolumn{5}{c}{}\\
    \textit{SvM-c} & -45.78 & -56.79 & -62.19 & \textbf{-73.40} & -70.90 & \textbf{-60.67}\\
    & \multicolumn{5}{c}{}\\
    % $\textit{SvM-c}^b$ & -45.58 & -60.09 & -58.69 & -73.64 & -71.59\\
    % & \multicolumn{5}{c}{}\\
    % \textit{SvM-p} & \textbf{-39.98} & -49.60 & -66.61 & -88.85 & -62.78 & -73.54\\
    \textit{SvM-p} & \textbf{-39.98} & -49.60 & -66.61 & -90.03 & -62.78 & -73.54\\
    & \multicolumn{5}{c}{}\\
    % $\textit{SvM-p}^b$ & -39.98 & -49.59 &  -73.55 & -81.24 & -62.81\\
    % & \multicolumn{5}{c}{}\\
\end{tabular}
\caption{The log of the posterior predictive probability given in Equation \eqref{eq:post_pred_prob} for 50 draws averaged across posterior draws and then posterior predictive draws.}
\label{table:sim_post_pred_results}
\end{table}

Table \ref{table:sim_post_pred_results} shows the results from doing so. There are a few interesting things to note. First, the posterior predictive probability confirms that the homogeneous models performs poorly in any heterogeneous scenarios. Next, the posterior predictive probability calculated after fitting \textit{SvM-p} is higher when the means of the von Mises used to generate the data are constant. On the other hand, the posterior predictive probability calculated after fitting the \textit{SvM} and \textit{SvM-c} models is higher when the von Mises components' means are spatially correlated. We might expect the former because spatially correlated observations might be overfitted to the data. For the latter, \textit{SvM-p} might have difficulty in capturing the variability and thus cannot accurately predict the next value. Finally, while we do not report these cases, there are scenarios in which \textit{SvM-c} can perform poorly according to the predictive posterior probability. When we plot them, we see that the worst performing observations lie between the fitted mean surface. Even though the fitted mean surface and concentration parameter may be mostly correct for the observed data, the model is being penalized for being "over-certain" about the withheld data. This means that if we desire well-separated components, the observations must also be well-separated. Even with this caveat, we use the posterior predictive probability to model select because it allows us to compare models that are different in their approaches and generally picks the model used to generate the data in simulation.

\subsection{Data analysis}
\begin{figure}[!tbp]
\centering
\begin{subfigure}{.4\textwidth}
\centering
\includegraphics[width = 0.8\textwidth]{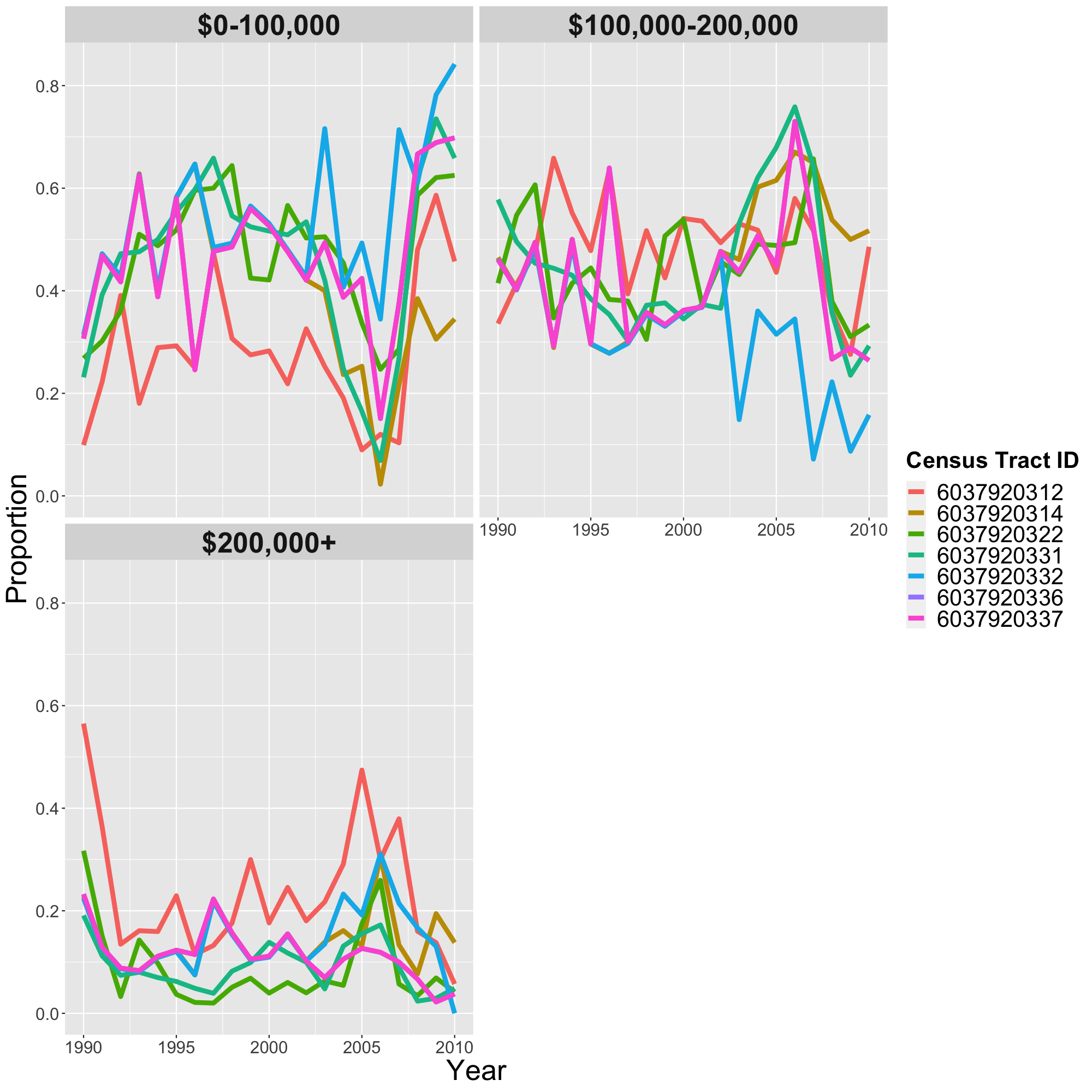}
\label{fig:real_data_neighbor_evo}
\end{subfigure}
\begin{subfigure}{.4\textwidth}
\centering
\includegraphics[width = 0.9\textwidth]{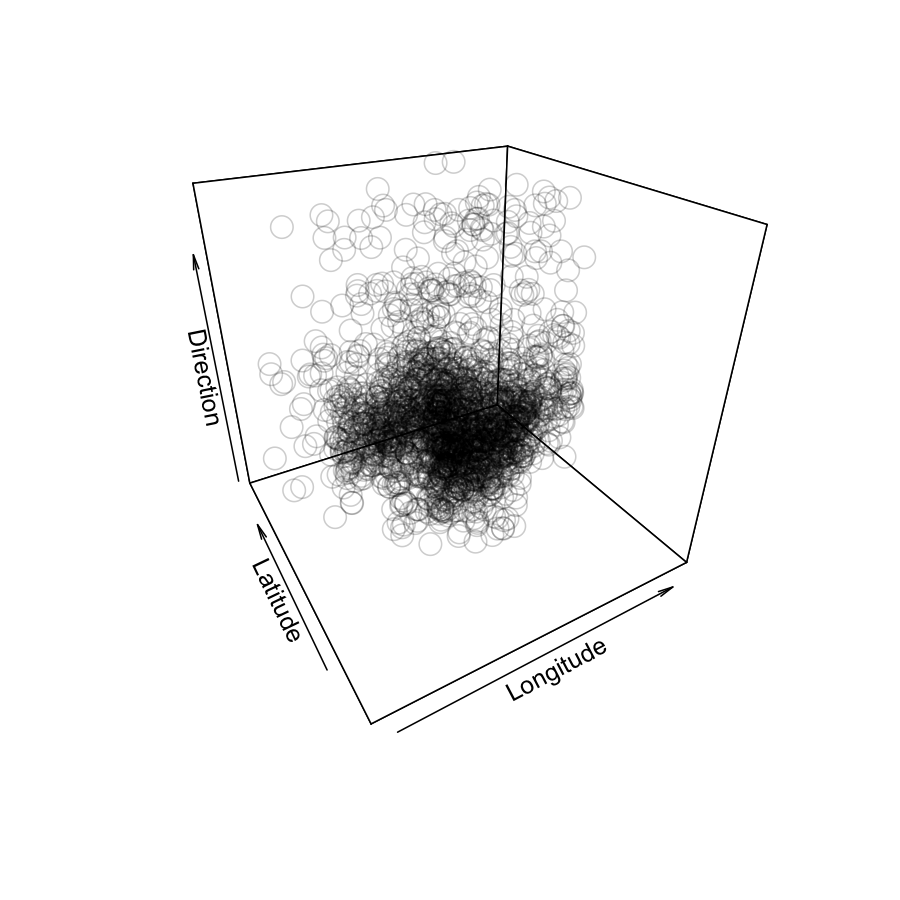}
\label{fig:real_data_Census_tract_dir}
\end{subfigure}
\caption{Plots showing the change in income proportion for Census tract 6037920336 and its neighboring tract on the left and the random direction of the movements for 2003-2004 by Census tract on the right. The center of the Census tract in longitude and latitude is used to represent the Census tract.}
\label{fig:real_data_Census_tract_example}
\end{figure}

% \begin{figure}[!tbp]
% \centering
% \begin{subfigure}{.45\textwidth}
% \centering
% \includegraphics[width = .8\textwidth]{logistic_gp_dir_mixture_comp_hier_p/real_results/gp_dir_info_von_mises_no_zero_pred_14_gp_dir_location_comp_mixture_results.png}
% \caption{2003-2004}
% \end{subfigure}
% \centering
% \begin{subfigure}{.45\textwidth}
% \centering
% \includegraphics[width = .8\textwidth]{logistic_gp_mixture_prob/real_results/gp_dir_info_von_mises_no_zero_pred_19_gp_mixture_prob_results.png}
% \caption{2008-2009}
% \end{subfigure}
% \caption{Random direction plots for different years with the fitted means as asterisks colored red for the GP component mixture model and colored according to the probability for \textit{SvM-p}. The x-y coordinates represent the income proportion in the first two categories; the z coordinate is the normalized random direction.}
% \label{fig:real_data_example_fitted}
% \end{figure}

% \LN{Please add heading for relevant paragraphs to help the readers. e.g., some paragraph under "data processing",  "hyperparameter settings", "model selection", "findings and interpretation", etc. Please do the same for the simulation studies as well}

\noindent\textbf{Data overview} We now fit the proposed models to the random directions observed in the year to year income proportion changes in the Home Mortgage Disclosure Act (HMDA) data for each Census tract of Los Angeles County. While HMDA data is publicly available, the dataset we worked with is not because it is fused with data purchased from a private company. For this paper, there are three income categories: \$0 to \$100,000, \$100,000 to \$200,000, and greater than \$200,000. We choose to examine these proportions because the number of mortgages recorded in a year differ per Census tracts. Analyzing proportions allows us to potentially ignore the biases that might arise from these differences. We also assume that these proportions observed are the true income proportions for a Census tract. This assumption may not be too unreasonable because people are likely to move into neighborhoods with demographic characteristics similar to their own.

% The motivating example for us are income proportions from Los Angeles County. Using information provided because of the Home Mortgage Disclosure Act, we have these for different income categories for all Census tracts or "neighborhoods" from 1990 to 2010. While there are originally sixteen income categories, we reduce the number of income categories to three categories: \$0 to \$100,000, \$100,000 to \$200,000, and greater than \$200,000. 

To model how these income proportions change, we also decided to use the proportions as locations because as seen in Figure \ref{fig:real_data_Census_tract_example}, it can be challenging to do so otherwise. If we examine how the income proportions evolve for Census tract 6037920336 and its adjacent tracts, we see that the income proportion for the top category change in a similar manner. However, the proportions for the lower two income categories only mostly shift in a similar manner after 2005. Further, there is one adjacent Census tract that diverges from the other tracts for the second income category during this time period. As a result, it is hard to imagine a model that only uses Census tract information because it might need to borrow strength for one income category, but potentially ignore it for other categories. If we then extract the random direction using the procedure discussed in the next paragraph, Figure \ref{fig:real_data_Census_tract_example} also shows that it might not be helpful to model these directions using Census tracts. Even though the random direction is not uniformly distributed, the pattern appears to be the same across the different Census tracts.

% The proportions in these categories for a Census tract and a given year serve as the tract's location for a given year.

%  In particular, the direction we are interested in is the direction of the geodesic of the Hellinger distance on the simplex that connects the income proportion from one year to the next. We do so because the Hellinger distance has desirable properties and is not too difficult to compute. Indeed, the Hellinger distance on the simplex is the L2 distance on the positive orthant. 
\noindent\textbf{Random direction} Because we have the income proportion for a Census tract for a given year and the next year, we can extract the random direction using the following procedure. Let $x_1$ be the income proportion for one year and $x_2$ be the proportion for the next. Set $\theta_1$ and $\phi_1$ to be the spherical coordinates for $\sqrt{x}$. Then, define $O_p$ to be the following matrix:
\[
\begin{pmatrix}
\cos\theta_1\cos\phi_1 & -\sin\phi_1 & \sin\theta_1\cos\phi_1\\
\cos\theta_1\sin\phi_1 & \cos\phi_1 & \sin\theta_1\sin\phi_1\\
-\sin\theta_1 & 0 & \cos\theta_1\\
\end{pmatrix}
\]
This matrix transforms $(0, 0, 1)$ to $x_1$. It is orthogonal because the first column is the spherical coordinates for $\theta_1 + \frac{\pi}{2}$ and the second column is the spherical coordinates for $\theta_1 = \frac{\pi}{2}$ and $\phi_1 + \frac{\pi}{2}$. Its inverse allows us to define a spherical coordinate system based on $x_1$. If $O_p x'_2 = x_2$ and $\theta'_2$ and $\phi'_2$ represent the spherical coordinates of $x'_2$, then $\phi'_2$ is the random direction and $\theta'_2$ represents how "far" $x_1$ goes in that random direction. 

% In addition to being the direction of the geodesic of the Hellinger distance, \LN{Hellinger has not been introduced}
This random direction has some nice properties.  Even if $x_1$ is zero for any proportions, the random direction can still be extracted and examined. It is also interpretable. In spherical coordinates, changing $\phi'_2$ for a fixed $\theta'_2$ only affects the first two coordinates. Due to the transformation defined in the previous paragraph, a change in the first two coordinates changes how the first two columns of the matrix that transforms $(0, 0, 1)$ to $x_1$ are weighted. To understand this random direction, we need to comprehend what the first two columns represent and how the random direction interacts with these columns. Because the first column represents the spherical coordinates for ($\theta_1 + \frac{\pi}{2}$, $\phi$), the simplest way to understand this column is to compare $\theta_1 = 0$ and $\theta_1 = \frac{\pi}{2}$. As $\theta_1 = 0$ corresponds to $(0, 0, 1)$ and $\theta_1 = (\cos(\phi), \sin(\phi), 0)$ for some $\phi \in [0, 2\pi)$, this first column represents a push away from the third income category. On the other hand, if we use a similar logic to compare $\theta = \frac{\pi}{2}$ and $\phi_1 = 0$ against $\theta = \frac{\pi}{2}$ and $\phi_1 = \frac{\pi}{2}$, then we see that the second column represents a pull toward the second income category. Because the first and second coordinate include $\cos(\phi'_2)$ and $\sin(\phi'_2)$ respectively, we will examine $0$, $\frac{\pi}{2}$, $\pi$, and $\frac{3\pi}{2}$. Note that at $0$ and $\pi$, the first coordinate will be $1$ and $-1$ and the second will be zero by definition. A random direction of $0$ represent a push away from the third income category whereas $\pi$ represents a pull toward. The reverse is true for $\frac{\pi}{2}$ and $\frac{3\pi}{2}$. We can say that a random direction of $\frac{\pi}{2}$ and $\frac{3\pi}{2}$ represent a pull toward and push away from the second income category. It is the opposite because the first column is a push away from a category whereas the second column is a pull toward a category.

% When we examine the extracted random directions, we notice that it violates a key assumption. There are locations with observed angles that are seen multiple times in the data set. Due to the use of a continuous distribution for the observations, i.e. the von Mises distribution, the probability of this happening should be zero. It is more problematic for the $\textit{SvM}-c$ model if we use the hierarchical prior for the concentration parameter because repeated observations might induce large values, which will pull up all values for the concentration parameter. 

% Fortunately, it is straightforward to identify these duplicates. Indeed, the properties of these duplicates are further discussed in the appendix. Because these duplicates can be understood in this manner and our models are not suited for this type of data, we remove all duplicates so that each combination of location and observation are observed once. In this way, we can still use our models to understand the non-duplicated combinations. , some results of applying our models to data sets modified in this way can be seen in Figure \ref{fig:real_data_fitted_results_example}
We fitted our models to these random directions with one additional pre-processing step. We removed duplicated directions so that each location has at most one observation. This was done because duplicated directions at the same location should happen with probability zero according to our model and they can be easily identified. Indeed, their properties are further discussed in the supplementary material. As a result of this step, the number of observations are reduced from between 2302 to 2356 per year to between 1704 to 2164 per year. However, while we used the same priors that we utilized for simulation, it was still unclear what to set the hyperparameters for the squared exponential kernels. To pick, we used the same hyperparameters as simulation as a baseline. We then compared it against $\sigma = 0.25, 1$, and $2$ and $\omega = 0.05, 0.15, 0.2$, and $0.25$ for \textit{SvM-c} and $\sigma = 0.25, 0.5$, and $2$ and $\omega = 0.05, 0.15, 0.2$, and $0.25$ for \textit{SvM-p}.

\begin{figure}[!tbp]
\centering
\begin{subfigure}{.2\textwidth}
\centering
\includegraphics[width = 1\textwidth]{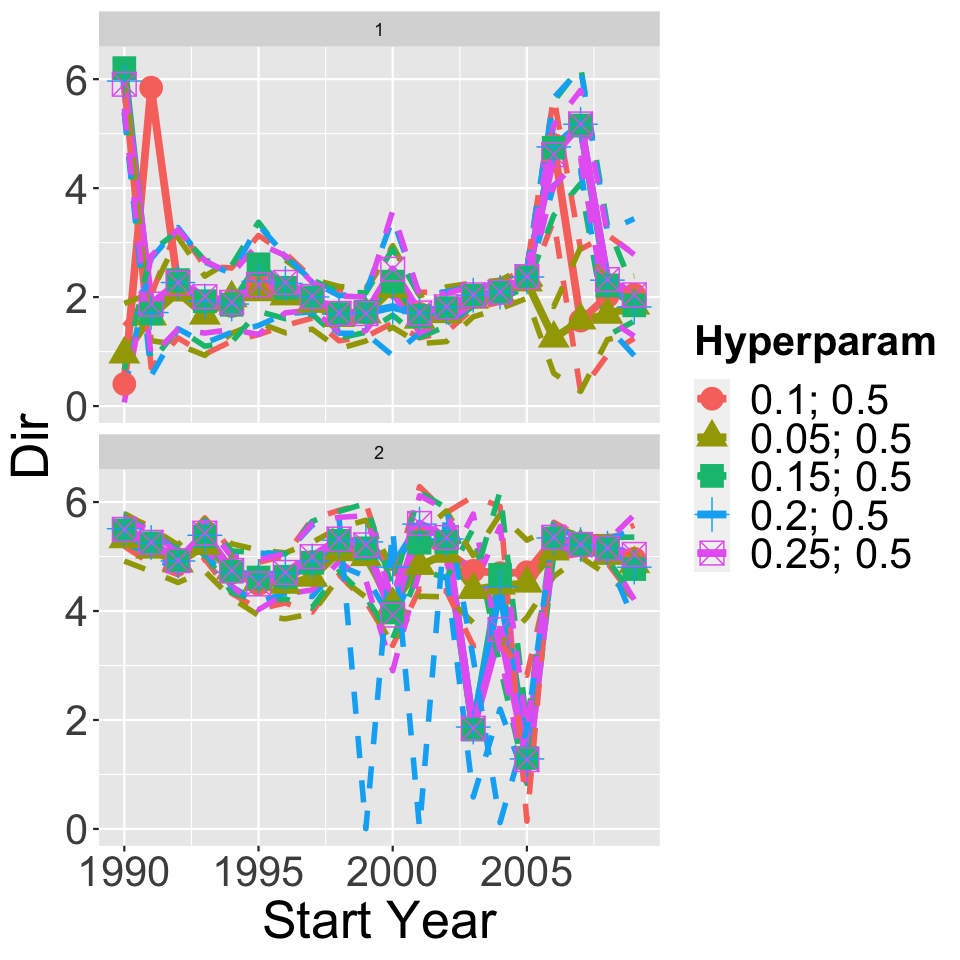}
\end{subfigure}
\begin{subfigure}{.2\textwidth}
\centering
\includegraphics[width = 1\textwidth]{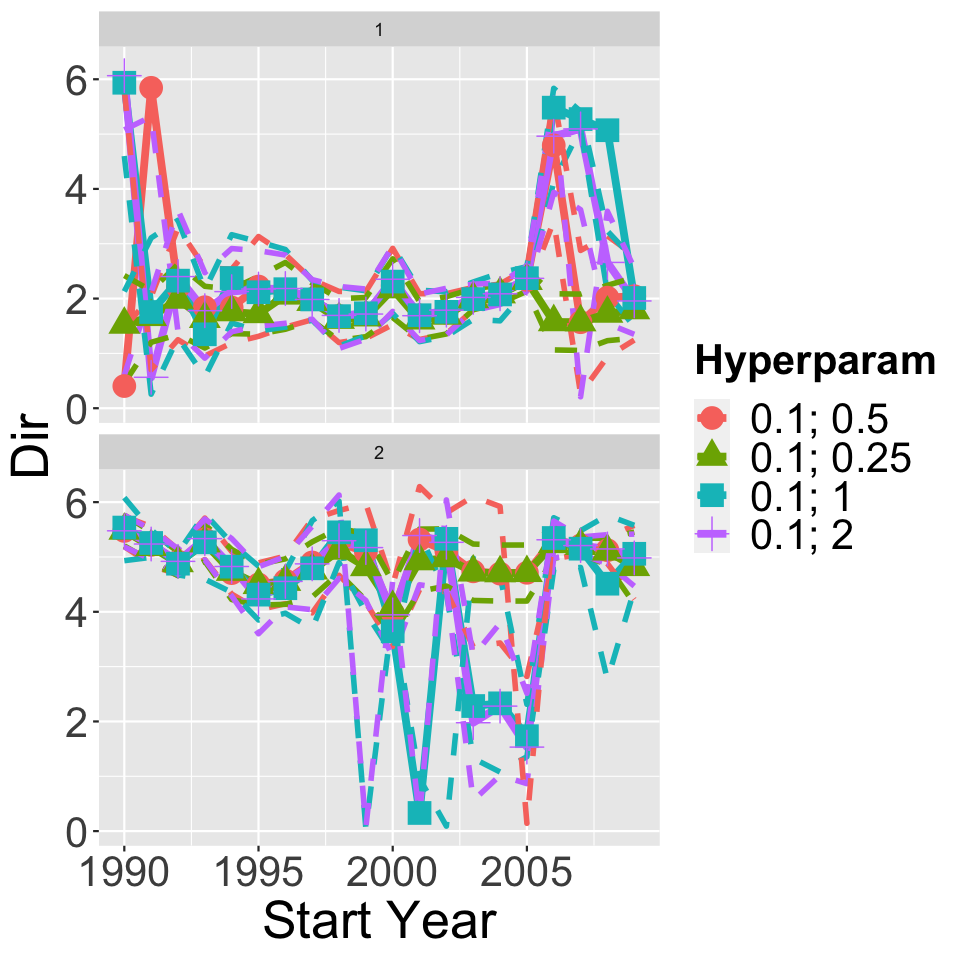}
\end{subfigure}
\begin{subfigure}{.2\textwidth}
\centering
\includegraphics[width = 1\textwidth]{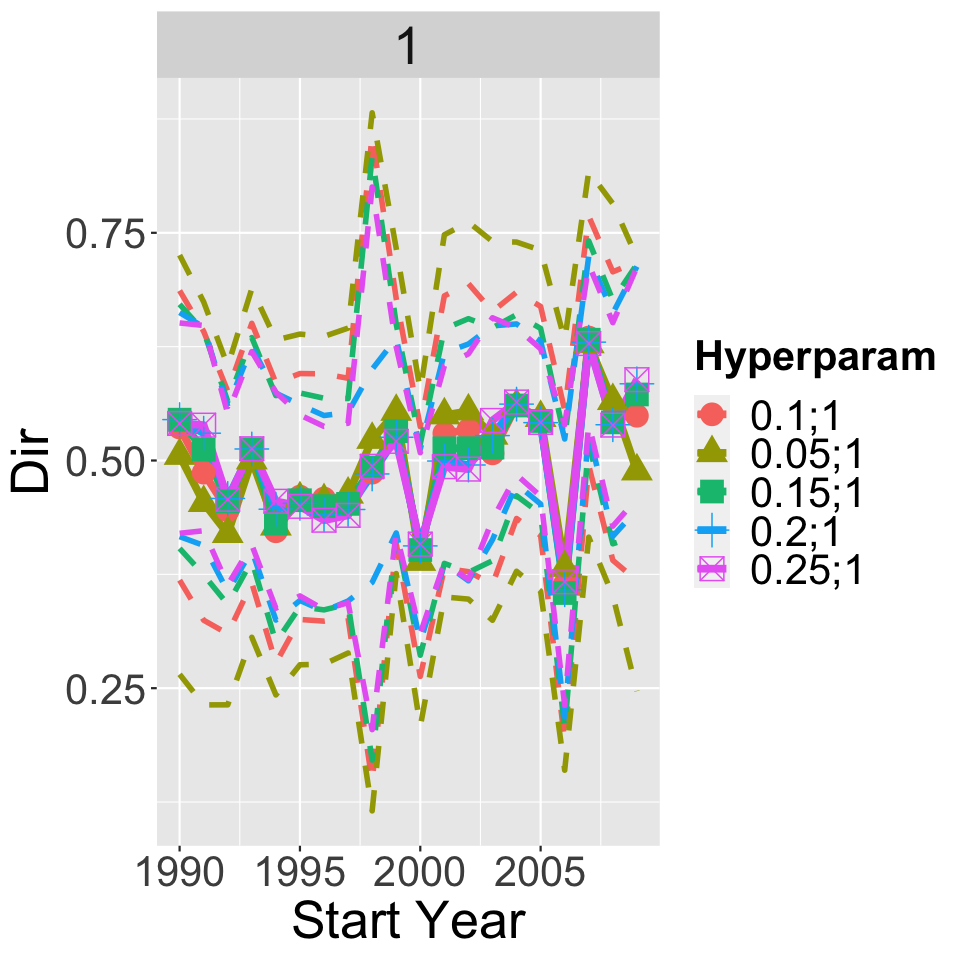}
\end{subfigure}
\begin{subfigure}{.2\textwidth}
\centering
\includegraphics[width = 1\textwidth]{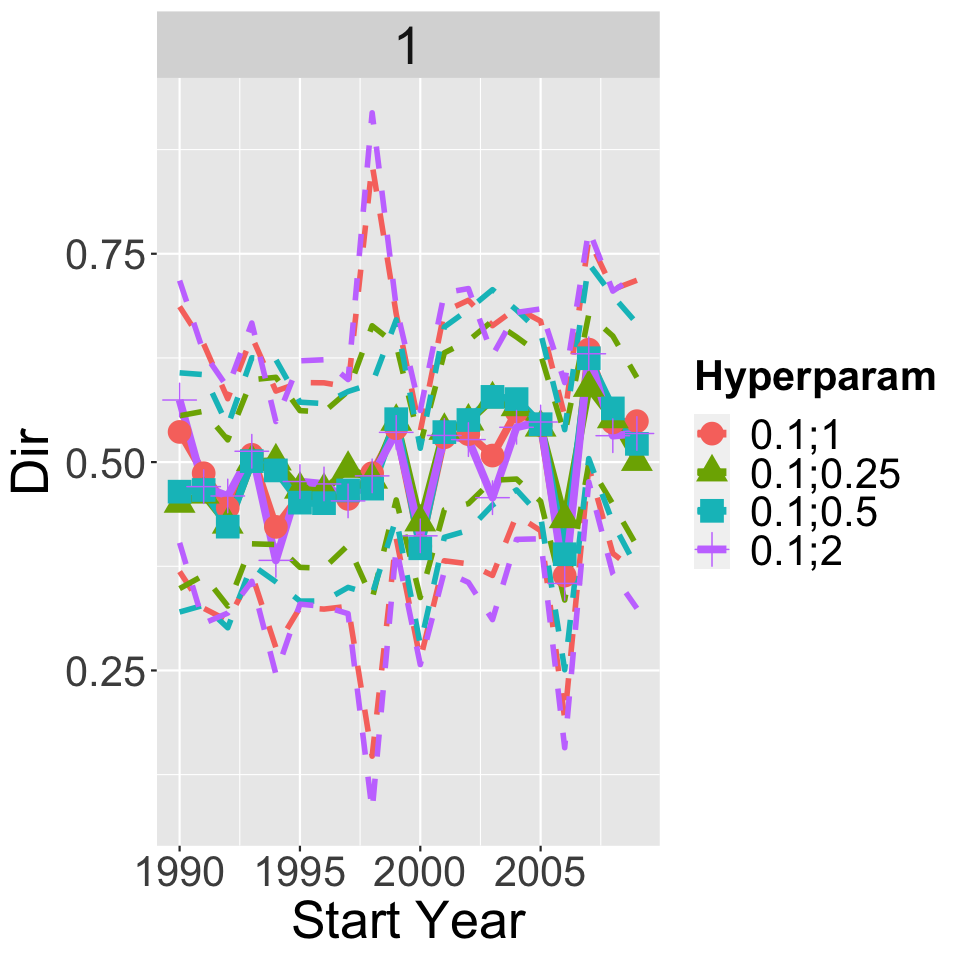}
\end{subfigure}
\caption{Plots showing the mean direction and credible intervals averaged over all locations for various hyperparameters for \textit{SvM-c} compared to $\omega = 0.1$ and $\sigma = 0.5$ on the left and showing the probability and credible intervals averaged over all locations for various hyperparameters for \textit{SvM-p} compared to $\omega = 0.1$ and $\sigma = 1$ on the right. The value for $\omega$ is given first.}
\label{fig:real_data_exp_results}
\end{figure}

\noindent\textbf{Hyperparameter and model selection} Some of our findings from these experiments can be seen in Figure \ref{fig:real_data_exp_results}. If we examine \textit{SvM-p} first, we see that while the overall mixing probability averaged across all locations remain mostly the same, the credible intervals averaged across all locations expands as $\sigma$ increases or $\omega$ decreases. This is to be expected because increasing $\sigma$ allows for a wider range of values for the Gaussian process whereas decreasing $\omega$ means the mixing probability is more sensitive to the observed direction at each location. Next, if we examine \textit{SvM-c}, changing the hyperparameters does not appear to affect the first mean averaged by location between 1995 and 2005. However, there is greater stability in the second mean averaged by location when $\omega$ is changed, but not when $\sigma$ is altered. Indeed, the latter only appears to have stability before 1998. Still, there are differences in the mean surfaces hidden by this statistic. If we focus on the results from 1996 to 1997, we see that while the second mean surface is mostly similar when we change $\sigma$, the first mean surface varies more when $\sigma$ increases. For instance, the surface is mostly flat when $\sigma = 0.25$, but the surface appears to be a series of thinly connected pillars that span zero to $2\pi$ when $\sigma = 2$. This makes sense that neighboring values on the mean surface can vary drastically when $\sigma$ grows. On the other hand, the overall shape of the mean surface in 1996 mostly remains the same as we increase $\omega$ from 0.05 to 0.25. This is again particularly true for the second mean surface. However, the first mean surface becomes smoother as $\omega$ rises. This suggests that for this year, neighboring points are similar in value and the mean surface must reflect this fact.

After running these experiments, we selected the hyperparameters that had the average highest log posterior predictive probability across all years. Here, we take the average with respect to the number of observations. For \textit{SvM-c}, this was $\sigma = 0.5$ and $\omega = 0.2$. The hyperparameters selected for $\textit{SvM-p-2}$ was $\sigma = 2$ and $\omega = 0.1$. We then used these results to explore the choice of kernel. The other kernels were the Matern with three halves and five halves degrees of freedom. The Matern with one half degree of freedom was not compared because previous experiments demonstrated that kernels leading to results that are less smooth did not perform as well and the Matern kernel with one half is such a kernel \citep{RasmussenWilliamsGaussianProcessesMachine2006}. For \textit{SvM-p}, the results remained similar compared to the results from the optimal squared exponential kernel model. Indeed, the posterior predictive probability is similar between these models. Meanwhile, choosing the Matern kernels or the optimal squared exponential kernel for \textit{SvM-c} result in similar mean directions averaged by location. However, if we again look at the mean surface from 1996 to 1997, we notice differences. While the lower mean surface is similar from the Matern kernel with five halves degrees of freedom and squared exponential kernel, the surface is more dispersed for the Matern kernel with three half degrees of freedom. However, for non-lower income neighborhoods. the top mean surface is more scattered for the Matern kernel than for the squared exponential kernel. Interestingly enough, the surface from the Matern kernel with five halves degrees of freedom is more diffuse even though it is smoother than the surface from the Matern kernel with three halves degrees of freedom. One other key difference is that the \textit{SvM-c} models with the Matern kernels performed worse than the same model with the squared exponential kernel. This suggests that while the mean surface for the random direction might be sensitive to the choice of kernel, regions that are likely to move in the same direction are not sensitive.

Finally, we also wanted to compare the results from the square exponential hyperparameter experiments against $\textit{SvM}$ and \textit{SvM-c-3} using the hyperparameters selected for \textit{SvM-c} and \textit{SvM-p-3} with the hyperparameters selected according to $\textit{SvM-p-2}$. Note that for years 1991-1993, 1996-1997, 1998-2000, 2005-2010, we used a strong von Mises prior on the mean component to force identifiability for \textit{SvM-p-3}. Otherwise, we used an uniform prior on the mean component. Further, because of extreme estimates in the concentration parameters, we also ran a version of \textit{SvM-c-3} with the concentration parameter constrained to be between 1e-4 and 20. According to the log posterior predictive probability given in $\eqref{eq:post_pred_prob}$, \textit{SvM-c-2} perform the best for all years except 1991-1994, 2003-2005, 2006-2007, 2008-2010, and potentially 2008-2009. For those years, \textit{SvM-c-3} or the constrained version performs the best. However, the posterior predictive probability for \textit{SvM-c-3} is close to that of \textit{SvM-p-3} for 2006-2007 and of \textit{SvM-c-2} for 2008-2009. This suggests that the average observed change in income proportion is better explained by one of two random direction surfaces that varies depending on the income proportion proportions. Further, it tells us that having two components is sufficient in most cases.
\begin{figure}[!tb]
\centering
\begin{subfigure}{.24\textwidth}
\centering
\includegraphics[width = 1\textwidth]{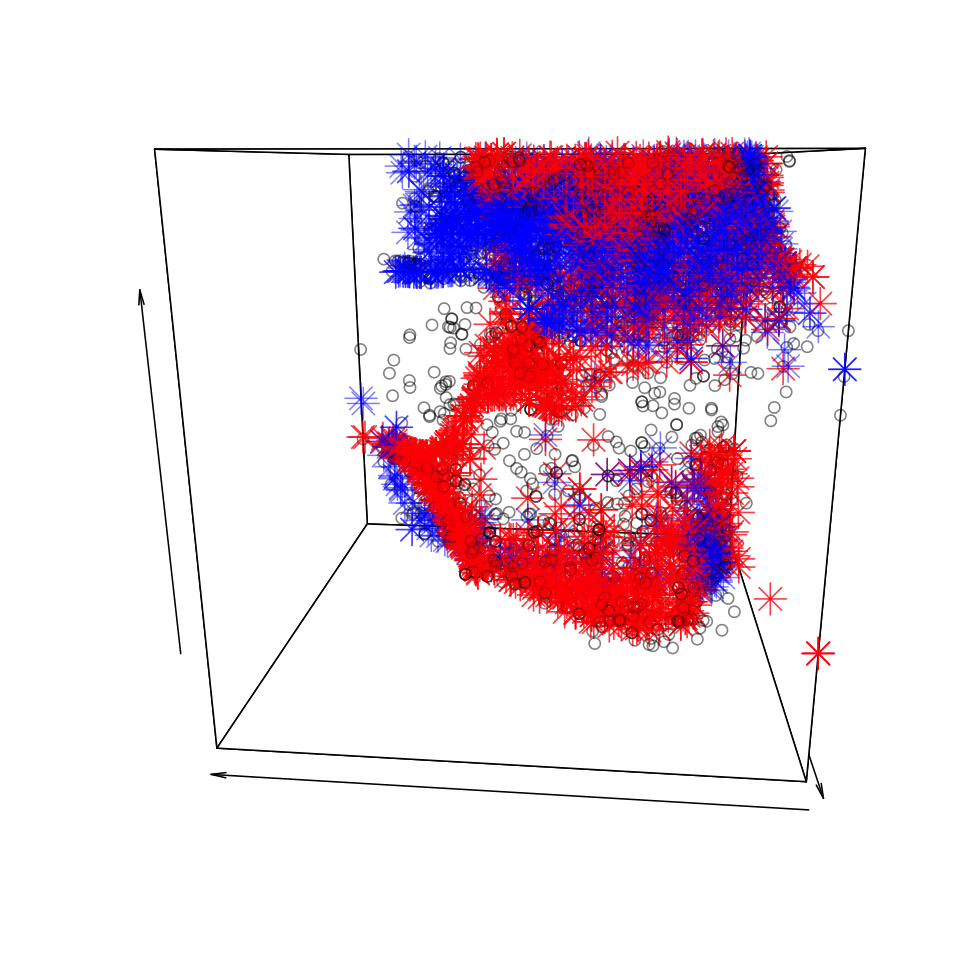}
\caption{1990-1991\\ (\textit{SvM-c-2})}
\label{fig:real_data_fitted_results_1}
\end{subfigure}
\centering
\begin{subfigure}{.24\textwidth}
\centering
\includegraphics[width = 1\textwidth]{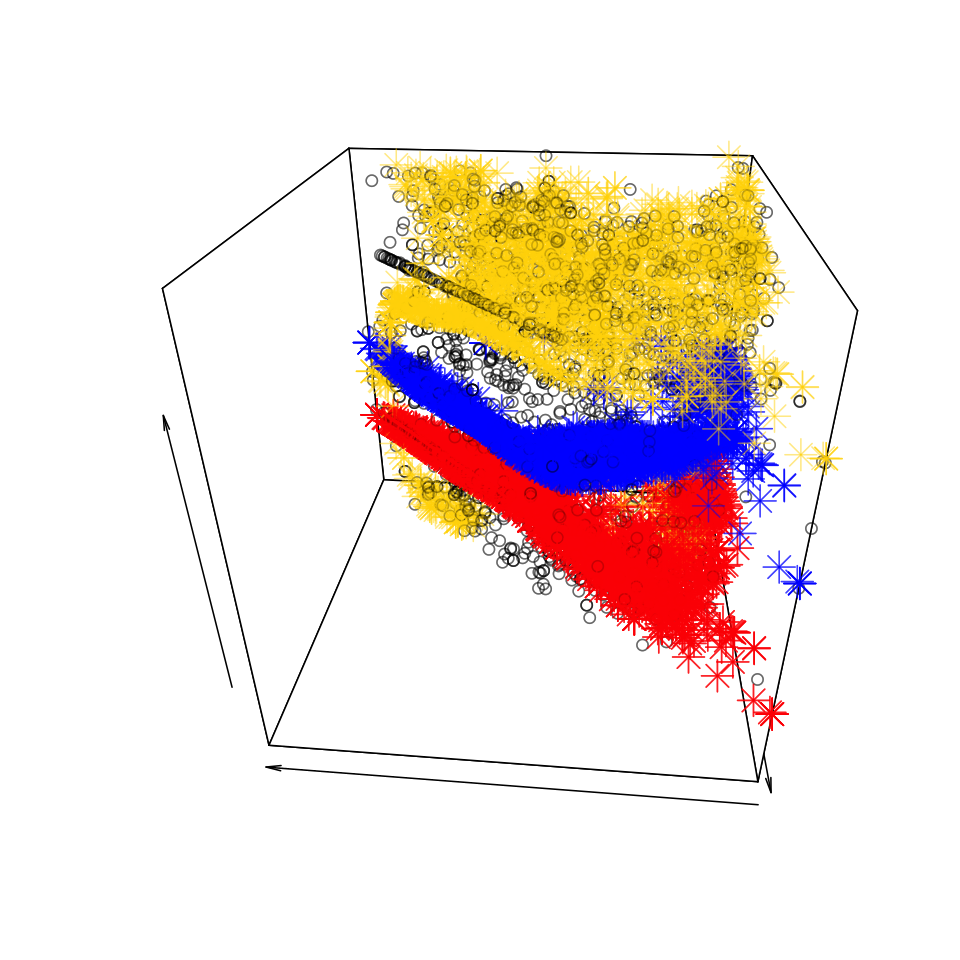}
\caption{1991-1992\\(\textit{SvM-c-3})}
\label{fig:real_data_fitted_results_3}
\end{subfigure}
\centering
\begin{subfigure}{.24\textwidth}
\centering
\includegraphics[width = 1\textwidth]{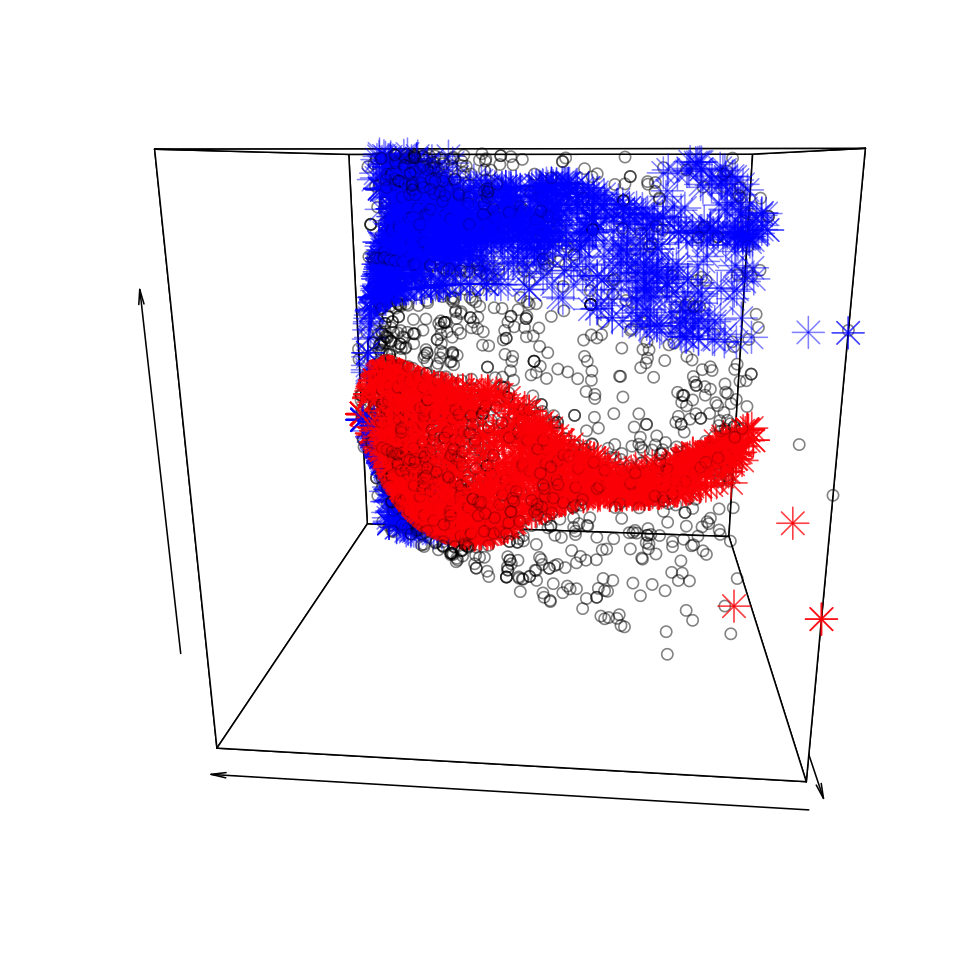}
\caption{1998-1999\\(\textit{SvM-c-2})}
\label{fig:real_data_fitted_results_9}
\end{subfigure}\\
% \begin{subfigure}{.24\textwidth}
% \centering
% \includegraphics[width = 1\textwidth]{overview/real_data/real_data_results/year_9_gp_dir_location_comp_mixture_results_alt.png}
% \caption{1998-1999 alt.\\ (\textit{SvM-c})}
% \label{fig:real_data_fitted_results_9_alt}
% \end{subfigure}\\
% \centering
% \begin{subfigure}{.24\textwidth}
% \centering
% \includegraphics[width = 1\textwidth]{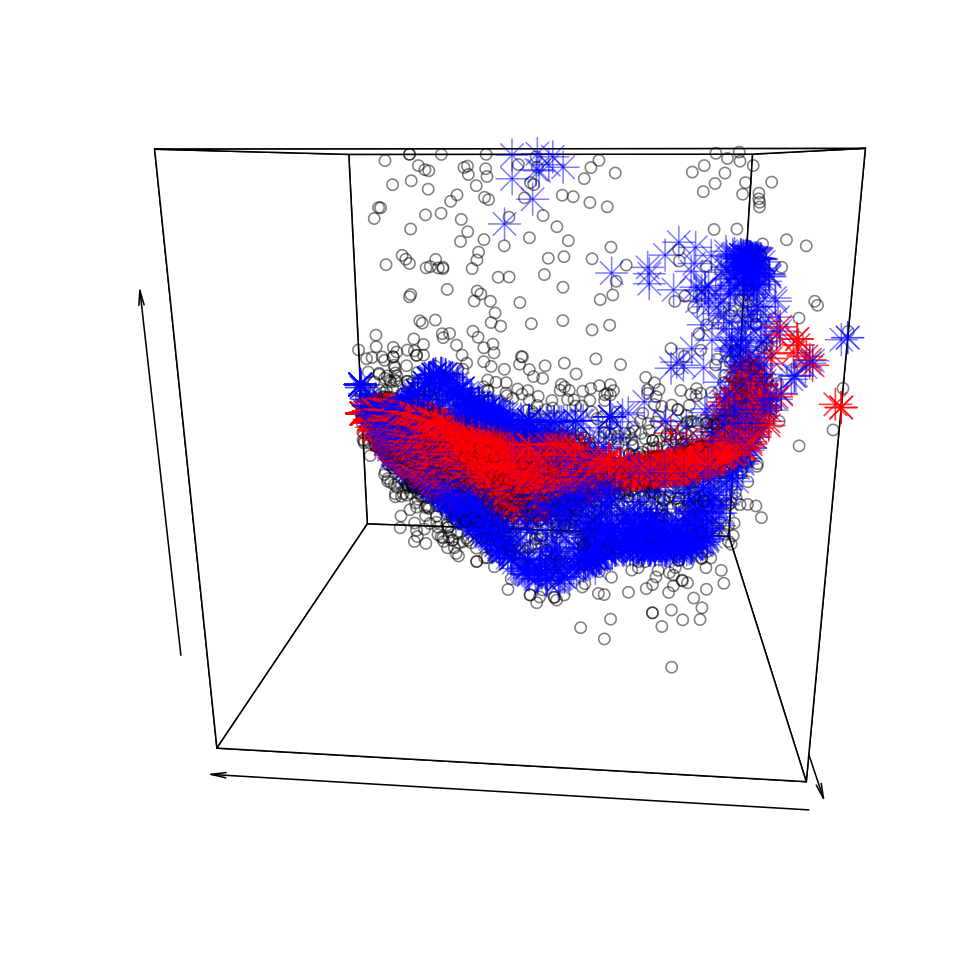}
% \caption{2003-2004\\(\textit{SvM-c-2})}
% \label{fig:real_data_fitted_results_14}
% \end{subfigure}
\centering
\begin{subfigure}{.24\textwidth}
\centering
\includegraphics[width = 1\textwidth]{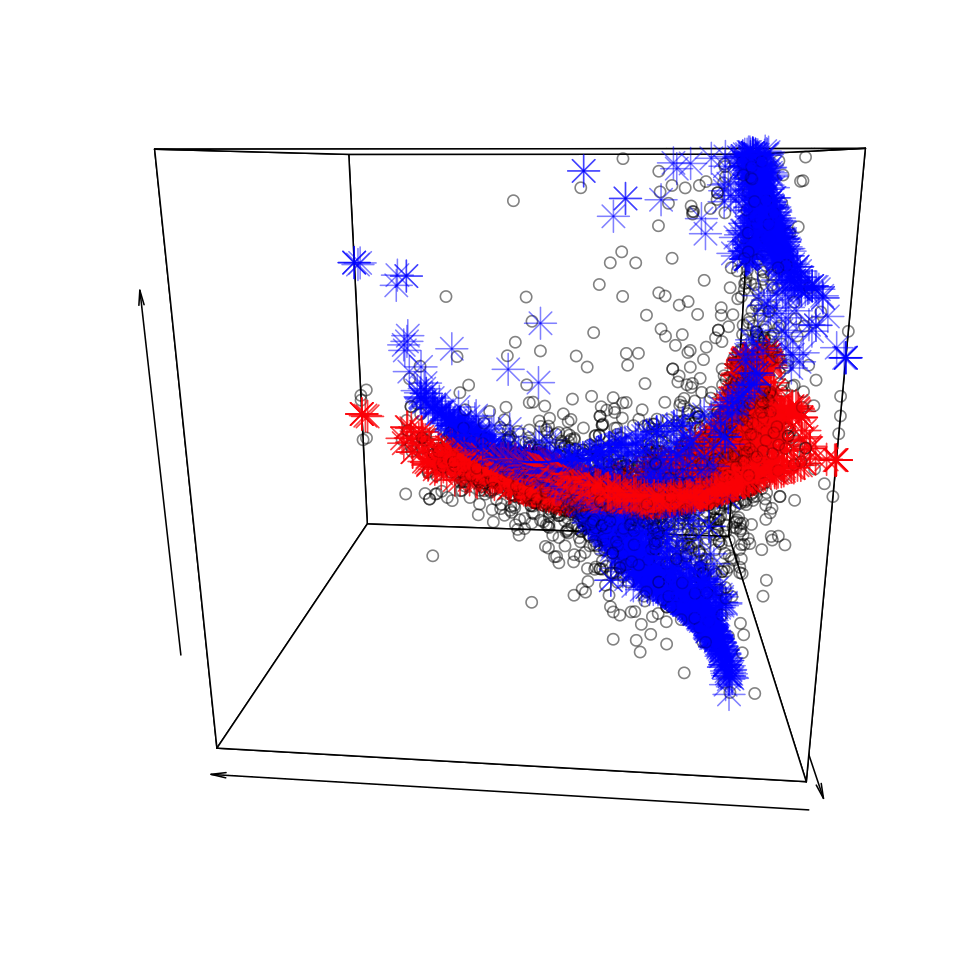}
\caption{2005-2006\\(\textit{SvM-c-2})}
\label{fig:real_data_fitted_results_16}
\end{subfigure}
\centering
\begin{subfigure}{.24\textwidth}
\centering
\includegraphics[width = 1\textwidth]{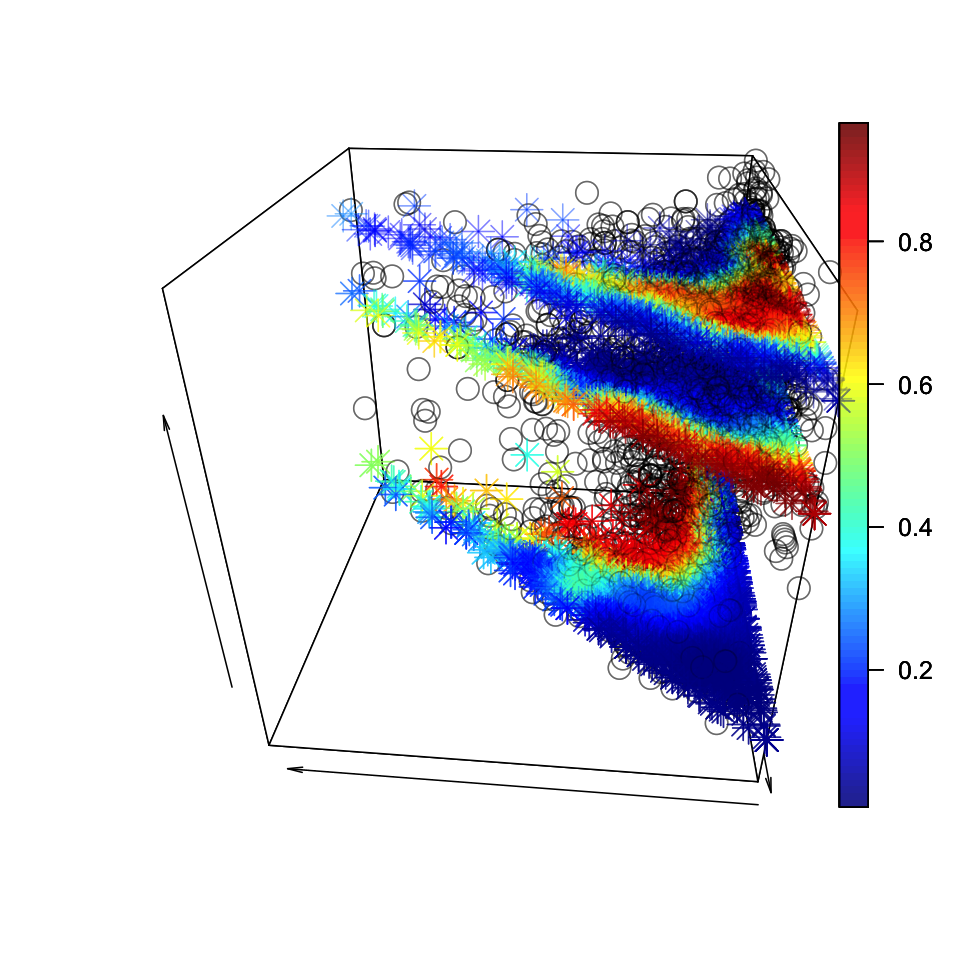}
\caption{2006-2007\\(\textit{SvM-p-3})}
\label{fig:real_data_fitted_results_17_svm_p}
\end{subfigure}
\centering
\begin{subfigure}{.24\textwidth}
\centering
\includegraphics[width = 1\textwidth]{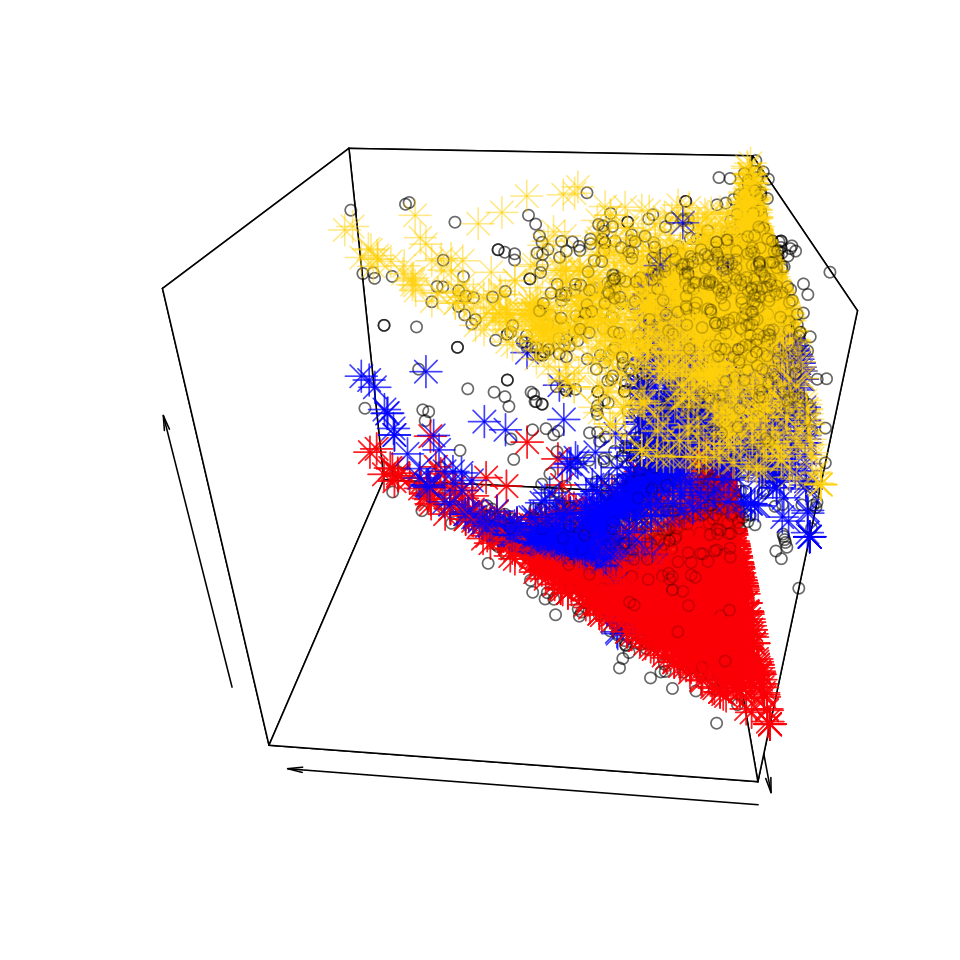}
\caption{2006-2007\\(\textit{SvM-c-3})}
\label{fig:real_data_fitted_results_17_svm_c}
\end{subfigure}
\caption{Plots showing the observed random direction not withheld and the fitted mean surface of the model selected by the posterior predictive log probability in \eqref{eq:post_pred_prob}. The front axis represents the proportion in the first income category and the side axis represents the proportion in the second income category. This is reversed in the alternative view. Regardless, the up-down axis represents the direction. The start of the arrow indicates a value of zero whereas the end indicates a value of $2\pi$ for angles and 1 for proportions.}
\label{fig:real_data_fitted_results_example}
\end{figure}

\noindent\textbf{Findings and interpretation} Based on the Table \ref{table:real_data_results_param_summary} and Figure \ref{fig:real_data_fitted_results_all} in the supplementary material, the fitted mean surface can be grouped into four phases: 1990-1991, 1991-1993, 1993-2003, 2003-2010. These phases correspond to the early 1990 recession, the transition from President George HW Bush to President Bill Clinton, the economic boom in the 1990s, and the recovery from the dot com bubble with the subsequent housing market crash respectively. Representative examples of each phase can be found in Figure \ref{fig:real_data_fitted_results_example}. As an example of the stories these surfaces can tell us, consider 2005 to 2006 and 2006 to 2007. Like the two years before 2005, there is one surface that is an curved, half spiral increasing from $\frac{\pi}{2}$ to around $\pi$ if we follow it from neighborhoods with income proportions largely below \$100,000 to neighborhoods with income proportions largely between \$100,000 and \$200,000 and then to neighborhoods with income proportions largely greater than \$200,000. This suggests that for these years, the income distributions for all neighborhoods are being pulled up a category. However, there is another surface that while similar for neighborhoods of lower income, is centered around zero for neighborhoods with income proportions in the second and third categories. This suggests that there already is a push away from the third income category for these neighborhoods, two years before the housing market crash. Because the posterior predictive probability is similar for \textit{SvM-c} with three components and \textit{SvM-p} with three components, it is interesting to see these two aforementioned mean surfaces re-partition into three. As hinted by the lower half of the non-spiral surface, there is one component at either $\frac{\pi}{3}$ or $\frac{\pi}{4}$. This suggests a medium to semi-strong pull toward the second income category. Then, for the upper half of the non-spiral surface, \textit{SvM-p} takes two components. Interestingly enough, as seen in Figure \ref{fig:real_data_fitted_results_17_svm_p} and \ref{fig:real_data_fitted_results_17_svm_c}, the areas of high probability for \textit{SvM-p}'s two components match the third component of \textit{SvM-c} with two components. These two mean components are centered at roughly $\frac{4\pi}{3}$ and $\frac{5\pi}{3}$, which represent semi-strong pushes away from the second and third income categories. In other words, the data already suggests the potential recession that might happen in the next year. Finally, the spiral mean surface changes as well and adopts characteristics of the other mean surface for the third mean surface for \textit{SvM-c-3} in 2007. As seen in \ref{fig:real_data_fitted_results_17_svm_c}, it matches the first surface for neighborhoods with the lowest income proportions before rising rapidly to meet the other mean surface for neighborhoods with the second highest income proportions. It then comes back down to $\pi$, i.e. a pull toward the second income category, for neighborhoods with the highest income proportion.

\section{Conclusion and Future Directions}
\label{sec:conclusion}
%Where to go from here - higher dimensions; multiple try mcmc and elliptical slice sampling; combining 
% uses the Gaussian process to spatially correlate the means of von Mises distributed observations. Because such a model might not perform well if there is heterogeneity in the pattern, we also introduced two mixture models. One involved placing different Gaussian process with different means, but the same covariance to spatially correlate the means of the components' von Mises distribution. The other involved modeling the mixing probability with the Gaussian process.  To better understand how to set the hyperparameters of these models, we calculate the models' prior circular means, variances, and correlations. to understand how to set priors and discussing how to fit these models, In particular, the von Mises distribution is to the von Mises-Fisher distribution as the Beta distribution is to the Dirichlet distribution. 
In this paper, we have introduced two types of hierarchical models to draw inference about the random directions for simplex-valued measurements and discuss how these models might be utilized. These approaches creatively use the data's location within the simplex to do so. Because the average random direction across all location or the "mean surface" and the mixing probability are important, the "spatial" information is naturally fed into them. Indeed, \textit{SvM-c} is the model when this information is fed into the former whereas \textit{SvM-p} is the model when this information is fed into the latter. Then, to better understand how to set the hyperparameters of these models, we derived the models' prior circular means, variances, and correlations. We discussed how to fit these models using sampling schemes that respect the geometry of the parameters of interest. For instance, we used elliptical slice sampling to sample for the mean surface in \textit{SvM-c}. Using these findings, we then applied the models to simulated data and then analyzed a data set of income proportions and random directions observed for a set of Census tracts in LA County from 1990 to 2010. There is evidence to suggest that the random direction is associated with the Census tract's income proportions observed in a given year and not the tract's physical location. This means that the data set is a highly relevant real world example to apply our models to. Consequently, it is noteworthy that the patterns our models discovered matches and potentially clarifies real world economic trends during the same time period.

There are several directions worth exploring moving forward. We might extend these models to random directions in dimensions greater than two. This would allow us to analyze the random directions for data that lie in $\Delta^D$ for $D > 2$. The LA County data set has sixteen income categories before we reduced them down to three. Fortunately, as discussed in the paper, both the von Mises distribution and projected normal distribution can be extended to angles in higher dimensions. These higher dimensional distributions can be plugged into our models in place of their lower dimensional analogues. Next, to better sample \textit{SvM-c} or the higher dimensional equivalents, we might combine Multiple Try MCMC with elliptical slice sampling. This could provide a principled approach to consider the entire range of angles during each proposal step for the next mean angles and thus result in better mixing. Finally, we might merge our random direction models with a model for the magnitude in order to model the random movement. As discussed earlier, the two components of the movement were separated to achieve greater modeling flexibility and to better deal with the challenges posed by them. For instance, modeling the magnitudes and directions on the simplex's boundary require special care because the range of valid movements and directions are limited. In addition, the magnitudes in the interior have to be treated carefully. In order for the simplicial constraint to be respected, certain magnitudes in certain directions are not possible.

% Such a model would have utility beyond social sciences. Suppose we are interested in evaluating on a fix scale the durability or some other property of a material based on its composition. Ideally, we would set up a grid based on the composition and run multiple tests. Unfortunately, doing so might be expensive. Instead, if we have results from previous tests collected in a less systematic way, we might want to fit the aforementioned model to these results. This could then provide a more statistically rigorous way to compare between results. As another example, we might be interested in the pH level or the concentration of some protein in the gut based on the gut's microbiome. Again, it might be difficult to collect information for different microbiome. However, with the information on hand, we might want to interpolate for unobserved microbiomes.

% \input{scrap_paper.tex}

\bibliographystyle{unsrt} 
\bibliography{my_refs}  %%% Remove comment to use the external .bib file (using bibtex).
%%% and comment out the ``thebibliography'' section.

\section*{Acknowledgements}
This research is supported by the NSF Graduate Research Fellowship Program (Grant No. DGE 1256260). Any opinions, findings, and conclusions or recommendations expressed in this material are those of the author and do not necessarily reflect the views of the National Science Foundation. We also want to thank Professor Elizabeth Bruch for introducing us to the data set and for her discussions and Lydia Wileden for preparing the data. XN gratefully acknowledges support from NSF grants DMS-1351362, CNS-1409303 and DMS-2015361.
% \begin{minipage}{0.8\textwidth}

% \end{minipage}
% \begin{minipage}{0.2\textwidth}
%     \centering
%     \includegraphics[scale=0.04]{}
% \end{minipage}

\newpage
% \section*{Supplementary Material}
% \begin{supplement}
% \stitle{Proofs and Derivations}
% \sdescription{Proofs and Derivations of theorems and algorithms given in the paper.}
\section{Additional Models}
% \begin{equation}
%     \begin{aligned}
%     m &\sim \vM{\cdot}{u}{c}, &\\
%     \rho &\sim \Gamma(\cdot \mid a, b), & \stepcounter{equation}\tag{\theequation}\label{model:iV}\\
%     y_\ell \mid m, \rho &\stackrel{iid}{\sim} \vM{\cdot}{m}{\rho}, & \ell = 1, 2, \ldots, N.\\
%     \end{aligned}
% \label{model:iV}
% \end{equation}
\subsection{Independent Random Direction Models} 
The most elementary model based on the von Mises distribution is the independent von Mises or \textit{iV} model. For this model, the observations are assumed to be identically and independently distributed according to one von Mises distribution with mean $m$ and concentration parameter $\rho$. A von Mises prior on the interval $[0, 2\pi)$ with mean $u \in [0, 2\pi)$ and concentration parameter $c \in \mathbbm{R}^+$ is placed on $m$. As discussed earlier in Section \ref{ssection:vM}, setting $c = 0$ is equivalent to putting a uniform prior on $m$. Meanwhile, the concentration parameter $\rho$ has a Gamma prior with parameters $a \in \mathbbm{R}^+$ and $b \in \mathbbm{R}^+$. In summary, the \textit{iV} model is the following.
\begin{equation}
    \begin{aligned}
    m &\sim \vM{\cdot}{u}{c}, &\\
    \rho &\sim \Gamma(\cdot \mid a, b), &\\
    y_\ell \mid m, \rho &\stackrel{iid}{\sim} \vM{\cdot}{m}{\rho}, & \ell = 1, 2, \ldots, N.\\
    \end{aligned}
\label{model:iV}
\end{equation}
% \begin{align*}
% m &\sim \vM{\cdot}{u}{c}, &\\
% \rho &\sim \Gamma(\cdot \mid a, b), & \stepcounter{equation}\tag{\theequation}\label{model:iV}\\
% y_\ell \mid m, \rho &\stackrel{iid}{\sim} \vM{\cdot}{m}{\rho}, & \ell = 1, 2, \ldots, N.\\
% \end{align*}

If there is heterogeneity in the observations, an extension of the model is the \textit{Independent von Mises Mixture} or \textit{iVM} model.
% \begin{align*}
%     m_k &\sim \vM{\cdot}{u_k}{c_k}, & k = 1, 2, \ldots, K\\
%     \rho_k &\sim \Gamma(\cdot \mid a_k, b_k), & k = 1, 2, \ldots, K \textrm{Cat}(\cdot|\lambda_1, \lambda_2, \dots, \lambda_K)\\
%     \zeta_\ell | \lambda_1, \lambda_2, \ldots, \lambda_K & \stackrel{iid}{\sim} \textrm{Cat}(\cdot|\lambda_1, \lambda_2, \dots, \lambda_K), & \stepcounter{equation}\tag{\theequation}\label{model:iVM}\\
%     y_\ell \mid \zeta_\ell = k & \sim \vM{\cdot}{m_k}{\rho_k}, & \ell = 1, 2, \ldots, N.\\
% \end{align*}
% \label{model:iVM}
\begin{equation}
    \begin{aligned}
    m_k &\sim \vM{\cdot}{u_k}{c_k}, & k = 1, 2, \ldots, K\\
    \rho_k &\sim \Gamma(\cdot \mid a_k, b_k), & k = 1, 2, \ldots, K\\
    \zeta_\ell | \lambda_1, \lambda_2, \ldots, \lambda_K & \stackrel{iid}{\sim} \textrm{Cat}(\cdot|\lambda_1, \lambda_2, \dots, \lambda_K), & \\
    y_\ell \mid \zeta_\ell = k, m_k, \rho_k & \sim \vM{\cdot}{m_k}{\rho_k}, & \ell = 1, 2, \ldots, N.\\
    \end{aligned}
\label{model:iVM}
\end{equation}
An observation can now be distributed according to one of $K$ von Mises distribution in the interval of $[0, 2\pi)$. Each distribution has its own mean $m_k$ and concentration parameter $\rho_k$. Moreover, each distributions' parameters are given different priors.

\subsection{Homogeneous Spatial Random Direction Model} %Now that we have discussed these models, we wish to introduce our models. 
We proceed to modeling spatial patterns of random directions. For homogeneous spatial random direction patterns, we incorporate spatial information through the components, i.e., the parameters of the von Mises distribution. In particular, we link a Gaussian process to the mean parameters because it is a straightforward way to collate nearby observations and borrow strength. This also leads to fairly interpretable draws from the Gaussian process. Due to the von Mises distribution being concentrated at its mean, results from these models can describe the preferred random direction for a given location.

We introduce a model that applies this approach. This is the \textit{Spatially varying von Mises distribution} model or \textit{SvM}. Each observation $y_\ell$ has a mean parameter, $m_\ell \in [0, 2\pi)$, and a concentration parameter, $\rho_\ell \in \mathbbm{R}^+$. We use the modified arctan function defined in \eqref{fct:arctan_star} to element-wise transform two draws from the Gaussian process, $\v{Z_1}, \v{Z_2} \in \mathbbm{R}^N$, into the mean parameters, $\v{m} \in [0, 2\pi)^N$. The positive-valued concentration parameter, $\rho_\ell$, is transformed via exponentiation from a real-valued $\varphi_\ell$, which is then endowed with a normal prior.  %has also been deterministically transformed from another random variable, $\varphi_\ell \in \mathbbm{R}$. Because the concentration parameter is a positive real number, the exponential function is applied to $\varphi_\ell$. The random variable, $\varphi_\ell$, is assumed to be distributed according to a normal distribution with mean $\nu \in \mathbbm{R}$ and standard deviation $\varsigma \in \mathbbm{R}^+$. 
In summary, \textit{SvM} is the following.
\begin{align*}
    \v{z_1} &\sim \textrm{GP}(\cdot \mid \mu_1, \Sigma),\\
    \v{z_2} &\sim \textrm{GP}(\cdot \mid \mu_2, \Sigma),\\
    \v{m} &= \textrm{arctan}^*(\v{z_1}, \v{z_2}), \stepcounter{equation}\tag{\theequation}\label{model:SvM}\\
    \v{\varphi} &\stackrel{iid}{\sim} \textrm{N}(\cdot \mid \nu, \varsigma^2),\\
    \v{\rho} &= \exp{\v{\varphi}},\\
    y_\ell \mid m_\ell, \rho_\ell &\sim \vM{\cdot}{m_\ell}{\rho_\ell}, & \ell = 1, 2, \ldots, N.
\end{align*}

\section{Model Properties}
\subsection{SvM model}
We begin by discussing how we derive Lemma \ref{lemma:SvM_model_prop_a}. After using the law of total expectation and trigonometric identities, we see that calculating the properties of $Y_{\ell}$'s are reduced to that of the corresponding $m_{\ell}$. Because $(z_{1. \ell}, z_{2, \ell}) \sim \mathcal{N}((\mu_1, \mu_2), \sigma^2\mathbbm{I}_2)$ and $m_{\ell} = \arctan^{*}(z_{1. \ell}, z_{2, \ell})$, $m_{\ell}$ is distributed according to $f_N((\mu_1, \mu_2), \sigma^2 \mathbbm{I}_2)$ \citep{MardiaJuppDirectionalStatistics2010}. We can extend the techniques in \cite{WangGelfandDirectionalDataAnalysis2013} to demonstrate that $m_{\ell}$ is equivalently distributed according to $f_N(\frac{1}{\sigma^2}(\mu_1, \mu_2), \mathbbm{I}_2)$. With this insight, we can use the discussion in Section \ref{ssection:pn2} to compute the circular mean and variance. Then, we have the following lemmas.

\begin{lemma}
Let $Z_1, Z_2 \sim \mathcal{N}((\mu_0\cos(\alpha_\mu), \mu_0\sin(\alpha_\mu)), \sigma^2\mathbbm{I})_2$. Let $(r, m)$ be random variables, $r \in (0, \infty)$, $m \in [0, 2\pi)$, such that $Z_1 = r\cos(m)$ and $Z_2 = r\sin(m)$. In addition, set $a = \mu_0\cos(m - \alpha_\mu)$. Then,
\begin{align*}
PN_2(m \mid (\mu_0\cos(\alpha_\mu), \mu_0\sin(\alpha_\mu)), \sigma^2\mathbbm{I}_2) = 
\phi\left(\frac{\mu_0\sin(m - \alpha_\mu)}{\sigma}\right)\left(\phi\left(\frac{a}{\sigma}\right) + \frac{a}{\sigma}\Phi\left(\frac{a}{\sigma}\right)\right).
\end{align*}
\label{lemma:PN2_distr}
\end{lemma}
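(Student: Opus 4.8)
The plan is to reduce the computation to the identity \eqref{eq:pn_2_id}, which was already established for the isotropic unit-variance case, via a scaling argument; as an equally short alternative I will also indicate the direct radial integration.

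First I would note that the polar angle is invariant under the map $(z_1,z_2)\mapsto(z_1/\sigma,z_2/\sigma)$, so the law of $m$ induced by $(Z_1,Z_2)$ equals the law induced by $(Z_1/\sigma,Z_2/\sigma)\sim\mathcal{N}\!\left(\left(\frac{\mu_0}{\sigma}\cos(\alpha_\mu),\frac{\mu_0}{\sigma}\sin(\alpha_\mu)\right),\mathbbm{I}_2\right)$. Hence $PN_2\!\left(m\mid(\mu_0\cos(\alpha_\mu),\mu_0\sin(\alpha_\mu)),\sigma^2\mathbbm{I}_2\right)=PN_2\!\left(m\mid\left(\frac{\mu_0}{\sigma}\cos(\alpha_\mu),\frac{\mu_0}{\sigma}\sin(\alpha_\mu)\right),\mathbbm{I}_2\right)$, and applying \eqref{eq:pn_2_id} with $\mu_0$ replaced by $\mu_0/\sigma$ and $\alpha$ by $\alpha_\mu$ yields $\phi\!\left(\frac{\mu_0}{\sigma}\sin(m-\alpha_\mu)\right)\!\left(\phi\!\left(\frac{\mu_0}{\sigma}\cos(m-\alpha_\mu)\right)+\frac{\mu_0}{\sigma}\cos(m-\alpha_\mu)\,\Phi\!\left(\frac{\mu_0}{\sigma}\cos(m-\alpha_\mu)\right)\right)$. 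Since $a/\sigma=\frac{\mu_0}{\sigma}\cos(m-\alpha_\mu)$, this is exactly the claimed expression.

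The direct route, which I would present if a self-contained derivation is preferred, goes as follows. Apply the polar change of variables (Jacobian $r$) to the bivariate normal density to obtain $p(r,m)=\frac{r}{2\pi\sigma^2}\exp\!\left(-\frac{1}{2\sigma^2}\left(r^2-2r\mu_0\cos(m-\alpha_\mu)+\mu_0^2\right)\right)$, where the cross term is collapsed using $\cos m\cos\alpha_\mu+\sin m\sin\alpha_\mu=\cos(m-\alpha_\mu)$. Completing the square in $r$ rewrites the exponent as $-\frac{1}{2\sigma^2}\left((r-a)^2+\mu_0^2\sin^2(m-\alpha_\mu)\right)$ with $a=\mu_0\cos(m-\alpha_\mu)$; factoring out the piece depending only on $m$ gives $\frac{1}{\sqrt{2\pi}\,\sigma^2}\,\phi\!\left(\frac{\mu_0\sin(m-\alpha_\mu)}{\sigma}\right)$ times $\int_0^\infty r\exp\!\left(-\frac{(r-a)^2}{2\sigma^2}\right)dr$. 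The substitution $u=(r-a)/\sigma$ splits the remaining integral into $\sigma a\int_{-a/\sigma}^{\infty}e^{-u^2/2}\,du+\sigma^2\int_{-a/\sigma}^{\infty}u\,e^{-u^2/2}\,du=\sqrt{2\pi}\,\sigma^2\left(\frac{a}{\sigma}\Phi\!\left(\frac{a}{\sigma}\right)+\phi\!\left(\frac{a}{\sigma}\right)\right)$, and multiplying through with the pulled-out factor produces the stated formula.

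The argument is essentially routine, so there is no serious obstacle; the one point that deserves attention is the lower limit $r=0$ in the radial integral, since it is precisely this truncation (rather than an integral over all of $\mathbb{R}$) that produces the $\Phi(a/\sigma)$ term and renders the formula asymmetric in $a$. I would also record the degenerate case $\mu_0=0$: then $a=0$ and the expression collapses to $\phi(0)\phi(0)=1/(2\pi)$, recovering the uniform law on $[0,2\pi)$ as expected, which serves as a sanity check rather than a case needing separate treatment. Beyond careful bookkeeping of the $\sqrt{2\pi}$ and $\sigma$ factors, I anticipate nothing delicate.
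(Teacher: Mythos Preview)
Your proposal is correct, and your ``direct route'' is essentially the same computation the paper carries out: expand the bivariate normal in polar coordinates, factor off the $\sin^2(m-\alpha_\mu)$ piece, and evaluate $\int_0^\infty r\,e^{-(r-a)^2/(2\sigma^2)}\,dr$ by splitting $r=(r-a)+a$. Your first route via the scale invariance of the polar angle is a clean shortcut that the paper does not use here (though it invokes the same observation elsewhere), and it has the mild advantage of reducing immediately to the already-stated identity \eqref{eq:pn_2_id}; either argument is fine.
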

\begin{proof}
Following the derivation in Appendix A of \citep{WangGelfandDirectionalDataAnalysis2013}, we have that
\begin{align*}
& PN_2(m \mid (\mu_0\cos(\alpha_\mu), \mu_0\sin(\alpha_\mu)), \sigma^2\mathbbm{I}_2)\\ 
=& \int_0^{\infty} \frac{1}{2\pi\sigma^2}\exp(-\frac{1}{2\sigma^2}((z_1 - \mu_0\cos(\alpha_\mu))^2 + (z_2 - \mu_0\cos(\alpha_\mu))^2)) dr\\
=& \frac{1}{\sqrt{2\pi\sigma^2}}\exp(-\frac{\mu_0^2}{2\sigma^2} \sin^2(m - \alpha_\mu))\int_0^{\infty}\frac{1}{\sqrt{2\pi\sigma^2}} r\exp(-\frac{1}{2\sigma^2}(r - \mu_0\cos(m - \alpha_\mu))^2)dr.
\end{align*}

Applying the change of variables formula and still following the derivation in Appendix A, we have that
\begin{align*}
& \int_0^{\infty}\frac{1}{\sqrt{2\pi\sigma^2}} r\exp(-\frac{1}{2\sigma^2}(r - a)^2)dr\\
=& \int_0^{\infty}\frac{1}{\sqrt{2\pi\sigma^2}} (r - a)\exp(-\frac{1}{2\sigma^2}(r - a)^2)dr + \int_0^{\infty}\frac{1}{\sqrt{2\pi\sigma^2}} a\exp(-\frac{1}{2\sigma^2}(r - a)^2)dr\\
=& \sigma\phi\left(\frac{a}{\sigma}\right) + a\Phi\left(\frac{a}{\sigma}\right).
\end{align*}
Set $a = \mu_0\cos(m - \alpha_\mu)$, then
\begin{align*}
PN_2(m \mid (\mu_0\cos(\alpha_\mu), \mu_0\sin(\alpha_\mu)), \sigma^2\mathbbm{I}_2) = 
\phi\left(\frac{\mu_0\sin(m - \alpha_\mu)}{\sigma}\right)\left(\phi\left(\frac{a}{\sigma}\right) + \frac{a}{\sigma}\Phi\left(\frac{a}{\sigma}\right)\right).
\end{align*}
\end{proof}

\begin{lemma}
If $Y_{\ell}$ and $Y_{\ell'}$ are generated according to \textit{SvM} outlined in \eqref{model:SvM} with the random variables associated with them labeled accordingly and $(Z_1, Z_2) \sim \mathcal{N}((\mu_1, \mu_2), \sigma^2\mathbbm{I}_2)$ with $\mu_1 = \mu_0\cos(\alpha_\mu)$ and $\mu_2 = \mu_0\cos(\alpha_\mu)$, $\mu_0 \in \mathbbm{R}^+$, $\alpha_\mu \in [0, 2\pi)$, then
\begin{align}
    \E{Y_{\ell}} &= \alpha_\mu,
    \label{lemma:SvM_e_a}\\
    \Var{Y_{\ell}} &= 1 - \frac{I_1(\rho)}{I_0(\rho)}\left(\frac{\pi\beta}{2}\right)^{\frac{1}{2}}\exp(-\beta)(I_0(\beta) + I_1(\beta)) \qquad \beta = \frac{\mu_0^2}{4\sigma^2},
    \label{lemma:SvM_var_a}\\
    \textrm{Corr}(Y_\ell, Y_{\ell'}) &= \frac{\left(\frac{I_1(\rho)}{I_0(\rho)}\right)^2(\E{\cos(m_\ell - m_{\ell'})} - \E{\cos(m_\ell + m_{\ell'} - 2\alpha_\mu)})}{\sqrt{\left(1 - \frac{I_2(\rho)}{I_0(\rho)}\E{\cos(2(m_\ell - \alpha_\mu))}\right)\left(1 - \frac{I_2(\rho)}{I_0(\rho)}\E{\cos(2(m_{\ell'} - \alpha_\mu))}\right)}}.
    \label{lemma:SvM_corr_a}
\end{align}
\label{lemma:SvM_model_prop_a}
\end{lemma}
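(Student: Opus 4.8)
The plan is to reduce every circular functional of $Y_\ell$ to the corresponding functional of the latent mean angle $m_\ell=\arctan^*(z_{1,\ell},z_{2,\ell})$ by conditioning on $(m_\ell,m_{\ell'})$: given these, $Y_\ell$ and $Y_{\ell'}$ are independent von Mises variables with the common (global) concentration $\rho$. Beyond the first trigonometric moment \eqref{eq:vM_mean} I would use only its general version, $\E{e^{\mathrm{i} p Y}\mid m,\rho}=\tfrac{I_p(\rho)}{I_0(\rho)}e^{\mathrm{i} p m}$ for $p=1,2$, which drops out of the definition \eqref{eq:bessel} after the substitution $u=y-m$ together with the symmetry of the von Mises density about $m$ (the sine part integrates to zero). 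The second ingredient is Section \ref{ssection:pn2}/Lemma \ref{lemma:PN2_distr}: $m_\ell$ is a scale-invariant function of $(z_{1,\ell},z_{2,\ell})\sim\mathcal{N}((\mu_1,\mu_2),\sigma^2\mathbbm{I}_2)$, and since angles are unchanged by $(z_1,z_2)\mapsto(z_1/\sigma,z_2/\sigma)$, $m_\ell$ has the law of a projected normal with identity covariance whose mean has norm $\mu_0/\sigma$ and direction $\alpha_\mu$; so its circular mean is $\alpha_\mu$, and plugging $\beta=\mu_0^2/(4\sigma^2)$ into the projected-normal variance formula of that section gives circular variance $1-\tfrac12\sqrt{2\pi\beta}\,e^{-\beta}(I_0(\beta)+I_1(\beta))$.

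For \eqref{lemma:SvM_e_a} I would apply the tower rule: $\E{e^{\mathrm{i} Y_\ell}}=\tfrac{I_1(\rho)}{I_0(\rho)}\,\E{e^{\mathrm{i} m_\ell}}$, and since $\E{e^{\mathrm{i} m_\ell}}$ is a strictly positive real multiple of $e^{\mathrm{i}\alpha_\mu}$, so is $\E{e^{\mathrm{i} Y_\ell}}$, hence $\E{Y_\ell}=\alpha_\mu$. For \eqref{lemma:SvM_var_a} I would write $\Var{Y_\ell}=1-\E{\cos(Y_\ell-\alpha_\mu)}=1-\operatorname{Re}\bigl(e^{-\mathrm{i}\alpha_\mu}\E{e^{\mathrm{i} Y_\ell}}\bigr)$, condition on $m_\ell$ to get $\E{\cos(Y_\ell-\alpha_\mu)}=\tfrac{I_1(\rho)}{I_0(\rho)}\E{\cos(m_\ell-\alpha_\mu)}$, and substitute $\E{\cos(m_\ell-\alpha_\mu)}=1-\Var{m_\ell}=\tfrac12\sqrt{2\pi\beta}\,e^{-\beta}(I_0(\beta)+I_1(\beta))$, simplifying $\tfrac12\sqrt{2\pi\beta}=(\pi\beta/2)^{1/2}$.

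For \eqref{lemma:SvM_corr_a}, noting $\alpha_\ell=\alpha_{\ell'}=\alpha_\mu$, I would expand the numerator by product-to-sum, $\E{\sin(Y_\ell-\alpha_\mu)\sin(Y_{\ell'}-\alpha_\mu)}=\tfrac12\E{\cos(Y_\ell-Y_{\ell'})-\cos(Y_\ell+Y_{\ell'}-2\alpha_\mu)}$; conditioning on $(m_\ell,m_{\ell'})$ and using conditional independence plus the $p=1$ moment, each cosine becomes $\bigl(\tfrac{I_1(\rho)}{I_0(\rho)}\bigr)^2$ times the same cosine in the $m$'s, so the numerator equals $\tfrac12\bigl(\tfrac{I_1(\rho)}{I_0(\rho)}\bigr)^2\E{\cos(m_\ell-m_{\ell'})-\cos(m_\ell+m_{\ell'}-2\alpha_\mu)}$. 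For the denominator, $\E{\sin^2(Y_\ell-\alpha_\mu)}=\tfrac12\bigl(1-\E{\cos 2(Y_\ell-\alpha_\mu)}\bigr)$, and conditioning on $m_\ell$ with the $p=2$ moment gives $\tfrac12\bigl(1-\tfrac{I_2(\rho)}{I_0(\rho)}\E{\cos 2(m_\ell-\alpha_\mu)}\bigr)$, and likewise for $\ell'$; the factors $\tfrac12$ cancel between the numerator and the geometric mean in the denominator, producing exactly \eqref{lemma:SvM_corr_a}.

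The hard part will be less any single computation than the bookkeeping: getting the $\Sigma=\sigma^2\mathbbm{I}_2$ projected normal correctly reduced to the identity-covariance case so that the concentration becomes $\beta=\mu_0^2/(4\sigma^2)$, and then matching the scattered factors of $\tfrac12$ so they cancel in the correlation. I would deliberately leave $\E{\cos(m_\ell-m_{\ell'})}$, $\E{\cos(m_\ell+m_{\ell'}-2\alpha_\mu)}$ and $\E{\cos 2(m_\ell-\alpha_\mu)}$ unevaluated; as the remarks after the lemma explain, the joint law of $(m_\ell,m_{\ell'})$ (correlated $Z$-coordinates) and the doubled angle both obstruct a closed form, so these are to be obtained by simulation.
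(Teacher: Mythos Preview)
Your proposal is correct and follows essentially the same route as the paper: condition on $(m_\ell,m_{\ell'})$ via the tower rule, use the von Mises trigonometric moments $I_p(\rho)/I_0(\rho)$ for $p=1,2$ together with product-to-sum and half-angle identities, and reduce the $\sigma^2\mathbbm{I}_2$ projected normal to the identity-covariance case by scale invariance of the angle so that Kendall's formula applies with $\beta=\mu_0^2/(4\sigma^2)$. The only cosmetic difference is that you package the conditional moments as $\E{e^{\mathrm{i}pY}\mid m}=\tfrac{I_p(\rho)}{I_0(\rho)}e^{\mathrm{i}pm}$, whereas the paper expands everything through real angle-addition formulas and the vanishing of $\E{\sin(p(Y-m))}$; the resulting computations and the decision to leave the $m$-expectations in the correlation unevaluated are identical.
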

\begin{proof}
When deriving the circular mean and variance, we'll drop the $\ell$ subscript out of convenience. For the circular mean, i.e. the quantity in \ref{lemma:SvM_var_a}, by the law of iterated expectations, 
\begin{align*}
\E{e^{iY}} &= \E{\E{e^{iY} \mid m}}
= \frac{I_1(\rho)}{I_0(\rho)}\E{e^{im}}.
\end{align*}
Because 
\begin{multline*}
    PN_2(\alpha_u + \epsilon \mid (\mu_0\cos(\alpha_\mu), \mu_0\sin(\alpha_\mu)), \sigma^2\mathbbm{I}_2) \\
    = PN_2(\alpha_u - \epsilon \mid (\mu_0\cos(\alpha_\mu), \mu_0\sin(\alpha_\mu)), \sigma^2\mathbbm{I}_2)
\end{multline*} 
by Lemma \ref{lemma:PN2_distr}, $\E{\sin(m - \alpha)} = 0$. Thus, following the argument in Appendix A in \cite{WangGelfandDirectionalDataAnalysis2013},
\[
\arctan^*\left(\frac{\E{\sin(m)}}{\E{\cos(m)}}\right) = \arctan^*\left(\frac{\mu_0\sin(\alpha_\mu)}{\mu_0\cos(\alpha_\mu)}\right).
\]
As a result, the mean angle must be $\alpha_\mu$ by definition.

For the circular variance, i.e. the quantity in \ref{lemma:SvM_var_a}, we point out that if if $m$ is the mean for $Y$, $\E{\sin(Y - m)} = 0$ \cite{MardiaJuppDirectionalStatistics2010}. Then, by the law of iterated expectations, 
\begin{align*}
\E{\cos(Y - \alpha_\mu)} &= \E{\E{\cos((Y - m) + (m - \alpha_\mu))}}\\
&= \E{\E{\cos(Y - m)\cos(m - \alpha_\mu) - \sin(Y - m)\sin(m - \alpha_\mu)}}\\
&= \E{\cos(m - \alpha_\mu)\E{\cos(Y - m)} - \sin(m - \alpha_\mu)\E{\sin(Y - m)}}\\
&= \frac{I_1(\rho)}{I_0(\rho)}\E{\cos(m - \alpha_\mu)}.\\
\end{align*}
For $\alpha_\mu = 0$ and $\sigma = 1$, we have that $\E{\cos(m)} = (\frac{\pi\beta}{2})^{\frac{1}{2}}\exp(-\beta)(I_0(\beta) + I_1(\beta))$ for $\beta = \frac{\mu_0^2}{4}$ \cite{}. In other words, we have that for $a = \mu_0\cos(m)$, then
\[
\int_0^{2\pi} \cos(m) \phi(\mu_0\sin(m))(\phi(a) + a\Phi(a)) = \left(\frac{\pi\beta}{2}\right)^{\frac{1}{2}}\exp(-\beta)(I_0(\beta) + I_1(\beta)).
\]
By change of variable, $\E{\cos(m - \alpha_\mu)} = (\frac{\pi\beta}{2})^{\frac{1}{2}}\exp(-\beta)(I_0(\beta) + I_1(\beta))$ when $\alpha_\mu \neq 0$ and $\sigma = 1$. Finally, when $\sigma \neq 1$, we can define $\widetilde{\mu}_0 = \frac{\mu_0}{\sigma}$ based on Lemma \ref{lemma:PN2_distr}. Hence, $\E{\cos(m - \alpha_\mu)} = (\frac{\pi\widetilde{\beta}}{2})^{\frac{1}{2}}\exp(-\widetilde{\beta})(I_0(\widetilde{\beta}) + I_1(\widetilde{\beta}))$ with $\widetilde{\beta} = \frac{\mu_0^2}{4\sigma^2}$.

By definition, the circular correlation we are interested in calculating is 
\[
\frac{\E{\sin(Y_\ell - \alpha_\mu)\sin(Y_{\ell'} - \alpha_\mu)}}{\sqrt{\E{\sin^2(Y_\ell - \alpha_\mu)}\E{\sin^2(Y_{\ell'} - \alpha_\mu)}}}.
\]
Thus, we need to compute $\E{\sin(Y_\ell - \alpha_\mu)\sin(Y_{\ell'} - \alpha_\mu)}$ and without loss of generality, $\E{\sin^2(Y_\ell - \alpha_\mu)}$. To compute these quantities, we will use trigonometric identities and the following fact. Because the von Mises distribution is unimodal and symmetric, it is straightforward to show that $\E{\sin(2(y_1 - m_\ell))} = 0$ as well.

Using the half angle identity,
\[
\E{\sin^2(y_1 - \alpha_\mu)} = \frac{1}{2}(1 - \E{\cos(2(y_1 - \alpha_\mu))}).
\]
Based on what we discussed earlier and the definition of the Bessel function,
\begin{equation*}
    \begin{aligned}
        \E{\cos(2(Y_\ell - \alpha_\mu))} &= \E{\E{\cos(2(Y_\ell - m_\ell) + 2(m_\ell - \alpha_\mu)) \mid m_\ell}}\\
        &= \E{\E{\cos(2(Y_\ell - m_\ell))\cos(2(m_\ell - \alpha_\mu)) \mid m_\ell}}\\
        &= \frac{I_2(\rho)}{I_0(\rho)}\E{cos(2(m_\ell - \alpha_\mu))}.
    \end{aligned}
\end{equation*}
We can repeat this argument to compute $\E{\sin^2(Y_{\ell'} - \alpha_\mu)}$.

Meanwhile, we can calculate $\E{\sin(Y_\ell - \alpha_\mu)\sin(Y_{\ell'} - \alpha_\mu)}$ using the fact that
\[
\E{\sin(Y_\ell - \alpha_\mu)\sin(Y_{\ell'} - \alpha_\mu)} = 
\frac{1}{2}\E{\cos(Y_\ell - Y_{\ell'}) - \cos(Y_\ell + Y_{\ell'} - 2\alpha_\mu)}.
\]
For the former, we can use conditional independence to get that
\begin{equation*}
    \begin{aligned}
        \E{\cos(Y_\ell - Y_{\ell'})} &= \E{\E{\cos(Y_\ell - m_\ell + m_{\ell'} - Y_{\ell'} + m_\ell - m_{\ell'}) \mid m_\ell, m_{\ell'}}}\\
        &= \E{\E{\cos(Y_\ell - m_\ell)\cos(m_{\ell'} - Y_{\ell'})\cos(m_\ell - m_{\ell'}) \mid m_\ell, m_{\ell'}}}\\
        &= \left(\frac{I_1(\rho)}{I_0(\rho)}\right)^2\E{\cos(m_\ell - m_{\ell'})}.\\
    \end{aligned}
\end{equation*}
A similar argument holds for $\E{\cos(Y_\ell + Y_{\ell'} - 2\alpha_\mu)}$ except that
\[
\E{\cos(Y_\ell + Y_{\ell'} - 2\alpha_\mu)} = \left(\frac{I_1(\rho)}{I_0(\rho)}\right)^2\E{\cos(m_\ell + m_{\ell'} - 2\alpha_\mu)}.
\]
\end{proof}

Note that it becomes challenging to explicitly compute the circular correlation. Given the projected normal distribution for $m$, it is difficult to extend our previous calculations for $\cos(2(m - \alpha))$ because the density involves $\cos(m - \alpha)$ and $\sin(m - \alpha)$, leading to a mismatch. Further, it is tricky to derive the projected normal distribution for $m_\ell$ and $m_{\ell'}$ because of the correlation between $Z_{\ell, 1}$ and $Z_{\ell', 1}$ and the correlation between $Z_{\ell, 2}$ and $Z_{\ell', 2}$. This makes it hard to integrate out $r_\ell$ and $r_{\ell'}$ if $Z_{\ell, 1} = r_\ell\cos(m_\ell)$, $Z_{\ell, 2} = r_\ell\sin(m_\ell)$, $Z_{\ell', 1} = r_{\ell'}\cos(m_{\ell'})$, and $Z_{\ell', 2} = r_{\ell'}\sin(m_{\ell'})$ for $r_\ell, r_{\ell'} \in \mathbbm{R}^+$. Fortunately, it is straightforward to compute these values by simulation. Wang and Gelfand briefly explored this in their paper on projected Gaussian processes \citep{WangGelfandModelingSpaceSpaceTime2014}.

\subsection{SvM-p model}
\label{ssection:svm-p_lemma_computations}
% \begin{lemma}
% If $Y_{\ell}, Y_{\ell'}$ are generated according to the $\textit{SvM-p}$ model specified in \eqref{model:SvM-p} with the random variables associated with them labeled accordingly, $\arctan^*$ is defined in \ref{fct:arctan_star}, and $\alpha = \E{Y_{\ell}}$, which is given in \eqref{lemma:SvM-p_e_a}, then
% \begin{align}
%     \E{Y_{\ell}} &= \arctan^*\left(\frac{\sum\limits_{k} P(\zeta = k)\frac{I_1(\rho_k)}{I_0(\rho_k)}\sin(m_k)}{\sum\limits_{k} P(\zeta = k)\frac{I_1(\rho_k)}{I_0(\rho_k)}\cos(m_k)}\right),
%     \label{lemma:SvM-p_e_a}\\
%     \Var{Y_\ell} &= 1 - \sum\limits_k P(\zeta = k)\frac{I_1(\rho_k)}{I_0(\rho_k)}\cos(m_k - \alpha),
%     \label{lemma:SvM-p_var_a}
% \end{align}
% and
% {\small
% \begin{multline}
%     \textrm{Corr}(Y_\ell, Y_{\ell'}) \\ = \frac{\sum\limits_{k'}\sum\limits_{k} P(\zeta_\ell = k, \zeta_{\ell'} = k')\frac{I_1(\rho_k)}{I_0(\rho_k)}\frac{I_1(\rho_{k'})}{I_0(\rho_{k'})}(\cos(m_k - m_{k'}) - \cos(m_k + m_{k'} - 2 \alpha))}{\sqrt{\left(1 - \sum\limits_k P(\zeta_\ell = k) \frac{I_2(\rho_k)}{I_0(\rho_k)}\cos(2(m_k - \alpha))\right)\left(1 - \sum\limits_{k'} P(\zeta_{\ell'} = k') \frac{I_2(\rho_k)}{I_0(\rho_k)}\cos(2(m_k - \alpha))\right)}}.
%     \label{lemma:SvM-p_corr_a}
% \end{multline}
% }
% \end{lemma}
\begin{lemma}
If $Y_{\ell}$ and $Y_{\ell'}$ are generated according to \textit{SvM-p} specified in \eqref{model:SvM-p} with the random variables associated with them labeled accordingly, $\arctan^*$ is defined in \eqref{fct:arctan_star}, $\alpha = \E{Y_{\ell}}$, which is given in \eqref{lemma:SvM-p_e}, and $s(\zeta_\ell) = 1 - \sum\limits_k P(\zeta_\ell = k) \frac{I_2(\rho_k)}{I_0(\rho_k)}\cos(2(m_k - \alpha))$, 
% then
\begin{align}
    \E{Y_{\ell}} &= \arctan^*\left(\frac{\sum\limits_{k} P(\zeta = k)\frac{I_1(\rho_k)}{I_0(\rho_k)}\sin(m_k)}{\sum\limits_{k} P(\zeta = k)\frac{I_1(\rho_k)}{I_0(\rho_k)}\cos(m_k)}\right),
    \label{lemma:SvM-p_e_a}\\
    \Var{Y_\ell} &= 1 - \sum\limits_k P(\zeta = k)\frac{I_1(\rho_k)}{I_0(\rho_k)}\cos(m_k - \alpha),
    \label{lemma:SvM-p_var_a}\\
    \textrm{Corr}(Y_\ell, Y_{\ell'}) &= \frac{1}{\sqrt{s(\zeta_\ell) s(\zeta_{\ell'})}} \sum\limits_{k'}\sum\limits_{k} P(\zeta_\ell = k, \zeta_{\ell'} = k')\frac{I_1(\rho_k)}{I_0(\rho_k)}\frac{I_1(\rho_{k'})}{I_0(\rho_{k'})}\nonumber\\
    & \qquad\qquad\qquad\qquad\qquad\quad(\cos(m_k - m_{k'}) - \cos(m_k + m_{k'} - 2 \alpha)).
    \label{lemma:SvM-p_corr_a}
\end{align}
\label{lemma:SvM-p_model_prop_a}
\end{lemma}
\begin{proof}
Using the law of iterated expectations, we have that
\begin{equation*}
    \begin{aligned}
    \E{e^{iY_\ell}} &= \E{\E{e^{iY_\ell} \mid \zeta_\ell}}\\
    &= \sum\limits_{k} \frac{I_1(\rho_k)}{I_0(\rho_k)}e^{im_k} P(\zeta_\ell = k)\\
    \end{aligned}
\end{equation*}
By converting $e^{im_1}$ and $e^{im_1}$ to their respective sin and cos formulation and use a modified arctan function to get the angle between 0 and 2$\pi$, we get the result in \ref{lemma:SvM-p_e_a}.

From the definition of circular variance, we have that 
\begin{equation*}
    \begin{aligned}
    \textrm{Var}(Y_\ell) &= 1 - \E{\cos(Y_\ell - \alpha)}\\
    &= 1 - \E{\E{\cos(Y_\ell - \alpha) \mid \zeta_\ell}}\\
    &= 1 - \sum\limits_k (\E{\cos((Y_\ell - m_k) + (m_k - \alpha)) \mid \zeta_\ell = k} P(\zeta_\ell = k)\\
    &= 1 - \sum\limits_k P(\zeta_\ell = k)\frac{I_1(\rho_k)}{I_0(\rho_k)}\cos(m_k - \alpha).\\
    \end{aligned}
\end{equation*}

Finally, for the circular correlation, we first derive $\E{\sin^2(y_\ell - \alpha)}$ without loss of generality. As before, we need to compute $\E{cos(2(y_\ell - \alpha))}$. Breaking it into cases as before, we have that 
\begin{equation*}
    \begin{aligned}
        \E{\cos(2(y_\ell - \alpha))} &= \sum\limits_k P(\zeta_\ell = k)\E{\cos(2((y_\ell - m_k) + (m_k - \alpha))) \mid \zeta_\ell = k}\\
        &= \sum\limits_k P(\zeta_\ell = k)\E{cos(2((y_\ell - m_k)))\cos(2(m_k - \alpha)) \mid \zeta_\ell = k}\\
        &= \sum\limits_k P(\zeta_\ell = k)\frac{I_2(\rho_k)}{I_0(\rho_k)}\cos(2(m_k - \alpha)).
    \end{aligned}
\end{equation*}
We can repeat this calculation for $\E{\sin^2(y_{\ell'} - \alpha)}$.

For the numerator, we have that
\begin{equation*}
    \begin{aligned}
        \E{\sin(y_\ell - \alpha)\sin(y_{\ell'} - \alpha)} &= \frac{1}{2}(\E{\cos(y_\ell - \alpha - (y_{\ell'} - \alpha))) - \cos(y_\ell - \alpha + y_{\ell'} - \alpha)})\\
        &= \frac{1}{2}(\E{\cos(y_\ell - y_{\ell'})) - \cos(y_\ell + y_{\ell'} - 2\alpha)}).
    \end{aligned}
\end{equation*}
Let us derive each part separately. For $\E{\cos(y_\ell - y_{\ell'})}$, we can again use the same tricks.
\begin{equation*}
    \begin{aligned}
        & \E{\cos(y_\ell - y_{\ell'})}\\ 
        =& \E{\E{\cos(y_\ell - y_{\ell'}) \mid \zeta_\ell, \zeta_{\ell'}}}\\
        =& \sum\limits_k \sum\limits_{k'} P(\zeta_\ell = k, \zeta_{\ell'} = k')\E{\cos((y_\ell - m_k) + (m_k - y_{\ell'})) \mid \zeta_\ell = k, \zeta_{\ell'} = k'}\\
        =& \sum\limits_k \sum\limits_{k'} P(\zeta_\ell = k, \zeta_{\ell'} = k')\E{\cos(y_\ell - m_k)\cos(y_{\ell'} - m_k) \mid \zeta_\ell = k, \zeta_{\ell'} = k'}\\
        =& \sum\limits_k \sum\limits_{k'} P(\zeta_\ell = k, \zeta_{\ell'} = k')\frac{I_1(\rho_k)}{I_0(\rho_k)}\E{\cos(y_{\ell'} - m_{k'})\cos(m_{k'} - m_k) \mid \zeta_{\ell'} = k'}\\
        =& \sum\limits_k \sum\limits_{k'} P(\zeta_\ell = k, \zeta_{\ell'} = k')\frac{I_1(\rho_k)}{I_0(\rho_k)}\frac{I_1(\rho_{k'})}{I_0(\rho_{k'})}\cos(m_{k'} - m_k)
    \end{aligned}
\end{equation*}

We apply the same tricks for $\E{\cos(y_\ell + y_{\ell'} - 2 \alpha)}$.
\begin{equation*}
    \begin{aligned}
        & \E{\cos(y_\ell + y_{\ell'} - 2 \alpha)} \\
        =& \E{\E{\cos(y_\ell + y_{\ell'} - 2 \alpha) \mid \zeta_\ell, \zeta_{\ell'}}}\\
        =& \sum\limits_k \sum\limits_{k'} P(\zeta_\ell = k, \zeta_{\ell'} = k') \E{\cos((y_\ell - m_k) + (y_{\ell'} + m_k - 2\alpha)) \mid \zeta_\ell = k, \zeta_{\ell'} = k'}\\
        =& \sum\limits_k \sum\limits_{k'} P(\zeta_\ell = k, \zeta_{\ell'} = k') \E{\cos(y_\ell - m_k)\cos(y_{\ell'} + m_k - 2\alpha) \mid \zeta_\ell = k, \zeta_{\ell'} = k'}\\
        =& \sum\limits_k \sum\limits_{k'} P(\zeta_\ell = k, \zeta_{\ell'} = k')\frac{I_1(\rho_k)}{I_0(\rho_k)}\E{\cos(y_{\ell'} - m_{k'})\cos(m_{k'} + m_k - 2\alpha) \mid \zeta_{\ell'} = 2}\\
        =& \sum\limits_k \sum\limits_{k'} P(\zeta_\ell = k, \zeta_{\ell'} = k')\frac{I_1(\rho_k)}{I_0(\rho_k)}\frac{I_1(\rho_{k'})}{I_0(\rho_{k'})}\cos(m_{k'} + m_k - 2\alpha)
    \end{aligned}
\end{equation*}
Combining these quantities gives us the circular correlation.
\end{proof}

\subsection{SvM-p-2 model}
We first prove the useful lemmas mentioned in the part on \textit{SvM-p-2} from Section \ref{ssection:model_prop}.

\begin{lemma}
If $Z \sim \textrm{N}(0, 1)$, $\mathbbm{E}\left(\invlogit{Z}\right) = \frac{1}{2}$.
\label{lemma:logit_e_a}
\end{lemma}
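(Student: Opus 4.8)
The plan is to exploit the reflection symmetry of the logistic link together with the symmetry of the standard normal law about $0$. First I would record the elementary identity $\invlogit{z} + \invlogit{-z} = \tfrac{e^z}{1+e^z} + \tfrac{1}{1+e^z} = 1$, i.e.\ $\invlogit{-z} = 1 - \invlogit{z}$ for every $z \in \mathbbm{R}$. Next I would use that $Z$ and $-Z$ are equal in distribution when $Z \sim \textrm{N}(0,1)$, which gives $\E{\invlogit{Z}} = \E{\invlogit{-Z}}$; since $\invlogit{\cdot} \in (0,1)$ both expectations are finite, so adding them yields
\[
2\,\E{\invlogit{Z}} = \E{\invlogit{Z} + \invlogit{-Z}} = \E{1} = 1,
\]
and hence $\E{\invlogit{Z}} = \tfrac12$.

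I would note in passing that this is the same symmetry underlying the earlier claim $\E{f(Z)} = \E{\invlogit{Z}}$: writing $f(z) = g(z) + \tfrac12$ with $g$ odd, the difference $f(z) - \invlogit{z}$ is an odd function of $z$ (the sign condition $f < \invlogit{\cdot}$ on $z<0$ and $f > \invlogit{\cdot}$ on $z>0$ is consistent with this), so it integrates to $0$ against the even density $\phi$, and separately $\E{g(Z)} = 0$; combining gives $\E{\invlogit{Z}} = \E{f(Z)} = \tfrac12$. Either route works; I would present the first since it is the shortest.

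There is essentially no hard step: the content is the one-line logistic identity and the symmetry $Z \stackrel{d}{=} -Z$, with boundedness of $\invlogit{\cdot}$ removing any integrability concern. The only point worth stating with care is that splitting $\E{\invlogit{Z} + \invlogit{-Z}}$ into $\E{\invlogit{Z}} + \E{\invlogit{-Z}}$ is justified precisely because each term is finite. I would also remark that the argument uses only symmetry of the law of $Z$ about $0$, so the same conclusion holds for any $\textrm{N}(0,\sigma^2)$, which is what the paper needs when it later invokes this fact for Gaussian processes with zero mean.
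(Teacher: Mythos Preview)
Your argument is correct and cleaner than the paper's. You use the one-line identity $\invlogit{z}+\invlogit{-z}=1$ together with $Z\stackrel{d}{=}-Z$, which settles the claim immediately. The paper instead proves the result through the piecewise-linear approximation $f$: it first argues, by pairing regions $(\zEps,\infty)$ with $(-\infty,-\zEps)$ and $[0,\zEps]$ with $[-\zEps,0]$, that $\E{\invlogit{Z}}-\E{f(Z)}=0$, and then computes $\E{f(Z)}=\tfrac12$ by direct integration of the piecewise function against $\phi$. Your second paragraph essentially sketches this route as well, recognising that $f-\invlogit{\cdot}$ is odd. The advantage of the paper's detour is that it simultaneously justifies the claim made in the main text that $\E{f(Z)}=\E{\invlogit{Z}}$, which is the reason $f$ is introduced in the first place (it is needed later to bound $\E{\invlogit{Z_\ell}\invlogit{Z_{\ell'}}}$ in Lemma~\ref{lemma:logistic_e_prod_approx_bound}); your direct route gets the lemma itself faster but does not by itself validate $f$ as an approximation. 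Your closing remark that only symmetry about $0$ is used, so the result extends to any $\textrm{N}(0,\sigma^2)$, is exactly what the paper needs and states informally after Lemma~\ref{lemma:SvM-p_version_prop}.
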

\begin{proof}
We can also show that $\E{\invlogit{Z}} - \E{f(Z)} = 0$. For $z > \zEps$, we have that
\begin{equation*}
    \begin{aligned}
    \int_{\zEps}^\infty \invlogit{z} - f(z)\phi(z) dz &= \int_{\zEps}^\infty (\invlogit{z} - 1)\phi(z) dz\\
    &= -\int_{\zEps}^\infty (1 - \invlogit{z})\phi(z) dz\\
    &= -\int_{\zEps}^\infty \invlogit{-z}\phi(z) dz\\
    &= -\int_\infty^{-\zEps} \invlogit{z}\phi(z) dz.
    \end{aligned}
\end{equation*}
Hence, this cancels out $z < \zEps$. A similar argument holds for $z \in [-\zEps, 0]$ and $z \in [0, \zEps]$. As a result, we can compute $\E{\invlogit{Z}}$ using $\E{f(Z)}$. Then, we have that
\begin{equation*}
    \begin{aligned}
    \E{f(Z)} &= \int_{-\zEps}^{\zEps} \frac{1}{2\zEps}(z + \zEps)\frac{1}{\sqrt{2\pi}}\exp(-\frac{1}{2}z^2) +
    \int_{\zEps}^{\infty} \frac{1}{\sqrt{2\pi}}\exp(-\frac{1}{2}z^2)\\
    &= \frac{1}{2}\left(\Phi(\zEps) - \Phi(-\zEps)\right) + 1 - \Phi(\zEps)\\
    &= \frac{1}{2}.
    \end{aligned}
\end{equation*}
\end{proof}

While the calculations for the circular mean and variance are clear, the circular correlation requires some work to derive. 
\begin{lemma}
The circular correlation for the $\textit{SvM-p-2}$ model described in \eqref{model:SvM-p-2} is
\[
\frac{2\E{\sin(y_\ell - \alpha)\sin(y_{\ell'} - \alpha)}}{\sqrt{2\E{\sin^2(y_\ell - \alpha)}2\E{\sin^2(y_{\ell'} - \alpha)}}}
\]
where
\begin{align*}
\E{\sin^2(y_\ell - \alpha)} &= \frac{1}{2}\left(1 - P(\zeta_\ell = 1)\frac{I_2(\rho_1)}{I_0(\rho_1)}\cos(2(m_1 - \alpha)) - \right.\\
& \qquad \left. P(\zeta_\ell = 2)\frac{I_2(\rho_2)}{I_0(\rho_2)}\cos(2(m_2 - \alpha))\right),\\
\E{\sin^2(y_{\ell'} - \alpha)} &= \frac{1}{2}\left(1 - P(\zeta_{\ell'} = 1)\frac{I_2(\rho_1)}{I_0(\rho_1)}\cos(2(m_1 - \alpha)) - \right.\\
& \qquad \left. P(\zeta_{\ell'} = 2)\frac{I_2(\rho_2)}{I_0(\rho_2)}\cos(2(m_2 - \alpha))\right),\\
\E{\sin(y_\ell - \alpha)\sin(y_{\ell'} - \alpha)} &= P(\zeta_\ell, \zeta_{\ell'} = 1)\left(\frac{I_1(\rho_1)}{I_0(\rho_1)}\right)^2(1 - \cos(2(m_1 - \alpha))) +\\
& \qquad P(\zeta_\ell, \zeta_{\ell'} = 2)\left(\frac{I_1(\rho_2)}{I_0(\rho_2)}\right)^2(1 - \cos(2(m_2 - \alpha))) +\\
& \qquad (P(\zeta_\ell = 1, \zeta_{\ell'} = 2) + P(\zeta_\ell = 2, \zeta_{\ell'} = 1)  )\\
& \qquad\quad\frac{I_1(\rho_1)}{I_0(\rho_1)}\frac{I_1(\rho_2)}{I_0(\rho_2)}(\cos(m_1 - m_2) - \cos(m_1 + m_2 - 2\alpha)).
\end{align*}
\end{lemma}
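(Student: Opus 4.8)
The plan is to read off the lemma as the $K=2$ specialization of the general \textit{SvM-p} identities \eqref{lemma:SvM-p_var_a}--\eqref{lemma:SvM-p_corr_a}, the only genuinely new input being the evaluation of the cluster-assignment probabilities through Lemma \ref{lemma:logit_e_a}. First I would note that for $K=2$ the generalized inverse logit collapses to $\invlogit{\cdot}$, so $\zeta_\ell=1$ with probability $\invlogit{Z_\ell}$ and $\zeta_\ell=2$ with probability $1-\invlogit{Z_\ell}=\invlogit{-Z_\ell}$; hence Lemma \ref{lemma:logit_e_a} gives $P(\zeta_\ell=1)=P(\zeta_\ell=2)=\tfrac12$ (and likewise for $\ell'$). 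Substituting $P(\zeta_\ell=k)=\tfrac12$ into the identity $\E{\cos(2(Y_\ell-\alpha))}=\sum_k P(\zeta_\ell=k)\tfrac{I_2(\rho_k)}{I_0(\rho_k)}\cos(2(m_k-\alpha))$ derived in the proof of the general \textit{SvM-p} lemma, together with $\E{\sin^2(Y_\ell-\alpha)}=\tfrac12(1-\E{\cos(2(Y_\ell-\alpha))})$, yields at once the stated expressions for $\E{\sin^2(y_\ell-\alpha)}$ and $\E{\sin^2(y_{\ell'}-\alpha)}$.

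For the cross term I would first pin down the joint cluster probabilities. Conditioning on $(Z_\ell,Z_{\ell'})$ makes $\zeta_\ell$ and $\zeta_{\ell'}$ independent, so $P(\zeta_\ell=1,\zeta_{\ell'}=1)=\E{\invlogit{Z_\ell}\invlogit{Z_{\ell'}}}$; since $(Z_\ell,Z_{\ell'})$ is mean-zero Gaussian, $(-Z_\ell,-Z_{\ell'})\stackrel{d}{=}(Z_\ell,Z_{\ell'})$, and combined with $1-\invlogit{-z}=\invlogit{z}$ this also gives $P(\zeta_\ell=2,\zeta_{\ell'}=2)=\E{(1-\invlogit{Z_\ell})(1-\invlogit{Z_{\ell'}})}=\E{\invlogit{Z_\ell}\invlogit{Z_{\ell'}}}$, while marginalizing one index (using $P(\zeta_\ell=1)=\tfrac12$) forces $P(\zeta_\ell=1,\zeta_{\ell'}=2)=P(\zeta_\ell=2,\zeta_{\ell'}=1)=\tfrac12-\E{\invlogit{Z_\ell}\invlogit{Z_{\ell'}}}$. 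Plugging these into the numerator of \eqref{lemma:SvM-p_corr_a}: the diagonal terms $k=k'$ contribute $\cos(m_k-m_k)-\cos(2(m_k-\alpha))=1-\cos(2(m_k-\alpha))$, and the $(1,2)$ and $(2,1)$ terms carry the same angular factor (since $\cos(m_1-m_2)=\cos(m_2-m_1)$ and $\cos(m_1+m_2-2\alpha)$ is symmetric in the indices), so they merge with coefficient $P(\zeta_\ell=1,\zeta_{\ell'}=2)+P(\zeta_\ell=2,\zeta_{\ell'}=1)$. Collecting terms reproduces exactly the claimed expression for $\E{\sin(y_\ell-\alpha)\sin(y_{\ell'}-\alpha)}$, and dividing by $\sqrt{\E{\sin^2(y_\ell-\alpha)}\E{\sin^2(y_{\ell'}-\alpha)}}$ --- equivalently, writing the ratio with the explicit factors of $2$ as displayed --- gives the circular correlation.

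I do not expect a substantive obstacle: after Lemma \ref{lemma:logit_e_a} the argument is algebraic bookkeeping, and $\E{\invlogit{Z_\ell}\invlogit{Z_{\ell'}}}$ is intentionally carried along unevaluated (it is the object later controlled by Lemma \ref{lemma:logistic_e_prod_approx_bound}). The one thing to watch is the factor-of-two conventions coming from $\sin A\sin B=\tfrac12(\cos(A-B)-\cos(A+B))$: I would keep track that the denominator $1-\sum_k P(\zeta_\ell=k)\tfrac{I_2(\rho_k)}{I_0(\rho_k)}\cos(2(m_k-\alpha))$ appearing in \eqref{lemma:SvM-p_corr_a} equals $2\E{\sin^2(y_\ell-\alpha)}$, so that the normalization written in the present lemma is consistent with the general formula.
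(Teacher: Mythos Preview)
Your core idea---specialize the general \textit{SvM-p} identities \eqref{lemma:SvM-p_var_a}--\eqref{lemma:SvM-p_corr_a} to $K=2$---is valid and is in fact more economical than the paper's proof, which re-derives the $K=2$ case from scratch by conditioning on $\zeta_\ell$ (and on $(\zeta_\ell,\zeta_{\ell'})$ for the cross term) and recomputing the von Mises moments case by case. Either route lands on the same formulas; yours just cites the general lemma rather than repeating its calculation.

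Where you go slightly astray is in what this lemma is actually asserting. Look at the displayed expressions: they keep $P(\zeta_\ell=k)$ and $P(\zeta_\ell=k,\zeta_{\ell'}=k')$ as \emph{unevaluated} quantities. The substitution $P(\zeta_\ell=k)=\tfrac12$ via Lemma~\ref{lemma:logit_e_a}, and the evaluation of the joint probabilities in terms of $\E{\invlogit{Z_\ell}\invlogit{Z_{\ell'}}}$, are not part of this lemma---that is precisely the content of the \emph{next} lemma (the one ``under conditions of Lemma~\ref{lemma:logistic_e_prod_approx_bound}''). So when you write ``substituting $P(\zeta_\ell=k)=\tfrac12$ \ldots\ yields the stated expressions,'' that is literally false: the stated expressions still carry the symbols $P(\zeta_\ell=1)$, $P(\zeta_\ell=2)$, etc. Strip out the invocation of Lemma~\ref{lemma:logit_e_a} and the joint-probability computation, and your specialization argument already delivers exactly what is claimed here; save those evaluations for the subsequent lemma where they belong.
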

\begin{proof}
We first derive $\E{\sin^2(y_\ell - \alpha)}$ without loss of generality. Like before, we need to compute $\E{cos(2(y_\ell - \alpha))}$. Breaking it into two cases, we have that 
\begin{equation*}
    \begin{aligned}
        \E{cos(2(y_\ell - \alpha))} &= P(\zeta_\ell = 1)\E{cos(2((y_\ell - m_1) + (m_1 - \alpha))) \mid \zeta_\ell = 1} +\\
        & \qquad P(\zeta_\ell = 2)\E{cos(2((y_\ell - m_2) + (m_2 - \alpha))) \mid \zeta_\ell = 2}\\
        &= P(\zeta_\ell = 1)\E{cos(2((y_\ell - m_1)))\cos(2(m_1 - \alpha)) \mid \zeta_\ell = 1} +\\
        & \qquad P(\zeta_\ell = 2)\E{cos(2((y_\ell - m_2)))\cos(2(m_2 - \alpha)) \mid \zeta_\ell = 2}\\
        &= P(\zeta_\ell = 1)\frac{I_2(\rho_1)}{I_0(\rho_1)}\cos(2(m_1 - \alpha)) + P(\zeta_\ell = 2)\frac{I_2(\rho_2)}{I_0(\rho_2)}\cos(2(m_2 - \alpha)).
    \end{aligned}
\end{equation*}
We can repeat this calculation for $\E{\sin^2(y_{\ell'} - \alpha)}$.

Note that $\E{\sin(y_\ell - \alpha)\sin(y_{\ell'} - \alpha)} = \frac{1}{2}(\E{\cos(y_\ell - \alpha - (y_{\ell'} - \alpha))) - \cos(y_\ell - \alpha + y_{\ell'} - \alpha)})$.
Meanwhile, $\E{\cos(y_\ell - y_{\ell'})}$ breaks down into four cases:
\begin{enumerate}
    \item $\zeta_\ell$, $\zeta_{\ell'}$ = 1
    \item $\zeta_\ell$, $\zeta_{\ell'}$ = 2
    \item $\zeta_\ell$ = 1, $\zeta_{\ell'}$ = 2
    \item $\zeta_\ell$ = 2, $\zeta_{\ell'}$ = 1.
\end{enumerate}
For the first case, 
\begin{equation*}
\begin{aligned}
    \E{\cos(y_1 - y_2) \mid \zeta_\ell, \zeta_{\ell'} = 1} &= \E{\cos((y_1 - m_1) + (m_1 - y_2)) \mid \zeta_\ell, \zeta_{\ell'} = 1}\\
    &= \E{\cos(y_1 - m_1)\cos(y_2 - m_1) \mid \zeta_\ell, \zeta_{\ell'} = 1}\\
    &= \left(\frac{I_1(\rho_1)}{I_0(\rho_1)}\right)^2.\\
\end{aligned}
\end{equation*}
This is weighted by $P(\zeta_\ell = 1, \zeta_{\ell'} = 1)$. By a similar calculation, $\E{\cos(y_1 - y_2) \mid \zeta_\ell, \zeta_{\ell'} = 2} = \left(\frac{I_1(\rho_2)}{I_0(\rho_2)}\right)^2$. 

Finally, the computation for the last two cases are identical. Without loss of generality, we have that
\begin{equation*}
\begin{aligned}
    \E{\cos(y_1 - y_2) \mid \zeta_\ell = 1, \zeta_{\ell'} = 2} &= \E{\cos((y_1 - m_1) + (m_1 - y_2)) \mid \zeta_\ell = 1, \zeta_{\ell'} = 2}\\
    &= \E{\cos(y_1 - m_1)\cos(y_2 - m_1) \mid \zeta_\ell = 1, \zeta_{\ell'} = 2}\\
    &= \frac{I_1(\rho_1)}{I_0(\rho_1)}\E{\cos(y_2 - m_2)\cos(m_2 - m_1) \mid \zeta_{\ell'} = 2}\\
    &= \frac{I_1(\rho_1)}{I_0(\rho_1)}\frac{I_1(\rho_2)}{I_0(\rho_2)}\cos(m_2 - m_1).\\
\end{aligned}
\end{equation*}
This is weighted by $P(\zeta_\ell = 1, \zeta_{\ell'} = 2)$ and $P(\zeta_\ell = 2, \zeta_{\ell'} = 1)$. If we put together these four cases using the law of iterated expectations, we get $\E{\cos(y_\ell - y_{\ell'})}$.

We can also modify these calculations for $\E{\cos(y_\ell + y_{\ell'} - 2\alpha)}$. In cases one and two, $m_1 - m_1$ and $m_2 - m_2$ have to be added in twice. As a result, without loss of generality, we have for the first case that 
\[
\E{\cos(y_\ell + y_{\ell'} - 2\alpha) \mid \zeta_\ell, \zeta_{\ell'} = 1} = \left(\frac{I_1(\rho_1)}{I_0(\rho_1)}\right)^2\cos(2(m_1 - \alpha)).
\]
For the last two cases, we have to insert $m_1 - m_1$ and $m_2 - m_2$. Hence, \[
\E{\cos(y_\ell + y_{\ell'} - 2\alpha) \mid \zeta_\ell = 1, \zeta_{\ell'} = 2} = \frac{I_1(\rho_1)}{I_0(\rho_1)}\frac{I_1(\rho_2)}{I_0(\rho_2)}\cos(m_1 + m_2 - 2\alpha).
\]
If we then weight these quantities with their region's probability and thus use the law of iterated expectations, we get $\E{\cos(y_\ell + y_{\ell'} - 2\alpha)}$.
\end{proof}

Next, for the circular correlation in \eqref{lemma:SvM-p_version_corr}, we have that
\begin{align*}
& 2\E{\sin(y_\ell - \alpha)\sin(y_{\ell'} - \alpha)} =\\
& \frac{I_1(\rho_1)}{I_0(\rho_1)}\frac{I_1(\rho_2)}{I_0(\rho_2)}(\cos(m_1 - m_2) - \cos(m_1 + m_2 - 2\alpha)) +\\
& \qquad \E{\invlogit{z_\ell}\invlogit{z_\ell'}} \left(\left(\frac{I_1(\rho_1)}{I_0(\rho_1)}\right)^2(1 - \cos(2(m_1 - \alpha))) + \right.\\
& \qquad \left. \left(\frac{I_1(\rho_2)}{I_0(\rho_2)}\right)^2(1 - \cos(2(m_2 - \alpha))) - \right.\\
& \qquad \left. 2\frac{I_1(\rho_1)}{I_0(\rho_1)}\frac{I_1(\rho_2)}{I_0(\rho_2)}(\cos(m_1 - m_2) - \cos(m_1 + m_2 - 2\alpha))\right).
\end{align*}
Note that this makes what we mentioned for \eqref{lemma:SvM-p_corr} more clear. If we look at the part attached to $\E{\invlogit{z_1}\invlogit{z_2}}$, it is the difference between the concentration of two observations belonging to the same component and the concentration of two observations belonging to different components. The part that is not attached is the concentration of two observations belonging to two different components. Intuitively, $\E{\invlogit{z_1}\invlogit{z_2}}$ will be close to 1 for "nearby" observations so the concentration of two observations belonging to the same component will dominate. 
% On the other hand, $\E{\invlogit{z_1}\invlogit{z_2}}$ will be close to 0 for observations further away. The concentration of components belonging to different observations will be large part of the circular correlation in this case. 

In order to be more precise, one needs to compute $\E{\invlogit{Z_\ell}\invlogit{Z_{\ell'}}}$. This is intractable. %Further, it cannot be shown that this quantity is equivalent to $\E{f(Z_\ell)f(Z_{\ell'})}$. 
Instead, we may use $\E{f(Z_\ell)f(Z_{\ell'})}$ to bound $\E{\invlogit{Z_\ell}\invlogit{Z_{\ell'}}}$. However, in order to do so, we need the following technical lemmas. 
% Doing so gives us the following lemma.
% \begin{lemma}
% Let $Z_\ell, Z_{\ell'} \sim \mathcal{N}(0, \Sigma)$ and $s = \textrm{Cov}\left(Z_\ell, Z_{\ell'}\right)$ and $\zEps \in (0, 2]$. Then,
% \begin{multline}
% -2\invlogit{-\zEps}\left(\Phi(-\zEps) + \left(\frac{1}{2} - \Phi(-\zEps)\right)\invlogit{\zEps}\right) \leq \\ \E{\invlogit{Z_\ell}\invlogit{Z_{\ell'}}} -  \E{f(Z_\ell)f(Z_{\ell'})} \leq 0.
% \end{multline}
% and if $\widetilde{E}_{z_\ell}$ denotes the quantity in Lemma \ref{lemma:one_z_bivariate_a} and $\widetilde{E}_{z_\ell, z_{\ell'}}$ denotes the quantity in Lemma \ref{lemma:two_z_bivariate_a} found in the supplementary material,
% \[
% \E{f(Z_\ell)f(Z_{\ell'})} = \frac{1}{4}(\textrm{P}(z_\ell > \zEps, z_{\ell'} > \zEps) + \textrm{P}(z_\ell > -\zEps, z_{\ell'} > - \zEps)) + \frac{1}{2\zEps}\widetilde{E}_{z_\ell} + \frac{1}{4\zEps^2}\widetilde{E}_{z_\ell, z_{\ell'}}.
% \]
% \label{lemma:logistic_e_prod_approx_bound}
% \end{lemma}

\begin{lemma}
If $Z_\ell, Z_{\ell'} \sim \mathcal{N}(0, \Sigma)$ and $s = \textrm{Cov}\left(Z_\ell, Z_{\ell'}\right)$ and $\zEps \in (0, 2]$,
\begin{align*}
    \int_{-\zEps}^{\zEps}\int_{\zEps}^\infty & z_\ell \frac{1}{\sqrt{2\pi(1 - s^2)}}\exp(-\frac{(z_{\ell'} - s z_\ell)^2}{2(1 - s^2)}) \frac{1}{\sqrt{2\pi}}\exp(-\frac{z_\ell^2}{2})dz_{\ell'} dz_\ell = \qquad\\
    & -\phi(\zEps)\left(\Phi\left(-\zEps\sqrt{\frac{1 - s}{1 + s}}\right) - \Phi\left(-\zEps\sqrt{\frac{1 + s}{1 - s}}\right)\right) +\\
    &\qquad\phi(\zEps) \left(\Phi\left(\frac{\zEps - \vartheta}{\sqrt{1 - s^2}}\right) - \Phi\left(\frac{-\zEps - \vartheta}{\sqrt{1 - s^2}}\right)\right),
\end{align*}
where $\vartheta = \frac{(1 - s^2)\zEps}{s}$.
\label{lemma:one_z_bivariate_a}
\end{lemma}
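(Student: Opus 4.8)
The integrand is exactly the joint density of a bivariate normal vector $(Z_\ell, Z_{\ell'})$ with unit marginal variances and covariance $s$, written as the conditional density of $Z_{\ell'}$ given $Z_\ell$ times the marginal density of $Z_\ell$; hence the left-hand side is the truncated moment $\E{Z_\ell \, \indFct{-\zEps < Z_\ell < \zEps,\ Z_{\ell'} > \zEps}}$. Since $|s| < 1$, the plan is to integrate out $z_{\ell'}$ first and then evaluate the resulting one-dimensional integral by parts; the constraint $\zEps \le 2$ plays no role here, as this is a pure identity.

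Integrating out $z_{\ell'} \in (\zEps, \infty)$: conditionally on $Z_\ell = z_\ell$ we have $Z_{\ell'} \sim \mathcal{N}(s z_\ell, 1 - s^2)$, so the inner integral equals $1 - \Phi\big((\zEps - s z_\ell)/\sqrt{1-s^2}\big) = \Phi\big((s z_\ell - \zEps)/\sqrt{1-s^2}\big)$. This reduces the whole quantity to
\[
\int_{-\zEps}^{\zEps} z_\ell \, \phi(z_\ell) \, \Phi\!\left(\frac{s z_\ell - \zEps}{\sqrt{1-s^2}}\right) dz_\ell .
\]

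I would then integrate by parts with $dv = z_\ell \phi(z_\ell)\, dz_\ell$, so $v = -\phi(z_\ell)$, and $u = \Phi\big((s z_\ell - \zEps)/\sqrt{1-s^2}\big)$, so $du = \frac{s}{\sqrt{1-s^2}} \phi\big((s z_\ell - \zEps)/\sqrt{1-s^2}\big)\, dz_\ell$. The boundary term $\big[-\phi(z_\ell)\Phi(\,\cdot\,)\big]_{-\zEps}^{\zEps}$ equals $-\phi(\zEps)\big(\Phi((s-1)\zEps/\sqrt{1-s^2}) - \Phi(-(s+1)\zEps/\sqrt{1-s^2})\big)$; using $\sqrt{1-s^2} = \sqrt{(1-s)(1+s)}$, the two arguments become $-\zEps\sqrt{(1-s)/(1+s)}$ and $-\zEps\sqrt{(1+s)/(1-s)}$, which is precisely the first summand of the claimed identity. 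The remaining term is $\frac{s}{\sqrt{1-s^2}}\int_{-\zEps}^{\zEps}\phi(z_\ell)\,\phi\big((s z_\ell - \zEps)/\sqrt{1-s^2}\big)\, dz_\ell$; here I would multiply the two Gaussian factors and complete the square in $z_\ell$. Since $1 + s^2/(1-s^2) = 1/(1-s^2)$, the combined exponent reduces to $-\frac{(z_\ell - s\zEps)^2}{2(1-s^2)} - \frac{\zEps^2}{2}$, so the product of densities is $\phi(\zEps)\sqrt{1-s^2}$ times the $\mathcal{N}(s\zEps, 1-s^2)$ density; integrating over $(-\zEps, \zEps)$ and cancelling $\frac{s}{\sqrt{1-s^2}}\cdot\sqrt{1-s^2}=s$ yields $\phi(\zEps)$ times a difference of normal CDFs with arguments $(\pm\zEps - s\zEps)/\sqrt{1-s^2}$, i.e.\ the second summand.

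The argument is routine once the order of integration is fixed. The parts that need care are the two square-completions — the inner $z_{\ell'}$-integral and the residual product of two Gaussians — and the rationalization of the $\sqrt{1-s^2}$ factors so that all $\Phi$-arguments come out in the advertised closed form; in particular one should double-check the precise shift $\vartheta$ and the constant multiplying the second term, since the factor $s$ produced by $du$ combines with the Gaussian normalization. No step is genuinely hard, so I expect the ``main obstacle'' to be purely bookkeeping.
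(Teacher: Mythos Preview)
Your approach is exactly the paper's: integrate out $z_{\ell'}$ to obtain $\Phi\big((sz_\ell-\zEps)/\sqrt{1-s^2}\big)$, integrate by parts with $v=-\phi(z_\ell)$, and evaluate the residual integral by collapsing the product of Gaussian factors. Your boundary term matches the first summand of the lemma verbatim.

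Your square-completion for the residual term is also correct: the product $\phi(z_\ell)\,\phi\!\big((sz_\ell-\zEps)/\sqrt{1-s^2}\big)$ equals $\phi(\zEps)\sqrt{1-s^2}$ times the $\mathcal{N}(s\zEps,\,1-s^2)$ density, so the second piece is
\[
s\,\phi(\zEps)\Bigl[\Phi\!\bigl(\tfrac{\zEps - s\zEps}{\sqrt{1-s^2}}\bigr)-\Phi\!\bigl(\tfrac{-\zEps - s\zEps}{\sqrt{1-s^2}}\bigr)\Bigr].
\]
However, this is \emph{not} the second summand printed in the lemma, which carries the shift $\vartheta=(1-s^2)\zEps/s$ and no prefactor $s$. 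The paper reaches that form by misapplying the product-of-Gaussians mean formula (weighting $\mu_1=\zEps/s$ by its own variance rather than the other one) and by losing the factor $s$ that comes from the scaling constant $\mathcal{N}(\zEps/s;\,0,\,1/s^2)=s\,\phi(\zEps)$. A quick limit confirms the discrepancy: as $s\to 1^-$ the region $\{|Z_\ell|<\zEps,\ Z_{\ell'}>\zEps\}$ becomes null and the integral must vanish; your two pieces indeed cancel to $0$, whereas the lemma's printed form tends to $\phi(\zEps)/2$. So your instinct to ``double-check the precise shift $\vartheta$ and the constant'' was exactly right --- your computed second term is the correct one, and the lemma as stated contains a bookkeeping slip there. (Also, your own sentence ``cancelling $\ldots=s$ yields $\phi(\zEps)$'' drops that same factor; it should read $s\,\phi(\zEps)$.)
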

\begin{proof}
Using integration of parts, we have that
\begin{align*}
    & \int_{-\zEps}^{\zEps}\int_{\zEps}^\infty z_\ell \frac{1}{\sqrt{2\pi(1 - s^2)}}\exp(-\frac{(z_{\ell'} - s z_\ell)^2}{2(1 - s^2)}) \frac{1}{\sqrt{2\pi}}\exp(-\frac{z_\ell^2}{2})dz_{\ell'} dz_\ell\\ 
    =& \int_{-\zEps}^{\zEps} z_\ell\left(1 - \Phi\left(\frac{\zEps - sz_\ell}{\sqrt{1 - s^2}}\right)\right) \frac{1}{\sqrt{2\pi}}\exp(-\frac{z_\ell^2}{2})\\
    =& \int_{-\zEps}^{\zEps} z_\ell\left(\Phi\left(\frac{sz_\ell - \zEps}{\sqrt{1 - s^2}}\right)\right) \frac{1}{\sqrt{2\pi}}\exp(-\frac{z_\ell^2}{2})\\
    =& -\frac{1}{\sqrt{2\pi}}\exp(-\frac{z_\ell^2}{2})\Phi\left(\frac{sz_\ell - \zEps}{\sqrt{1 - s^2}}\right)\bigg|_{-\zEps}^{\zEps} +\\ 
    & \qquad\int_{-\zEps}^{\zEps} \frac{s}{\sqrt{2\pi(1 - s^2)}}\exp(-\frac{(sz_\ell - \zEps)^2}{2(1 - s^2)}) \frac{1}{\sqrt{2\pi}}\exp(-\frac{z_\ell^2}{2})\\
    =& -\frac{1}{\sqrt{2\pi}}\exp(-\frac{z_\ell^2}{2})\Phi\left(\frac{sz_\ell - \zEps}{\sqrt{1 - s^2}}\right)\bigg|_{-\zEps}^{\zEps} +\\ 
    & \qquad\int_{-\zEps}^{\zEps} \frac{s}{\sqrt{2\pi(1 - s^2)}}\exp(-\frac{(z_\ell - \frac{\zEps}{s})^2}{2\frac{1 - s^2}{s^2}}) \frac{1}{\sqrt{2\pi}}\exp(-\frac{z_\ell^2}{2}).\\
\end{align*}

For the first half, we have that 
\begin{equation*}
    \begin{aligned}
    &-\frac{1}{\sqrt{2\pi}}\exp(-\frac{z_\ell^2}{2})\Phi\left(\frac{sz_\ell - \zEps}{\sqrt{1 - s^2}}\right)\bigg|_{-\zEps}^{\zEps}\\
    =&
     -\phi(\zEps)\Phi\left(\frac{s\zEps - \zEps}{\sqrt{1 - s^2}}\right) + \phi(-\zEps)\Phi\left(\frac{-s\zEps - \zEps}{\sqrt{1 - s^2}}\right)\\
    =& -\phi(\zEps)\left(\Phi\left(\frac{(s - 1)\zEps}{\sqrt{1 - s^2}}\right) - \Phi\left(\frac{-(s + 1)\zEps}{\sqrt{1 - s^2}}\right)\right)\\
    =& -\phi(\zEps)\left(\Phi\left(\frac{(s - 1)\zEps}{\sqrt{1 - s^2}}\right) - \Phi\left(\frac{-(s + 1)\zEps}{\sqrt{1 - s^2}}\right)\right)\\
    =& -\phi(\zEps)\left(\Phi\left(-\zEps\sqrt{\frac{(1 - s)^2}{1 - s^2}}\right) - \Phi\left(-\zEps\sqrt{\frac{(1 + s)^2}{1 - s^2}}\right)\right)\\
    =& -\phi(\zEps)\left(\Phi\left(-\zEps\sqrt{\frac{1 - s}{1 + s}}\right) - \Phi\left(-\zEps\sqrt{\frac{1 + s}{1 - s}}\right)\right).
    \end{aligned}
\end{equation*}

For the second half, we have the integral of the product of two Gaussian distributions. Hence, the mean is
\[
\frac{\frac{1 - s^2}{s^2}\frac{\zEps}{s} + (1)(0)}{1 + \frac{1 - s^2}{s^2}} =
\frac{(1 - s^2)\zEps}{s(s^2 + (1 - s^2))} = \frac{(1 - s^2)}{s}\zEps
\]
and the standard deviation is 
\[
\sqrt{\frac{\frac{1 - s^2}{s^2}}{1 + \frac{1 - s^2}{s^2}}} =
\sqrt{\frac{1 - s^2}{s^2 + (1 - s^2)}} =
\sqrt{1 - s^2}.
\]
Meanwhile, for the scaling term, we have that the mean is essentially zero and the standard deviation is 
\[
\sqrt{1 + \frac{1 - s^2}{s^2}} = \sqrt{\frac{s^2 + 1 - s^2}{s^2}} = \sqrt{\frac{1}{s^2}} = \frac{1}{s}.
\]
We then standardize using the mean and standard deviation to get the quantity in the lemma.
\end{proof}

\begin{lemma}
If $Z_\ell, Z_{\ell'} \sim \mathcal{N}(0, \Sigma)$ and $s = \textrm{Cov}\left(Z_\ell, Z_{\ell'}\right)$ and $\zEps \in (0, 2]$,
\begin{align*}
    \int_{-\zEps}^{\zEps} \int_{-\zEps}^{\zEps} & z_\ell z_{\ell'} \frac{1}{\sqrt{2\pi(1 - s^2)}}\exp(-\frac{(z_{\ell'} - s z_\ell)^2}{2(1 - s^2)}) \frac{1}{\sqrt{2\pi}}\exp(-\frac{z_\ell^2}{2})dz_{\ell'} dz_\ell = \qquad\\
    & \frac{1 - s^2}{s}\phi(\zEps)\vartheta\left(\Phi\left(\frac{\zEps + \vartheta}{\sqrt{1 - s^2}}\right) - \Phi\left(\frac{-\zEps + \vartheta}{\sqrt{1 - s^2}}\right) - \right.\\
    & \quad\qquad\qquad\qquad \left. \left(\Phi\left(\frac{\zEps - \vartheta}{\sqrt{1 - s^2}}\right) - \Phi\left(\frac{-\zEps - \vartheta}{\sqrt{1 - s^2}}\right)\right)\right) -\\
    & \quad s\phi(\zEps)\left(-\zEps\phi\left(-\zEps\sqrt{\frac{1 - s}{1 + s}}\right) + \Phi\left(\zEps\sqrt{\frac{1 - s}{1 + s}}\right)\right) +\\ 
    & \quad s\phi(\zEps)\left(\zEps\phi\left(-\zEps\sqrt{\frac{1 + s}{1 - s}}\right) + \Phi\left(\zEps\sqrt{\frac{1 + s}{1 - s}}\right)\right) -\\
    & \quad s\phi(\zEps)\left(\left(\Phi\left(\frac{\zEps - \vartheta}{\sqrt{1 - s^2}}\right) - \Phi\left(\frac{-\zEps - \vartheta}{\sqrt{1 - s^2}}\right)\right)\vartheta - \right.\\
    & \quad\qquad\qquad\left.\sqrt{1 - s^2}\left(\phi\left(\frac{\zEps - \vartheta}{\sqrt{1 - s^2}}\right) - \phi\left(\frac{-\zEps - \vartheta}{\sqrt{1 - s^2}}\right)\right)\right) +\\ 
    & \quad \textrm{P}(z_\ell \in [-\zEps, \zEps], z_{\ell'} \leq \zEps \mid \v{0}, \Sigma),
\end{align*}
where $\vartheta = \frac{(1 - s^2)\zEps}{s}$.
\label{lemma:two_z_bivariate_a}
\end{lemma}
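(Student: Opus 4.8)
The integral to evaluate is the truncated cross-moment $\E{Z_\ell Z_{\ell'}\,\indFct{Z_\ell\in[-\zEps,\zEps]}\,\indFct{Z_{\ell'}\in[-\zEps,\zEps]}}$ of a mean-zero bivariate normal with unit marginal variances and $\textrm{Cov}(Z_\ell,Z_{\ell'})=s$. My plan is to compute it by iterated integration, doing the inner integral over $z_{\ell'}$ first. Conditionally on $Z_\ell=z_\ell$ we have $Z_{\ell'}\sim\mathcal N(sz_\ell,1-s^2)$, so writing $\tau=\sqrt{1-s^2}$ and applying the elementary truncated first-moment identity $\int_a^b w\,\tfrac1\tau\phi\!\big(\tfrac{w-\mu}{\tau}\big)\,dw=\mu\big[\Phi(\tfrac{b-\mu}{\tau})-\Phi(\tfrac{a-\mu}{\tau})\big]-\tau\big[\phi(\tfrac{b-\mu}{\tau})-\phi(\tfrac{a-\mu}{\tau})\big]$ with $\mu=sz_\ell$, $a=-\zEps$, $b=\zEps$, the double integral breaks into a ``$z_\ell^{2}$ piece''
\[
A=s\int_{-\zEps}^{\zEps}z_\ell^{2}\Big(\Phi\big(\tfrac{\zEps-sz_\ell}{\tau}\big)-\Phi\big(\tfrac{-\zEps-sz_\ell}{\tau}\big)\Big)\phi(z_\ell)\,dz_\ell
\]
and a ``$z_\ell$ piece''
\[
B=-\tau\int_{-\zEps}^{\zEps}z_\ell\Big(\phi\big(\tfrac{\zEps-sz_\ell}{\tau}\big)-\phi\big(\tfrac{-\zEps-sz_\ell}{\tau}\big)\Big)\phi(z_\ell)\,dz_\ell .
\]

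For $B$ I would invoke the density-refactoring identity $\phi(z_\ell)\,\phi\!\big(\tfrac{c-sz_\ell}{\tau}\big)=\phi(c)\,\phi\!\big(\tfrac{z_\ell-sc}{\tau}\big)$ (equivalently, the fact that the standard bivariate normal density factors through either margin), with $c=\pm\zEps$. This turns $B$ into $\phi(\zEps)$ times a combination of truncated first moments of the shifted normals $\mathcal N(\pm s\zEps,\tau^2)$ over $[-\zEps,\zEps]$, each of which the truncated-moment identity evaluates explicitly; the simplifications $\tfrac{\zEps(1-s)}{\tau}=\zEps\sqrt{\tfrac{1-s}{1+s}}$ and $\tfrac{\zEps(1+s)}{\tau}=\zEps\sqrt{\tfrac{1+s}{1-s}}$ cast these contributions into the $\zEps\phi(\zEps)\Phi(\cdot)$ and $\phi(\zEps)\phi(\cdot)$ forms appearing in the second and third lines of the statement, with the shift $\vartheta$ entering exactly as in the proof of Lemma~\ref{lemma:one_z_bivariate_a}.

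For $A$ the key trick is $z_\ell^{2}\phi(z_\ell)=\phi(z_\ell)-\big(z_\ell\phi(z_\ell)\big)'$. Substituting and splitting: the $\phi(z_\ell)$ part gives $s\int_{-\zEps}^{\zEps}\big(\Phi(\tfrac{\zEps-sz_\ell}{\tau})-\Phi(\tfrac{-\zEps-sz_\ell}{\tau})\big)\phi(z_\ell)\,dz_\ell$, a bivariate-normal probability over a half-strip, which produces the $\textrm{P}(z_\ell\in[-\zEps,\zEps],\,z_{\ell'}\le\zEps\mid\v{0},\Sigma)$ term; the $\big(z_\ell\phi(z_\ell)\big)'$ part I would integrate by parts a second time (with $u=\Phi(\tfrac{\pm\zEps-sz_\ell}{\tau})$, $dv=(z_\ell\phi(z_\ell))'\,dz_\ell$), whose boundary term yields the $\zEps\phi(\zEps)\Phi\big(\zEps\sqrt{(1\mp s)/(1\pm s)}\big)$-type pieces and whose remaining integral is once more of the shape $\int_{-\zEps}^{\zEps}\phi(z_\ell)\phi(\tfrac{\pm\zEps-sz_\ell}{\tau})\,dz_\ell$, reduced by the same refactoring identity to truncated moments of $\mathcal N(\vartheta,1-s^2)$-type Gaussians — this is where the arguments $\tfrac{\pm\zEps-\vartheta}{\sqrt{1-s^2}}$ and the prefactor $\tfrac{1-s^2}{s}\vartheta$ surface. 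Adding $A$ and $B$ and grouping terms then delivers the claimed identity.

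The conceptual content is modest: everything rests on the truncated-Gaussian first- and second-moment formulas, the identity $z^{2}\phi=\phi-(z\phi)'$, and the Gaussian density-refactoring identity, all of which already appear in the proof of Lemma~\ref{lemma:one_z_bivariate_a}. The real effort, and the main obstacle, is the bookkeeping — carrying signs correctly through the two successive integrations by parts, keeping the $+\zEps$ and $-\zEps$ boundary and truncation contributions separate, rewriting every $\Phi$ and $\phi$ argument in the normalized $\sqrt{(1\mp s)/(1\pm s)}$ and $(\pm\zEps-\vartheta)/\sqrt{1-s^2}$ forms, and recognizing which clusters of terms collapse into the single probability term. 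There are no estimates to make; the statement is an exact evaluation, so the proof is finished once all contributions are matched term by term against the right-hand side.
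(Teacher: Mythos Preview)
Your approach is essentially the paper's: both split the inner $z_{\ell'}$ integral via the conditional mean $sz_\ell$ into the piece $A=s\int z_\ell^{2}(\Phi_+-\Phi_-)\phi$ and the piece $B=-\tau\int z_\ell(\phi_+-\phi_-)\phi$, handle $B$ by the product-of-Gaussians (density-refactoring) identity to obtain truncated-normal first moments, and handle $A$ by integration by parts to produce the boundary $\zEps\phi(\zEps)\Phi(\cdot)$ terms, a further $\int z_\ell\phi\phi_\pm$ integral of the same $B$-type, and the residual $s\int\Phi\phi$ giving the bivariate probability. Your use of $z^{2}\phi=\phi-(z\phi)'$ to peel off the probability term before integrating by parts is a slightly cleaner organization than the paper's direct IBP on $z_\ell^{2}\Phi_+\phi$, but the computations coincide; just be careful that the probability you extract from the \emph{full} $(\Phi_+-\Phi_-)$ difference is the rectangle $\textrm{P}(z_\ell\in[-\zEps,\zEps],\,z_{\ell'}\in[-\zEps,\zEps])$, whereas the half-strip $\textrm{P}(z_\ell\in[-\zEps,\zEps],\,z_{\ell'}\le\zEps)$ appearing in the statement arises from the $\Phi_+$ term alone, as in the paper's term-by-term treatment.
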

\begin{proof}
To set up change of variable, we have that 
\begin{equation}
    \begin{aligned}
        & \int_{-\zEps}^{\zEps} \int_{-\zEps}^{\zEps} z_\ell z_{\ell'} \frac{1}{\sqrt{2\pi(1 - s^2)}}\exp(-\frac{(z_{\ell'} - s z_\ell)^2}{2(1 - s^2)}) \frac{1}{\sqrt{2\pi}}\exp(-\frac{z_\ell^2}{2})dz_{\ell'} dz_\ell\\
        =& \int_{-\zEps}^{\zEps} \int_{-\zEps}^{\zEps} z_\ell \left(\frac{z_{\ell'} - s z_\ell}{1 - s^2}\right) \sqrt{\frac{1 - s^2}{2\pi}}\exp(-\frac{(z_{\ell'} - s z_\ell)^2}{2(1 - s^2)}) \frac{1}{\sqrt{2\pi}}\exp(-\frac{z_\ell^2}{2})dz_{\ell'} dz_\ell +\\
        & \quad \int_{-\zEps}^{\zEps} \int_{-\zEps}^{\zEps} z_\ell (s z_\ell) \frac{1}{\sqrt{2\pi(1 - s^2)}}\exp(-\frac{(z_{\ell'} - s z_\ell)^2}{2(1 - s^2)}) \frac{1}{\sqrt{2\pi}}\exp(-\frac{z_\ell^2}{2})dz_{\ell'} dz_\ell.
    \end{aligned}
\label{two_z_prod_eq}
\end{equation}
Again, let us work on the two pieces of Equation \ref{two_z_prod_eq}. Let $\widetilde{\zEps} = \frac{(\zEps - s z_\ell)^2}{2(1 - s^2)}$ and $\widetilde{z}_{-\epsilon} = \frac{(-\zEps - s z_\ell)^2}{2(1 - s^2)}$. Then, by change of variable,
\begin{equation}
    \begin{aligned}
         & \int_{-\zEps}^{\zEps} \int_{\widetilde{z}_{-\epsilon}}^{\widetilde{\zEps}} z_\ell \sqrt{\frac{1 - s^2}{2\pi}}\exp(-u) \frac{1}{\sqrt{2\pi}}\exp(-\frac{z_\ell^2}{2})du dz_\ell\\
         &= \int_{-\zEps}^{\zEps} -z_\ell\sqrt{\frac{1 - s^2}{2\pi}}(\exp(\widetilde{\zEps}) - \exp(\widetilde{z}_{-\epsilon}))\frac{1}{\sqrt{2\pi}}\exp(-\frac{z_\ell^2}{2})\\
         &= \int_{-\zEps}^{\zEps} -z_\ell\sqrt{\frac{1 - s^2}{2\pi}}\exp(\widetilde{\zEps})\frac{1}{\sqrt{2\pi}}\exp(-\frac{z_\ell^2}{2}) +\\   
         & \qquad\int_{-\zEps}^{\zEps} -z_\ell\sqrt{\frac{1 - s^2}{2\pi}}\exp(\widetilde{z}_{-\epsilon}))\frac{1}{\sqrt{2\pi}}\exp(-\frac{z_\ell^2}{2}).
    \end{aligned}
\label{two_z_prod_eq_left}
\end{equation}
We can rearrange $\widetilde{\zEps}$ to be $\frac{(z_\ell - \frac{\zEps}{s})^2}{2\frac{1 - s^2}{s^2}}$ and $\widetilde{z}_{-\epsilon}$ to be $\frac{(s z_\ell - (-\frac{\zEps}{s}))^2}{2\frac{1 - s^2}{s^2}}$. By doing so, we see that we again have the product of two Gaussians for both parts of Equation \ref{two_z_prod_eq_left}. 

If we look at the left part of Equation \ref{two_z_prod_eq_left}, our scaling term is again $\phi(\zEps)$ and that the mean and standard deviation of the Gaussian are the same as that in Lemma \ref{lemma:one_z_bivariate_a}. As a result, we are taking the expectation of an unnormalized truncated Gaussian. Hence, 
\begin{equation*}
\begin{aligned}
    &\int_{-\zEps}^{\zEps} -z_\ell\sqrt{\frac{1 - s^2}{2\pi}}\exp(\widetilde{\zEps})\frac{1}{\sqrt{2\pi}}\exp(-\frac{z_\ell^2}{2}) \\
    =&\frac{1 - s^2}{s}\int_{-\zEps}^{\zEps} -z_\ell\frac{s}{\sqrt{2\pi(1 - s^2)}}\exp(\widetilde{\zEps})\frac{1}{\sqrt{2\pi}}\exp(-\frac{z_\ell^2}{2})\\
    =&-\frac{1 - s^2}{s}\phi(\zEps)\left(\left(\Phi\left(\frac{\zEps - \vartheta}{\sqrt{1 - s^2}}\right) - \Phi\left(\frac{-\zEps - \vartheta}{\sqrt{1 - s^2}}\right)\right)\vartheta - \right.\\
    & \qquad\qquad\qquad\left. \sqrt{1 - s^2}\left(\phi\left(\frac{\zEps - \vartheta}{\sqrt{1 - s^2}}\right) - \phi\left(\frac{-\zEps - \vartheta}{\sqrt{1 - s^2}}\right)\right)\right).
\end{aligned}    
\end{equation*}
Meanwhile for the right part of Equation \ref{two_z_prod_eq_left}, our scaling term is $\phi(-\zEps) = \phi(\zEps)$. While the standard deviation is the same as before, the mean is $-\vartheta$. As a result,
\begin{align*}
    \int_{-\zEps}^{\zEps} & -z_\ell\sqrt{\frac{1 - s^2}{2\pi}}\exp(\widetilde{\zEps})\frac{1}{\sqrt{2\pi}}\exp(-\frac{z_\ell^2}{2}) =\\
    &\frac{1 - s^2}{s}\phi(\zEps)\left(\left(\Phi\left(\frac{\zEps + \vartheta}{\sqrt{1 - s^2}}\right) - \Phi\left(\frac{-\zEps + \vartheta}{\sqrt{1 - s^2}}\right)\right)\vartheta + \right.\\
    & \qquad\qquad\qquad\left.\sqrt{1 - s^2}\left(\phi\left(\frac{\zEps + \vartheta}{\sqrt{1 - s^2}}\right) - \phi\left(\frac{-\zEps + \vartheta}{\sqrt{1 - s^2}}\right)\right)\right).
\end{align*}
Hence, combining the two, we get that for the left part of Equation \ref{two_z_prod_eq},
\begin{align}
     \frac{1 - s^2}{s}\phi(\zEps)\vartheta\left(\Phi\left(\frac{\zEps + \vartheta}{\sqrt{1 - s^2}}\right) - \Phi\left(\frac{-\zEps + \vartheta}{\sqrt{1 - s^2}}\right) - \left(\Phi\left(\frac{\zEps - \vartheta}{\sqrt{1 - s^2}}\right) - \Phi\left(\frac{-\zEps - \vartheta}{\sqrt{1 - s^2}}\right)\right)\right).
    \label{two_z_prod_eq_right_result}
\end{align}
For the right part, we have that
\begin{equation}
    \begin{aligned}
    & \int_{-\zEps}^{\zEps} \int_{-\zEps}^{\zEps} z_\ell (s z_\ell) \frac{1}{\sqrt{2\pi(1 - s^2)}}\exp(-\frac{(z_{\ell'} - s z_\ell)^2}{2(1 - s^2)}) \frac{1}{\sqrt{2\pi}}\exp(-\frac{z_\ell^2}{2})dz_{\ell'} dz_\ell\\
    =& \int_{-\zEps}^{\zEps} s z_\ell^2 \left(\Phi\left(\frac{\zEps - s z_\ell}{\sqrt{1 - s^2}}\right) - \Phi\left(\frac{-\zEps - s z_\ell}{\sqrt{1 - s^2}}\right)\right) \frac{1}{\sqrt{2\pi}}\exp(-\frac{z_\ell^2}{2}) dz_\ell\\
    =& \int_{-\zEps}^{\zEps} s z_\ell^2 \Phi\left(\frac{\zEps - s z_\ell}{\sqrt{1 - s^2}}\right)\frac{1}{\sqrt{2\pi}}\exp(-\frac{z_\ell^2}{2}) - \int_{-\zEps}^{\zEps} \Phi\left(\frac{-\zEps - s z_\ell}{\sqrt{1 - s^2}}\right) \frac{1}{\sqrt{2\pi}}\exp(-\frac{z_\ell^2}{2}) dz_\ell. 
    \end{aligned}
\label{two_z_prod_eq_right}
\end{equation}
Let us focus on the right part of Equation \ref{two_z_prod_eq_right}. By integration of parts, we have that
\begin{equation}
    \begin{aligned}
        & \int_{-\zEps}^{\zEps} s z_\ell^2 \Phi\left(\frac{\zEps - s z_\ell}{\sqrt{1 - s^2}}\right)\frac{1}{\sqrt{2\pi}}\exp(-\frac{z_\ell^2}{2})\\
        =& -s\left(z_\ell \frac{-s}{\sqrt{2\pi(1 - s^2)}}\exp(-\frac{(s z_\ell - \zEps)^2}{2(1 - s^2)}) + \Phi\left(\frac{\zEps - s z_\ell}{\sqrt{1 - s^2}}\right)\right)\frac{1}{\sqrt{2\pi}}\exp(-\frac{z_\ell^2}{2})\bigg|_{-\zEps}^{\zEps} +\\
        & \quad\int_{-\zEps}^{\zEps} s\left(z_\ell \frac{-s}{\sqrt{2\pi(1 - s^2)}}\exp(-\frac{(s z_\ell - \zEps)^2}{2(1 - s^2)}) + \Phi\left(\frac{\zEps - s z_\ell}{\sqrt{1 - s^2}}\right)\right)\frac{1}{\sqrt{2\pi}}\exp(-\frac{z_\ell^2}{2})\\
        =& -s\phi(\zEps)\left(-\zEps\phi\left(-\zEps\sqrt{\frac{1 - s}{1 + s}}\right) + \Phi\left(\zEps\sqrt{\frac{1 - s}{1 + s}}\right)\right) +\\
        &\quad s\phi(\zEps)\left(\zEps\phi\left(-\zEps\sqrt{\frac{1 + s}{1 - s}}\right) + \Phi\left(\zEps\sqrt{\frac{1 + s}{1 - s}}\right)\right) +\\
        & \quad\int_{-\zEps}^{\zEps} sz_\ell \frac{-s}{\sqrt{2\pi(1 - s^2)}}\exp(-\frac{(s z_\ell - \zEps)^2}{2(1 - s^2)})\frac{1}{\sqrt{2\pi}}\exp(-\frac{z_\ell^2}{2}) +\\
        &\quad\int_{-\zEps}^{\zEps}s\Phi\left(\frac{\zEps - s z_\ell}{\sqrt{1 - s^2}}\right)\frac{1}{\sqrt{2\pi}}\exp(-\frac{z_\ell^2}{2}).
    \end{aligned}
\label{two_z_prod_eq_right_left}
\end{equation}
The third term of the Equation \ref{two_z_prod_eq_right_left} is again a truncated normal with a scaling term of $\phi(\zEps)$ and the same $\vartheta$ and $\widetilde{\sigma}$ as before. Hence, 
\begin{align*}
    \int_{-\zEps}^{\zEps} & sz_\ell \frac{-s}{\sqrt{2\pi(1 - s^2)}}\exp(-\frac{(s z_\ell - \zEps)^2}{2(1 - s^2)})\frac{1}{\sqrt{2\pi}}\exp(-\frac{z_\ell^2}{2}) = \\
    &\quad-s\phi(\zEps)\left(\left(\Phi\left(\frac{\zEps - \vartheta}{\sqrt{1 - s^2}}\right) - \Phi\left(\frac{-\zEps - \vartheta}{\sqrt{1 - s^2}}\right)\right)\vartheta - \right.\\
    &\qquad\qquad\qquad\left. \sqrt{1 - s^2}\left(\phi\left(\frac{\zEps - \vartheta}{\sqrt{1 - s^2}}\right) - \phi\left(\frac{-\zEps - \vartheta}{\sqrt{1 - s^2}}\right)\right)\right).
\end{align*}
The last term of the Equation \ref{two_z_prod_eq_right_left} is the product of the conditional normal distribution based on $z_\ell$ and a normal distribution of $z_\ell$. As a result, it is the probability of $z_\ell \in [-\zEps, \zEps]$ and $z_{\ell'} \leq \zEps$ where $z_\ell$ and $z_{\ell'}$ are distributed according to
\[
\mathcal{N}\bigg(
\begin{pmatrix}
z_\ell\\
z_{\ell'}
\end{pmatrix}
\bigg|
\begin{pmatrix}
0\\
0
\end{pmatrix},
\begin{pmatrix}
1 & s\\
s & 1
\end{pmatrix}
\bigg).
\]
\end{proof}

\begin{lemma}
If $\widetilde{E}_{z_\ell}$ denotes the quantity in Lemma \ref{lemma:one_z_bivariate_a} and $\widetilde{E}_{z_\ell, z_{\ell'}}$ denotes the quantity in Lemma \ref{lemma:two_z_bivariate_a}, then for $z_\ell \neq z_{\ell'}$,
\[
\E{f{z_\ell}f{z_{\ell'}}} = \frac{1}{4}(\textrm{P}(z_\ell > \zEps, z_{\ell'} > \zEps) + \textrm{P}(z_\ell > -\zEps, z_{\ell'} > - \zEps)) + \frac{1}{2\zEps}\widetilde{E}_{z_\ell} + \frac{1}{4\zEps^2}\widetilde{E}_{z_\ell, z_{\ell'}}.
\]
\end{lemma}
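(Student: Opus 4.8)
The plan is to partition the plane according to which linear piece of $f$ each coordinate lands in. Since $f$ vanishes on $(-\infty,-\zEps)$, is affine with slope $\tfrac1{2\zEps}$ on $M:=[-\zEps,\zEps]$, and equals $1$ on $(\zEps,\infty)$, cutting $\overline{\mathbb{R}}^2$ along the lines $z_\ell=\pm\zEps$ and $z_{\ell'}=\pm\zEps$ produces nine rectangles, on each of which $f(z_\ell)f(z_{\ell'})$ is an explicit elementary function; integrating each piece against the centered bivariate Gaussian density $p$ with covariance $\Sigma$ and summing gives $\E{f(Z_\ell)f(Z_{\ell'})}$. A convenient streamlining is to first write $f=\tfrac12+g$ with $g$ the odd piecewise-linear bump of \eqref{eq:logit_approx}, so that $\E{f(Z_\ell)f(Z_{\ell'})}=\tfrac14+\tfrac12\E{g(Z_\ell)}+\tfrac12\E{g(Z_{\ell'})}+\E{g(Z_\ell)g(Z_{\ell'})}$ and the two middle terms vanish by oddness of $g$ and symmetry of $p$; but the nine-region bookkeeping works equally well directly on $f$.

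First I would dispose of the easy cells: every rectangle meeting $\{z_\ell<-\zEps\}$ or $\{z_{\ell'}<-\zEps\}$ contributes $0$, and $\{z_\ell>\zEps\}\times\{z_{\ell'}>\zEps\}$ contributes $\textrm{P}(z_\ell>\zEps,z_{\ell'}>\zEps)$ since $f\equiv1$ there. On the two "mixed" rectangles, say $\{z_\ell>\zEps\}\times M$ where $f(z_\ell)f(z_{\ell'})=\tfrac{z_{\ell'}}{2\zEps}+\tfrac12$, I would split off the constant part (giving $\tfrac12\,\textrm{P}(z_\ell>\zEps,z_{\ell'}\in M)$) and observe that the linear part is $\tfrac1{2\zEps}$ times $\int\!\!\int_{z_\ell>\zEps,\,z_{\ell'}\in M}z_{\ell'}\,p$, which by exchangeability of $(Z_\ell,Z_{\ell'})$ — legitimate because $\Sigma$ has equal diagonal entries — is precisely the quantity $\widetilde E_{z_\ell}$ evaluated in Lemma~\ref{lemma:one_z_bivariate_a}; the mirror rectangle $M\times\{z_{\ell'}>\zEps\}$ is identical. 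On the central rectangle $M\times M$, where $f(z_\ell)f(z_{\ell'})=\bigl(\tfrac{z_\ell}{2\zEps}+\tfrac12\bigr)\bigl(\tfrac{z_{\ell'}}{2\zEps}+\tfrac12\bigr)$, I would expand the product: the bilinear term integrates to $\tfrac1{4\zEps^2}\widetilde E_{z_\ell,z_{\ell'}}$ by Lemma~\ref{lemma:two_z_bivariate_a}, the constant to $\tfrac14\,\textrm{P}(z_\ell\in M,z_{\ell'}\in M)$, and the two single-linear terms $\tfrac1{4\zEps}\int\!\!\int_{M\times M}z_\ell\,p$ vanish because $M\times M$ is symmetric under $(z_\ell,z_{\ell'})\mapsto(-z_\ell,-z_{\ell'})$, $p$ is invariant under this map, and $z_\ell$ is odd.

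What remains is to collect the probability pieces — $\textrm{P}(z_\ell>\zEps,z_{\ell'}>\zEps)$, the copies of $\textrm{P}(z_\ell>\zEps,z_{\ell'}\in M)$, and $\textrm{P}(z_\ell\in M,z_{\ell'}\in M)$ — and, using the disjoint decomposition $\{z>-\zEps\}=M\sqcup\{z>\zEps\}$ together with the sign symmetry of $p$ (which gives $\textrm{P}(z_\ell<-\zEps,z_{\ell'}<-\zEps)=\textrm{P}(z_\ell>\zEps,z_{\ell'}>\zEps)$ and the like), repackage them into the factor $\tfrac14\bigl(\textrm{P}(z_\ell>\zEps,z_{\ell'}>\zEps)+\textrm{P}(z_\ell>-\zEps,z_{\ell'}>-\zEps)\bigr)$ of the statement. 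I expect this final regrouping — and the careful tracking of the coefficient attached to $\widetilde E_{z_\ell}$ as the two mixed cells are combined — to be the main source of fiddliness; the genuinely substantive steps are the symmetry-driven cancellation of the stray linear terms in the central cell and the identification of the two surviving moment integrals with the already-evaluated $\widetilde E_{z_\ell}$ and $\widetilde E_{z_\ell,z_{\ell'}}$ of Lemmas~\ref{lemma:one_z_bivariate_a} and~\ref{lemma:two_z_bivariate_a}. (The identity $\E{\invlogit Z}=\E{f(Z)}=\tfrac12$ from Lemma~\ref{lemma:logit_e_a} is not needed here, but furnishes a sanity check at $s=0$, where the two $\widetilde E$ terms vanish and the probability part must collapse to $\tfrac14$.)
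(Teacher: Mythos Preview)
Your nine-cell decomposition (and the equivalent $f=\tfrac12+g$ expansion) is exactly the route the paper takes: it restricts to $\{z_\ell>-\zEps,\,z_{\ell'}>-\zEps\}$, where $f$ is nonzero, expands $(g(z_\ell)+\tfrac12)(g(z_{\ell'})+\tfrac12)$ there, and then splits the $g\cdot g$ piece over the four remaining sub-rectangles to identify the $\widetilde E$ integrals from Lemmas~\ref{lemma:one_z_bivariate_a} and~\ref{lemma:two_z_bivariate_a}.

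However, the ``final regrouping'' you flag as merely fiddly will in fact not close: carrying your nine-cell tally through honestly gives coefficient $\tfrac{1}{\zEps}$ (not $\tfrac{1}{2\zEps}$) on $\widetilde E_{z_\ell}$, and the collected probability pieces do not reduce to $\tfrac14\bigl(\textrm{P}(z_\ell>\zEps,z_{\ell'}>\zEps)+\textrm{P}(z_\ell>-\zEps,z_{\ell'}>-\zEps)\bigr)$. The discrepancy originates in the paper's argument, not in your method: when expanding $(g+\tfrac12)(g'+\tfrac12)$ on the one-sided region $\{z_\ell>-\zEps,z_{\ell'}>-\zEps\}$, the paper silently drops the two cross terms $\tfrac12\,g(z_\ell)+\tfrac12\,g(z_{\ell'})$. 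These have mean zero over the whole plane --- by exactly the oddness you invoke --- but \emph{not} over this quarter-plane; each contributes $\tfrac{1}{4\zEps}\widetilde E_{z_\ell}+\tfrac14\,\textrm{P}(z_\ell>\zEps,\,z_{\ell'}>-\zEps)$. Your own $s=0$ sanity check exposes the issue immediately: at independence the stated probability factor equals $\tfrac14\bigl(\Phi(-\zEps)^2+\Phi(\zEps)^2\bigr)$, which is strictly less than the required $\tfrac14$.
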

\begin{proof}
We have that
\begin{equation*}
    \begin{aligned}
        \E{f{z_\ell}f{z_{\ell'}}} &= \mathbbm{E}\left(\left(g(z_\ell) + \frac{1}{2}\right)\left(g(z_{\ell'}) + \frac{1}{2}\right)\indFct{z_\ell > -\zEps, z_{\ell'} > - \zEps}\right)\\
         &= \mathbbm{E}\left(g(z_\ell)g(z_{\ell'}))\indFct{z_\ell > -\zEps, z_{\ell'} > - \zEps}\right) + \frac{1}{4}\textrm{P}(z_\ell > -\zEps, z_{\ell'} > - \zEps).
    \end{aligned}
\end{equation*}
Applying the law of iterated expectations to regions, we have that
\begin{equation*}
    \begin{aligned}
        \mathbbm{E}\left(g(z_\ell)g(z_{\ell'}))\right) =&
        \mathbbm{E}\left(g(z_\ell)g(z_{\ell'})) \mid z_\ell > \zEps, z_{\ell'} > \zEps\right)P(z_\ell > \zEps, z_{\ell'} > \zEps) +\\
        &\quad\mathbbm{E}\left(g(z_\ell)g(z_{\ell'})) \mid z_\ell \in [-\zEps, \zEps], z_{\ell'} > \zEps\right)P(z_\ell \in [-\zEps, \zEps], z_{\ell'} > \zEps) +\\
        &\quad \mathbbm{E}\left(g(z_\ell)g(z_{\ell'})) \mid z_\ell > \zEps, z_{\ell'} \in [-\zEps, \zEps]\right)P(z_\ell > \zEps, z_{\ell'} \in [-\zEps, \zEps]) +\\
        &\quad\mathbbm{E}\left(g(z_\ell)g(z_{\ell'})) \mid z_\ell \in [-\zEps, \zEps], z_{\ell'} \in [-\zEps, \zEps]\right)\\
        &\qquad P(z_\ell \in [-\zEps, \zEps], z_{\ell'} \in [-\zEps, \zEps]).
    \end{aligned}
\end{equation*}
Because of Lemma \ref{lemma:one_z_bivariate_a}, we get that 
\[
\frac{1}{4\zEps}\widetilde{E}_{z_\ell} = \mathbbm{E}\left(g(z_\ell)g(z_{\ell'})) \mid z_\ell \in [-\zEps, \zEps], z_{\ell'} > \zEps\right)P(z_\ell \in [-\zEps, \zEps], z_{\ell'} > \zEps)
\]
and 
\[
\frac{1}{4\zEps}\widetilde{E}_{z_\ell} = \mathbbm{E}\left(g(z_\ell)g(z_{\ell'})) \mid z_\ell > \zEps, z_{\ell'} \in [-\zEps, \zEps]\right)P(z_\ell > \zEps, z_{\ell'} \in [-\zEps, \zEps]).
\]
Similarly, Lemma \ref{lemma:two_z_bivariate_a} gives us that
\[
\frac{1}{4\zEps^2}\widetilde{E}_{z_\ell, z_{\ell'}} = \mathbbm{E}\left(g(z_\ell)g(z_{\ell'})) \mid z_\ell \in [-\zEps, \zEps], z_{\ell'} \in [-\zEps, \zEps]\right)P(z_\ell \in [-\zEps, \zEps], z_{\ell'} \in [-\zEps, \zEps]).
\] 
If we plug those values in, we get the value stated in the lemma.
\end{proof}

\begin{lemma}
If $Z_\ell, Z_{\ell'} \sim \mathcal{N}(0, \Sigma)$ and $s = \textrm{Cov}\left(Z_\ell, Z_{\ell'}\right)$ and $\zEps \in (0, 2]$, then
\begin{align*}
-2\invlogit{-\zEps}&\left(\Phi(-\zEps) + \left(\frac{1}{2} - \Phi(-\zEps)\right)\invlogit{\zEps}\right) \leq\\ &\E{\invlogit{Z_\ell}\invlogit{Z_{\ell'}}} -  \E{f(Z_\ell)f(Z_{\ell'})} \leq 0.
\end{align*}
\label{lemma:logistic_e_prod_approx_bound}
\end{lemma}
\begin{proof}
To take advantage of the conditional Normal distribution, we can write
\[
\invlogit{Z_\ell}\invlogit{Z_{\ell'}}-  f(Z_\ell)f(Z_{\ell'})
\]
as
\[
\frac{1}{2}\left((\invlogit{Z_\ell} - f(Z_\ell))(\invlogit{Z_{\ell'}} + f(Z_{\ell'})) + (\invlogit{Z_\ell} + f(Z_\ell))(\invlogit{Z_{\ell'}} - f(Z_{\ell'}))\right).
\]
Without loss of generality, we focus on $\E{\invlogit{Z_\ell} - f(Z_\ell))(\invlogit{Z_{\ell'}} + f(Z_{\ell'}))}$.

Given $z_{\ell'}$, we can show that 
\begin{equation*}
    \begin{aligned}
        &\int_{0}^{\infty} (\invlogit{z_\ell} - f(z_\ell))\phi\left(\frac{z_\ell - sz_{\ell'}}{\sqrt{1 - s^2}}\right)d z_\ell\\
        =& -\int_{0}^{\infty} (1 - \invlogit{z_\ell} - (1 - f(z_\ell)))\phi\left(\frac{z_\ell - sz_{\ell'}}{\sqrt{1 - s^2}}\right) d z_\ell\\
        =& -\int_{0}^{\infty} (\invlogit{-z_\ell} -  f(-z_\ell))\phi\left(\frac{z_\ell - sz_{\ell'}}{\sqrt{1 - s^2}}\right) d z_\ell\\
        =& -\int_{-\infty}^{0} (\invlogit{z'_\ell} -  f(z'_\ell))\phi\left(\frac{-z'_\ell - sz_{\ell'}}{\sqrt{1 - s^2}}\right) d z'_\ell.\\
    \end{aligned}
\end{equation*}
Using this, we have that
\[
\E{\invlogit{z_\ell} - f(z_\ell) \mid z_{\ell'}} = \int_{-\infty}^{0} (\invlogit{z_\ell} - f(z_\ell))\left(\phi\left(\frac{z_\ell - sz_{\ell'}}{\sqrt{1 - s^2}}\right) - \phi\left(\frac{-z_\ell - sz_{\ell'}}{\sqrt{1 - s^2}}\right)\right)d z_\ell.
\]
We can also demonstrate that $\E{\invlogit{z_\ell} - f(z_\ell) \mid z_{\ell'}} = -\E{\invlogit{z_\ell} - f(z_\ell) \mid -z_{\ell'}}$ because
\[
\phi\left(\frac{z_\ell - sz_{\ell'}}{\sqrt{1 - s^2}}\right) - \phi\left(\frac{-z_\ell - sz_{\ell'}}{\sqrt{1 - s^2}}\right) = -\left(\phi\left(\frac{z_\ell - s(-z_{\ell'})}{\sqrt{1 - s^2}}\right) - \phi\left(\frac{-z_\ell - s(-z_{\ell'})}{\sqrt{1 - s^2}}\right)\right).
\]
As a result, because $\invlogit{z_\ell} - f(z_\ell) > 0$ for $z_\ell < 0$ and $\invlogit{z_{\ell'}} + f(z_{\ell'}) < \invlogit{-z_{\ell'}} + f(-z_{\ell'})$ for $z_{\ell'} < 0$, $\E{\invlogit{Z_\ell}\invlogit{Z_{\ell'}}} -  \E{f(Z_\ell)f(Z_{\ell'})} \leq 0$.

Now that we have established an upper bound, we need to establish a lower bound. From our discussion above, $\E{\invlogit{Z_\ell}\invlogit{Z_{\ell'}}} -  \E{f(Z_\ell)f(Z_{\ell'})}$ can be written as 
\begin{align*}
\int_{-\infty}^0 \int_{-\infty}^{\infty} & (\invlogit{z_\ell} - f(z_\ell))\\
& \quad(\invlogit{z_{\ell'}} + f(z_{\ell'}) - (\invlogit{-z_{\ell'}} + f(-z_{\ell'})))\\
&\quad \phi\left(\frac{z_\ell - s z_{\ell'}}{\sqrt{1 - s^2}}\right)\phi(z_{\ell'}) d z_\ell d z_{\ell'}.
\end{align*}
If we consider the parts, we have that 
\[
-\invlogit{-\zEps} \leq \invlogit{z_\ell} - f(z_\ell) \leq \invlogit{-\zEps}
\]
and
\begin{align*}
    -2 \leq \invlogit{z_{\ell'}} + f(z_{\ell'}) - (\invlogit{-z_{\ell'}} + f(-z_{\ell'})) \leq -2\invlogit{\zEps} & \qquad z_{\ell'} \in (-\infty, -\zEps)\\
    -2\invlogit{\zEps} \leq \invlogit{z_{\ell'}} + f(z_{\ell'}) - (\invlogit{-z_{\ell'}} + f(-z_{\ell'})) \leq 0 & \qquad z_{\ell'} \in [-\zEps, 0].\\
\end{align*}
% For $z_\ell < 0$, $\invlogit{z_\ell} - f(z_\ell)$ is maximized at $-\zEps$ so $\invlogit{z_\ell} - f(z_\ell) < \invlogit{-\zEps}$. Meanwhile, for $z_{\ell'} < 0$, we split it into two cases. First, for $z_{\ell'} < -\zEps$, $\invlogit{z_{\ell'}} - (\invlogit{-z_{\ell'}} + 1) = -2\invlogit{-z_{\ell'}}$. This must be greater than $-2$. Next, for $z_{\ell'} \in [-\zEps, 0]$, the difference is maximized at $-\zEps$. In that case, 
% \[
% \invlogit{-\zEps} - \frac{1}{2} - (\invlogit{-\zEps} + \frac{1}{2}) = -2(\invlogit{-\zEps}).
% \]
As a result, a lower bound is 
\[
-2\invlogit{-\zEps}\left(\Phi(-\zEps) + \left(\frac{1}{2} - \Phi(-\zEps)\right)\invlogit{\zEps}\right).
\]
Because a similar argument holds for $(\invlogit{Z_\ell} + f(Z_\ell))(\invlogit{Z_{\ell'}} - f(Z_{\ell'}))$, we get the bound described in the lemma.
\end{proof}

Using these technical lemmas, we have the following lemma. 
\begin{lemma}
The circular correlation for the $\textit{SvM-p-2}$ model described in \eqref{model:SvM-p-2} and under conditions of Lemma \ref{lemma:logistic_e_prod_approx_bound} is
\[
\frac{2\E{\sin(y_\ell - \alpha)\sin(y_{\ell'} - \alpha)}}{
1 - \frac{1}{2}\left(\frac{I_2(\rho_1)}{I_0(\rho_1)}\cos(2(m_1 - \alpha)) + \frac{I_2(\rho_2)}{I_0(\rho_2)}\cos(2(m_2 - \alpha))\right)}
\]
where $2\E{\sin(y_\ell - \alpha)\sin(y_{\ell'} - \alpha)}$ is equal to
\begin{align*}
& \frac{I_1(\rho_1)}{I_0(\rho_1)}\frac{I_1(\rho_2)}{I_0(\rho_2)}(\cos(m_1 - m_2) - \cos(m_1 + m_2 - 2\alpha)) +\\
& \qquad \E{\invlogit{z_{\ell}}\invlogit{z_{\ell'}}} \left(\left(\frac{I_1(\rho_1)}{I_0(\rho_1)}\right)^2(1 - \cos(2(m_1 - \alpha))) + \right.\\
& \qquad \left. \left(\frac{I_1(\rho_2)}{I_0(\rho_2)}\right)^2(1 - \cos(2(m_2 - \alpha))) - \right.\\
& \qquad \left. 2\frac{I_1(\rho_1)}{I_0(\rho_1)}\frac{I_1(\rho_2)}{I_0(\rho_2)}(\cos(m_1 - m_2) - \cos(m_1 + m_2 - 2\alpha))\right) \\
\end{align*}
and $\E{\invlogit{z_{\ell}}\invlogit{z_{\ell'}}}$ is approximated in Lemma \ref{lemma:logistic_e_prod_approx_bound}.
\end{lemma}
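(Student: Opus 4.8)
The plan is to obtain the stated closed form by specializing the preceding lemma --- which expresses the \textit{SvM-p-2} circular correlation through the marginal and joint mixing probabilities $P(\zeta_\ell=k)$ and $P(\zeta_\ell=k,\zeta_{\ell'}=k')$ --- to model~\eqref{model:SvM-p-2}, where the latent Gaussian process is centered with unit-variance marginals. Only two ingredients are genuinely new: evaluating the marginals and rewriting the joint probabilities in terms of $\E{\invlogit{Z_\ell}\invlogit{Z_{\ell'}}}$.

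\textbf{Step 1 (denominator).} Since $Z_\ell$ and $Z_{\ell'}$ are each $\mathcal{N}(0,1)$ marginally, Lemma~\ref{lemma:logit_e_a} gives $P(\zeta_\ell=1)=\E{\invlogit{Z_\ell}}=\tfrac12$, hence $P(\zeta_\ell=2)=\tfrac12$, and likewise for $\ell'$. Substituting $P(\zeta_\ell=1)=P(\zeta_\ell=2)=\tfrac12$ into the preceding lemma's formula for $\E{\sin^2(y_\ell-\alpha)}$ collapses it to $\tfrac12\bigl(1-\tfrac12\tfrac{I_2(\rho_1)}{I_0(\rho_1)}\cos(2(m_1-\alpha))-\tfrac12\tfrac{I_2(\rho_2)}{I_0(\rho_2)}\cos(2(m_2-\alpha))\bigr)$, and the identical computation applies to $\ell'$. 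Therefore the two factors under the square root coincide, and $\sqrt{2\E{\sin^2(y_\ell-\alpha)}\,2\E{\sin^2(y_{\ell'}-\alpha)}}$ equals the denominator $1-\tfrac12\bigl(\tfrac{I_2(\rho_1)}{I_0(\rho_1)}\cos(2(m_1-\alpha))+\tfrac{I_2(\rho_2)}{I_0(\rho_2)}\cos(2(m_2-\alpha))\bigr)$ displayed in the statement.

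\textbf{Step 2 (numerator).} Set $q:=\E{\invlogit{Z_\ell}\invlogit{Z_{\ell'}}}$. Using $\invlogit{-z}=1-\invlogit{z}$ together with the central symmetry of a mean-zero bivariate Gaussian under $(Z_\ell,Z_{\ell'})\mapsto(-Z_\ell,-Z_{\ell'})$, one obtains
\[
P(\zeta_\ell=1,\zeta_{\ell'}=1)=q,\qquad P(\zeta_\ell=2,\zeta_{\ell'}=2)=\E{\invlogit{-Z_\ell}\invlogit{-Z_{\ell'}}}=q,
\]
and hence $P(\zeta_\ell=1,\zeta_{\ell'}=2)=P(\zeta_\ell=2,\zeta_{\ell'}=1)=\tfrac12-q$ (either directly, or from the row/column sums $\sum_{k'}P(\zeta_\ell=k,\zeta_{\ell'}=k')=\tfrac12$). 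Substituting these four values into the preceding lemma's expression for $\E{\sin(y_\ell-\alpha)\sin(y_{\ell'}-\alpha)}$ and collecting terms, the like-component contributions carry the coefficient $q$ on $\bigl(\tfrac{I_1(\rho_1)}{I_0(\rho_1)}\bigr)^2(1-\cos(2(m_1-\alpha)))$ and on $\bigl(\tfrac{I_1(\rho_2)}{I_0(\rho_2)}\bigr)^2(1-\cos(2(m_2-\alpha)))$, while the mixed-component contribution carries $P(\zeta_\ell=1,\zeta_{\ell'}=2)+P(\zeta_\ell=2,\zeta_{\ell'}=1)=1-2q$ on $\tfrac{I_1(\rho_1)}{I_0(\rho_1)}\tfrac{I_1(\rho_2)}{I_0(\rho_2)}(\cos(m_1-m_2)-\cos(m_1+m_2-2\alpha))$. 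Regrouping as ``($q$-free cross term)'' plus ``$q$ times (sum of like-component terms minus twice the cross term)'' reproduces exactly the displayed formula for $2\E{\sin(y_\ell-\alpha)\sin(y_{\ell'}-\alpha)}$.

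\textbf{Step 3 (closing).} Combining Steps 1 and 2 with the preceding lemma's definition of the circular correlation yields the claimed ratio, whose sole quantity not available in closed form is $q=\E{\invlogit{Z_\ell}\invlogit{Z_{\ell'}}}$; Lemma~\ref{lemma:logistic_e_prod_approx_bound}, together with Lemmas~\ref{lemma:one_z_bivariate_a} and~\ref{lemma:two_z_bivariate_a}, then supplies the two-sided approximation of $q$ (tightest at $\zEps=2$), completing the picture. The only non-mechanical point is the symmetry argument identifying $P(\zeta_\ell=2,\zeta_{\ell'}=2)$ with $q$ --- and, relatedly, that the two denominator factors coincide; once that is in place, everything else is substitution and regrouping, with no trigonometric manipulation beyond what the preceding lemma already performs.
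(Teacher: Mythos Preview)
Your proof is correct and follows the same overall strategy as the paper: specialize the preceding lemma by computing the marginal and joint mixing probabilities, then substitute and regroup. The one noteworthy difference is in establishing $P(\zeta_\ell=2,\zeta_{\ell'}=2)=q$. The paper expands $(1-\invlogit{Z_\ell})(1-\invlogit{Z_{\ell'}})$ algebraically and cancels using $\E{\invlogit{Z_\ell}}=\E{\invlogit{Z_{\ell'}}}=\tfrac12$, writing the result with an (unnecessary) ``$\approx$''. Your reflection-symmetry argument --- that $(-Z_\ell,-Z_{\ell'})\stackrel{d}{=}(Z_\ell,Z_{\ell'})$ for a centered bivariate Gaussian, combined with $\invlogit{-z}=1-\invlogit{z}$ --- is cleaner and makes it transparent that the identity is exact rather than approximate. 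Otherwise the two arguments coincide step for step.
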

\begin{proof}
For the denominator, according to \ref{lemma:logit_e_a}, $P(\zeta_\ell = 1) = P(\zeta_{\ell'} = 1) = \frac{1}{2}$. Further, $P(\zeta_\ell = 2) = 1 - P(\zeta_\ell = 1)$ and $P(\zeta_{\ell'} = 2) = 1 - P(\zeta_{\ell'} = 1)$ so $P(\zeta_\ell = 2) = P(\zeta_\ell = 1) = \frac{1}{2}$ as well. Hence, we get the value in the denominator if we plug in these probabilities. 

Meanwhile, for the numerator, we need to compute the probability for the four cases:
\begin{enumerate}
    \item $\zeta_1$, $\zeta_2$ = 1
    \item $\zeta_1$, $\zeta_2$ = 2
    \item $\zeta_1$ = 1, $\zeta_2$ = 2
    \item $\zeta_1$ = 2, $\zeta_2$ = 1.
\end{enumerate}
For the first case, $P(\zeta_1 = 1, \zeta_2 = 1)$, which is approximated in Lemma \ref{lemma:logistic_e_prod_approx_bound}. 

For the second, through our approximation, 
\begin{equation*}
\begin{aligned}
    P(\zeta_\ell = 2, \zeta_{\ell'} = 2) &= \E{\indFct{\zeta_\ell, \zeta_{\ell'} = 2}}\\
    &= \E{\E{\indFct{\zeta_\ell, \zeta_{\ell'} = 2} \mid \lambda_1, \psi_2}}\\
    &= \E{(1 - \lambda_1)(1 - \psi_2)}\\
    &= \E{(1 - \invlogit{z_{\ell}})(1 - \invlogit{z_{\ell'}})}\\
    &= 1 - \E{\invlogit{z_{\ell}}} -  \E{\invlogit{z_{\ell'}}} + \E{\invlogit{z_{\ell}}\invlogit{z_{\ell'}}}\\
    &\approx \E{\invlogit{z_{\ell}}\invlogit{z_{\ell'}}}.
\end{aligned}
\end{equation*}
As a result, it too is weighted by the approximation in Lemma \ref{lemma:logistic_e_prod_approx_bound}.

Finally, for the last two cases, the calculations are identical. Following the same calculation as above and using Lemma \ref{lemma:logistic_e_prod_approx_bound}, we get that
\begin{equation*}
    \begin{aligned}
        P(\zeta_\ell = 1, \zeta_{\ell'} = 2) &= \E{\invlogit{z_{\ell}}(1 - \invlogit{z_{\ell'}})}\\
        &= \E{\invlogit{z_{\ell}}} - \E{\invlogit{z_{\ell}}\invlogit{z_{\ell'}}}\\
        &\approx \frac{1}{2} - \E{\invlogit{z_{\ell}}\invlogit{z_{\ell'}}}.\\
    \end{aligned}
\end{equation*}
Similarly, $P(\zeta_\ell = 2, \zeta_{\ell'} = 1)$ is also approximately equal to $\frac{1}{2} - \E{\invlogit{z_{\ell}}\invlogit{z_{\ell'}}}$. We can again use Lemma \ref{lemma:logit_e_a} to calculate this quantity.

If we plug these probabilities into the numerator, we get the quantity in the lemma.
\end{proof}

We explicate at a high level the upper and lower bound. As shown above, $\E{f(Z_\ell)f(Z_{\ell'})}$ provides an upper bound on $\E{\invlogit{Z_\ell}\invlogit{Z_{\ell'}}}$. This upper bound is comprised of computing the expectation over four areas: (1) $f(Z_\ell)f(Z_{\ell'}) = 0$, (2) $f(Z_\ell)f(Z_{\ell'}) = 1$, (3) either $f(Z_{\ell}) = 1$ or $f(Z_{\ell'}) = 1$, and (4) $Z_\ell, Z_{\ell'} \in [-\zEps, \zEps]\cross[-\zEps, \zEps]$. The technical lemmas discussed are for the calculation of $\frac{1}{2\zEps}\widetilde{E}_{z_\ell}$ and $\frac{1}{4\zEps^2}\widetilde{E}_{z_\ell, z_{\ell'}}$, which are the expectations from areas (3) and (4) respectively. In contrast, the lower bound without $\E{f(Z_\ell)f(Z_{\ell'})}$ is essentially the product of two terms. The first part, $-\invlogit{-\zEps}$, represents the greatest amount our approximation can be larger than the inverse logit function. Doing so allows us to marginalize out one of the variables from the bivariate normal distribution. After some rearrangement, the other part again represents the greatest differences in particular regions multiplied by the probability of those regions. It can be more "refined" because we are back to considering one variable. However, this part also reflects the algebraic choices that allowed us to transform the difference of products to a sum of products of terms that only depend on $Z_\ell$ or $Z_{\ell}$. Because this bound in Lemma \ref{lemma:logistic_e_prod_approx_bound} gets smaller as $\zEps$ increases, the bound is minimized at $\zEps = 2$.

\section{MCMC Sampling}
To sample from our models, we turn to Hamiltonian Monte Carlo (HMC). In HMC, each parameter that we are interested in sampling is given a momentum random variable \citep{NealMCMCUsingHamiltonian2011}. If we use the notation from Neal's paper to introduce HMC, a parameter is represented by $q_i$ and its momentum is represented by $p_i$. Instead of sampling from the posterior, we are now interested in sampling the joint distribution based on the Hamiltonian, i.e. $\frac{1}{\widetilde{T}}\exp(\frac{-U(\v{q}) + -K(\v{p})}{T})$. For this distribution based on the Hamiltonian, the potential energy, $U(\v{q})$, is defined to be the negative log posterior. On the other hand, the momentum, $K(\v{p})$, is defined to be $\v{p}^T M \v{p}$ for some mass matrix $M$. If we use the leapfrog integrator, the sampling updates are the following:
\begin{align*}
    p_i(t + \frac{\epsilon}{2}) &= p_i(t) - \epsilon\frac{\partial U}{\partial q_i}(q(t))\\
    q_i(t + \epsilon) &= q_i(t) + \epsilon (M^{-1}p(t + \frac{\epsilon}{2}))_i\\
    p_i(t + \epsilon) &= p_i(t + \frac{\epsilon}{2}) - \frac{\epsilon}{2}\frac{\partial U}{\partial q_i}(q(t + \epsilon))
\end{align*}

To understand why we need these different approaches,
we will derive the updates for the momentum vector to better understand how HMC will fit our models because this update is the derivative of the log posterior with respect to the current parameters \citep{NealMCMCUsingHamiltonian2011}. This is one of the key interactions between the sampler and the model. In particular, we will focus on how the momentum terms related to the Gaussian Process term will be updated. While this doesn't give us a complete picture of the HMC sampler, we do so because the updates for the other parameters are relatively straightforward. It will also illustrate the challenges of fitting these models.

% \subsection{Hamiltonian Monte Carlo and Our Models}
% \label{ssection:hmc}
\subsection{SvM and SvM-c Models}
\label{ssection:model_fit_svm_svm_c}
\textbf{SvM model} As discussed earlier, $\v{m}$ is distributed according to a Projected Gaussian Process. In their paper, Wang and Gelfand suggest sampling an auxiliary variable $\v{r} \in \mathbbm{R}^{N+}$ such that if $\v{Z_1}$ and $\v{Z_2}$ are distributed according to the description in \eqref{model:SvM}, $z_{1, \ell} = r_\ell\cos(m_\ell)$ and $z_{2, \ell} = r_\ell\sin(m_\ell)$ for $\ell \in 1, 2, \dots, N$ \citep{WangGelfandModelingSpaceSpaceTime2014}. In doing so, we have that
\begin{align}
p(\v{r}, \v{m}) = \left(\prod_{\ell = 1}^N r_\ell\right) \mathcal{N}\bigg(\big(\v{r}\cos(\v{m}), \v{r}\sin(\v{m})\big) \bigg| (\mu_1, \mu_2), \mathbbm{I}_2 \otimes \Sigma\bigg).
\label{eq:aug_projected_gp_centered}
\end{align}

If we refer to the parametrization in \eqref{eq:aug_projected_gp_centered} as the $\textit{circular centered parametrization}$, the updates for the momentum vector corresponding to $m_\ell$ and $r_\ell$ are respectively
\begin{align}
    -\frac{\partial U}{\partial q_{m_\ell}}(q(t)) &= \bigg(\rho_\ell\cos(y_\ell) + r_\ell\left(\Sigma^{-1}(\v{r}\cos(\v{m}) - \mu_1) \right)_\ell\bigg) \sin(m_\ell) - \nonumber\\
    & \qquad \bigg(\rho_\ell\sin(y_\ell) + r_\ell\left(\Sigma^{-1}(\v{r}\sin(\v{m}) - \mu_2) \right)_\ell\bigg)\cos(m_\ell)
    \label{lemma:SvM_centered_m_update}\\
    -\frac{\partial U}{\partial q_{r_\ell}}(q(t)) &= \frac{1}{r_\ell} - \left(\Sigma^{-1}(\v{r}\cos(\v{m}) - \mu_1) \right)_\ell \cos(m_\ell) - \nonumber\\ 
    & \qquad(\Sigma^{-1}(\v{r}\sin(\v{m}) - \mu_1))_\ell \sin(m_\ell)
    \label{lemma:SvM_centered_r_update}
\end{align}

We do not necessarily have to transform $\v{r}$ and $\v{m}$ directly to draws from the bivariate Gaussian Process. Instead, we can first transform them to draws from $2N$ independent $\textrm{N}(0, 1)$ distribution. In other words, if $(\v{\widetilde{Z}_1}, \v{\widetilde{Z}_2}) \sim \mathcal{N}(0, \mathbbm{I}_2 \otimes \mathbbm{I}_N)$, let $\widetilde{z}_{1, \ell} = \widetilde{r}_{\ell}\cos(\widetilde{m}_{\ell})$ and $\widetilde{z}_{2, \ell} = \widetilde{r}_{\ell}\sin(\widetilde{m}_{\ell})$ for $\v{\widetilde{r}} \in \mathbbm{R}^{N+}$, $\v{\widetilde{m}} \in [0, 2\pi)^N$, and $\ell = 1, 2, \dots, N$. Then, in a simplification of \ref{eq:aug_projected_gp_centered}, we have that for $\ell = 1, 2, \dots, N$,
\begin{align}
    p(\widetilde{r}_\ell) = r_\ell\exp{-\frac{1}{2}r_\ell^2} \qquad\qquad
    p(\widetilde{m}_\ell) = \textrm{Unif}(0, 2\pi).
    \label{eq:aug_projected_gp_noncentered}
\end{align}
Let $L$ denote the Cholesky decomposition of the covariance matrix of the Gaussian Process, $\Sigma$. We can then set $\v{Z_1} = L\v{\widetilde{Z}_1} + \mu_1$  and $\v{Z_2} = L\v{\widetilde{Z}_1} + \mu_2$ to change these draws from independent normal distributions to draws from our bivariate Gaussian Process. Finally, we can use the $\arctan^*$ defined in \ref{fct:arctan_star} to transform $\v{Z_1}$ and $\v{Z_2}$ to $\v{m}$.

If we refer to the parametrization in \eqref{eq:aug_projected_gp_noncentered} as the $\textit{circular non-centered parametrization}$ and use notation from before, the updates for the momentum vector corresponding to $\widetilde{m}_\ell$ and $\widetilde{r}_\ell$ are respectively
\begin{align}
    -\frac{\partial U}{\partial q_{\widetilde{m}_\ell}}(q(t)) &= \sum_{\ell' = 1}^N  -\rho_{\ell'}\sin(y_{\ell'} - m_{\ell'})\left(\frac{z_{2, \ell'}}{z^2_{1, \ell'} + z^2_{2, \ell'}} L_{\ell', \ell} \widetilde{r}_\ell \sin(\widetilde{m}_{\ell}) + \right.\nonumber\\
    & \qquad\qquad\qquad\qquad\qquad\qquad\quad\left. \frac{z_{1, \ell'}}{z^2_{1, \ell'} + z^2_{2, \ell'}} L_{\ell', \ell} \widetilde{r}_\ell \cos(\widetilde{m}_{\ell}) \right)
    \label{lemma:SvM_non_centered_m_update}\\
    -\frac{\partial U}{\partial q_{r_\ell}}(q(t)) &= \sum_{\ell' = 1}^N  -\rho_{\ell'}\sin(y_{\ell'} - m_{\ell'})\left(\frac{-z_{2, \ell'}}{z^2_{1, \ell'} + z^2_{2, \ell'}} L_{\ell', \ell} \cos(\widetilde{m}_{\ell}) + \right.\nonumber\\
    & \qquad\qquad\qquad\qquad\qquad\qquad\quad \left. \frac{z_{1, \ell'}}{z^2_{1, \ell'} + z^2_{2, \ell'}} L_{\ell', \ell} \sin(\widetilde{m}_{\ell}) \right) + \frac{1}{\widetilde{r}_\ell} - \widetilde{r}_\ell.
    \label{lemma:SvM_non_centered_r_update}
\end{align}

We will now compare the updates in \eqref{lemma:SvM_centered_m_update} and \eqref{lemma:SvM_centered_r_update} to the updates in \eqref{lemma:SvM_non_centered_m_update} and \eqref{lemma:SvM_non_centered_r_update}. We immediately see that the centered updates are much more expensive because they involve inverting the covariance matrix. If we look closer and set the momentum updates for $m_\ell$ and $\widetilde{m}_\ell$ to be 0, we see that 
\begin{align*}
    \frac{\sin(m_\ell)}{\cos(m_\ell)} &= \frac{\bigg(\rho_\ell\sin(y_\ell) + r_\ell\left(\Sigma^{-1}(\v{r}\sin(\v{m}) - \mu_2) \right)_\ell\bigg)}{ \bigg(\rho_\ell\cos(y_\ell) + r_\ell\left(\Sigma^{-1}(\v{r}\cos(\v{m}) - \mu_1) \right)_\ell\bigg)}\\
    \frac{\sin(\widetilde{m}_\ell)}{\cos(\widetilde{m}_\ell)} &= -\frac{\sum_{\ell' = 1}^N  \rho_{\ell'}\sin(y_{\ell'} - m_{\ell'}) \frac{z_{1, \ell'}}{z^2_{1, \ell'} + z^2_{2, \ell'}} L_{\ell', \ell} \widetilde{r}_\ell}{\sum_{\ell' = 1}^N  \rho_{\ell'}\sin(y_{\ell'} - m_{\ell'})\frac{z_{2, \ell'}}{z^2_{1, \ell'} + z^2_{2, \ell'}} L_{\ell', \ell} \widetilde{r}_\ell}.
\end{align*}
From this, we see that the update for $m_\ell$ tries to balance information from the data weighted by the concentration parameter $\rho_\ell$ against the prior information weighted by $r_\ell$. However, how the prior information is used is not fully clear because the difference between the current positions and the prior means are multiplied by the inverse of the covariance matrix. On the other hand, the update for $\widetilde{m}_\ell$ separates the terms dependent on $\sin(\widetilde{m}_\ell)$ and $\cos(\widetilde{m}_\ell)$. Of note, these terms are weighted by the concentration parameter at the location, the Cholesky decomposition, and $\widetilde{r}_\ell$. We see that the ratio is reversed and negated compared to the update for $m_\ell$. The derivative of $\arctan^*(z_{1, \ell}, z_{2, \ell})$ might cause this because as seen in Lemma \ref{lemma:arctan_star_derivative} in the Appendix, it assigns information from the other coordinates. 

On the other hand, the update for $r_\ell$ does not explicitly involve the data whereas the update for $\widetilde{r}_\ell$ does. This update makes it clear that $r_\ell$ is an auxiliary variable. This might be an issue because as discussed earlier, $r_\ell$ affects how strongly to weigh prior information and it might be good for the data to affect such a term. The update still does provide some control due to the $\frac{1}{r_\ell}$ term, which reduces the gradient for large values. Unfortunately, this may be counteracted by prior information and/or the current position of $m_\ell$. Meanwhile, the update for $\widetilde{r}_\ell$ involves both $-\widetilde{r}_\ell$ and $\frac{1}{r_\ell}$. This will help discourage both large and small values. Further, all observations weigh in unless as discussed previously, the ratio of $\frac{\sin(\widetilde{m}_\ell)}{\cos(\widetilde{m}_\ell)}$ matches the ratio of the information from the data.  

There is another additional fundamental issue with the noncentered circular parametrization. We sample $\v{r}$ and $\v{m}$ only because this gives us control over $\v{m}$ while providing an easier space to sample from. When we sample $\v{\widetilde{r}}$ and $\v{\widetilde{m}}$ as an analogue to this approach and transform it to $\v{m}$, this transformation is taking a convex combination of $\v{\widetilde{r}}\cos(\v{\widetilde{m}})$ and $\v{\widetilde{r}}\sin(\v{\widetilde{m}})$ with the corresponding Cholesky decomposition row as weights. It is as if we are sampling $\v{m}$ through $Z_1$ and $Z_2$. This becomes problematic because linear moves in $Z_1$ and $Z_2$ do not translate to linear moves in $\v{m}$. For instance, a linear move along a ray from the origin does not change $\v{m}$.

As a result, we turn to elliptical slice sampling \citep{MurrayEtAlEllipticalSliceSampling2010}.

\textbf{SvM-c} For this model, we have to marginalize out the cluster membership in order to sample for $z_\ell$. In other words, the probability for an observation $y_\ell$ is the following.
\[
p(y_\ell \mid \v{m_{k}}, \v{\rho_k}, \v{\lambda}) = \sum_k \lambda_k \vM{y_\ell}{m_{k, \ell'}}{\rho_{k, \ell}}.
\]

The updates are slightly altered. The centered update for the momentum vector for the SvM-c described in \eqref{model:SvM-c} corresponding to $z_{k, \ell}$ becomes 
\begin{align}
-&\frac{\partial U}{\partial q_{k, 1, \ell}}q_{k, 1, \ell}(t) = \nonumber\\
& \quad\frac{\exp(\rho_{k, \ell}cos(y_\ell - m_{k, \ell}))}{\sum_k \lambda_k \vM{y_\ell}{m_{k, \ell}}{\rho_{k, \ell}}}\left(\frac{-z_{k, 2, \ell}}{z^2_{k, 1, \ell} + z^2_{k, 2, \ell}} \rho_{k, \ell} \sin(y_\ell - m_{k, \ell})\right) - (\Sigma_k^{-1} \v{z_k})_\ell
\label{lemma:SvM-c_centered_m_update_1}\\
-&\frac{\partial U}{\partial q_{k, \ell, 2}}q_{k, \ell, 2}(t) = \nonumber\\
& \quad\frac{\exp(\rho_{k, \ell}cos(y_\ell - m_{k, \ell}))}{\sum_k \lambda_k \vM{y_\ell}{m_{k, \ell}}{\rho_{k, \ell}}}\left(\frac{z_{k, 1, \ell}}{z^2_{k, 1, \ell} + z^2_{k, 2, \ell}} \rho_{k, \ell} \sin(y_\ell - m_{k, \ell})\right) - (\Sigma_k^{-1} \v{z_k})_\ell
\label{lemma:SvM-c_centered_m_update_2} 
\end{align}
The non-centered update changes to 
\begin{align}
-&\frac{\partial U}{\partial q_{k, 1, \ell}}  q_{k, 1, \ell}(t) = \nonumber\\
& \quad \sum_{\ell' = 1}^N  \frac{\exp(\rho_{k, \ell'}cos(y_{\ell'} - m_{k, \ell'}))}{\sum_k \lambda_k \vM{y_{\ell'}}{m_{k, \ell'}}{\rho_{k, \ell'}})}\left(\frac{-z_{k, 2, \ell'}}{z^2_{k, 1, \ell'} + z^2_{k, 2, \ell'}} \rho_{k, \ell'} \sin(y_{\ell'} - m_{k, \ell'})\right) L_{\ell', \ell}  - \widetilde{z}_{k, \ell}
\label{lemma:SvM-c_non_centered_m_update_1}\\
-&\frac{\partial U}{\partial q_{k, 2, \ell}}q_{k, 2, \ell}(t) = \nonumber\\
& \quad\sum_{\ell' = 1}^N  \frac{\exp(\rho_{k, \ell'}cos(y_{\ell'} - m_{k, \ell'}))}{\sum_k \lambda_k \vM{y_{\ell'}}{m_{k, \ell'}}{\rho_{k, \ell'}})}\left(\frac{z_{k, 1, \ell'}}{z^2_{k, 1, \ell'} + z^2_{k, 2, \ell'}} \rho_{k, \ell'} \sin(y_{\ell'} - m_{k, \ell'})\right) L_{\ell', \ell}  - \widetilde{z}_{k, \ell}
\label{lemma:SvM-c_non_centered_m_update_2}.
\end{align}

These updates are similar to the \textit{SvM} updates from \ref{lemma:SvM_centered_m_update}, \ref{lemma:SvM_centered_r_update}, \ref{lemma:SvM_non_centered_m_update}, and \ref{lemma:SvM_non_centered_r_update}. The difference is that these updates are also weighted by how much an observation belongs to that particular's component. Because these distinctions aren't substantial, we again will use the elliptical slice sampling approach to sample for $(Z_{k, 1} - \mu_{k, 1}, Z_{k, 2} - \mu_{k, 2}) \mid \v{\varphi}_k, \nu_k, \v{\zeta}$ for $k \in 1, 2, \ldots K$. The only modification from the proposal in the previous section is that the likelihood is now defined to be the following:
\[
L(\v{y} \mid \v{m}_k, \v{\varphi}_k, \nu_k, \v{\zeta}) = \prod_{\ell} (\vM{y_\ell}{m_\ell}{\rho_\ell})^{\indFct{\zeta_\ell = k}}.
\]
% Note that even if $y_\ell$ is not included in the likelihood because $\zeta_\ell \neq k$, we still propose $Z_{k, 1, \ell}$ and $Z_{k, 2, \ell}$. This will prove useful for predictive purposes.

% In contrast, we would have made more changes if we were interested in sampling according to the mixing measure, i.e. $(Z_{1, 1} - \mu_{1, 1}, Z_{1, 2} - \mu_{1, 2}, Z_{2, 1} - \mu_{1, 1}, Z_{2, 2} - \mu_{1, 2}, \ldots, Z_{K, 1} - \mu_{K, 1}, Z_{K, 2} - \mu_{K, 2}) \mid \v{\varphi}_1, \v{\varphi}_2, \ldots, \v{\varphi}_K, \v{\nu}, \v{\zeta}$. In this case, the Gaussian prior has mean $0$ and the covariance matrix $\mathbbm{2K} \otimes \Sigma$. Meanwhile, if $\v{m}_k$ is transformed as specified in Model \eqref{model:SvM-c}, the likelihood we're interested in sampling is the following:
% \[
% L(\v{y} \mid \v{m}_1, \v{m}_2, \ldots, \v{m}_K, \v{\varphi}_1, \v{\varphi}_2, \ldots, \v{\varphi}_K ) = \sum_{\ell = 1}^N \prod_{k = 1}^K \vM{y_\ell}{m_{k, \ell}}{\rho_{k, \ell}}.
% \]

\subsection{SvM-p Models} 
Here, the calculations are much more straightforward. Because the Gaussian Process assigns probability through a multivariate Normal distribution, there are two ways to sample $z_\ell$ without loss of generality. We can sample $z_\ell$ directly from GP(0, $\Sigma$) or the so-called centered parametrization. Alternatively, if we let $\Sigma = LL^T$ and $z = L\widetilde{z}$, we can sample $z_\ell$ according to the non-centered parametrization. We will present the updates from both parametrizations.
% Because the derivations are algebraic, we leave the derivations in the Appendix.

We again marginalize out the cluster assignment to sample for $z_\ell$. The probability for a single observation $y_\ell$ is the following:
\[
p(y_\ell \mid \v{z_{., \ell}}, \v{m}, \v{\rho}) = \sum\limits_k \lambda_{k, \ell}\vM{y_\ell}{m_k}{\rho_k}.
\]

Then, the centered parametrization update for the model described in \eqref{model:SvM-p} is
\begin{align}
    -\frac{\partial U}{\partial q_{k, \ell}}(q(t)) = \lambda_{k, \ell}\frac{\vM{y_\ell}{m_k}{\rho_k} - \sum_{k' = 1}^K\lambda_{k', \ell}\vM{y_\ell}{m_{k'}}{\rho_{k'}}}{p(y_\ell \mid \v{z_{., \ell}}, \v{m}, \v{\rho})}  - (\Sigma^{-1} z)_\ell
\label{lemma:mixture_prob_centered_m_update_a}
\end{align}
In the two components case, this simplifies to
\begin{align}
    -\frac{\partial U}{\partial q_{k, \ell}}(q(t)) &= \lambda_{1, \ell} (1 - \lambda_{1, \ell}) \frac{\vM{y_\ell}{m_1}{\rho_1} - \vM{y_\ell}{m_2}{\rho_2}}{p(y_\ell \mid z_\ell, m_1, m_2, \rho_1, \rho_2)}  - (\Sigma^{-1} z)_\ell.
\label{lemma:mixture_prob_centered_m_update-2_a}
\end{align}
The non-centered parametrization update is
\begin{align}
    -\frac{\partial U}{\partial q_\ell}(q(t)) &= \sum_{\ell'} L_{\ell', \ell} \lambda_{k, \ell'}\frac{\vM{y_{\ell'}}{m_k}{\rho_k} - \sum_{k' = 1}^K\lambda_{k', \ell'}\vM{y_{\ell'}}{m_{k'}}{\rho_{k'}}}{p(y_\ell' \mid \v{z_{., \ell'}}, \v{m}, \v{\rho})}  -  \widetilde{z}_\ell..
    \label{lemma:mixture_prob_noncentered_m_update_a}
\end{align}
The non-centered parametrization update in the two components case reduces to 
\begin{align}
     -\frac{\partial U}{\partial q_\ell}(q(t)) &= \sum_{\ell'} \lambda_{1, \ell'} (1 -  \lambda_{1, \ell'}) \frac{\vM{y_{\ell'}}{m_1}{\rho_1} - \vM{y_{\ell'}}{m_2}{\rho_2}}{p(y_{\ell'} \mid z_{\ell'}, m_1, m_2, \rho_1, \rho_2)} L_{\ell', \ell}  -  \widetilde{z}_\ell.
    \label{lemma:mixture_prob_noncentered_m_update-2_a}
\end{align}

From doing so, we see that the centered and noncentered parametrization differ in how they use the likelihood for updates and how they use covariance information. The centered update based on the likelihood compares the probability of an observation belonging to component $k$ against the probability of the observation belonging to the current mixture of von Mises distributions. This update is also weighted by the currently probability of the observation belongs to component $k$. In the two component case, this becomes a comparison the probability of the observation belonging to either component. On the other hand, the non-centered update uses the covariance information to pool these weighted comparisons. Because these updates seem sensible, we use HMC to sample from $\textit{SvM-p}$.

\subsection{Regularized Expectation Maximization}
To help with the sampling, we initialized it with results from applying Expectation Maximization (EM) to the unnormalized posterior or a regularized EM. Because $\Sigma$ might be close to singular, we ran regularized EM on the non-centered parametrization for our updates.

\textbf{SvM-c} Within the regularized EM framework, we let $\zeta_\ell$ for $\ell = 1, 2, \ldots, N$ be the latent variable and $\lambda_k$, $\v{\varphi_k}$, $\nu_k$, and $\v{\widetilde{z}_k}$ for $k = 1, 2, \ldots, K$ be the parameters. If we summarize the parameters as $\Theta$, the posterior can be written as following.
\begin{align*}
    \widetilde{p}(\v{\zeta}, \Theta \mid \v{y}, \v{x}) &:= \prod_\ell \prod_k \left(\lambda_k \vM{y_\ell}{m_{k, \ell}}{\rho_{k, \ell}}\right)^{\indFct{\zeta_\ell = k}} \prod_k \textrm{N}(\widetilde{z} \mid 0, \mathbbm{I}_n)\\ &\qquad\prod_\ell \prod_k \textrm{N}(\varphi_{k, \ell} \mid \nu_k, \varsigma^2) \prod_k \textrm{N}(\nu_k \mid 0, \tau^2).
\end{align*}

For the expected conditional log unnormalized posterior, we need a term to represent $P(\zeta_{\ell} = k \mid y_{\ell}, \lambda_k, \v{\widetilde{z}_k}, \v{\varphi_{., \ell}})$. We let $r_{k, \ell}$ be this term. To be explicit, we are attaching the draws from the Gaussian Process, $\v{\widetilde{z}_k}$, to $\v{m_{., \ell}}$. Then, it is defined in the following manner.
\begin{align}
    r_{k, \ell} &:= \frac{\lambda_k \vM{y_\ell}{m_{k, \ell}}{\rho_{k, \ell}}}{\sum_{k'} \lambda_{k'} \vM{y_\ell}{m_{k', \ell}}{\rho_{k', \ell}}}.
\label{eq:SvM-c_r_k_m}
\end{align}
With this term, the expected conditional log unnormalized posterior is the following.
\begin{equation}
    \begin{aligned}
    \E{\log(\widetilde{p}(\v{\zeta}, \Theta \mid \v{y}, \v{x}))} &= \sum_\ell \sum_k r_{k, \ell}\left(\log(\lambda_k) + \log(\vM{y_\ell}{m_{k, \ell}}{\rho_{k, \ell}})\right) +\\
    & \qquad\sum_\ell \sum_k\log(\textrm{N}(\widetilde{z}_{k, \ell} \mid 0, 1)) +\\ 
    & \qquad \sum_\ell\sum_k \log(\textrm{N}(\varphi_{k, \ell} \mid \nu_k, \varsigma^2)) + \sum_k \log(\textrm{N}(\nu_k \mid 0, \tau^2)).
\end{aligned}    
\label{eq:SvM-c_exp_cond_log_post}
\end{equation}

With this, we can now state the regularized EM algorithm.
\begin{itemize}
    \item \textbf{E step:} Calculate the expected conditional log unnormalized posterior from \ref{eq:SvM-c_exp_cond_log_post} and $r_{k, \ell}$ from \ref{eq:SvM-c_r_k_m}.
    \item \textbf{M step:}
    \begin{itemize}
        \item For $k = 1, 2, \dots, K$, set $\lambda_k = \frac{1}{N} \sum\limits_\ell r_{k, \ell}$.
        \item For $k = 1, 2, \dots, K$, alternate updating $\v{\widetilde{z}_{k, 1, .}}$ and $\v{\widetilde{z}_{k, 2, .}}$ using coordinate gradient ascent and the following gradients.
        \begin{equation*}
            \begin{aligned}
            \frac{\partial}{\partial \widetilde{z}_{k, 1, \ell}} &  \E{\log(\widetilde{p}(\v{\zeta}, \Theta \mid \v{y}, \v{x}))} =\\
            & \quad\sum_{\ell'} L_{\ell', \ell} r_{k, \ell'} \frac{-z_{k, 2, \ell'}}{z^2_{k, 1, \ell'} + z^2_{k, 2, \ell'}}\rho_{k, \ell'}\sin(y_{\ell'} - m_{k, \ell'}) - \widetilde{z}_{k, \ell}.\\
            \frac{\partial}{\partial \widetilde{z}_{k, 2, \ell}} & \E{\log(\widetilde{p}(\v{\zeta}, \Theta \mid \v{y}, \v{x}))} =\\
            & \quad \sum_{\ell'} L_{\ell', \ell} r_{k, \ell'} \frac{z_{k, 1, \ell'}}{z^2_{k, 1, \ell'} + z^2_{k, 2, \ell'}}\rho_{k, \ell'}\sin(y_{\ell'} - m_{k, \ell'}) - \widetilde{z}_{k, \ell}.
            \end{aligned}
        \end{equation*}
        \item For $k = 1, 2, \dots, K$ and $\ell = 1, 2, \dots N$, update $\varphi_{k, \ell}$ using gradient ascent and the gradient below.
        \begin{equation*}
            \begin{aligned}
                \frac{\partial}{\partial \varphi_{k, \ell}} & \E{\log(\widetilde{p}(\v{\zeta}, \Theta \mid \v{y}, \v{x}))} =\\
                & r_{k, \ell}\rho_{k, \ell} \left(\cos(y_\ell - m_{k, \ell}) - \frac{I_{-1}(\rho_{k, \ell})}{I_{0}(\rho_{k, \ell})}\right) - \frac{1}{\varsigma^2}(\varphi_{k, \ell} - \nu_k).
            \end{aligned}
        \end{equation*}
        \item For $k = 1, 2, \dots K$, set $\nu_k = \frac{\sum_\ell \frac{\varphi_{k, \ell}}{\varsigma^2}}{\frac{N}{\varsigma^2} + \frac{1}{\tau^2}}$.
    \end{itemize}
\end{itemize}
We wish to discuss the updates in further detail. The updates for $\lambda_k$ and $\nu_k$ are similar to the maximum likelihood estimator for a multinomial distribution's probabilities and a posterior conjugate Normal distribution's mean. However, because we don't know the assignments, we weight each observation with $r_{k, \ell}$ instead. Next, to understand the update for $\varphi_{k, \ell}$, we note that according to the definition of the modified Bessel function in \ref{eq:bessel}, $I_{-1}(\rho_k) = I_{1}(\rho_k)$ or the expected concentration for a von Mises distribution. The update is intuitively trying to find the concentration parameter that best matches the expected concentration of that observation to the observation's sample concentration weighted by how much the observations belongs to that cluster. The hierarchical prior helps the update because the update is also trying to match the concentration parameter against the hierarchical mean, $\nu_k$, and can borrow strength from other observations. 

%Need to fix
Even though we did not derive the Hamiltonian Monte Carlo $\textit{SvM}$ non-centered parametrization momentum update for $\widetilde{z}_{k, \ell}$, that update is similar to the regularized EM update for $\widetilde{z}_{k, \ell}$. At a high level, this makes sense because $\zeta_\ell$ assigns an observation to one component. Since we can only compute the conditional probability of $\zeta_\ell$, we have to weight the update by $r_{k, \ell}$. It is still interesting that we can use this standard parametrization. It might be possible to do so because we are only interested in optimizing a point estimate and not interested in exploring the space. One consequence of this similarity is that like Hamiltonian Monte Carlo, we choose to use a coordinate gradient ascent approach to update all $\widetilde{z}_k$ because an update for a coordinate can be written as the matrix multiplication of the transpose of the Cholesky and the other gradient information.

\textbf{SvM-p} Within the regularized EM framework, we let $\zeta_\ell$ for $\ell = 1, 2, \dots, N$ be the latent variable and $\widetilde{z_k}$, $m_k$, and $\rho_k$ for $k = 1, 2, \ldots, K$ be the parameters. To be clear, we attach the draw from the Gaussian Process, $\widetilde{z_k}$, to $\v{\lambda_k}$. If we again summarize the parameters as $\Theta$, the unnormalized posterior can be written as following.
\begin{align*}
    \widetilde{p}(\v{\zeta}, \Theta \mid \v{y}, \v{x}) &:= \prod_\ell \prod_k \left(\lambda_{k, \ell} \vM{y_\ell}{m_{k}}{\rho_{k}}\right)^{\indFct{\zeta_\ell = k}}\prod_k \textrm{N}(\v{\widetilde{z}_k} \mid 0, \mathbbm{I}_N)\\
    &\qquad \prod_k \Gamma(\rho_k \mid 1, 1) \prod_k \textrm{Unif}(m_k \mid 0, 2\pi).
\end{align*}

For the expected conditional log likelihood, we need to tweak our definition of $r_{k, \ell}$ to remove the observation index from the von Mises distribution parameters and put them on the mixing probability. In other words, we let $r_{k, \ell}$ be the following. 
\begin{align}
    r_{k, \ell} := \frac{\lambda_{k, \ell} \vM{y_\ell}{m_{k}}{\rho_{k}}}{\sum\limits_{k'} \lambda_{k', \ell} \vM{y_\ell}{m_{k'}}{\rho_{k'}}}.
    \label{eq:SvM-p_r_k_m}
\end{align}
Then, the expected conditional log unnormalized posterior is the following:
\begin{equation}
    \begin{aligned}
        \E{\log( \widetilde{p}(\v{\zeta}, \Theta \mid \v{y}, \v{x}))} &= \sum_\ell \sum_k r_{k, \ell}\left(\log(\lambda_{k, \ell}) + \log(\vM{y_\ell}{m_{k}}{\rho_{k}})\right) +\\ 
        & \qquad\sum_k \log(\textrm{N}(\v{\widetilde{z}_k} \mid 0, \mathbbm{I}_N)) + \\ 
        & \qquad\sum_k \log(\textrm{Unif}(m_k \mid 0, 2\pi)) + \sum_k \log(\Gamma(\rho_k \mid 1, 1)).
    \end{aligned}
\label{eq:SvM-p_exp_cond_log_post}
\end{equation}
% Again, $\zeta_\ell$ is our latent variable and our parameters are $\v{\widetilde{z}}$, $m_k$, and $\rho_k$. 

Our regularized EM algorithm is the following.
\begin{itemize}
    \item \textbf{E step:} Calculate the expected conditional log unnormalized posterior from \ref{eq:SvM-p_exp_cond_log_post} and $r_{k, \ell}$ from \ref{eq:SvM-p_r_k_m}.
    \item \textbf{M step:}
    \begin{itemize}
        \item For $k = 1, 2, \dots, K - 1$, update $\v{\widetilde{z}_k}$ using coordinate gradient ascent and the following gradient.
        \begin{align*}
            \frac{\partial}{\partial \widetilde{z}_{k, \ell}}  \E{\log(\widetilde{p}(\v{\zeta}, \Theta \mid \v{y}, \v{x}))} &= \sum_{\ell'}  (r_{k, \ell'} - \lambda_{k, \ell'}) L_{\ell', \ell} - \widetilde{z}_{k, \ell}.
        \end{align*}
        \item For $k = 1, 2, \dots, K$, set $m_k = \textrm{arctan}^*\left(\frac{\sum_\ell r_{k, \ell}\sin(y_\ell)}{\sum_\ell r_{k, \ell}\cos(y_\ell)}\right)$.
        \item For $k = 1, 2, \dots, K$, update $\rho_k$ using gradient ascent and the following gradient.
        \begin{align*}
            \frac{\partial}{\partial \rho_k}  \E{\log(\widetilde{p}(\v{\zeta}, \Theta \mid \v{y}, \v{x}))} &= \sum_{\ell = 1}^{n} r_{k, \ell} \left(\cos(y_\ell - m_k) - \frac{I_{-1}(\rho_k)}{I_{0}(\rho_k)}\right) - 1
        \end{align*}
    \end{itemize}
\end{itemize}
Note that the order of means can be kept by sorting the results at the end. Then, we make a few remarks about these updates. First, the updates for $m_k$ and $\rho_k$ in the $\textit{SvM-p}$ model are simpler than the updates in the $\textit{SvM-c}$ model due to our model choices. For instance, because the prior on $m_k$ does not depend on $m_k$, $m_k$ is maximized when it is set equal to the weighted circular mean. The update becomes slightly more complicated if we use a von Mises distribution as a prior instead. Next, the update for $\rho_k$ reflects the fact that the best $\rho_k$ for all the weighted observation makes the expected concentration match the weighted sample concentration. This update does not consider the effect of the prior. It seems that in this case, the prior is providing a constant pull back to 0. This constant pull back is controlled by the $b$ parameter in the prior Gamma distribution. If $a$ is not 1, then the gradient will also include a $\frac{1}{\rho_k}$ term.

% In the two component case, the gradient for $\v{\widetilde{z}_k}$ simplifies to the following.
% \begin{align*}
% \frac{\partial}{\partial \widetilde{z}_{k, \ell}}  \E{\log(\widetilde{p}(\v{\zeta}, \Theta \mid \v{y}, \v{x}))} &= \sum_{\ell'} L_{\ell', \ell} (r_{1, \ell'} - \lambda_{1, \ell'}) - \widetilde{z}_\ell.
% \end{align*}
% Regardless of this reduction, the gradient appears different from the HMC update for the $\textit{SvM-p}$ model. For the HMC update, we only compare the probability observation $y_\ell$ belongs to component $k$ to the probability that the observation belongs to component $K$. In the EM algorithm, we are still comparing these probabilities, but we're also weighting it by the inverse of the mixing probability. At a high level, the EM algorithm wants the ratios of the conditional probabilities to the mixing probability to match. This could be due to the fact that $\zeta_\ell$ is latent in EM. Intuitively, the EM algorithm has some sense how much an observation belongs to a component and wants the mixing probability to match that. The other components' mixing probabilities are considered through the ratio of the last component's conditional probability to the mixing probability. On the other hand, $\zeta_\ell$ is marginalized out in HMC. It is possible that the momentum update in HMC wants all updates to match as well as they can in reference to the last component.

\textbf{iVM} While this is an inversion of how we introduced the models, we do so because we can collect the parts independent of location information from the EM algorithm for the previous two models for the $\textit{iVM}$ model's regularized EM algorithm. As before, we let $\zeta_\ell$ for $\ell = 1, 2, \dots, N$ be the latent variable and $\lambda_k$, $m_k$, and $\rho_k$ for $k = 1, 2, \ldots, K$ be the parameters. If we again summarize the parameters as $\Theta$, the unnormalized posterior can be written as following.
\begin{align*}
    \widetilde{p}(\v{\zeta}, \Theta \mid \v{y}, \v{x}) &:= \prod_\ell \prod_k \left(\lambda_{k} \vM{y_\ell}{m_{k}}{\rho_{k}}\right)^{\indFct{\zeta_\ell = k}} \prod_k \Gamma(\rho_k \mid a_k, b_k) \prod_k \vM{m_k}{u_k}{c_k}.
\end{align*}

We again need to tweak the definition of $r_{k, \ell}$. In other words, we let $r_{k, \ell}$ be the following. 
\begin{align}
    r_{k, \ell} := \frac{\lambda_{k} \vM{y_\ell}{m_{k}}{\rho_{k}}}{\sum\limits_{k'} \lambda_{k', \ell} \vM{y_\ell}{m_{k'}}{\rho_{k'}}}.
    \label{eq:iVM_r_k_m}
\end{align}
Then the expected conditional log unnormalized posterior is the following:
\begin{equation}
    \begin{aligned}
        \E{\log( \widetilde{p}(\v{\zeta}, \Theta \mid \v{y}, \v{x}))} &= \sum_\ell \sum_k r_{k, \ell}\left(\log(\lambda_{k}) + \log(\vM{y_\ell}{m_{k}}{\rho_{k}})\right) + \\ 
        & \qquad\sum_k \log(\vM{m_k}{u_k}{c_k}) + \sum_k \log(\Gamma(\rho_k \mid a_k, b_k)).
    \end{aligned}
\label{eq:iVM_exp_cond_log_post}
\end{equation}
% Again, $\zeta_\ell$ is our latent variable and our parameters are $\v{\widetilde{z}}$, $m_k$, and $\rho_k$. 

For simplicity, we assume that $c_k = 0$ and $a_k = 1$ and $b_k = 0$. At a high level, we have an uniform prior on the mean direction and have a flat prior on the concentration parameter. While the latter is an improper prior, the posterior is still proper. Our EM algorithm is the following.
\begin{itemize}
    \item \textbf{E step:} Calculate the expected conditional log unnormalized posterior from \ref{eq:iVM_exp_cond_log_post} and $r_{k, \ell}$ from \ref{eq:iVM_r_k_m}.
    \item \textbf{M step:}
    \begin{itemize}
        \item For $k = 1, 2, \dots, K$, set $\lambda_k = \frac{1}{N} \sum\limits_\ell r_{k, \ell}$.
        \item For $k = 1, 2, \dots, K$, set $m_k = \textrm{arctan}^*\left(\frac{\sum_\ell r_{k, \ell}\sin(y_\ell)}{\sum_\ell r_{k, \ell}\cos(y_\ell)}\right)$.
        \item For $k = 1, 2, \dots, K$, update $\rho_k$ using gradient ascent and the following gradient.
        \begin{align*}
            \frac{\partial}{\partial \rho_k}  \E{\log(\widetilde{p}(\v{\zeta}, \Theta \mid \v{y}, \v{x}))} &= \sum_{\ell = 1}^{n} r_{k, \ell} \left(\cos(y_\ell - m_k) - \frac{I_{-1}(\rho_k)}{I_{0}(\rho_k)}\right)
        \end{align*}
    \end{itemize}
\end{itemize}

\subsection{Calculating posterior predictive probability}
Let $x^*$ represent the withheld locations, $y^*$ the withheld data, $\theta$ a posterior draw for the parameters based on $x$ and $y$, and $\theta^*$ a draw for the parameters for $x^*$ and $y^*$. Set $N$ to be the number of locations and $N^*$ to be the number of withheld locations. The posterior predictive probability is given as follows.
\begin{align}
p(\v{y^*} \mid x, x^*, \v{y}) := \int \int \textrm{p}(y^* \mid \theta^*) \textrm{p}(\theta^* \mid \theta, x, x^*)  \textrm{p}(\theta \mid x, y) d\theta^* d\theta.
\label{eq:post_pred_prob_a}    
\end{align}
We now discuss how to calculate it for our models.

\noindent\textbf{SvM-c:} Here, $\theta = \{\{\v{z}_{k, 1}, \v{z}_{k, 2}\}_{k = 1}^K, \{\v{\varphi_{k}}\}_{k = 1}^K, \{\nu_k\}_{k = 1}^K, {\lambda}_{k = 1}^K\}$. Further, for the two Gaussian processes of component k, let $\Sigma^*_{k, 1}$ and $\Sigma^*_{k, 2}$ be the covariance matrices based on $x^*$ and $\widetilde{\Sigma}_{k, 1}$ and $\widetilde{\Sigma}_{k, 2}$ be the covariance matrices based on $x$ and $x^*$. In other words, $\widetilde{\Sigma}_{k, 1, i, j} = K(x_i, x^*_j)$ for some kernel function $K(\cdot, \cdot)$. Then, the posterior predictive probability for posterior draw $i$ can be found in the following procedure:
\begin{enumerate}
    \item Draw $\v{z^*_{k, 1}} \sim \textrm{N}(\v{\mu_{k, 1}} + \widetilde{\Sigma}_{k, 1}^T \Sigma_{k, 1}^{-1} (\v{z_{k, 1}} - \v{\mu_{k, 1}}), \Sigma^*_{k, 1} - \widetilde{\Sigma}_{k, 1}^T \Sigma_{k, 1}^{-1} \widetilde{\Sigma}_{k, 1})$ and \\ $\v{z^*_{k, 2}} \sim \textrm{N}(\v{\mu_{k, 2}} + \widetilde{\Sigma}_{k, 2}^T \Sigma_{k, 2}^{-2} (\v{z_{k, 2}} - \v{\mu_{k, 2}}), \Sigma^*_{k, 2} - \widetilde{\Sigma}_{k, 2}^T \Sigma_{k, 2}^{-2} \widetilde{\Sigma}_{k, 2})$ for $k = 1, 2, \ldots, K$. 
    \item Set $\v{m^*_k} = \arctan^*(\v{z^*_{k, 1}}, \v{z^*_{k, 2}})$.
    \item Draw $\v{\varphi^*_k} \sim \textrm{N}(\nu_k, \varsigma^2)$. Set $\v{\rho^*} = \exp(\v{\varphi^*_k})$.
    \item Let $p^*_{i, 1} = \prod_{n^* = 1}^{N^*} \left(\sum_k \lambda_k\vM{y^*_{n^*}}{m^*_{k, n^*}}{\rho^*_{k, n^*}}\right)$.
    \item Repeat the previous steps $M - 1$ more times for some $M \in \mathbbm{N}$ and denote the $m^{th}$ result as $p^*_{i, m}$.
    \item If there are $I$ iterations, return $\frac{1}{M} \sum_m \frac{1}{I} \sum_i p^*_{i, m}$ for the posterior predictive probability.
\end{enumerate}
\noindent\textbf{SvM-p:} Here, $\theta = \{\{\v{z}_{k, 1}, \v{z}_{k, 2}\}_{k = 1}^K, \{\v{m_{k}}\}_{k = 1}^K, \{\rho_k\}_{k = 1}^K\}$. Further, for the Gaussian processes of component k, let $\Sigma^*_{k}$ be the covariance matrix based on $x^*$ and $\widetilde{\Sigma}_{k}$ be the covariance matrices based on $x$ and $x^*$. In other words, $\widetilde{\Sigma}_{k, i, j} = K(x_i, x^*_j)$ for some kernel function $K(\cdot, \cdot)$. Then, the posterior predictive probability for posterior draw $i$ can be found in the following procedure:
\begin{enumerate}
    \item Draw $\v{z^*_{k}} \sim \textrm{N}(\v{\mu_{k}} + \widetilde{\Sigma}_{k}^T \Sigma_{k}^{-1} (\v{z_{k}} - \v{\mu_{k}}), \Sigma^*_{k} - \widetilde{\Sigma}_{k}^T \Sigma_{k}^{-1} \widetilde{\Sigma}_{k})$ for $k = 1, 2, \ldots, K - 1$. 
    \item For $n^* \in 1, 2, \ldots, N^*$, set $\v{\lambda^*_{\cdot, n^*}} = \Psi^{-1}(z^*_{1, n^*}, z^*_{2, n^*}, \ldots, z^*_{K - 1, n^*})$.
    \item Let $p^*_{i, 1} = \prod_{n^*} \left(\sum_k \lambda_{k, n^*}\vM{y^*_{n^*}}{m_{k}}{\rho_{k}}\right)$.
    \item Repeat the previous steps $M - 1$ more times for some $M \in \mathbbm{N}$ and denote the $m^{th}$ result as $p^*_{i, m}$.
    \item If there are $I$ iterations, return $\frac{1}{M} \sum_m \frac{1}{I} \sum_i p^*_{i, m}$ for the posterior predictive probability.
\end{enumerate}

.

\section{MCMC Sampling Proofs}
\subsection{HMC}
\subsubsection{SvM-c}
\begin{lemma}
The derivatives for $\arctan^*(z_1, z_2)$ are 
\begin{align*}
    \frac{d}{d z_1} \arctan^*(z_1, z_2) = \frac{-z_2}{z_1^2 + z_2^2}\\
    \frac{d}{d z_2} \arctan^*(z_1, z_2) = \frac{z_1}{z_1^2 + z_2^2}.
\end{align*}
\label{lemma:arctan_star_derivative}
\end{lemma}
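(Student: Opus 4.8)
The plan is to compute the derivatives of $\arctan^*(z_1,z_2)$ by differentiating the piecewise definition in \eqref{fct:arctan_star}, using the well-known derivative of the ordinary arctan and the fact that adding the constants $0$, $\pi$, or $2\pi$ does not affect derivatives. On each open region of the domain $\mathbb{R}^2\setminus\{0\}$ the function agrees with $\arctan(z_2/z_1)$ up to an additive constant, so the chain rule gives the derivatives directly; the only care needed is to check that the formulas glue consistently along the rays $z_1=0$ and $z_2=0$ where the pieces meet, and at points where $z_1=0$ (so that $z_2/z_1$ is not defined but $\arctan^*$ still is).

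\textbf{Main computation.} On the region $\{z_1>0\}$ we have $\arctan^*(z_1,z_2)=\arctan(z_2/z_1)+c$ for a locally constant $c$. Writing $u=z_2/z_1$, the chain rule gives $\partial_{z_1}\arctan(u)=\frac{1}{1+u^2}\cdot\left(-\frac{z_2}{z_1^2}\right)=\frac{-z_2}{z_1^2+z_2^2}$ and $\partial_{z_2}\arctan(u)=\frac{1}{1+u^2}\cdot\frac{1}{z_1}=\frac{z_1}{z_1^2+z_2^2}$, which are exactly the claimed expressions. The same algebra applies verbatim on $\{z_1<0\}$. To cover the rays $\{z_1=0,\ z_2\neq 0\}$, I would argue by continuity of the candidate derivative formulas $\frac{-z_2}{z_1^2+z_2^2}$ and $\frac{z_1}{z_1^2+z_2^2}$ (which are smooth on all of $\mathbb{R}^2\setminus\{0\}$) together with the fact, already used throughout the paper, that $\arctan^*$ is the smooth angular coordinate of the polar map: since the polar coordinate map $(r,y)\mapsto(r\cos y,r\sin y)$ is a diffeomorphism off the origin, its inverse — whose angular component is $\arctan^*$ — is smooth, and one can read off $\nabla\arctan^*$ from inverting the Jacobian of the polar map, obtaining the same two formulas. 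Equivalently, one checks directly that the one-sided partials from the $\{z_1>0\}$ and $\{z_1<0\}$ pieces agree along $z_1=0$, which they visibly do since both limits give $\frac{-z_2}{z_2^2}=-\frac1{z_2}$ and $0$ respectively.

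\textbf{Expected obstacle.} There is essentially no hard step here: the only thing requiring attention is the bookkeeping at the interfaces between the three cases in \eqref{fct:arctan_star}, i.e.\ confirming that the constant shifts $0,\pi,2\pi$ are chosen precisely so that $\arctan^*$ is continuous (hence $C^1$) across $z_2=0$ with $z_1>0$ and across $z_1=0$, so that a single pair of formulas is valid on all of $\mathbb{R}^2\setminus\{0\}$. This is immediate from the construction of $\arctan^*$ as the angular polar coordinate, so I would state it briefly rather than belabor it. The cleanest writeup is therefore: differentiate one representative piece, note rotational/constant-shift invariance of the derivative, and invoke smoothness of the polar inverse to extend the identity to the whole punctured plane.
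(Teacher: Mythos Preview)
Your proposal is correct and follows essentially the same approach as the paper: apply the chain rule to $\arctan(z_2/z_1)$, noting that the additive constants in the piecewise definition vanish upon differentiation. The paper's proof is in fact a one-line sketch of exactly this, so your additional care about gluing across the rays $z_1=0$ and $z_2=0$ is more thorough than what the paper provides.
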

\begin{proof}
By the chain rule, the derivative for $\arctan^*(z_1, z_2)$ first involves taking the derivative of $\arctan(\frac{z_2}{z_1})$. The rest follows by taking the appropriate derivative of $\frac{z_2}{z_1}$ and suitable rearrangement of the gradient.
\end{proof}

\begin{lemma}
For the model specified in \eqref{model:SvM} and using the parametrization given in \eqref{eq:aug_projected_gp_centered}, the update for the momentum vector are respectively
\begin{align}
    -\frac{\partial U}{\partial q_{m_\ell}}(q(t)) &= \bigg(\rho_\ell\cos(y_\ell) + r_\ell\left(\Sigma^{-1}(\v{r}\cos(\v{m}) - \mu_1) \right)_\ell\bigg) \sin(m_\ell) - \nonumber\\
    & \qquad \bigg(\rho_\ell\sin(y_\ell) + r_\ell\left(\Sigma^{-1}(\v{r}\sin(\v{m}) - \mu_2) \right)_\ell\bigg)\cos(m_\ell)
    \label{lemma:SvM_centered_m_update_a}\\
    -\frac{\partial U}{\partial q_{r_\ell}}(q(t)) &= \frac{1}{r_\ell} - \left(\Sigma^{-1}(\v{r}\cos(\v{m}) - \mu_1) \right)_\ell \cos(m_\ell) - \nonumber\\ 
    & \qquad(\Sigma^{-1}(\v{r}\sin(\v{m}) - \mu_1))_\ell \sin(m_\ell)
    \label{lemma:SvM_centered_r_update_a}
\end{align}
\end{lemma}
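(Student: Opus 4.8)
The plan is a direct, term‑by‑term differentiation of the log posterior in the parametrization \eqref{eq:aug_projected_gp_centered}; there is no analytic subtlety, only the chain rule through the polar map and careful sign/index bookkeeping. Since the potential energy is $U=-\log p$ and all hyperparameter factors ($\v{\varphi}$, $\nu$, and their priors) are held fixed when updating $(\v{r},\v{m})$, the part of $\log p$ that depends on $(\v{r},\v{m})$ is
\begin{align*}
\log p(\v{r},\v{m}) &= C + \sum_{\ell'=1}^N \rho_{\ell'}\cos(y_{\ell'} - m_{\ell'}) + \sum_{\ell'=1}^N \log r_{\ell'}\\
&\quad - \frac{1}{2}\big(\v{r}\cos(\v{m}) - \mu_1\big)^T\Sigma^{-1}\big(\v{r}\cos(\v{m}) - \mu_1\big) - \frac{1}{2}\big(\v{r}\sin(\v{m}) - \mu_2\big)^T\Sigma^{-1}\big(\v{r}\sin(\v{m}) - \mu_2\big),
\end{align*}
where the first sum is the log of the von Mises likelihood \eqref{eq:svm_likelihood_ess} (using $\vM{y}{\alpha}{\rho}\propto e^{\rho\cos(y-\alpha)}$), the term $\sum_{\ell'}\log r_{\ell'}$ is the log of the Jacobian $\prod_\ell r_\ell$, and the two quadratic forms come from $\log\mathcal{N}\big((\v{r}\cos\v{m},\v{r}\sin\v{m})\mid(\mu_1,\mu_2),\mathbbm{I}_2\otimes\Sigma\big)$. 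The momentum updates are then $-\partial U/\partial q_{m_\ell} = \partial\log p/\partial m_\ell$ and $-\partial U/\partial q_{r_\ell} = \partial\log p/\partial r_\ell$.

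For the $m_\ell$ update I would differentiate each piece. The likelihood contributes $\partial_{m_\ell}\big[\rho_\ell\cos(y_\ell-m_\ell)\big] = \rho_\ell\sin(y_\ell-m_\ell)$, which I expand via $\sin(y_\ell-m_\ell)=\sin(y_\ell)\cos(m_\ell)-\cos(y_\ell)\sin(m_\ell)$ so the data term can be collected against the prior terms. For the quadratic forms, write $z_{1,\ell}=r_\ell\cos m_\ell$, $z_{2,\ell}=r_\ell\sin m_\ell$; only the $\ell$‑th coordinates depend on $m_\ell$, with $\partial z_{1,\ell}/\partial m_\ell = -r_\ell\sin m_\ell$ and $\partial z_{2,\ell}/\partial m_\ell = r_\ell\cos m_\ell$. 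Using $\nabla_{\v{z}}\big[-\tfrac12(\v{z}-\mu)^T\Sigma^{-1}(\v{z}-\mu)\big] = -\Sigma^{-1}(\v{z}-\mu)$, the chain rule gives $r_\ell\sin(m_\ell)\big(\Sigma^{-1}(\v{r}\cos\v{m}-\mu_1)\big)_\ell - r_\ell\cos(m_\ell)\big(\Sigma^{-1}(\v{r}\sin\v{m}-\mu_2)\big)_\ell$, while $\sum_{\ell'}\log r_{\ell'}$ does not depend on $m_\ell$. Collecting all contributions by the common factors $\sin(m_\ell)$ and $\cos(m_\ell)$ produces the grouped form \eqref{lemma:SvM_centered_m_update_a}.

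For the $r_\ell$ update, both the likelihood sum and the hyperparameter factors are independent of $r_\ell$, so only the Jacobian and the two quadratic forms survive — which is exactly why the stated update carries no dependence on $\v{y}$. The Jacobian gives $\partial_{r_\ell}\sum_{\ell'}\log r_{\ell'} = 1/r_\ell$, and since $\partial z_{1,\ell}/\partial r_\ell = \cos m_\ell$, $\partial z_{2,\ell}/\partial r_\ell = \sin m_\ell$, the same chain‑rule step yields $-\cos(m_\ell)\big(\Sigma^{-1}(\v{r}\cos\v{m}-\mu_1)\big)_\ell - \sin(m_\ell)\big(\Sigma^{-1}(\v{r}\sin\v{m}-\mu_2)\big)_\ell$; summing gives \eqref{lemma:SvM_centered_r_update_a}. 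The only place anything can go wrong is the bookkeeping: pushing the gradient through $(r_\ell,m_\ell)\mapsto(r_\ell\cos m_\ell,r_\ell\sin m_\ell)$ on the $\ell$‑th block while all other blocks are frozen, identifying which $\big(\Sigma^{-1}(\cdot)\big)_\ell$ component is contracted at each step, and tracking signs when $\sin(y_\ell-m_\ell)$ is split and recombined with the prior terms. No estimates, limits, or special‑function identities are needed.
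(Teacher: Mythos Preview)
Your approach is essentially identical to the paper's: write out the relevant piece of the log posterior (von Mises likelihood $+$ Jacobian $+$ the two Gaussian quadratic forms), differentiate term by term via the chain rule through $z_{1,\ell}=r_\ell\cos m_\ell$, $z_{2,\ell}=r_\ell\sin m_\ell$, and then expand $\sin(y_\ell-m_\ell)$ to regroup by $\sin(m_\ell)$ and $\cos(m_\ell)$. One small bookkeeping note: you correctly compute $\partial_{m_\ell}\cos(y_\ell-m_\ell)=+\sin(y_\ell-m_\ell)$, whereas the paper's proof writes this with a minus sign and carries that into the stated grouped form; so when you ``collect into \eqref{lemma:SvM_centered_m_update_a}'' the likelihood pieces will differ in sign from the lemma as printed, even though your calculus is the right one.
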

\begin{proof}
Suppressing all other parts that do not depend on $m_\ell$ and $r_\ell$, we have that 
\begin{align*}
\log(p(m_\ell, r_\ell \mid \v{y})) &= \rho_\ell\cos(y_\ell - m_\ell) + \sum_{\ell = 1}^N \log(r_\ell) - \frac{1}{2}(\v{r}\cos(\v{m}) - \mu_1)^{T}\Sigma^{-1}(\v{r}\cos(\v{m}) - \mu_1) -\\
& \qquad \frac{1}{2}(\v{r}\sin(\v{m}) - \mu_2)^{T}\Sigma^{-1}(\v{r}\sin(\v{m}) - \mu_2) + C.
\end{align*}

Without loss of generality, let $\v{v} = \v{r}\cos(\v{m}) - \mu_1$. Note that $\frac{d v_{\ell'}}{d m_\ell} = -r_\ell\sin(m_\ell)$ and $\frac{d v_{\ell'}}{d r_\ell} = \cos(m_\ell)$ for $\ell' = \ell$ and 0 otherwise. Hence,
\[
\frac{d}{d m_\ell} \frac{1}{2}(\v{r}\cos(\v{m}) - \mu_1)^{T}\Sigma^{-1}(\v{r}\cos(\v{m}) - \mu_1) = \left(\Sigma^{-1}(\v{r}\cos(\v{m}) - \mu_1)\right)_\ell -r_\ell\sin(m_\ell)
\] 
and
\[
\frac{d}{d r_\ell} \frac{1}{2}(\v{r}\cos(\v{m}) - \mu_1)^{T}\Sigma^{-1}(\v{r}\cos(\v{m}) - \mu_1) = \left(\Sigma^{-1}(\v{r}\cos(\v{m}) - \mu_1)\right)_\ell \cos(m_\ell).
\]
A similar result holds for $\frac{1}{2}(\v{r}\sin(\v{m}) - \mu_2)^{T}\Sigma^{-1}(\v{r}\sin(\v{m}) - \mu_2)$.

If we then take the derivative of $\log(r_\ell)$, we get the quantity in \ref{lemma:SvM_centered_r_update_a}. On the other hand, for the quantity in \ref{lemma:SvM_centered_m_update_a},  
\begin{equation*}
    \begin{aligned}
        \frac{d}{d m_\ell} \rho_\ell \cos(y_\ell - m_\ell) &= -\rho_\ell \sin(y_\ell - m_\ell)\\
        &= -\rho_\ell (\sin(y_\ell)\cos(m_\ell) - \sin(m_\ell)\cos(y_\ell)).
    \end{aligned}
\end{equation*}
Some rearrangement then gives us the quantity in \ref{lemma:SvM_centered_m_update_a}. 
\end{proof}

\begin{lemma}
For the model specified in \eqref{model:SvM} and using the parametrization given in \eqref{eq:aug_projected_gp_noncentered}, the update for the momentum vectors are respectively
\begin{align}
    -&\frac{\partial U}{\partial q_{\widetilde{m}_\ell}}(q(t)) = \nonumber\\
    & \sum_{\ell' = 1}^N  -\rho_{\ell'}\sin(y_{\ell'} - m_{\ell'})\left(\frac{z_{2, \ell'}}{z^2_{1, \ell'} + z^2_{2, \ell'}} L_{\ell', \ell} \widetilde{r}_\ell \sin(\widetilde{m}_{\ell}) + \frac{z_{1, \ell'}}{z^2_{1, \ell'} + z^2_{2, \ell'}} L_{\ell', \ell} \widetilde{r}_\ell \cos(\widetilde{m}_{\ell}) \right)
    \label{lemma:SvM_non_centered_m_update_a}\\
    -& \frac{\partial U}{\partial q_{r_\ell}}(q(t)) = \nonumber\\
    & \sum_{\ell' = 1}^N  -\rho_{\ell'}\sin(y_{\ell'} - m_{\ell'})\left(\frac{-z_{2, \ell'}}{z^2_{1, \ell'} + z^2_{2, \ell'}} L_{\ell', \ell} \cos(\widetilde{m}_{\ell}) + \frac{z_{1, \ell'}}{z^2_{1, \ell'} + z^2_{2, \ell'}} L_{\ell', \ell} \sin(\widetilde{m}_{\ell}) \right) + \frac{1}{\widetilde{r}_\ell} - \widetilde{r}_\ell.
    \label{lemma:SvM_non_centered_r_update_a}
\end{align}
\end{lemma}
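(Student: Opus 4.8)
The plan is to invoke the Hamiltonian Monte Carlo identity $-\partial U/\partial q = \partial(\log\widetilde p)/\partial q$, where $\widetilde p$ denotes the (unnormalized) posterior, write $\log\widetilde p$ in the non-centered parametrization \eqref{eq:aug_projected_gp_noncentered} keeping only the terms depending on $\widetilde m_\ell$ or $\widetilde r_\ell$, and then differentiate by the chain rule along the composition
\[
(\widetilde m_\ell,\widetilde r_\ell)\ \longmapsto\ (\widetilde z_{1,\ell},\widetilde z_{2,\ell})\ \longmapsto\ \bigl(z_{1,\ell'},z_{2,\ell'}\bigr)_{\ell'=1}^N\ \longmapsto\ (m_{\ell'})_{\ell'=1}^N\ \longmapsto\ \log\widetilde p .
\]

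First I would isolate the dependence. By \eqref{eq:aug_projected_gp_noncentered} the prior on $\widetilde m_\ell$ is uniform and so contributes only a constant; the prior on $\widetilde r_\ell$ contributes $\log\widetilde r_\ell-\tfrac12\widetilde r_\ell^2$; and the von Mises likelihood contributes $\sum_{\ell'}\rho_{\ell'}\cos(y_{\ell'}-m_{\ell'})$ up to a term in $I_0(\rho_{\ell'})$ that is constant in this block (the concentrations are held fixed while $(\v{\widetilde z}_1,\v{\widetilde z}_2)$ is sampled). Hence, suppressing constants, $\log\widetilde p=\sum_{\ell'}\rho_{\ell'}\cos(y_{\ell'}-m_{\ell'})+\sum_{j}\bigl(\log\widetilde r_j-\tfrac12\widetilde r_j^2\bigr)$.

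Next I would assemble the elementary link derivatives. From $\widetilde z_{1,\ell}=\widetilde r_\ell\cos\widetilde m_\ell$ and $\widetilde z_{2,\ell}=\widetilde r_\ell\sin\widetilde m_\ell$: $\partial\widetilde z_{1,\ell}/\partial\widetilde m_\ell=-\widetilde r_\ell\sin\widetilde m_\ell$, $\partial\widetilde z_{2,\ell}/\partial\widetilde m_\ell=\widetilde r_\ell\cos\widetilde m_\ell$, $\partial\widetilde z_{1,\ell}/\partial\widetilde r_\ell=\cos\widetilde m_\ell$, $\partial\widetilde z_{2,\ell}/\partial\widetilde r_\ell=\sin\widetilde m_\ell$. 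From $z_{j,\ell'}=\mu_j+(L\widetilde z_j)_{\ell'}$: $\partial z_{j,\ell'}/\partial\widetilde z_{j,\ell}=L_{\ell',\ell}$, with the two coordinates not mixing. From Lemma \ref{lemma:arctan_star_derivative}: $\partial m_{\ell'}/\partial z_{1,\ell'}=-z_{2,\ell'}/(z_{1,\ell'}^2+z_{2,\ell'}^2)$ and $\partial m_{\ell'}/\partial z_{2,\ell'}=z_{1,\ell'}/(z_{1,\ell'}^2+z_{2,\ell'}^2)$. Finally $\partial[\rho_{\ell'}\cos(y_{\ell'}-m_{\ell'})]/\partial m_{\ell'}=-\rho_{\ell'}\sin(y_{\ell'}-m_{\ell'})$. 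Multiplying these factors through each layer and summing over $\ell'$ yields \eqref{lemma:SvM_non_centered_m_update_a}, and adding to the $\widetilde r_\ell$ gradient the prior contribution $\partial(\log\widetilde r_\ell-\tfrac12\widetilde r_\ell^2)/\partial\widetilde r_\ell=1/\widetilde r_\ell-\widetilde r_\ell$ yields \eqref{lemma:SvM_non_centered_r_update_a}.

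The computation is routine once the composition is written down, so the main obstacle is purely the index bookkeeping through the three-layer chain rule: one must keep straight that although only the $\ell$-th non-centered pair $(\widetilde z_{1,\ell},\widetilde z_{2,\ell})$ depends on $(\widetilde m_\ell,\widetilde r_\ell)$, each Gaussian-process coordinate $z_{j,\ell'}$ — and hence each mean $m_{\ell'}$, and each likelihood factor — depends on it through the single Cholesky column $L_{\cdot,\ell}$, which is exactly why a sum over $\ell'$ appears here but not in the centered updates \eqref{lemma:SvM_centered_m_update_a}–\eqref{lemma:SvM_centered_r_update_a}; and one must track the sign cancellation between the $-\widetilde r_\ell\sin\widetilde m_\ell$ coming from $\partial\widetilde z_{1,\ell}/\partial\widetilde m_\ell$ and the leading minus sign in $\partial m_{\ell'}/\partial z_{1,\ell'}$.
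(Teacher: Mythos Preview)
Your proposal is correct and follows essentially the same argument as the paper's proof: isolate the likelihood and the $\widetilde r_\ell$-prior as the only relevant terms, then apply the chain rule through the composition $(\widetilde m_\ell,\widetilde r_\ell)\mapsto(\widetilde z_{1,\ell},\widetilde z_{2,\ell})\mapsto(z_{1,\ell'},z_{2,\ell'})\mapsto m_{\ell'}$, invoking Lemma~\ref{lemma:arctan_star_derivative} for the $\arctan^*$ derivatives and $L_{\ell',\ell}$ for the Cholesky link, and finally append the $1/\widetilde r_\ell-\widetilde r_\ell$ term from the prior. The paper's proof records exactly these ingredients in the same order.
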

\begin{proof}
Using the chain rule, we have that
\[
\frac{d}{d\widetilde{m}_\ell} \rho_{\ell'}\cos(y_{\ell'} - m_{\ell'}) = (\frac{d}{d m_{\ell'}} \rho_{\ell'}\cos(y_{\ell'} - m_{\ell'}))\left(\frac{d m_{\ell'}}{d z_{1, \ell'}} \frac{d z_{1, \ell'}}{d \widetilde{z}_{1, \ell}} \frac{d \widetilde{z}_{1, \ell}}{d\widetilde{m}_\ell} + \frac{d m_{\ell'}}{d z_{2, \ell'}} \frac{d z_{2, \ell'}}{d \widetilde{z}_{2, \ell}} \frac{d \widetilde{z}_{2, \ell}}{d\widetilde{m}_\ell}\right)
\]
and
\[
\frac{d}{d\widetilde{r}_\ell} \rho_{\ell'}\cos(y_{\ell'} - m_{\ell'}) = (\frac{d}{d m_{\ell'}} \rho_{\ell'}\cos(y_{\ell'} - m_{\ell'}))\left(\frac{d m_{\ell'}}{d z_{1, \ell'}} \frac{d z_{1, \ell'}}{d \widetilde{z}_{1, \ell}} \frac{d \widetilde{z}_{1, \ell}}{d\widetilde{r}_\ell} + \frac{d m_{\ell'}}{d z_{2, \ell'}} \frac{d z_{2, \ell'}}{d \widetilde{z}_{2, \ell}} \frac{d \widetilde{z}_{2, \ell}}{d\widetilde{r}_\ell}\right).
\]
Again, $\frac{d \widetilde{z}_{1, \ell'}}{d\widetilde{m}_\ell} = -r_\ell\sin(m_\ell)$ and $\frac{d \widetilde{z}_{2, \ell'}}{d\widetilde{m}_\ell} = r_\ell\cos(m_\ell)$ if $\ell' = \ell$ and $0$ otherwise. A similar result holds for $\frac{d \widetilde{z}_{1, \ell'}}{d\widetilde{r}_\ell}$ and $\frac{d \widetilde{z}_{2, \ell'}}{d\widetilde{r}_\ell}$. Then, by definition, $\frac{d z_{1, \ell'}}{d \widetilde{z}_{1, \ell}} = \frac{d z_{2, \ell'}}{d \widetilde{z}_{2, \ell}} = L_{j, i}$. Finally, Lemma \ref{lemma:arctan_star_derivative} gives us $\frac{d m_{\ell'}}{d z_{1, \ell'}}$ and $\frac{d m_{\ell'}}{d z_{2, \ell'}}$. 

Because the distribution on $\widetilde{m}_\ell$ is uniform, We get the quantity in \ref{lemma:SvM_non_centered_m_update_a} by summing over all observations. For $\widetilde{r}_\ell$, we also have to add in the derivative of $\log(r_\ell) - \frac{1}{2}r_\ell^2$.
\end{proof}

\subsubsection{SvM-p}
\begin{lemma}
For the model specified in \eqref{model:SvM-p}, the update for the momentum vector corresponding to $z_\ell$ in the centered parametrization is 
\[
\lambda_{k, \ell}\frac{\vM{y_\ell}{m_k}{\rho_k} - \sum_{k' = 1}^K\lambda_{k', \ell}\vM{y_\ell}{m_{k'}}{\rho_{k'}}}{p(y_\ell \mid \v{z_{., \ell}}, \v{m}, \v{\rho})}  - (\Sigma^{-1} z)_\ell
\]
\end{lemma}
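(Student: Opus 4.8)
The plan is to recognize the claimed expression as the $z_{k,\ell}$-component of the gradient of the log posterior, since (as recalled from Neal's leapfrog formulation quoted above) the momentum half-step uses $-\partial U/\partial q = \partial(\log\text{posterior})/\partial q$. So first I would write the unnormalized log posterior of \textit{SvM-p} in \eqref{model:SvM-p}, grouping its terms into: (i) the data term $\sum_{\ell'}\log\!\big(\sum_{k'}\lambda_{k',\ell'}\vM{y_{\ell'}}{m_{k'}}{\rho_{k'}}\big)$, where $\v{\lambda}_{\cdot,\ell'}=\Psi^{-1}(z_{1,\ell'},\ldots,z_{K-1,\ell'})$; (ii) the Gaussian-process prior terms $-\tfrac12\v{z}_j^{\top}\Sigma_j^{-1}\v{z}_j$ for $j=1,\ldots,K-1$, up to additive constants; and (iii) the priors on $m_{k'}$ and $\rho_{k'}$, which contain no $z$ and therefore contribute nothing to $\partial/\partial z_{k,\ell}$.

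Next I would differentiate the two surviving groups. The prior term (ii) with $j=k$ gives $\partial/\partial z_{k,\ell}\big(-\tfrac12\v{z}_k^{\top}\Sigma_k^{-1}\v{z}_k\big)=-(\Sigma_k^{-1}\v{z}_k)_\ell$, which is exactly the $-(\Sigma^{-1}z)_\ell$ written in the statement with the component index suppressed. For the data term (i), since each $\lambda_{k',\ell'}$ depends only on $z_{\cdot,\ell'}$, the sole summand involving $z_{k,\ell}$ is the one with $\ell'=\ell$, and the chain rule produces
$\frac{1}{p(y_\ell\mid \v{z}_{\cdot,\ell},\v{m},\v{\rho})}\sum_{k'}\frac{\partial\lambda_{k',\ell}}{\partial z_{k,\ell}}\vM{y_\ell}{m_{k'}}{\rho_{k'}}$.
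Thus the whole problem reduces to the Jacobian of the generalized inverse logit (softmax) map $\Psi^{-1}$.

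The key computation is the softmax Jacobian identity: for every $k'\in\{1,\ldots,K\}$ and every $k\in\{1,\ldots,K-1\}$, directly from the definition of $\Psi^{-1}$ one gets $\partial\lambda_{k',\ell}/\partial z_{k,\ell}=\lambda_{k',\ell}\big(\mathbbm{1}(k'=k)-\lambda_{k,\ell}\big)$; I would check this by the quotient rule on $e^{z_{k',\ell}}/(1+\sum_j e^{z_{j,\ell}})$ for $k'\le K-1$, and separately for $k'=K$, where the numerator is constant so only the $-\lambda_{K,\ell}\lambda_{k,\ell}$ term survives, consistent with $\mathbbm{1}(K=k)=0$. Substituting this in, the $\mathbbm{1}(k'=k)$ part contributes $\lambda_{k,\ell}\vM{y_\ell}{m_k}{\rho_k}$, and the $-\lambda_{k,\ell}$ part factors out to give $-\lambda_{k,\ell}\sum_{k'}\lambda_{k',\ell}\vM{y_\ell}{m_{k'}}{\rho_{k'}}$; combining these, dividing by $p(y_\ell\mid\v{z}_{\cdot,\ell},\v{m},\v{\rho})$, and adding the prior contribution $-(\Sigma_k^{-1}\v{z}_k)_\ell$ yields exactly the stated update.

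The only mild subtlety --- hardly an obstacle --- is the asymmetric role of the $K$-th mixing weight in $\Psi^{-1}$ and verifying that the Jacobian identity covers it uniformly; the rest is routine chain-rule bookkeeping, and in the two-component case it collapses further to $\lambda_{1,\ell}(1-\lambda_{1,\ell})$ times the likelihood difference, recovering \eqref{lemma:mixture_prob_centered_m_update-2}.
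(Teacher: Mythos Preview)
Your proposal is correct and follows essentially the same route as the paper: write the gradient of the log posterior, split off the Gaussian-process prior contribution $-(\Sigma^{-1}z)_\ell$, and reduce the data term via the chain rule to the softmax Jacobian. The only cosmetic difference is that the paper computes $\partial\lambda_{k',\ell}/\partial z_{k,\ell}$ in three separate cases ($k'=k$, $k'\neq k$ with $k'<K$, and $k'=K$), whereas you package all three into the single identity $\partial\lambda_{k',\ell}/\partial z_{k,\ell}=\lambda_{k',\ell}(\mathbbm{1}(k'=k)-\lambda_{k,\ell})$; this is the same calculation written more compactly.
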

\begin{proof}
We have that
\begin{equation*}
    \begin{aligned}
        & \frac{d}{d z_{k, \ell}} \log(p(z, \v{m}, \v{\rho} \mid \bm{y}))\\
        &\propto \frac{d}{d z_{k, \ell}} (\log(p(y_\ell \mid z, \v{m}, \v{\rho})) + \log(\textrm{GP}(z \mid 0, \Sigma)))\\
        &= \sum_{k' = 1}^K \frac{d\lambda_{k', \ell}}{d z_{k, \ell}} \frac{d}{d\lambda_{k', \ell}} \log(p(y_\ell \mid z, \v{m}, \v{\rho})) + \frac{d}{d z_\ell}\log(\textrm{GP}(z \mid 0, \Sigma))\\
        &= \frac{\sum_{k' = 1}^{K} \vM{y_\ell}{m_{k'}}{\rho_{k'}}\frac{d\lambda_{k', \ell}}{d z_{k', \ell}}}{p(y_\ell \mid z, \v{m}, \v{\rho})} - (\Sigma^{-1} z)_\ell.
    \end{aligned}
\end{equation*}
If $k' = k$, then
\begin{equation*}
    \begin{aligned}
        \frac{d\lambda_{k, \ell}}{d z_{k, \ell}}
        &= \frac{d}{d z_{k, \ell}} \frac{\exp(z_{k, \ell})}{1 + \sum_{k' = 1}^{K - 1}\exp(z_{k', \ell})}\\ 
        &= \frac{\exp(z_{k, \ell})}{1 + \sum_{k' = 1}^{K - 1}\exp(z_{k', \ell})} - \left(\frac{\exp(z_{k, \ell})}{1 + \sum_{k' = 1}^{K - 1}\exp(z_{k', \ell})}\right)^2\\
        &= \lambda_{k, \ell}(1 - \lambda_{k, \ell})\\
    \end{aligned}
\end{equation*}
If $k' \neq k$, $k' \neq K$ then
\begin{equation*}
    \begin{aligned}
        \frac{d\lambda_{k', \ell}}{d z_{k, \ell}}
        &= \frac{d}{d z_{k', \ell}} \frac{\exp(z_{k', \ell})}{1 + \sum_{k' = 1}^{K - 1}\exp(z_{k', \ell})}\\ 
        &= -\frac{\exp(z_{k, \ell})\exp(z_{k', \ell})}{\left(1 + \sum_{k' = 1}^{K - 1}\exp(z_{k', \ell})\right)^2}\\
        &= -\lambda_{k, \ell}\lambda_{k', \ell}.\\
    \end{aligned}
\end{equation*}
If $k' = K$, then
\begin{equation*}
    \begin{aligned}
        \frac{d\lambda_{k', \ell}}{d z_{k, \ell}}
        &= \frac{d}{d z_{k', \ell}} \frac{1}{1 + \sum_{k' = 1}^{K - 1}\exp(z_{k', \ell})}\\ 
        &= -\frac{\exp(z_{k, \ell})}{\left(1 + \sum_{k' = 1}^{K - 1}\exp(z_{k', \ell})\right)^2}\\
        &= -\lambda_{k, \ell}\lambda_{K, \ell}.\\
    \end{aligned}
\end{equation*}
\end{proof}

\begin{lemma}
For the model specified in \eqref{model:SvM-p-2}, the update for the momentum vector corresponding to $z_\ell$ in the centered parametrization is
\[
\lambda_{1, \ell} (1 - \lambda_{1, \ell}) \frac{\vM{y_\ell}{m_1}{\rho_1}  - \vM{y_\ell}{m_2}{\rho_2}}{p(y_\ell \mid z_\ell, m_1, m_2, \rho_1, \rho_2)}  - (\Sigma^{-1} z)_\ell.
\]
\end{lemma}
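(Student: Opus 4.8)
The plan is to recognize this as the two-component specialization of the preceding lemma and to derive it directly by differentiating the log posterior. Recall that in HMC the momentum update equals $-\partial U/\partial q$, where the potential energy $U$ is the negative log posterior; so the task is to compute $\partial/\partial z_\ell$ of $\log p(\v{z}, m_1, m_2, \rho_1, \rho_2 \mid \v{y})$ for the model in \eqref{model:SvM-p-2}, evaluated at the current state.

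First I would write down the log posterior and discard every additive term that does not depend on $z_\ell$: the $\textrm{Unif}(0,2\pi)$ priors on $m_k$ and the $\Gamma(1,1)$ priors on $\rho_k$ are independent of $\v{z}$ and contribute nothing. Two pieces survive. The Gaussian process prior gives $\log \textrm{GP}(\v{z}\mid 0,\Sigma) = -\tfrac12 \v{z}^T\Sigma^{-1}\v{z} + C$, whose derivative with respect to $z_\ell$ is $-(\Sigma^{-1}\v{z})_\ell$. The likelihood contributes only through $y_\ell$: after marginalizing out $\zeta_\ell$ we have $p(y_\ell\mid z_\ell, m_1, m_2, \rho_1, \rho_2) = \lambda_{1,\ell}\vM{y_\ell}{m_1}{\rho_1} + (1-\lambda_{1,\ell})\vM{y_\ell}{m_2}{\rho_2}$ with $\lambda_{1,\ell} = \invlogit{z_\ell}$. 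Differentiating the log of this via the chain rule yields $\frac{1}{p(y_\ell\mid\cdots)}\,\frac{\partial \lambda_{1,\ell}}{\partial z_\ell}\bigl(\vM{y_\ell}{m_1}{\rho_1} - \vM{y_\ell}{m_2}{\rho_2}\bigr)$, and then I would invoke the elementary logistic identity $\frac{d}{dz}\invlogit{z} = \invlogit{z}(1-\invlogit{z})$, i.e. $\frac{\partial \lambda_{1,\ell}}{\partial z_\ell} = \lambda_{1,\ell}(1-\lambda_{1,\ell})$, which is exactly the $k'=k$ case with $K=2$ already computed in the proof of the preceding lemma. Adding the two pieces and negating gives the stated update.

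There is no genuine obstacle here; the derivation is a routine chain-rule computation, and the only care needed is bookkeeping — confirming that the $m_k,\rho_k$ priors drop out, that the quadratic form differentiates to $-(\Sigma^{-1}\v{z})_\ell$, and that in the two-component case the general cross terms $-\lambda_{k,\ell}\lambda_{k',\ell}$ collapse so that the single difference $\vM{y_\ell}{m_1}{\rho_1} - \vM{y_\ell}{m_2}{\rho_2}$ is multiplied by the clean factor $\lambda_{1,\ell}(1-\lambda_{1,\ell})$. Alternatively, one may simply cite the previous lemma with $K=2$ and simplify the sum, noting that $\lambda_{2,\ell} = 1 - \lambda_{1,\ell}$.
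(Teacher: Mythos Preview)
Your proposal is correct and follows essentially the same route as the paper: both differentiate the log posterior directly, split into the Gaussian-process prior term (yielding $-(\Sigma^{-1}\v{z})_\ell$) and the marginalized likelihood term, and apply the chain rule through $\lambda_{1,\ell}=\invlogit{z_\ell}$ using the logistic identity $\tfrac{d}{dz}\invlogit{z}=\invlogit{z}(1-\invlogit{z})$. One small wording slip: after adding the two pieces you already have $\partial(\log p)/\partial z_\ell = -\partial U/\partial z_\ell$, so no further negation is needed to reach the stated update.
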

\begin{proof}
We have that
\begin{equation*}
    \begin{aligned}
        & \frac{d}{d z_\ell} \log(p(z, m_1, m_2, \rho_1, \rho_2 \mid \bm{y}))\\
        &\propto \frac{d}{d z_\ell} \log(p(y_\ell \mid z, m_1, m_2, \rho_1, \rho_2)) + \log(\textrm{GP}(z \mid 0, \Sigma))\\
        &= \frac{d\lambda_\ell}{d z_\ell} \frac{d}{d\lambda_\ell} \log(p(y_\ell \mid z, m_1, m_2, \rho_1, \rho_2)) + \frac{d}{d z_\ell}\log(\textrm{GP}(z \mid 0, \Sigma))\\
        &= \frac{d\lambda_\ell}{d z_\ell} \frac{\vM{y_\ell}{m_1}{\rho_1} - \vM{y_\ell}{m_2}{\rho_2}}{p(y_\ell \mid z_\ell, m_1, m_2, \rho_1, \rho_2)} - (\Sigma^{-1} z)_\ell.
    \end{aligned}
\end{equation*}
As $\lambda_\ell = \invlogit{z_\ell}$,
\begin{equation*}
    \begin{aligned}
        \frac{d\lambda_\ell}{d z_\ell}
        &= \frac{d}{d z_{\ell}} \frac{1}{1 + \exp(-z_{\ell})}\\ 
        &= \frac{-1}{(1 + \exp(-z_{\ell}))^2}-\exp(-z_{\ell})\\
        &= \frac{\exp(-z_{\ell})}{1 + \exp(-z_{\ell})} \frac{1}{1 + \exp(-z_{\ell})}\\
        &= \left(1 - \frac{1}{1 + \exp(-z_{\ell})}\right)\frac{1}{1 + \exp(-z_{\ell})}\\
        &= (1 - \invlogit{z_\ell})\invlogit{z_\ell}.
    \end{aligned}
\end{equation*}

Putting these terms together gives us the quantity in the lemma.
\end{proof}

\begin{lemma}
For the model specified in \ref{model:SvM-p}, the update for the momentum vector corresponding to $\widetilde{z_\ell}$ in the noncentered parametrization is
\[
\sum_{\ell'} L_{\ell', \ell} \lambda_{k, \ell'}\frac{\vM{y_{\ell'}}{m_k}{\rho_k} - \sum_{k' = 1}^K\lambda_{k', \ell'}\vM{y_{\ell'}}{m_{k'}}{\rho_{k'}}}{p(y_\ell' \mid \v{z_{., \ell'}}, \v{m}, \v{\rho})}  -  \widetilde{z}_\ell.
\]
\end{lemma}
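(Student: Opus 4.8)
The plan is to obtain this update by the chain rule, transporting the centered-parametrization gradient computed in the preceding lemma through the linear reparametrization $\v z_k = L\v{\widetilde z}_k$, where $L$ is the Cholesky factor of $\Sigma_k$. Since the HMC momentum update for a coordinate is minus the derivative of the potential energy $U$, i.e.\ the derivative of the log unnormalized posterior, I would first write that posterior in the non-centered parametrization: $\v{\widetilde z}_k \sim \mathrm{N}(\v 0,\mathbbm{I}_N)$ independently, $\v z_k = L\v{\widetilde z}_k$ deterministically, and then $y_{\ell'}\mid\zeta_{\ell'}=k$ is von Mises with $\zeta_{\ell'}$ governed by $\lambda_{k,\ell'}=\Psi^{-1}(\v z_{\cdot,\ell'})_k$. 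Marginalizing the cluster labels exactly as in the centered case gives $p(y_{\ell'}\mid\v{z_{.,\ell'}},\v m,\v\rho)=\sum_k\lambda_{k,\ell'}\vM{y_{\ell'}}{m_k}{\rho_k}$, so that
\[
\log\widetilde p(\v{\widetilde z},\v m,\v\rho\mid\v y)=\sum_{\ell'}\log p(y_{\ell'}\mid\v{z_{.,\ell'}},\v m,\v\rho)+\sum_k\log\mathrm{N}(\v{\widetilde z}_k\mid\v 0,\mathbbm{I}_N)+C .
\]

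Next I would split the derivative with respect to $\widetilde z_{k,\ell}$ into a likelihood part and a prior part. The prior part is immediate, $\partial\log\mathrm{N}(\v{\widetilde z}_k\mid\v 0,\mathbbm{I}_N)/\partial\widetilde z_{k,\ell}=-\widetilde z_{k,\ell}$, which supplies the trailing $-\widetilde z_\ell$. For the likelihood part, the only dependence on $\widetilde z_{k,\ell}$ is through $z_{k,\ell'}=\sum_j L_{\ell',j}\widetilde z_{k,j}$, hence $\partial z_{k,\ell'}/\partial\widetilde z_{k,\ell}=L_{\ell',\ell}$, and the chain rule gives
\[
\frac{\partial}{\partial\widetilde z_{k,\ell}}\sum_{\ell'}\log p(y_{\ell'}\mid\v{z_{.,\ell'}},\v m,\v\rho)=\sum_{\ell'}L_{\ell',\ell}\,\frac{\partial}{\partial z_{k,\ell'}}\log p(y_{\ell'}\mid\v{z_{.,\ell'}},\v m,\v\rho).
\]
The inner factor is precisely the data-dependent half of the centered momentum update: re-running the computation from the previous lemma (differentiating $\log\sum_{k'}\lambda_{k',\ell'}\vM{y_{\ell'}}{m_{k'}}{\rho_{k'}}$ in $z_{k,\ell'}$, using $d\lambda_{k,\ell'}/dz_{k,\ell'}=\lambda_{k,\ell'}(1-\lambda_{k,\ell'})$ and $d\lambda_{k',\ell'}/dz_{k,\ell'}=-\lambda_{k,\ell'}\lambda_{k',\ell'}$ for $k'\ne k$, after which the $\lambda_{k,\ell'}$-factors telescope) yields
\[
\frac{\partial}{\partial z_{k,\ell'}}\log p(y_{\ell'}\mid\v{z_{.,\ell'}},\v m,\v\rho)=\lambda_{k,\ell'}\,\frac{\vM{y_{\ell'}}{m_k}{\rho_k}-\sum_{k'=1}^K\lambda_{k',\ell'}\vM{y_{\ell'}}{m_{k'}}{\rho_{k'}}}{p(y_{\ell'}\mid\v{z_{.,\ell'}},\v m,\v\rho)} .
\]
Substituting this, summing against $L_{\ell',\ell}$, and appending the prior term $-\widetilde z_\ell$ gives exactly the stated expression.

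There is no real analytic obstacle here; the one point requiring care is the Jacobian bookkeeping — that the correct index pattern is $L_{\ell',\ell}$ (rather than $L_{\ell,\ell'}$ or an entry of $L^{T}$), which is forced by the convention $\v z_k=L\v{\widetilde z}_k$ together with the fact that we differentiate the likelihood evaluated at site $\ell'$ with respect to the non-centered coordinate at site $\ell$. Once that orientation is fixed, the claim is just the centered-parametrization lemma composed with a linear change of variables, with the Gaussian-process quadratic-form contribution $-(\Sigma^{-1}\v z)_{\ell'}$ being replaced by the standard-normal prior contribution $-\widetilde z_\ell$.
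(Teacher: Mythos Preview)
Your proposal is correct and follows essentially the same approach as the paper: write the log posterior in the non-centered variables, differentiate the standard-normal prior to get $-\widetilde z_{k,\ell}$, and then push the likelihood derivative through the chain rule via $\partial z_{k,\ell'}/\partial \widetilde z_{k,\ell}=L_{\ell',\ell}$, reusing the softmax derivatives $d\lambda_{k',\ell'}/dz_{k,\ell'}$ already computed in the centered lemma. Your explicit remark about the index orientation of $L_{\ell',\ell}$ is a nice clarification, but otherwise the argument matches the paper's.
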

\begin{proof}
We have that
\begin{equation*}
    \begin{aligned}
        & \frac{d}{d \widetilde{z}_{k, \ell}} \log(P(z, \v{m}, \v{\rho} \mid \v{y}))\\
        &\propto \frac{d}{d \widetilde{z}_{k,\ell}} \log(P(y_\ell \mid z, m_1, m_2, \rho_1, \rho_2)) + \log(\mathcal{N}(\widetilde{z}_\ell \mid 0, 1))\\
        &= \sum_{\ell' = 1}^N\sum_{k' = 1}^K \frac{d z_{k, \ell'}}{d \widetilde{z}_{k, \ell}} \frac{d\lambda_{k', \ell'}}{d z_{k, \ell'}} \frac{d}{d\lambda_{k', \ell'}} \log(P(y_\ell \mid z, m_1, m_2, \rho_1, \rho_2)) -  \widetilde{z}_{k, \ell}\\
        &= \sum_{\ell' = 1}^N \sum_{k' = 1}^K  \frac{\frac{d z_{k, \ell'}}{d \widetilde{z}_{k, \ell}} \frac{d\lambda_{k', \ell}}{d z_{k, \ell'}}\vM{y_{\ell'}}{m_k'}{\rho_k'}}{p(y_\ell \mid z_\ell, m_1, m_2, \rho_1, \rho_2)} -  \widetilde{z}_{k, \ell}.
    \end{aligned}
\end{equation*}
As before, we need $\frac{d \psi_{\lambda'}}{d z_{\ell'}}$ and $\frac{d z_{\ell'}}{d \widetilde{z}_\ell}$ for the chain rule. We get $\frac{d \lambda_{\ell'}}{d z_{\ell'}}$ from before. Then, as $z_{\ell'} = L_{j, .} \widetilde{z}$, $\frac{d z_{\ell'}}{d \widetilde{z}_\ell} = L_{\ell', \ell}$. Putting these terms together gives us the quantity in the lemma.
\end{proof}

\begin{lemma}
For the model specified in \ref{model:SvM-p-2}, the update for the momentum vector corresponding to $\widetilde{z_\ell}$ in the noncentered parametrization is
\[
\sum_{\ell'} \lambda_{1, \ell} (1 - \lambda_{1, \ell}) \frac{\vM{y_{\ell'}}{m_k}{\rho_k} - \vM{y_{\ell'} }{m_K}{\rho_K}}{p(y_\ell' \mid \v{z_{., \ell'}}, \v{m}, \v{\rho})} L_{\ell', \ell}  -  \widetilde{z}_\ell.
\]
\end{lemma}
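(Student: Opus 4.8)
The plan is to mirror the derivations of the two lemmas immediately preceding this one: the statement is essentially their combination, being the noncentered version of the centered $\textit{SvM-p-2}$ update, or equivalently the $K=2$ specialization of the noncentered $K$-component update with the generalized inverse logit replaced by the ordinary inverse logit. First I would isolate the part of the log posterior that depends on $\widetilde{z}_\ell$. Under the noncentered parametrization $\v{z} = L\v{\widetilde{z}}$ with $\Sigma = LL^T$ and $\v{\widetilde{z}} \sim \mathcal{N}(0,\mathbbm{I}_N)$, and marginalizing out the cluster labels,
\[
\log p(\v{\widetilde{z}}, m_1, m_2, \rho_1, \rho_2 \mid \v{y}) = \sum_{\ell'} \log p(y_{\ell'} \mid \v{z}_{.,\ell'}, m_1, m_2, \rho_1, \rho_2) - \tfrac{1}{2}\v{\widetilde{z}}^T\v{\widetilde{z}} + C,
\]
where $p(y_{\ell'} \mid \cdot) = \lambda_{1,\ell'}\vM{y_{\ell'}}{m_1}{\rho_1} + (1-\lambda_{1,\ell'})\vM{y_{\ell'}}{m_2}{\rho_2}$ and $\lambda_{1,\ell'} = \invlogit{z_{\ell'}}$.

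Next I would apply the chain rule along $\widetilde{z}_\ell \mapsto z_{\ell'} \mapsto \lambda_{1,\ell'} \mapsto p(y_{\ell'}\mid\cdot)$. Since $z_{\ell'} = (L\v{\widetilde{z}})_{\ell'} = \sum_j L_{\ell',j}\widetilde{z}_j$, we get $\frac{dz_{\ell'}}{d\widetilde{z}_\ell} = L_{\ell',\ell}$, so every observation contributes. From the logistic-derivative computation already carried out in the centered $\textit{SvM-p-2}$ lemma we have $\frac{d\lambda_{1,\ell'}}{dz_{\ell'}} = \lambda_{1,\ell'}(1-\lambda_{1,\ell'})$, and clearly $\frac{d}{d\lambda_{1,\ell'}}p(y_{\ell'}\mid\cdot) = \vM{y_{\ell'}}{m_1}{\rho_1} - \vM{y_{\ell'}}{m_2}{\rho_2}$; taking the derivative of the logarithm divides this by $p(y_{\ell'}\mid\cdot)$. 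Differentiating the Gaussian prior term $-\tfrac{1}{2}\v{\widetilde{z}}^T\v{\widetilde{z}}$ contributes $-\widetilde{z}_\ell$. Summing over $\ell'$ and recalling that the momentum update $-\frac{\partial U}{\partial q_\ell}$ is exactly the gradient of the log posterior gives the claimed expression (with the natural reading that the factor inside the sum is $\lambda_{1,\ell'}(1-\lambda_{1,\ell'})$).

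I do not anticipate any real obstacle here: the lemma is a short chain-rule calculation, a direct corollary of the two preceding lemmas. The only thing to be careful about is index bookkeeping in the chain rule — identifying that a perturbation of $\widetilde{z}_\ell$ propagates to all $z_{\ell'}$ through the column $L_{\cdot,\ell}$ — and confirming the logistic-derivative identity, both of which are already established earlier in the appendix.
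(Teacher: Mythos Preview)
Your proposal is correct and follows essentially the same approach as the paper: both isolate the $\widetilde{z}_\ell$-dependent part of the log posterior, apply the chain rule through $\widetilde{z}_\ell \to z_{\ell'} \to \lambda_{1,\ell'} \to \log p(y_{\ell'}\mid\cdot)$, invoke $\frac{dz_{\ell'}}{d\widetilde{z}_\ell}=L_{\ell',\ell}$ and the logistic-derivative identity $\frac{d\lambda_{1,\ell'}}{dz_{\ell'}}=\lambda_{1,\ell'}(1-\lambda_{1,\ell'})$ from the preceding centered lemma, and pick up $-\widetilde{z}_\ell$ from the standard-normal prior. Your observation that the summand should carry the index $\ell'$ on $\lambda_{1,\ell'}(1-\lambda_{1,\ell'})$ is also consistent with the corresponding display in the main text.
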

\begin{proof}
We have that
\begin{equation*}
    \begin{aligned}
        & \frac{d}{d \widetilde{z}_\ell} \log(P(z, m_1, m_2, \rho_1, \rho_2 \mid \bm{y}))\\
        &\propto \frac{d}{d \widetilde{z}_\ell} \log(P(y_\ell \mid z, m_1, m_2, \rho_1, \rho_2)) + \log(\mathcal{N}(\widetilde{z}_\ell \mid 0, 1))\\
        &= \sum_{\ell' = 1}^N \frac{d z_{\ell'}}{d \widetilde{z}_\ell} \frac{d\lambda_{1, \ell'}}{d z_{\ell'}} \frac{d}{d\lambda_{1, \ell'}} \log(P(y_\ell \mid z, m_1, m_2, \rho_1, \rho_2)) -  \widetilde{z}_\ell\\
        &= \sum_{\ell' = 1}^N \frac{d z_{\ell'}}{d \widetilde{z}_\ell} \frac{d\lambda_{1, \ell'}}{d z_{\ell'}} \frac{\vM{y_{\ell'}}{m_k}{\rho_k} - \vM{y_{\ell'} }{m_K}{\rho_K}}{p(y_\ell \mid z_\ell, m_1, m_2, \rho_1, \rho_2)} -  \widetilde{z}_\ell.
    \end{aligned}
\end{equation*}
As before, we need $\frac{d \psi_{\lambda'}}{d z_{\ell'}}$ and $\frac{d z_{\ell'}}{d \widetilde{z}_\ell}$ for the chain rule. We get $\frac{d \lambda_{\ell'}}{d z_{\ell'}}$ from before. Then, as $z_{\ell'} = L_{j, .} \widetilde{z}$, $\frac{d z_{\ell'}}{d \widetilde{z}_\ell} = L_{\ell', \ell}$. Putting these terms together gives us the quantity in the lemma.
\end{proof}

\subsection{Expectation Maximization}
\subsubsection{SvM-c}
\begin{lemma}
The $\nu_k$ that maximizes the conditional log posterior in \eqref{eq:SvM-c_exp_cond_log_post} is
\begin{equation}
    \begin{aligned}
        \frac{\sum_\ell \frac{r_{k, \ell} \varphi_{k, \ell}}{\varsigma^2}}{\sum_\ell \frac{r_{k, \ell}}{\varsigma^2} + \frac{1}{\tau^2}}.
    \end{aligned}
\end{equation}
\end{lemma}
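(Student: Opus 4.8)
The plan is to treat this as a one–dimensional concave maximization in $\nu_k$, holding all other parameters and the responsibilities $r_{k,\ell}$ fixed, exactly as in the M–step. First I would isolate from the expected conditional log unnormalized posterior in \eqref{eq:SvM-c_exp_cond_log_post} only those summands that involve $\nu_k$: these are the lower level of the hierarchical prior on the concentration parameters, $\sum_\ell r_{k,\ell}\log \textrm{N}(\varphi_{k,\ell}\mid \nu_k,\varsigma^2)$, together with the hyperprior contribution $\log \textrm{N}(\nu_k\mid 0,\tau^2)$. Every other term — the von Mises likelihood pieces, the $\widetilde z$ prior, and the hyperpriors on $\nu_{k'}$ for $k'\neq k$ — is constant in $\nu_k$, so the optimization decouples over $k$ and reduces to maximizing these two pieces alone.

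Next I would substitute the Gaussian log–densities, discard additive constants, and collect powers of $\nu_k$, arriving at
\[
- \frac{1}{2\varsigma^2}\sum_\ell r_{k,\ell}\,(\varphi_{k,\ell}-\nu_k)^2 \;-\; \frac{\nu_k^2}{2\tau^2} \;+\; \textrm{const},
\]
a quadratic in $\nu_k$ whose leading coefficient is $-\frac{1}{2}\bigl(\sum_\ell r_{k,\ell}/\varsigma^2 + 1/\tau^2\bigr) < 0$; hence the objective is strictly concave with a unique maximizer. Differentiating with respect to $\nu_k$ gives $\frac{1}{\varsigma^2}\sum_\ell r_{k,\ell}(\varphi_{k,\ell}-\nu_k) - \nu_k/\tau^2$; setting this to zero and grouping the $\nu_k$ terms yields the linear equation $\nu_k\bigl(\sum_\ell r_{k,\ell}/\varsigma^2 + 1/\tau^2\bigr) = \sum_\ell r_{k,\ell}\varphi_{k,\ell}/\varsigma^2$, which rearranges to the claimed precision–weighted average. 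I would close by invoking the strict concavity established above to confirm that this stationary point is the global maximizer.

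There is no real obstacle here: the argument is the standard conjugate–normal posterior–mean computation, with the responsibilities $r_{k,\ell}$ playing the role of observation weights and the prior mean being zero. The only points deserving a sentence of care are the bookkeeping that confirms the maximization truly separates across $k$ (so the other components' parameters do not enter the $\nu_k$–gradient) and the sign check on the quadratic coefficient, which guarantees that the vanishing derivative locates a maximum rather than a minimum.
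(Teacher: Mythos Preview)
Your proposal is correct and follows essentially the same approach as the paper's proof: isolate the $\nu_k$-dependent terms (the weighted Gaussian prior on $\varphi_{k,\ell}$ and the hyperprior on $\nu_k$), differentiate the resulting quadratic, and solve the first-order condition. Your explicit concavity check is a small addition the paper omits, but otherwise the arguments are the same.
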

\begin{proof}
Gathering the terms that depend on $\nu_k$ and taking the derivative, we have that 
\begin{equation*}
    \begin{aligned}
        \frac{d}{d\nu_k} \left(\sum_\ell -\frac{r_{k, \ell}}{2 * \varsigma^2}(\varphi_k - \nu_k)^2 - \frac{1}{2 * \tau^2}\nu_k^2\right) &= 0\\
        \sum_\ell \frac{r_{k, \ell}}{\varsigma^2}(\varphi_k - \nu_k) - \frac{1}{\tau^2}\nu_k &= 0\\
        \nu_k(\sum_\ell \frac{r_{k, \ell}}{\varsigma^2} - \frac{1}{\tau^2}) &= \sum_\ell \frac{r_{k, \ell}}{\varsigma^2}\varphi_k\\
        \nu_k &= \frac{\sum_\ell \frac{r_{k, \ell} \varphi_{k, \ell}}{\varsigma^2}}{\sum_\ell \frac{r_{k, \ell}}{\varsigma^2} + \frac{1}{\tau^2}}.
    \end{aligned}
\end{equation*}
\end{proof}

\begin{lemma}
The gradient of $\varphi_{k, \ell}$ with respect to the conditional log posterior in \eqref{eq:SvM-c_exp_cond_log_post} is
\begin{equation*}
    \begin{aligned}
        r_{k, \ell}\rho_{k, \ell} \left(\cos(y_\ell - m_{k, \ell}) - \frac{I_{-1}(\rho_{k, \ell})}{I_{0}(\rho_{k, \ell})}\right) - \frac{1}{\varsigma^2}(\varphi_{k, \ell} - \nu_k).
    \end{aligned}
\end{equation*}
\end{lemma}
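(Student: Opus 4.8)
The plan is to isolate the terms in the expected conditional log unnormalized posterior \eqref{eq:SvM-c_exp_cond_log_post} that actually depend on $\varphi_{k,\ell}$ and differentiate only those. Since $m_{k,\ell} = \arctan^*(z_{k,1,\ell},z_{k,2,\ell})$ is a function of the Gaussian process draws and not of $\varphi$, while $\rho_{k,\ell} = \exp(\varphi_{k,\ell})$, exactly two summands involve $\varphi_{k,\ell}$: the likelihood contribution $r_{k,\ell}\log\vM{y_\ell}{m_{k,\ell}}{\rho_{k,\ell}}$ from the single index pair $(k,\ell)$, and the prior contribution $\log\textrm{N}(\varphi_{k,\ell}\mid\nu_k,\varsigma^2)$. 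All other pieces — the mixing weights $\log\lambda_k$, the $\textrm{N}(\widetilde z_{k,\ell}\mid 0,1)$ factors, the remaining von Mises factors, and the hyperprior on $\nu_k$ — are constant in $\varphi_{k,\ell}$ and drop out.

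First I would expand $\log\vM{y_\ell}{m_{k,\ell}}{\rho_{k,\ell}} = \rho_{k,\ell}\cos(y_\ell - m_{k,\ell}) - \log(2\pi) - \log I_0(\rho_{k,\ell})$ from the definition of the von Mises density, and apply the chain rule through $\rho_{k,\ell} = \exp(\varphi_{k,\ell})$, which contributes a factor $d\rho_{k,\ell}/d\varphi_{k,\ell} = \rho_{k,\ell}$. This yields $r_{k,\ell}\rho_{k,\ell}\bigl(\cos(y_\ell - m_{k,\ell}) - I_0'(\rho_{k,\ell})/I_0(\rho_{k,\ell})\bigr)$. Then I would invoke the standard identity for the modified Bessel function of the first kind, $I_0'(\rho) = I_1(\rho)$, together with the integer-order symmetry $I_{-1}(\rho) = I_1(\rho)$, to rewrite $I_0'(\rho_{k,\ell})/I_0(\rho_{k,\ell})$ as $I_{-1}(\rho_{k,\ell})/I_0(\rho_{k,\ell})$, matching the form stated in the lemma and the notation used in the EM M-step. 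Differentiating the Gaussian prior term $-\tfrac{1}{2\varsigma^2}(\varphi_{k,\ell}-\nu_k)^2$ gives $-\tfrac{1}{\varsigma^2}(\varphi_{k,\ell}-\nu_k)$, and adding the two contributions produces the claimed gradient.

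The only mildly delicate point is the Bessel-function bookkeeping: one must recall that differentiating $\log I_0(\rho)$ brings down $I_0'(\rho)/I_0(\rho)$ and then correctly identify $I_0'(\rho)$ with $I_1(\rho) = I_{-1}(\rho)$, using either the series definition in \eqref{eq:bessel} or the recurrence $I_n'(x) = \tfrac12\bigl(I_{n-1}(x)+I_{n+1}(x)\bigr)$ specialized at $n=0$. Everything else is a routine application of linearity of differentiation and the chain rule, so I expect the write-up to be short.
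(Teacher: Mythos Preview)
Your proposal is correct and follows essentially the same approach as the paper: isolate the two terms in \eqref{eq:SvM-c_exp_cond_log_post} that depend on $\varphi_{k,\ell}$, apply the chain rule through $\rho_{k,\ell}=\exp(\varphi_{k,\ell})$, and use $\frac{d}{d\rho}\log I_0(\rho)=I_{-1}(\rho)/I_0(\rho)$. The only cosmetic difference is that you justify the Bessel derivative via $I_0'=I_1=I_{-1}$, whereas the paper simply cites a reference for $\frac{d}{d\rho}\log I_0(\rho)=I_{-1}(\rho)/I_0(\rho)$.
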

\begin{proof}
Gathering the terms that depend on $\varphi_{k, \ell}$, we have that 
\begin{equation*}
    \begin{aligned}
        r_{k, \ell}(\rho_{k, \ell}\cos(y_\ell - m_{k, \ell}) - \log(I_{0}(\rho_{k, \ell}))) - \frac{1}{2 * \varsigma^2}(\varphi_{k, \ell} - \nu_k)^2).
    \end{aligned}
\end{equation*}
Then, because
\[
\frac{d}{d\varphi_i} \rho_{k, \ell} = \frac{d}{d\varphi_i} \exp(\varphi_{k, \ell}) = \exp(\varphi_{k, \ell}) = \rho_{k, \ell},
\]
and
\[
\frac{d}{d\rho_{k, \ell}} \log(I_{0}(\rho_{k, \ell})) = \frac{I_{-1}(\rho_{k, \ell})}{I_{0}(\rho_{k, \ell})}
\]
according to Wolfram Research \citep{WolframResearchModifiedBesselFunction}, we get the result in the lemma from the chain rule.
\end{proof}

\begin{lemma}
The gradient of $\widetilde{z}_{k, 1, \ell}$ and $\widetilde{z}_{k, 2, \ell}$ from the conditional log posterior given in \eqref{eq:SvM-c_exp_cond_log_post} is
\begin{equation*}
    \begin{aligned}
        \frac{\partial}{\partial \widetilde{z}_{k, 1, \ell}}  \E{\log(\widetilde{p}(\v{\zeta}, \Theta \mid \v{y}, \v{x}))} &= \sum_{\ell'} L_{\ell', \ell} r_{k, \ell'} \frac{-z_{k, 2, \ell'}}{z^2_{k, 1, \ell'} + z^2_{k, 2, \ell'}}\rho_{k, \ell'}\sin(y_{\ell'} - m_{k, \ell'}) - \widetilde{z}_{k, \ell}.\\
        \frac{\partial}{\partial \widetilde{z}_{k, 2, \ell}}  \E{\log(\widetilde{p}(\v{\zeta}, \Theta \mid \v{y}, \v{x}))} &= \sum_{\ell'} L_{\ell', \ell} r_{k, \ell'} \frac{z_{k, 1, \ell'}}{z^2_{k, 1, \ell'} + z^2_{k, 2, \ell'}}\rho_{k, \ell'}\sin(y_{\ell'} - m_{k, \ell'}) - \widetilde{z}_{k, \ell}.
    \end{aligned}
\end{equation*}
\end{lemma}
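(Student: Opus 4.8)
The plan is to differentiate the expected conditional log unnormalized posterior in \eqref{eq:SvM-c_exp_cond_log_post} directly, keeping in mind that in the M-step the responsibilities $r_{k,\ell}$ are frozen at their E-step values, so they are treated as constants and are not themselves differentiated. First I would isolate the summands of \eqref{eq:SvM-c_exp_cond_log_post} that actually involve $\widetilde{z}_{k,1,\ell}$. The term $\sum_k\log(\lambda_k)$ is independent of the Gaussian-process draws; the terms $\log(\textrm{N}(\varphi_{k,\ell}\mid\nu_k,\varsigma^2))$ and $\log(\textrm{N}(\nu_k\mid 0,\tau^2))$ depend only on $\v{\varphi}_k$ and $\nu_k$; and in $\log(\vM{y_\ell}{m_{k,\ell}}{\rho_{k,\ell}}) = \rho_{k,\ell}\cos(y_\ell-m_{k,\ell})-\log(2\pi I_0(\rho_{k,\ell}))$ the normalizing constant depends on $\rho_{k,\ell}$ (hence on $\varphi_{k,\ell}$) but not on $\v{z}_k$. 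So the only contributions come from the aggregated cosine term $\sum_{\ell'} r_{k,\ell'}\rho_{k,\ell'}\cos(y_{\ell'}-m_{k,\ell'})$ and from the standard normal prior $\log(\textrm{N}(\widetilde{z}_{k,\ell}\mid 0,1))$, whose derivative is $-\widetilde{z}_{k,\ell}$.

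Next I would run the chain rule along the composition $\widetilde{z}_{k,1}\mapsto z_{k,1}=L\widetilde{z}_{k,1}+\mu_{k,1}\mapsto m_{k,\ell'}=\arctan^*(z_{k,1,\ell'},z_{k,2,\ell'})$. From $z_{k,1,\ell'}=(L\widetilde{z}_{k,1})_{\ell'}+\mu_{k,1,\ell'}$ we get $\partial z_{k,1,\ell'}/\partial\widetilde{z}_{k,1,\ell}=L_{\ell',\ell}$; Lemma \ref{lemma:arctan_star_derivative} gives $\partial m_{k,\ell'}/\partial z_{k,1,\ell'}=-z_{k,2,\ell'}/(z_{k,1,\ell'}^2+z_{k,2,\ell'}^2)$; and $\partial/\partial m_{k,\ell'}\big[\rho_{k,\ell'}\cos(y_{\ell'}-m_{k,\ell'})\big]=\rho_{k,\ell'}\sin(y_{\ell'}-m_{k,\ell'})$. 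Multiplying these factors, summing over $\ell'$, and adding the prior contribution $-\widetilde{z}_{k,\ell}$ yields the first stated formula. The second formula follows by the identical argument, using instead $\partial m_{k,\ell'}/\partial z_{k,2,\ell'}=z_{k,1,\ell'}/(z_{k,1,\ell'}^2+z_{k,2,\ell'}^2)$ from the same lemma.

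Since every step is an elementary application of the chain rule, there is no genuine obstacle; the only thing requiring care is the bookkeeping — specifically, not accidentally differentiating the frozen responsibilities $r_{k,\ell}$, and recognizing that the Bessel-function normalizer of the von Mises density drops out because it is free of $\v{z}_k$. One may also observe that $m_{k,\ell'}$ is only defined on the full-measure set where $(z_{k,1,\ell'},z_{k,2,\ell'})\neq(0,0)$, so the $\arctan^*$ derivatives in Lemma \ref{lemma:arctan_star_derivative} are valid wherever the gradient is needed. (The paper writes $-\widetilde{z}_{k,\ell}$ for the prior term in both lines under the same notational convention used in \eqref{lemma:SvM-c_non_centered_m_update_1}, even though strictly the first line produces $-\widetilde{z}_{k,1,\ell}$ and the second $-\widetilde{z}_{k,2,\ell}$.)
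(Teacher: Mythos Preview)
Your proposal is correct and follows essentially the same route as the paper's own proof: isolate the von Mises cosine terms and the standard-normal prior, then apply the chain rule through $\widetilde{z}_{k,1}\mapsto z_{k,1}=L\widetilde{z}_{k,1}\mapsto m_{k,\ell'}=\arctan^*(z_{k,1,\ell'},z_{k,2,\ell'})$, using the $\arctan^*$ derivatives and $\partial z_{k,1,\ell'}/\partial\widetilde{z}_{k,1,\ell}=L_{\ell',\ell}$. The only cosmetic difference is that the paper re-derives the $\arctan^*$ partial derivatives inline rather than citing Lemma~\ref{lemma:arctan_star_derivative}, and it does not spell out the ``frozen $r_{k,\ell}$'' and ``Bessel constant drops out'' observations that you (correctly) make explicit.
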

\begin{proof}
Without loss of generality, gathering the terms that depend on $\widetilde{z}_{k, 1, \ell}$, we have that the derivative is
\begin{equation*}
    \begin{aligned}
        \sum_{\ell'} \left(r_{k, \ell'}\rho_{k, \ell'}\sin(y_{\ell'} - m_{k, \ell'})\right) \frac{d m_{k, \ell'}}{d z_{k, 1, \ell'}} \frac{d z_{k, 1, \ell'}}{d \widetilde{z}_{k, 1 \ell'}} -  \widetilde{z}_{k, 1, \ell}.
    \end{aligned}
\end{equation*}
We need $\frac{d m_{k, \ell'}}{d z_{k, 1, \ell'}}$ and $\frac{d z_{k, 1, \ell'}}{d \widetilde{z}_{k, 1 \ell'}}$ for the chain rule. As before, because $z_{k, 1, \ell'} = L_{\ell', .} \widetilde{z}_{k, 1}$, $\frac{d z_{k, 1, \ell'}}{d \widetilde{z}_{k, 1, \ell'}} = L_{\ell', \ell}$. Then, because $m_{k, \ell'} = \arctan^*(z_{k, 1, \ell'}, z_{k, 2, \ell'})$ and if $C$ is a constant in $\mathbbm{R}$,
\begin{equation*}
    \begin{aligned}
        \frac{d m_{k, \ell'}}{d z_{k, 1, \ell'}}
        &= \frac{d}{d z_{k, 1, \ell'}} \arctan(z_{k, 1, \ell'}; z_{k, 2, \ell'}) + C\\ 
        &= \frac{1}{1 + \left(\frac{z_{k, 2, \ell'}}{z_{k, 1, \ell'}}\right)^2}\left(-\frac{z_{k, 2, \ell'}}{z_{k, 1, \ell'}^2}\right)\\
        &= \frac{-z_{k, 2, \ell'}}{z_{k, 1, \ell'}^2 + z_{k, 2, \ell'}^2}.
    \end{aligned}
\end{equation*}
Meanwhile, following a similar calculation, we get that
\[
\frac{d m_{k, \ell'}}{d z_{k, 2, \ell'}} = \frac{z_{k, 1, \ell'}}{z_{k, 1, \ell'}^2 + z_{k, 2, \ell'}^2}.
\]
Putting these terms together gives us the quantity in the lemma.
\end{proof}

\subsubsection{SvM-p}
\begin{lemma}
The gradient of $\widetilde{z}_{k, \ell}$ from the conditional log posterior in \eqref{eq:SvM-p_exp_cond_log_post} is
\begin{align*}
\sum_{\ell'}  (r_{k, \ell'} - \lambda_{k, \ell'}) L_{\ell', \ell} - \widetilde{z}_{k, \ell}.
\end{align*}
\end{lemma}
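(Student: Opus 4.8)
The plan is to differentiate the expected conditional log unnormalized posterior in \eqref{eq:SvM-p_exp_cond_log_post} with respect to $\widetilde{z}_{k, \ell}$, keeping in mind that in the \textit{SvM-p} model the only quantities depending on $\v{\widetilde{z}_k}$ are the mixing weights $\lambda_{k', \ell'}$ (through $\v{z_k} = L\v{\widetilde{z}_k}$ and the generalized inverse logit $\Psi^{-1}$) together with the prior term $\log(\textrm{N}(\v{\widetilde{z}_k} \mid 0, \mathbbm{I}_N))$. In particular the component means $m_{k'}$ and concentrations $\rho_{k'}$ carry their own $\textrm{Unif}$ and $\Gamma$ priors and are \emph{not} functions of the Gaussian process draws, so the terms $\log(\vM{y_\ell}{m_{k'}}{\rho_{k'}})$, $\log(\textrm{Unif}(m_{k'} \mid 0, 2\pi))$, and $\log(\Gamma(\rho_{k'} \mid 1, 1))$ contribute nothing. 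This reduces the gradient to $\frac{\partial}{\partial \widetilde{z}_{k, \ell}}\big(\sum_{\ell'}\sum_{k'} r_{k', \ell'}\log(\lambda_{k', \ell'})\big) - \widetilde{z}_{k, \ell}$, where the last term is $\frac{\partial}{\partial \widetilde{z}_{k, \ell}}\big(-\tfrac12\lVert\v{\widetilde{z}_k}\rVert^2\big) = -\widetilde{z}_{k, \ell}$.

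Next I would unwind the chain rule $\widetilde{z}_{k, \ell} \mapsto z_{k, \ell'} \mapsto \lambda_{k', \ell'}$. Since $z_{k, \ell'} = (L\v{\widetilde{z}_k})_{\ell'}$, we have $\frac{d z_{k, \ell'}}{d \widetilde{z}_{k, \ell}} = L_{\ell', \ell}$. For the softmax-type map $\Psi^{-1}$, I would reuse the Jacobian computations already carried out in the proof of the centered-parametrization momentum lemma for \textit{SvM-p}, namely $\frac{d\lambda_{k, \ell'}}{d z_{k, \ell'}} = \lambda_{k, \ell'}(1 - \lambda_{k, \ell'})$ and $\frac{d\lambda_{k', \ell'}}{d z_{k, \ell'}} = -\lambda_{k, \ell'}\lambda_{k', \ell'}$ for $k' \neq k$ (the $k' = K$ case gives the same expression). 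Dividing by $\lambda_{k', \ell'}$ yields the clean form $\frac{d\log(\lambda_{k', \ell'})}{d z_{k, \ell'}} = \indFct{k' = k} - \lambda_{k, \ell'}$.

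Summing over $k'$ then invokes the identity $\sum_{k'} r_{k', \ell'} = 1$, immediate from the definition of $r_{k, \ell}$ in \eqref{eq:SvM-p_r_k_m} as a normalized responsibility:
\[
\sum_{k'} r_{k', \ell'}\,\frac{d\log(\lambda_{k', \ell'})}{d z_{k, \ell'}} = r_{k, \ell'} - \lambda_{k, \ell'}\sum_{k'} r_{k', \ell'} = r_{k, \ell'} - \lambda_{k, \ell'}.
\]
Multiplying by $\frac{d z_{k, \ell'}}{d \widetilde{z}_{k, \ell}} = L_{\ell', \ell}$, summing over $\ell'$, and adding back the prior contribution $-\widetilde{z}_{k, \ell}$ gives $\sum_{\ell'}(r_{k, \ell'} - \lambda_{k, \ell'})L_{\ell', \ell} - \widetilde{z}_{k, \ell}$, as claimed.

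The calculation is essentially routine; the points needing a little care are (i) recognizing that $m_{k'}$ and $\rho_{k'}$ are genuinely separate parameters in \textit{SvM-p} and so drop out of this gradient, and (ii) correctly handling the $k' = K$ term in the $\Psi^{-1}$ Jacobian, since there is no $\widetilde{z}_K$ even though $\lambda_K$ still depends on every $z_k$ with $k \leq K-1$. Both the ``$k' \neq k$, $k' \neq K$'' case and the ``$k' = K$'' case contribute the same $-\lambda_{k, \ell'}$, so the collapse via $\sum_{k'} r_{k', \ell'} = 1$ goes through unchanged. That bookkeeping is the one mild obstacle.
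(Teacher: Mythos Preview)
Your proposal is correct and follows essentially the same route as the paper: isolate the terms depending on $\widetilde{z}_{k,\ell}$, apply the chain rule through $z_{k,\ell'} = (L\widetilde{z}_k)_{\ell'}$ and the softmax Jacobian, and collapse using $\sum_{k'} r_{k',\ell'} = 1$. The only cosmetic difference is that you pass directly to the log-derivative $\frac{d\log(\lambda_{k',\ell'})}{d z_{k,\ell'}} = \indFct{k'=k} - \lambda_{k,\ell'}$, whereas the paper first writes $\frac{\partial \lambda_{k',\ell'}}{\partial z_{k,\ell'}} \cdot \frac{r_{k',\ell'}}{\lambda_{k',\ell'}}$ and then simplifies; the computation is identical.
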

\begin{proof}
Gathering terms that depend on $\widetilde{z}_{k, \ell}$ and using work from earlier, we have that 
\begin{equation*}
    \begin{aligned}
        \frac{\partial}{\partial \widetilde{z}_{k, \ell}}  \E{\log(\widetilde{p}(\v{\zeta}, \Theta \mid \v{y}, \v{x}))} &= \frac{\partial}{\partial \widetilde{z}_{k, \ell}} \left(\sum_{\ell'} \sum_{k'} r_{k', \ell'}\log(\lambda_{k', \ell'}) - \frac{1}{2} \widetilde{z}^T_{k}\widetilde{z}_{k} \right)\\
        &= \sum_{\ell'} \sum_{k'} \frac{\partial z_{k, \ell'}}{\partial \widetilde{z}_{k, \ell}} \frac{\partial \lambda{z}_{k', \ell'}}{\partial z_{k, \ell'}} \frac{\partial}{\partial \lambda{z}_{k', \ell'}}  r_{k', \ell'} \log(\lambda_{k', \ell'}) - \widetilde{z}_{k, \ell}\\
        &= \sum_{\ell'} \sum_{k'} \frac{\partial z_{k, \ell'}}{\partial \widetilde{z}_{k, \ell}} \frac{\partial \lambda{z}_{k', \ell'}}{\partial z_{k, \ell'}} \frac{r_{k', \ell'}}{\lambda_{k', \ell'}} - \widetilde{z}_{k, \ell}\\
        &= \sum_{\ell'} L_{\ell', \ell} \lambda_{k, \ell'}(1 - \lambda_{k, \ell'})\frac{r_{k, \ell'}}{\lambda_{k, \ell'}} -\\
        & \qquad\sum_{k' \neq k} L_{\ell', \ell}\lambda_{k, \ell'}\lambda_{k', \ell'}\frac{r_{k', \ell'}}{\lambda_{k', \ell'}}  - \widetilde{z}_{k, \ell}\\
        &= \sum_{\ell'} L_{\ell', \ell} (1 - \lambda_{k, \ell'})r_{k, \ell'} - \sum_{k' \neq k} L_{\ell', \ell}\lambda_{k, \ell'}r_{k', \ell'}  - \widetilde{z}_{k, \ell}\\
        &= \sum_{\ell'} L_{\ell', \ell} (r_{k, \ell'} - \lambda_{k, \ell'})  - \widetilde{z}_{k, \ell}\\
    \end{aligned}
\end{equation*}
\end{proof}

\begin{lemma}
The gradient of $\widetilde{z}_{k, \ell}$ from the conditional log posterior in \eqref{eq:SvM-p_exp_cond_log_post} is
\begin{equation*}
    \begin{aligned}
        \sum_{\ell'} & L_{\ell', \ell} (r_{i, 1} - \lambda_{1, \ell}) - \widetilde{z}_\ell.
    \end{aligned}
\end{equation*}
\end{lemma}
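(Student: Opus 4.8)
The plan is to specialize the gradient computation already carried out in the preceding general-$K$ SvM-p EM lemma to the two-component case $K=2$ of the \textit{SvM-p-2} model described in \eqref{model:SvM-p-2}. First I would isolate the summands of the expected conditional log unnormalized posterior \eqref{eq:SvM-p_exp_cond_log_post} that actually depend on $\widetilde{z}_\ell$: with $K=2$ the single Gaussian process is indexed by $k=1$ (so the $k$ subscript is dropped), and the relevant terms are $\sum_{\ell'}\big(r_{1,\ell'}\log\lambda_{1,\ell'}+r_{2,\ell'}\log\lambda_{2,\ell'}\big)$ together with the Gaussian prior contribution $-\tfrac{1}{2}\widetilde{z}^{T}\widetilde{z}$; everything else (the von Mises likelihood factors, the priors on $m_k$ and $\rho_k$) is constant in $\widetilde{z}_\ell$.

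Next I would differentiate by the chain rule, passing through $z_{\ell'}=L_{\ell',\cdot}\,\widetilde{z}$, so that $\partial z_{\ell'}/\partial\widetilde{z}_\ell = L_{\ell',\ell}$, and through the mixing probabilities. Since here $\lambda_{1,\ell'}=\invlogit{z_{\ell'}}$ and $\lambda_{2,\ell'}=1-\invlogit{z_{\ell'}}$, the derivatives established in the SvM-p EM lemma give $d\lambda_{1,\ell'}/dz_{\ell'}=\lambda_{1,\ell'}(1-\lambda_{1,\ell'})$ and $d\lambda_{2,\ell'}/dz_{\ell'}=-\lambda_{1,\ell'}\lambda_{2,\ell'}$. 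Substituting these and cancelling the $\lambda$ factors against the $1/\lambda$ produced by differentiating the logarithms yields, for each $\ell'$, the bracket $r_{1,\ell'}(1-\lambda_{1,\ell'})-r_{2,\ell'}\lambda_{1,\ell'}$, which I would rearrange to $r_{1,\ell'}-\lambda_{1,\ell'}(r_{1,\ell'}+r_{2,\ell'})$.

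Then I would invoke the normalization $r_{1,\ell'}+r_{2,\ell'}=1$, which is immediate from the definition of the responsibilities in \eqref{eq:SvM-p_r_k_m} (they sum to one over the two components); this collapses the bracket to $r_{1,\ell'}-\lambda_{1,\ell'}$. Multiplying by $L_{\ell',\ell}$, summing over $\ell'$, and adding the derivative $-\widetilde{z}_\ell$ of the Gaussian prior term gives exactly $\sum_{\ell'}L_{\ell',\ell}(r_{1,\ell'}-\lambda_{1,\ell'})-\widetilde{z}_\ell$, matching the stated expression (with the obvious reading that $r_{i,1}$ and $\lambda_{1,\ell}$ in the statement should be $r_{1,\ell'}$ and $\lambda_{1,\ell'}$). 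There is no genuine obstacle beyond index bookkeeping; the only step that warrants a word of justification is the responsibility-normalization that makes the two-component gradient collapse to this compact form, and that is precisely the $K=2$ instance of the reduction already performed in the general SvM-p EM lemma.
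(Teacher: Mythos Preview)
Your proposal is correct and follows essentially the same route as the paper: isolate the $\widetilde z_\ell$-dependent terms $\sum_{\ell'}\big(r_{1,\ell'}\log\lambda_{1,\ell'}+(1-r_{1,\ell'})\log(1-\lambda_{1,\ell'})\big)-\tfrac12\widetilde z^T\widetilde z$, apply the chain rule through $\partial z_{\ell'}/\partial\widetilde z_\ell=L_{\ell',\ell}$ and the logistic derivative $\lambda_{1,\ell'}(1-\lambda_{1,\ell'})$, and simplify $r_{1,\ell'}(1-\lambda_{1,\ell'})-(1-r_{1,\ell'})\lambda_{1,\ell'}$ to $r_{1,\ell'}-\lambda_{1,\ell'}$. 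The only cosmetic difference is that the paper writes $r_{2,\ell'}=1-r_{1,\ell'}$ from the outset rather than invoking the responsibility normalization at the end, and it redoes the computation directly rather than explicitly specializing the general-$K$ lemma.
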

\begin{proof}
Gathering the terms that depend on $\widetilde{z}_{k, \ell}$, we have that 
\begin{equation*}
    \begin{aligned}
        \sum_{\ell'} r_{1, k}\log(\lambda_{1, \ell}) + (1 - r_{1, k})\log(1 - \lambda_{1, \ell}) - \widetilde{z}_\ell.
    \end{aligned}
\end{equation*}
As before, we need $\frac{d m_{k, \ell'}}{d z_k[j]}$ and $\frac{d z_k[j]}{d \widetilde{z}_{k, \ell'}}$ for the chain rule. The latter is still the same, i.e. $L_{\ell', \ell}$. Then, because $\lambda_{1, \ell'} = \invlogit{z_{\ell'}}$, we have from earlier that
\[
\frac{d \lambda_{1, \ell'}}{d z_{\ell'}} = (1 - \lambda_{1, \ell'})(\lambda_{1, \ell'}).
\]
As a result, we get that 
\[
\frac{d}{d z_{\ell'}} r_{1, k}\log(\lambda_{1, \ell}) + (1 - r_{1, k})\log(1 - \lambda_{1, \ell}) =
r_{1, k}(1 - \lambda_{1, \ell}) - (1 - r_{1, k})\lambda_{1, \ell},
\]
which simplifies to the quantity in the lemma.
\end{proof}

\section{Additional Simulation and Results}
\subsection{Real Data Duplicates and Posterior Plots}
\begin{figure}[!t]
\centering
\begin{subfigure}{.3\textwidth}
\centering
\includegraphics[width = .8\textwidth]{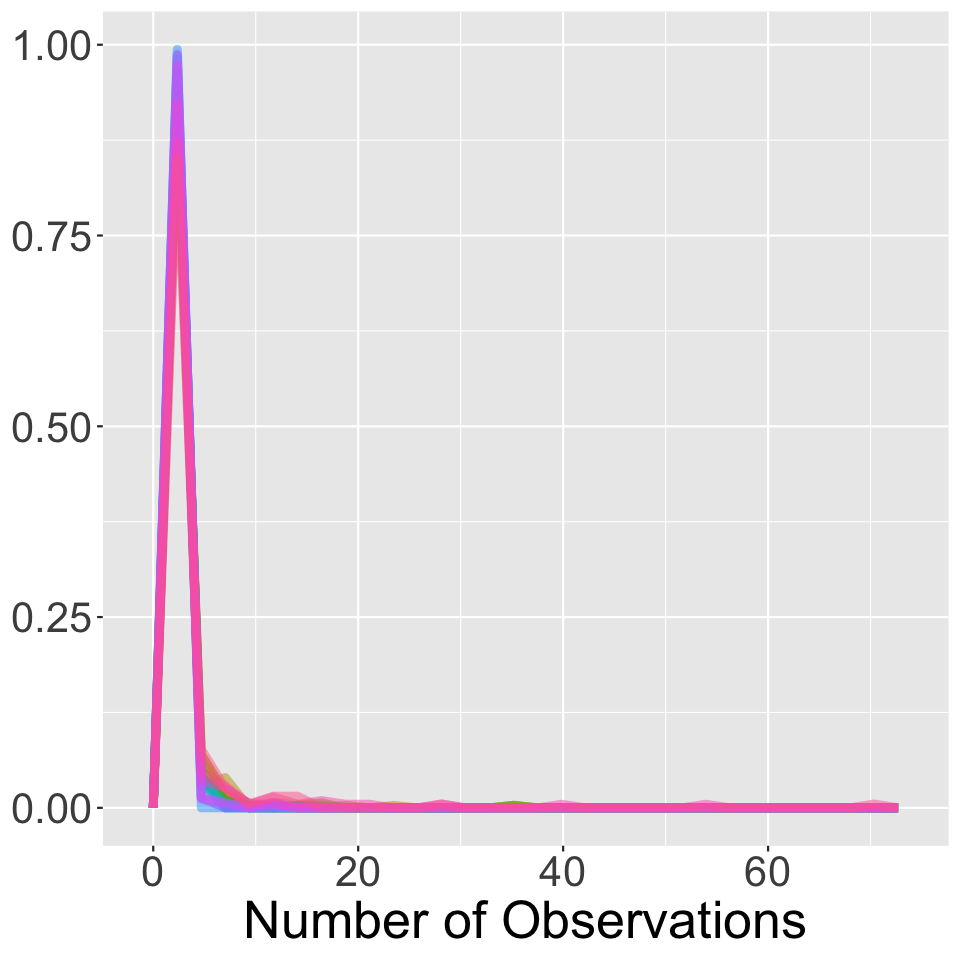}
\caption{Frequency plot of duplicate combinations}
\end{subfigure}
\begin{subfigure}{.3\textwidth}
\centering
\includegraphics[width = .8\textwidth]{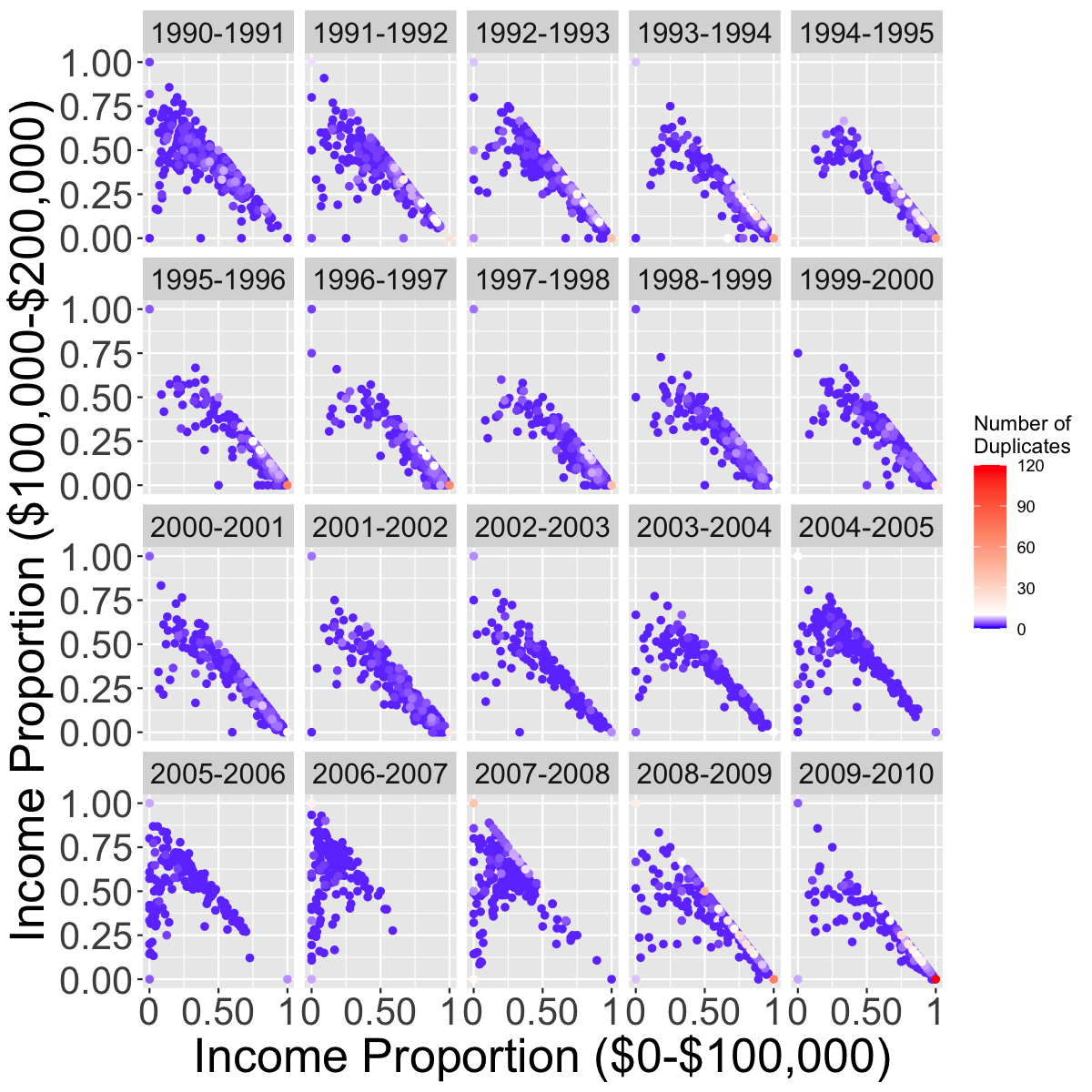}
\caption{Location plot colored by the number of duplicates}
\label{fig:real_data_loc_plot_by_year}
\end{subfigure}
\begin{subfigure}{.3\textwidth}
\centering
\includegraphics[width = .8\textwidth]{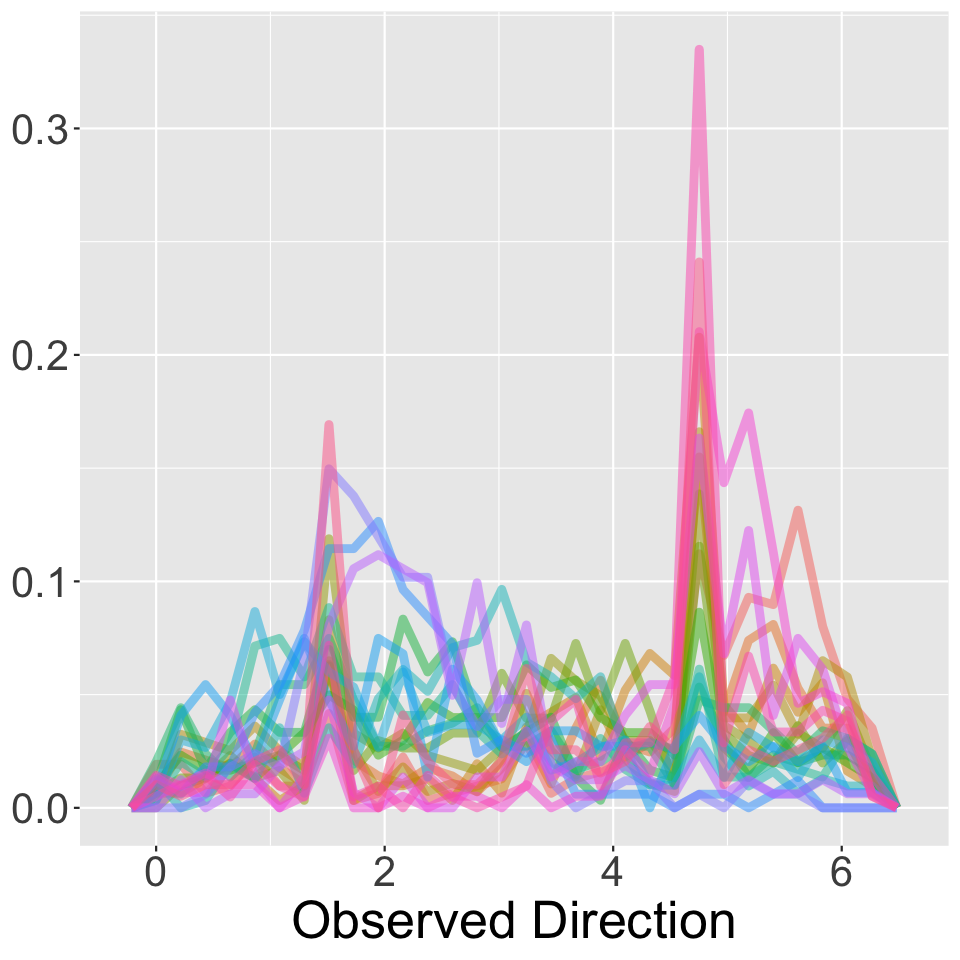}
\caption{Frequency plot of duplicate observation values}
\end{subfigure}\\
\centering
\begin{subfigure}{.3\textwidth}
\centering
\includegraphics[width = .8\textwidth]{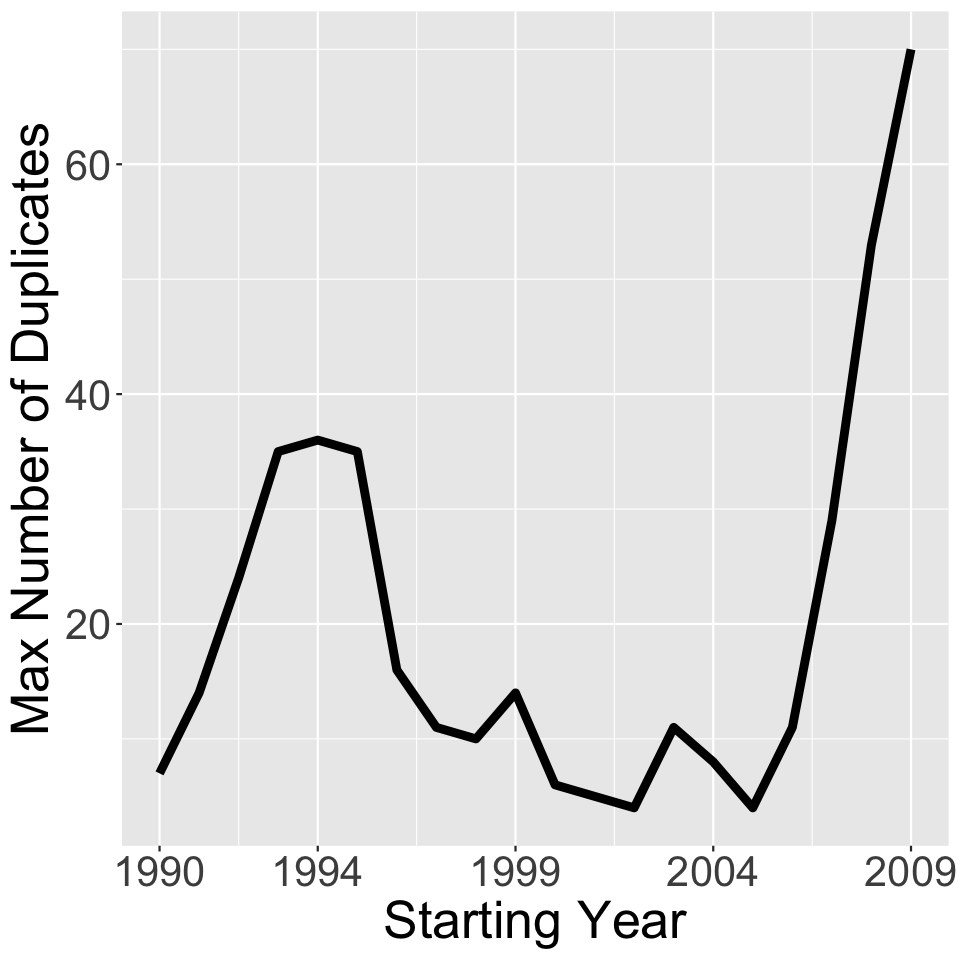}
\caption{Max number of duplicates by starting year}
\end{subfigure}
\begin{subfigure}{.3\textwidth}
\centering
\includegraphics[width = .8\textwidth]{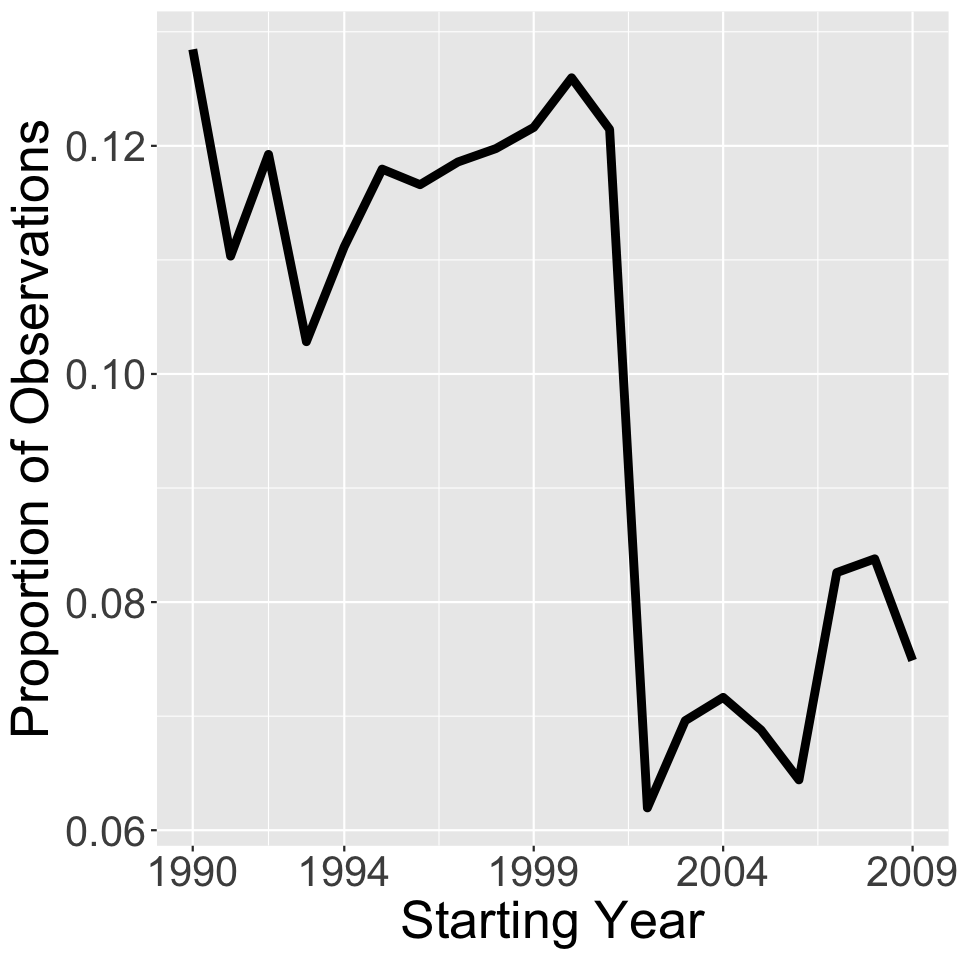}
\caption{Proportion of duplicate observations by starting year}
\end{subfigure}
\begin{subfigure}{.3\textwidth}
\centering
\includegraphics[width = .8\textwidth]{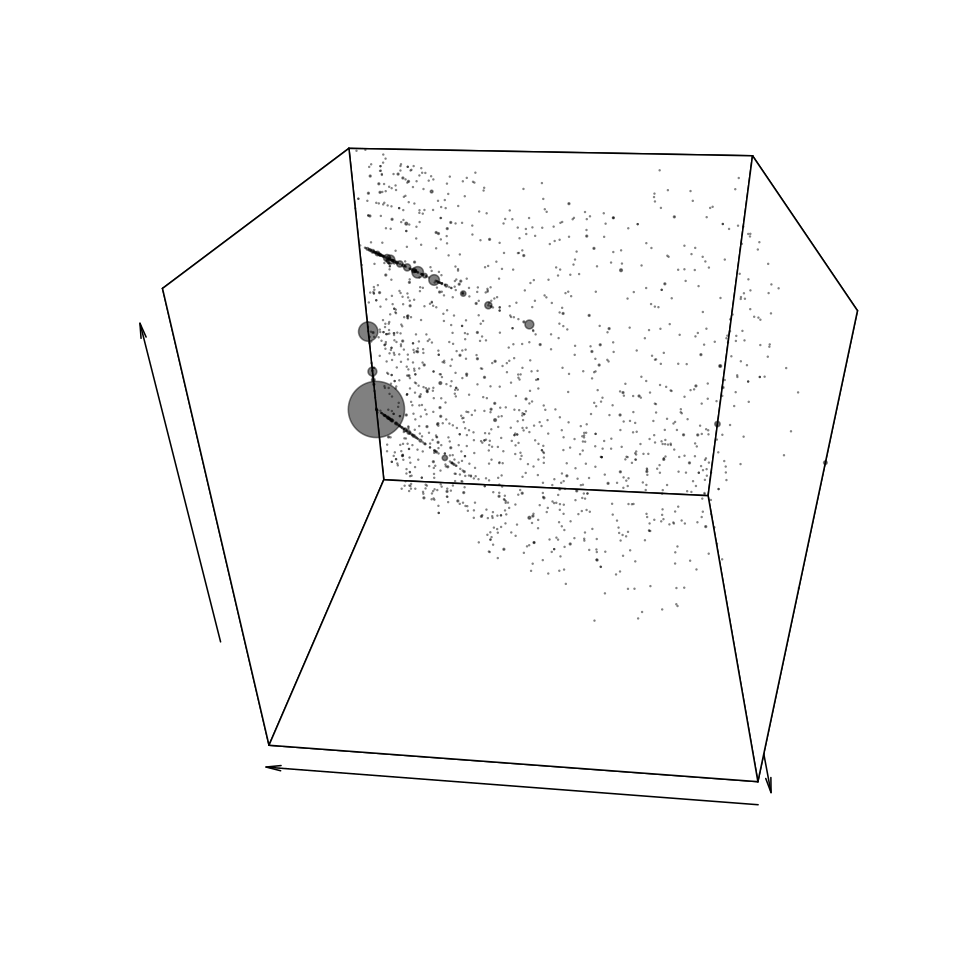}
\caption{Plot of the observed direction for 2009-2010 weighted by the number of appearances}
\label{fig:real_data_example_repeats}
\end{subfigure}
\caption{Plots in the top row show the duplicate observations broken down by different categories across different years with further context provided by plots in the bottom row.}
\label{fig:real_data_repeats_info}

\centering
\begin{subfigure}{.22\textwidth}
\centering
\includegraphics[width = 1\textwidth]{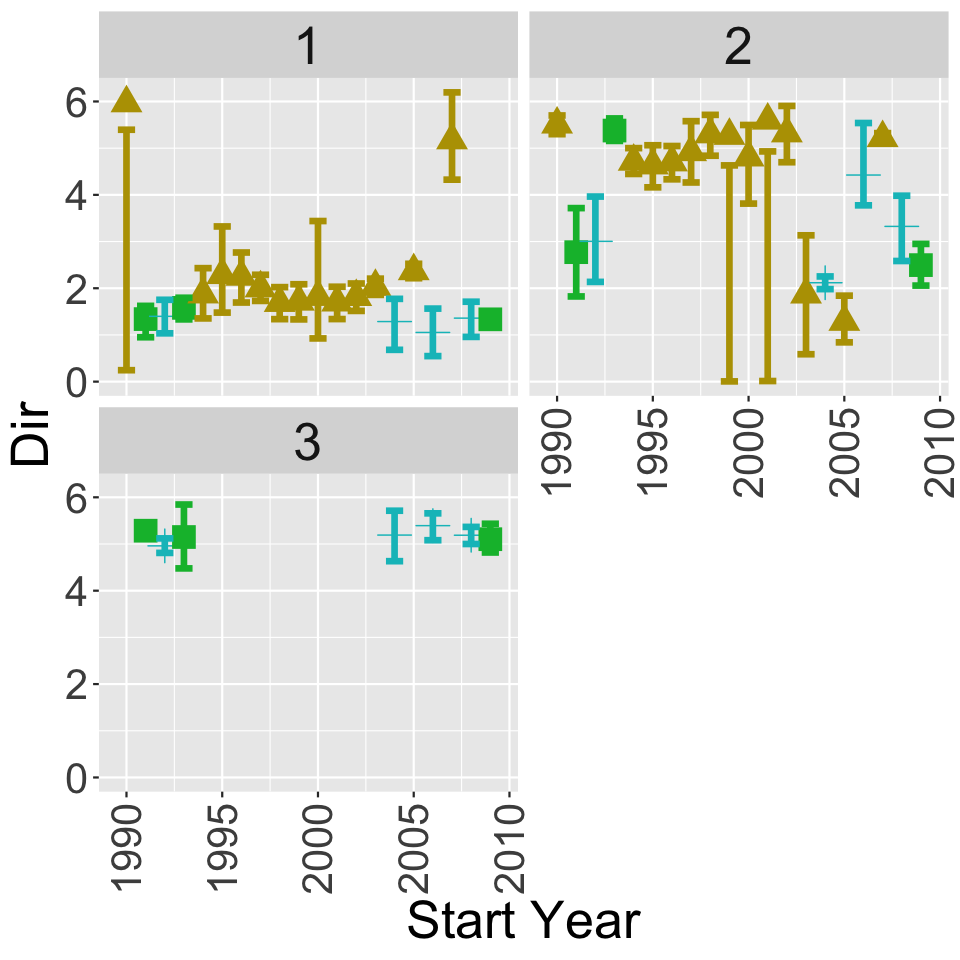}
\end{subfigure}
\begin{subfigure}{.22\textwidth}
\centering
\includegraphics[width = 1\textwidth]{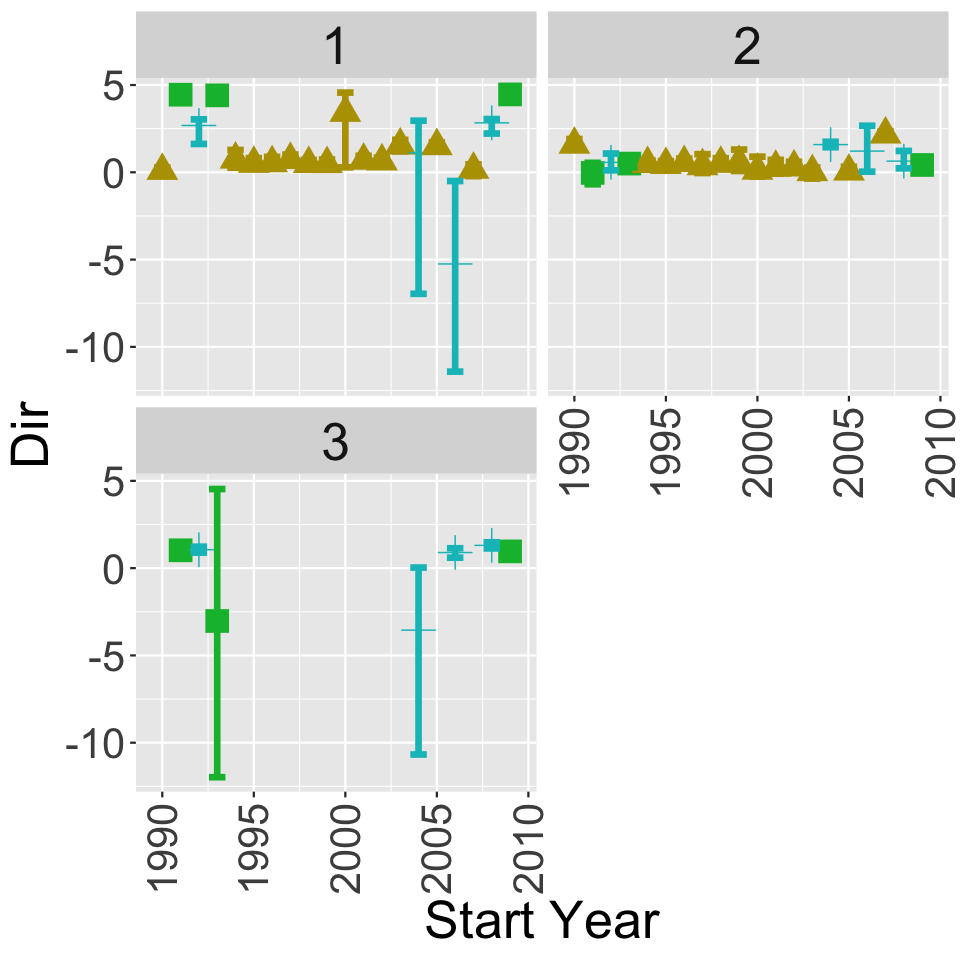}
\end{subfigure}
\begin{subfigure}{.22\textwidth}
\centering
\includegraphics[width = 1\textwidth]{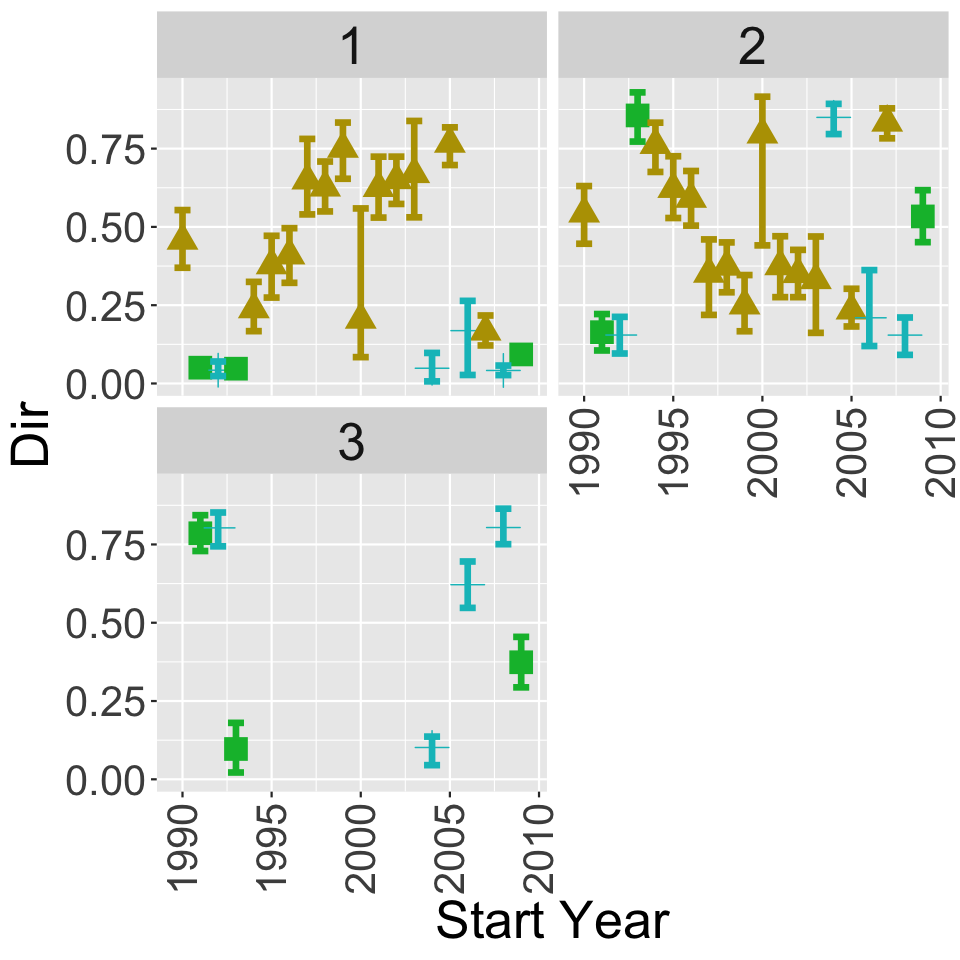}
\end{subfigure}
\centering
\begin{subfigure}{.22\textwidth}
\centering
\includegraphics[width = 1\textwidth]{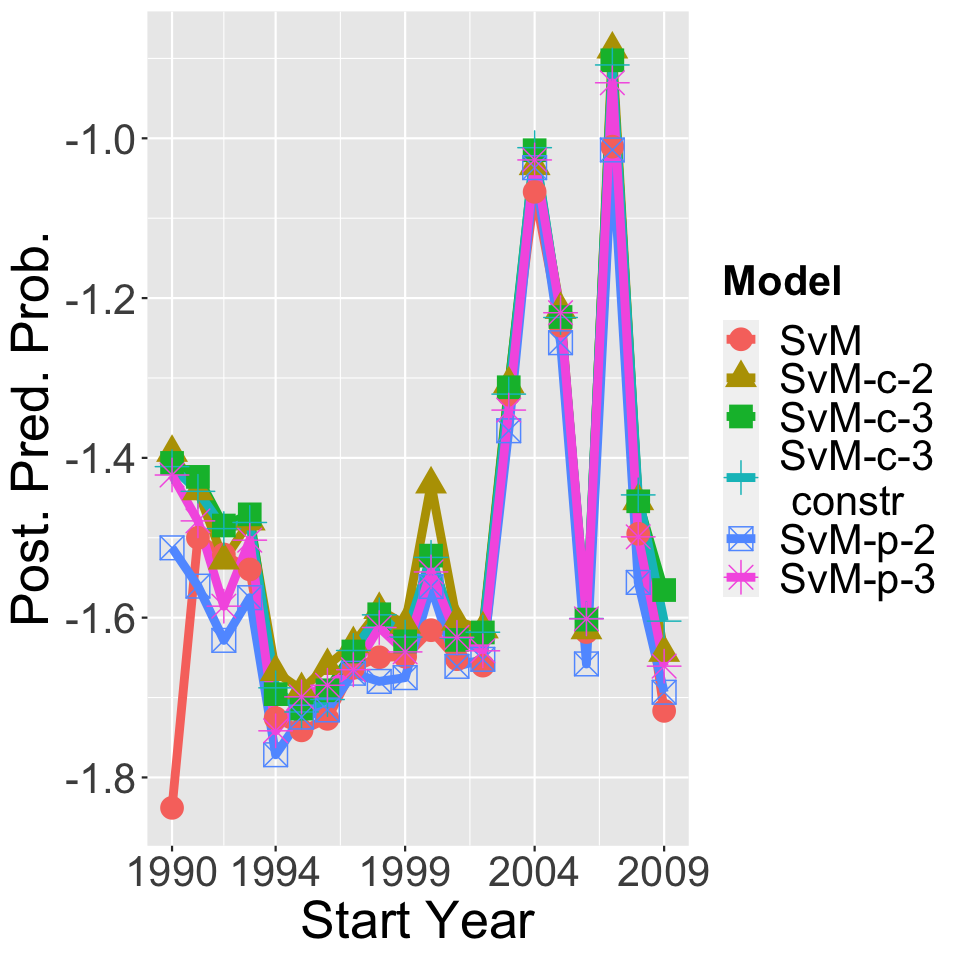}
\end{subfigure}
\caption{Plots showing the posterior predictive log probability defined in \eqref{eq:post_pred_prob} and the von Mises distributions' parameters and the mixing probability of the first component selected according to that probability and the models' mixing for each year. For $\textit{SvM}$ and $\textit{SvM-c}$, the mean and concentration parameters' mean and credible interval are averaged over all locations. On the other hand, the mean and credible interval for the probability of the first component is averaged across all location for $\textit{SvM-p}$. In addition, the second component's values are shown with an opacity of 0.5.}
\label{fig:real_data_fitted_results_summary}
\end{figure}

Figure \ref{fig:real_data_example_repeats} shows us that the data violates a key assumption. There are locations with observed angles that are seen multiple times in the data set. Due to the use of a continuous distribution for the observations, i.e. the von Mises distribution, the probability of this happening should be zero. It is more problematic for the $\textit{SvM}-c$ model if we use the hierarchical prior for the concentration parameter because repeated observations might induce large values, which will pull up all values for the concentration parameter. 

Fortunately, it is straightforward to identify these duplicates. Figure \ref{fig:real_data_repeats_info} shows how observations are duplicated if we break it down by number, location, and/or the value of the observed angle. Despite there being some variation in the exact value, most duplicate observations are only observed twice. Further, the proportion of duplicated observations decreases over time because it appears to oscillate around 0.12 before 2002-2003 and then fluctuates around 0.07 afterwards. Interestingly enough, this decrease in the proportion also roughly corresponds to a spike in the maximum number of duplicates. This suggests a concentration of duplicates at fewer locations, which is observed in Figure \ref{fig:real_data_loc_plot_by_year} from 2004 onward. However, the maximum number of duplicate combinations by year and the duplicate by locations appear to be correlated and exhibit trends that are independent of the number of observations. The maximum number of duplicates peaks at 36 in 1994-1995 before coming back down. Simultaneously, the locations with duplicates goes from a diffuse spread across the simplex to a concentrated spread on the lower half of the boundary in which the proportion of the third income category is close to zero. It then returns to a more diffuse set of locations by 2003-2004. Finally, while there is a lot of noise in the duplicated observed direction value across the years, there does appear to be a peak around $\frac{\pi}{2}$ and $\frac{3\pi}{2}$. This is seen in the plot for observed directions and their locations weighted by the number of appearances for 2008-2009.

\subsection{Real data results}
\begin{figure}[!tb]
\centering
\begin{subfigure}{.18\textwidth}
\centering
\includegraphics[width = 1\textwidth]{logistic_gp_dir_mixture_comp_hier_p/l_02_real_results/year_1_gp_dir_location_comp_mixture_results.png}
\caption{1990-1991\\ (\textit{SvM-c-2})}
\end{subfigure}
\begin{subfigure}{.18\textwidth}
\centering
\includegraphics[width = 1\textwidth]{logistic_gp_dir_mixture_comp_hier_p/l_02_mixture_3_real_results/year_2_gp_mixture_comp_K_results.png}
\caption{1991-1992\\ (\textit{SvM-c-3})}
\end{subfigure}
\begin{subfigure}{.18\textwidth}
\centering
\includegraphics[width = 1\textwidth]{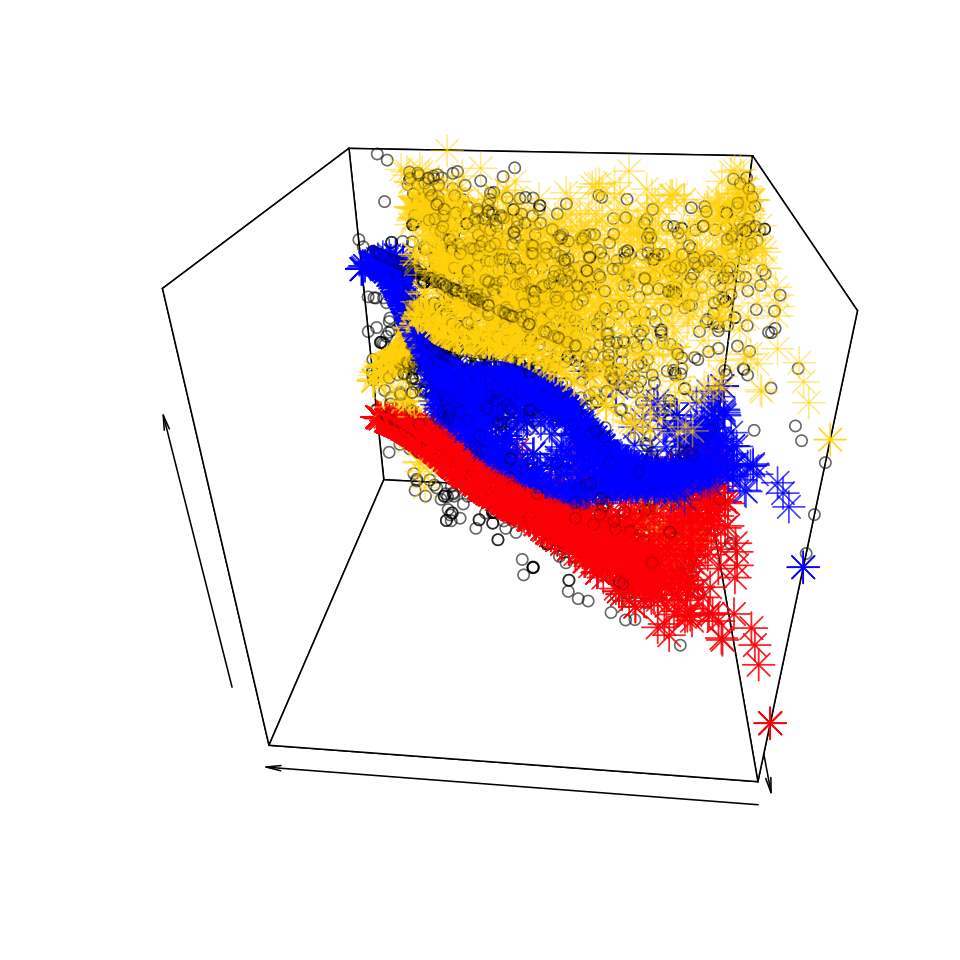}
\caption{1992-1993\\ (\textit{SvM-c-3})}
\end{subfigure}
\begin{subfigure}{.18\textwidth}
\centering
\includegraphics[width = 1\textwidth]{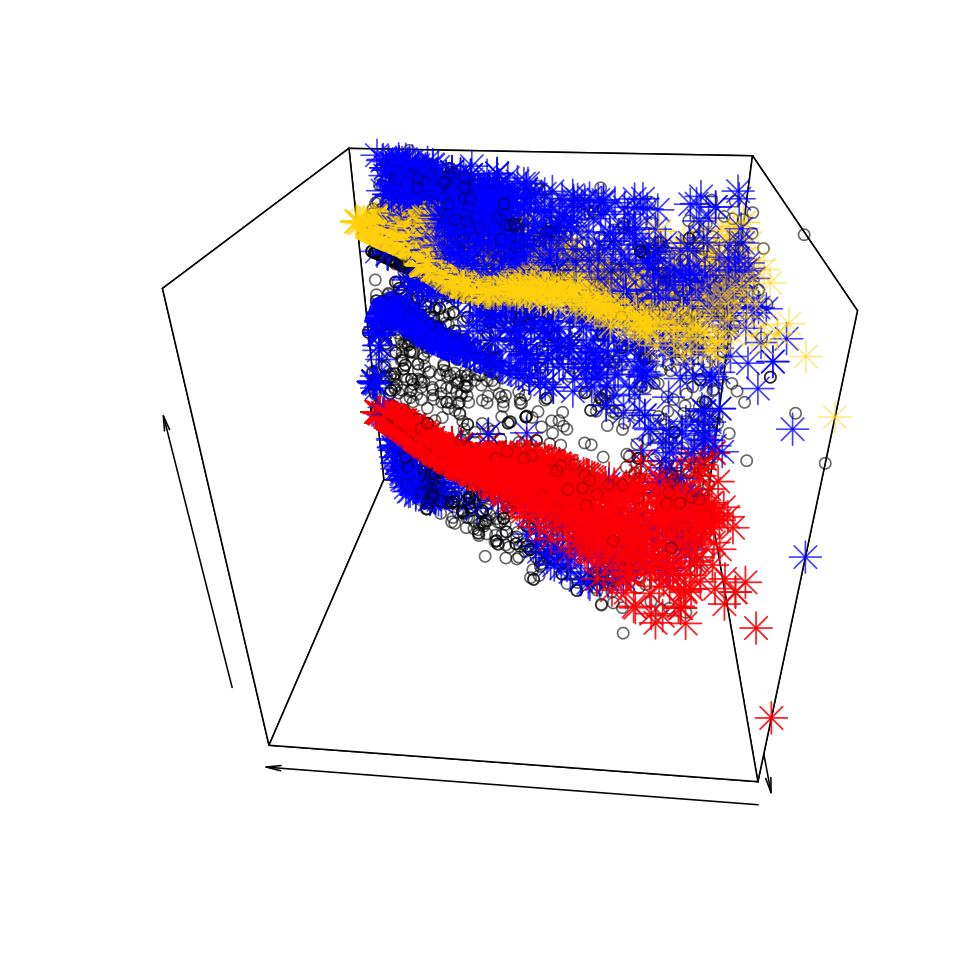}
\caption{1993-1994\\ (\textit{SvM-c-3})}
\end{subfigure}
\begin{subfigure}{.18\textwidth}
\centering
\includegraphics[width = 1\textwidth]{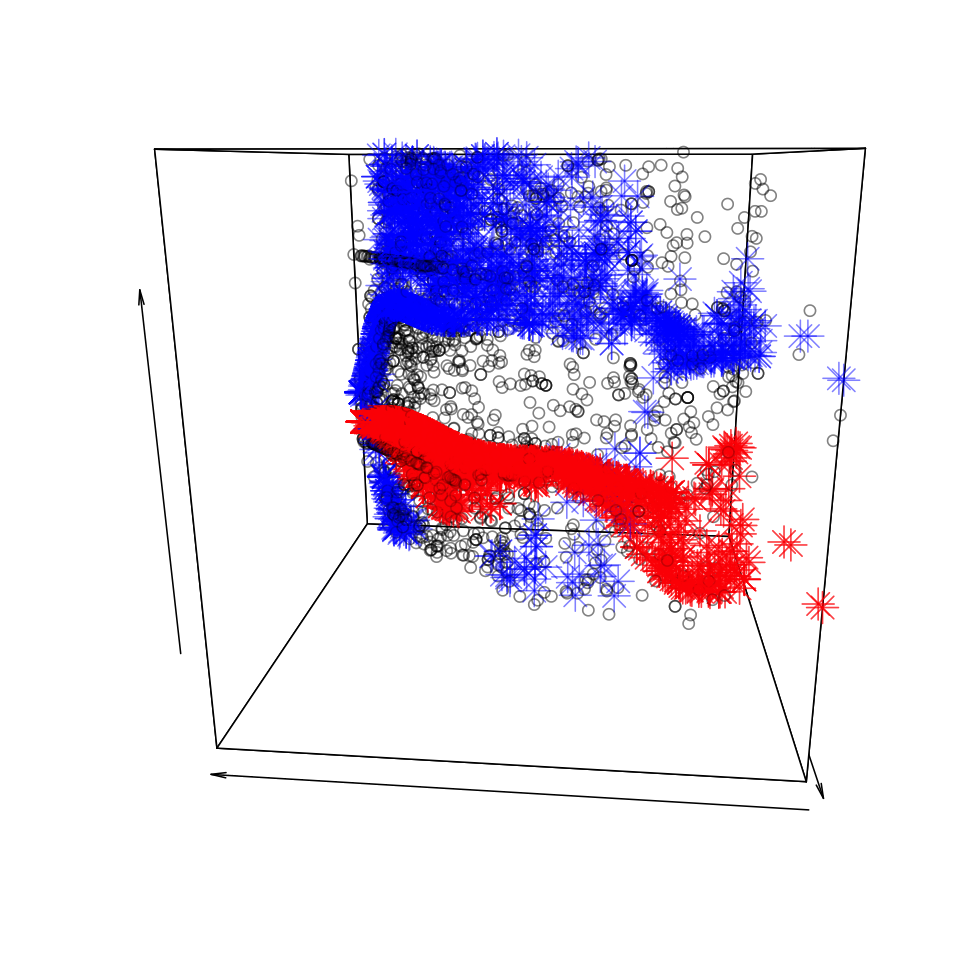}
\caption{1994-1995\\ (\textit{SvM-c-2})}
\end{subfigure}\\
\begin{subfigure}{.18\textwidth}
\centering
\includegraphics[width = 1\textwidth]{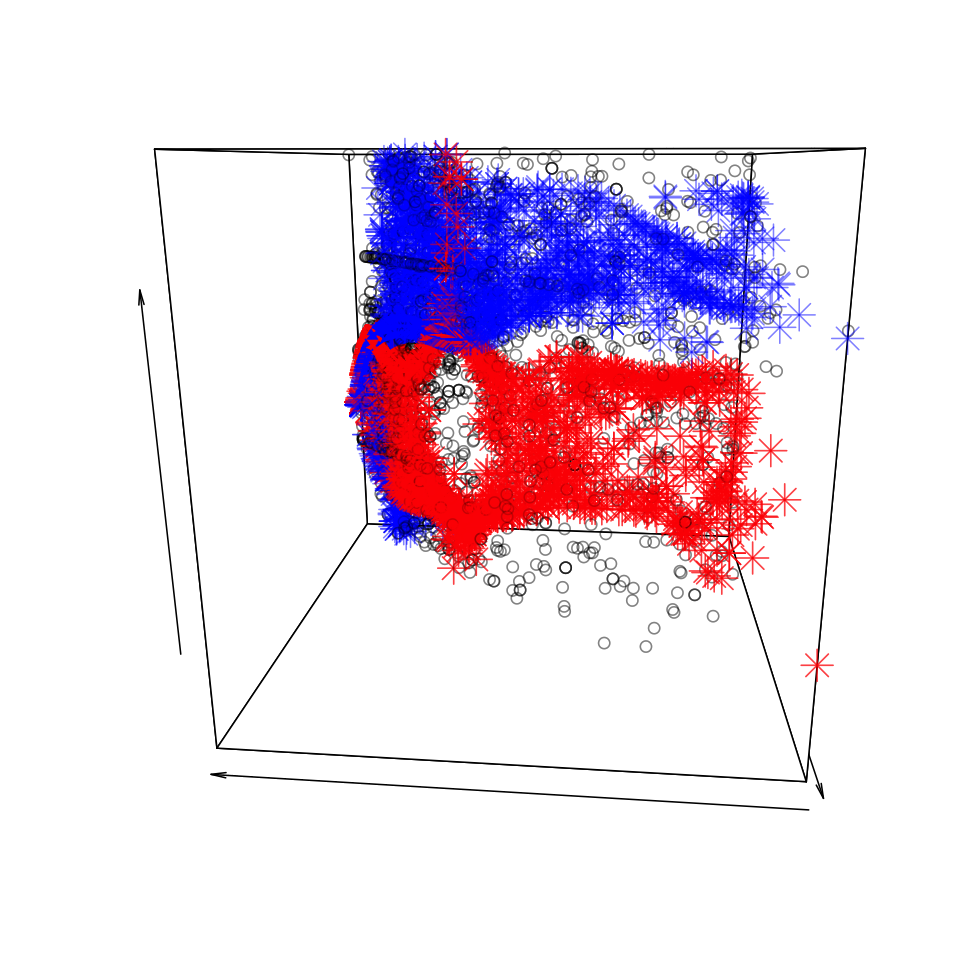}
\caption{1995-1996\\ (\textit{SvM-c-2})}
\end{subfigure}
\begin{subfigure}{.18\textwidth}
\centering
\includegraphics[width = 1\textwidth]{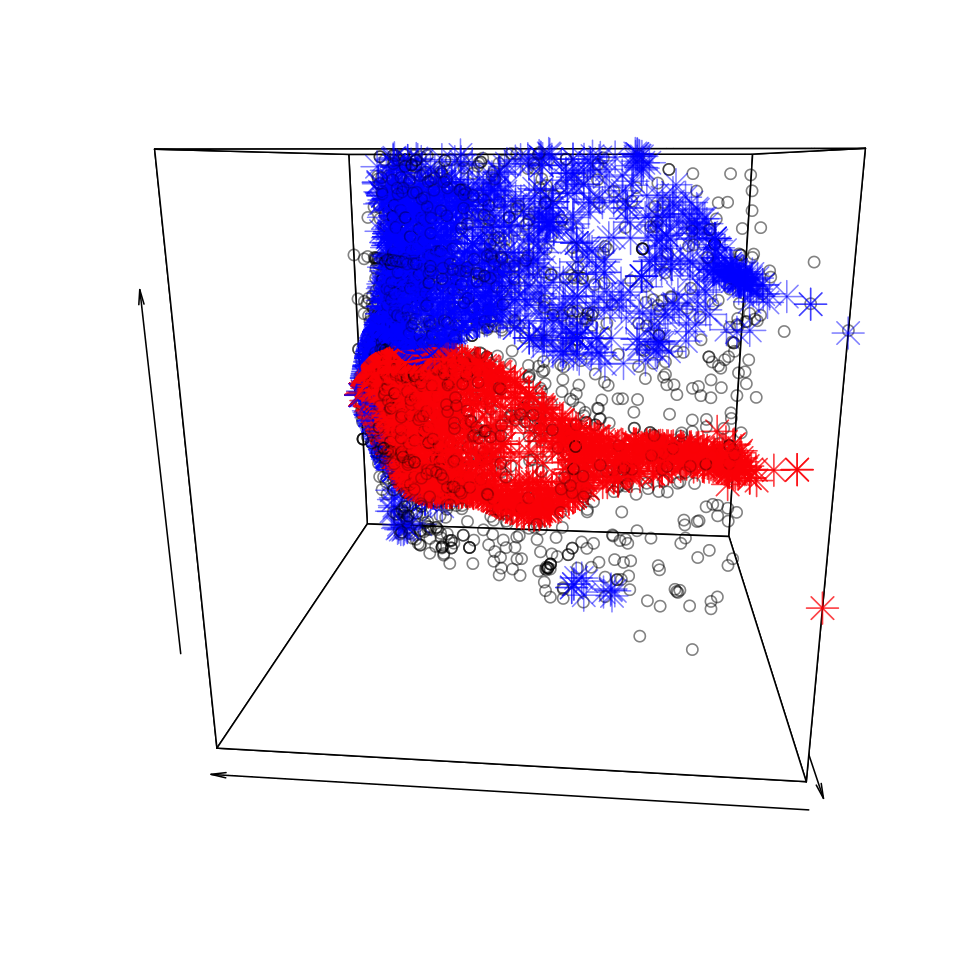}
\caption{1996-1997\\ (\textit{SvM-c-2})}
\end{subfigure}
\begin{subfigure}{.18\textwidth}
\centering
\includegraphics[width = 1\textwidth]{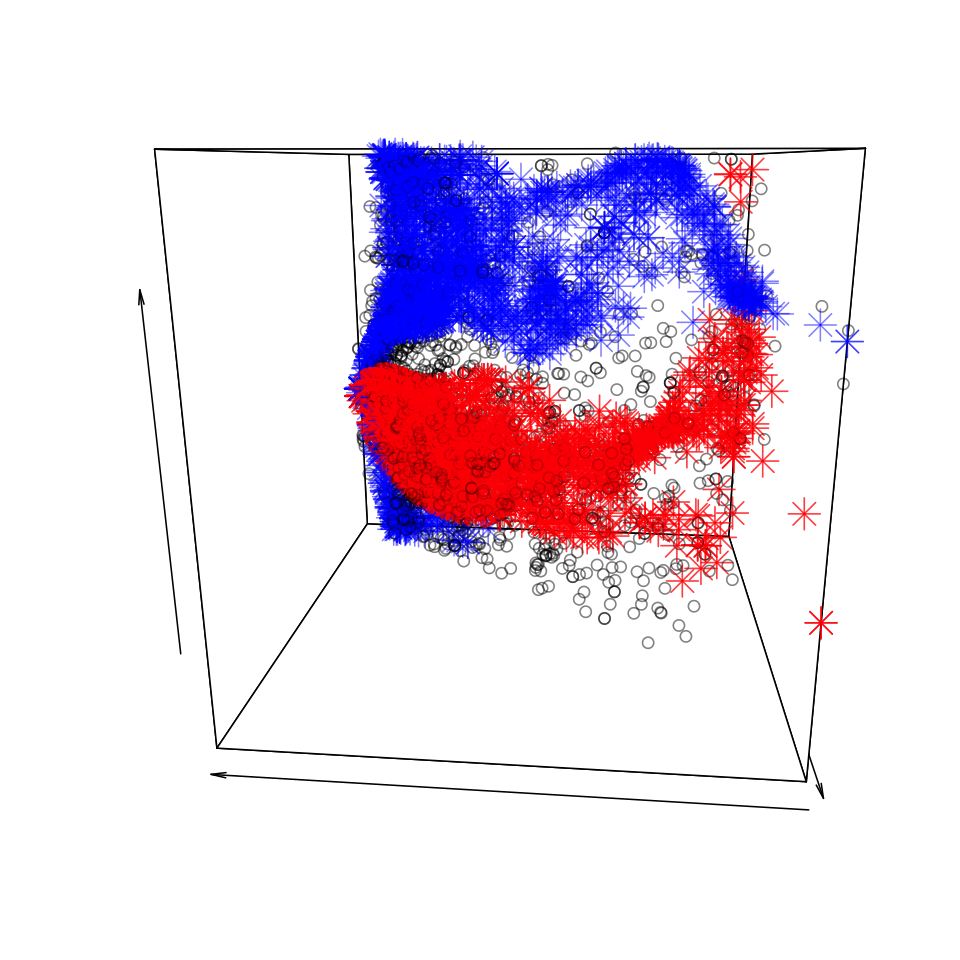}
\caption{1997-1998\\ (\textit{SvM-c-2})}
\end{subfigure}
\begin{subfigure}{.18\textwidth}
\centering
\includegraphics[width = 1\textwidth]{logistic_gp_dir_mixture_comp_hier_p/l_02_real_results/year_9_gp_dir_location_comp_mixture_results.png}
\caption{1998-1999\\ (\textit{SvM-c-2})}
\end{subfigure}
\begin{subfigure}{.18\textwidth}
\centering
\includegraphics[width = 1\textwidth]{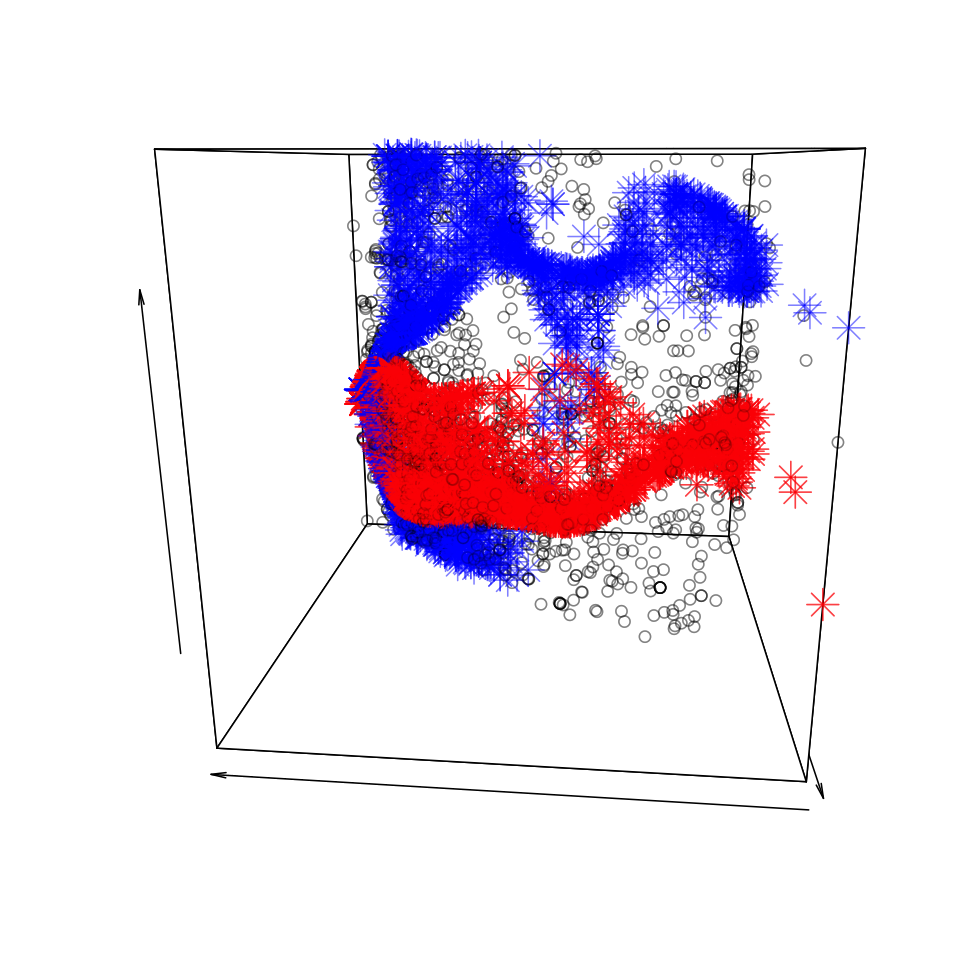}
\caption{1999-2000\\ (\textit{SvM-c-2})}
\end{subfigure}\\
\begin{subfigure}{.18\textwidth}
\centering
\includegraphics[width = 1\textwidth]{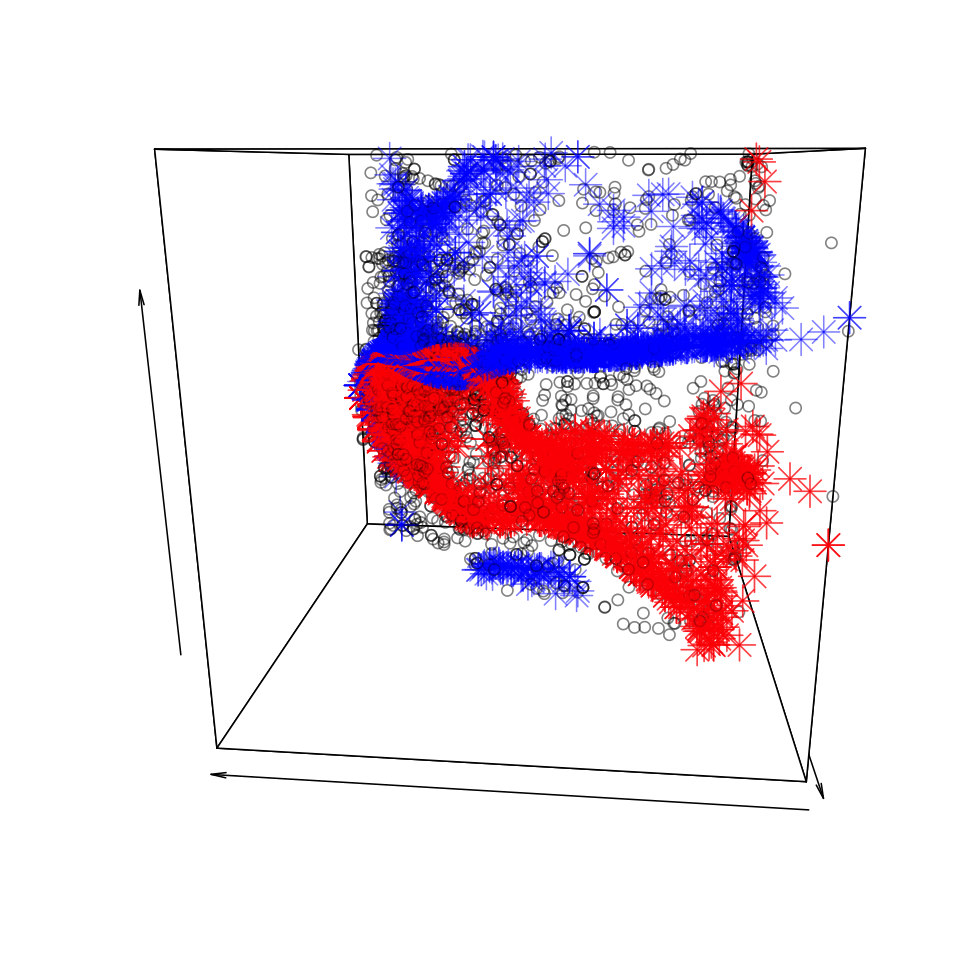}
\caption{2000-2001\\ (\textit{SvM-c-2})}
\end{subfigure}
\begin{subfigure}{.18\textwidth}
\centering
\includegraphics[width = 1\textwidth]{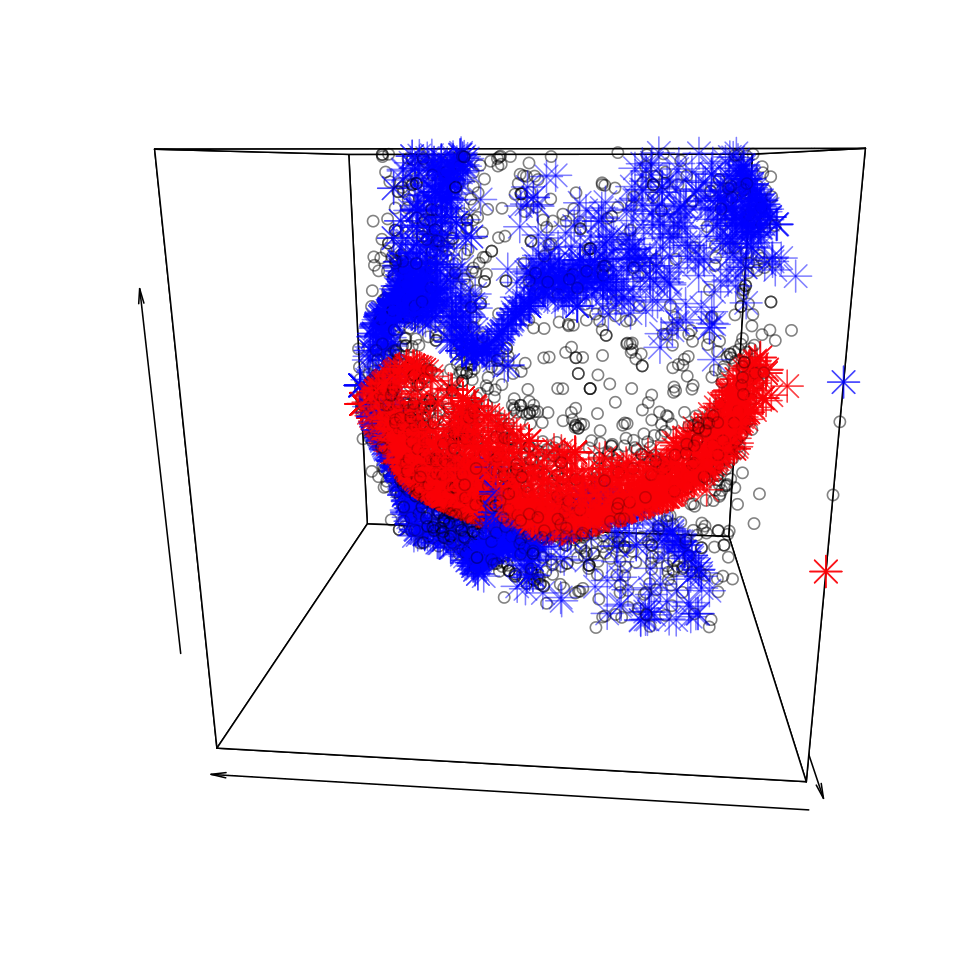}
\caption{2001-2002\\ (\textit{SvM-c-2})}
\end{subfigure}
\begin{subfigure}{.18\textwidth}
\centering
\includegraphics[width = 1\textwidth]{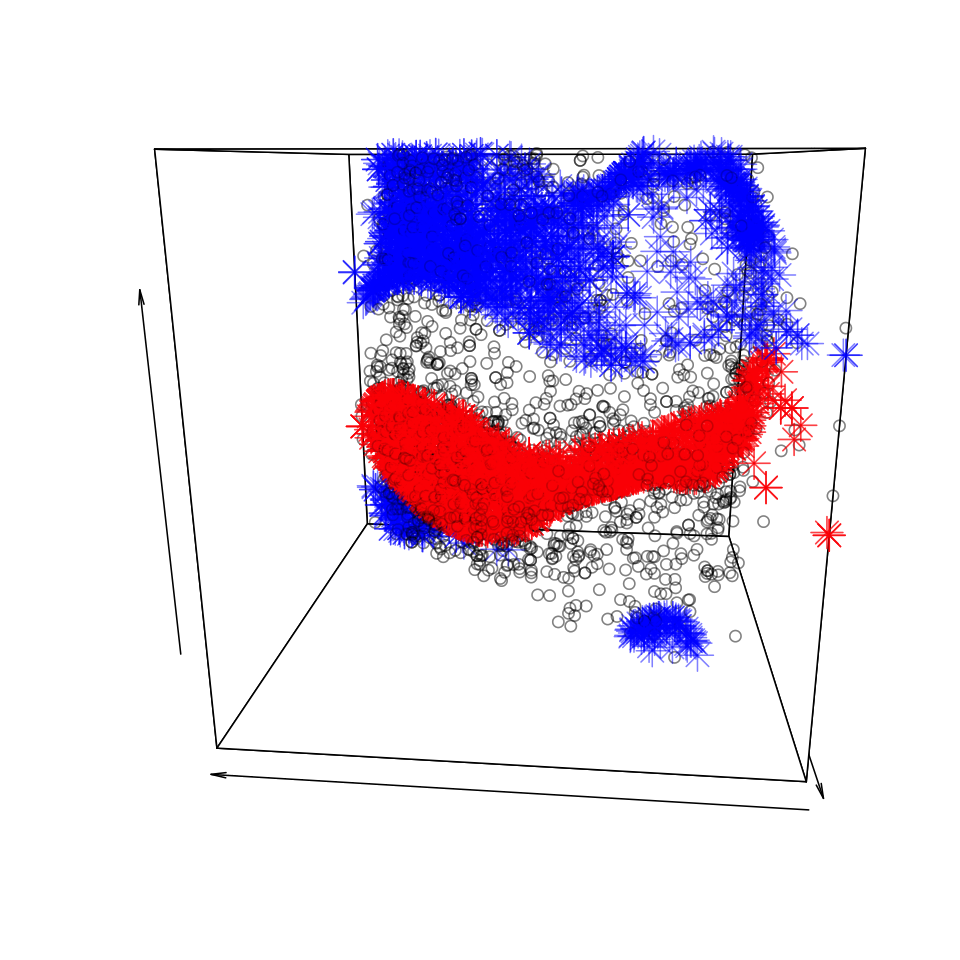}
\caption{2002-2003\\ (\textit{SvM-c-2})}
\end{subfigure}
\begin{subfigure}{.18\textwidth}
\centering
\includegraphics[width = 1\textwidth]{logistic_gp_dir_mixture_comp_hier_p/l_02_real_results/year_14_gp_dir_location_comp_mixture_results.png}
\caption{2003-2004\\ (\textit{SvM-c-2})}
\end{subfigure}
\begin{subfigure}{.18\textwidth}
\centering
\includegraphics[width = 1\textwidth]{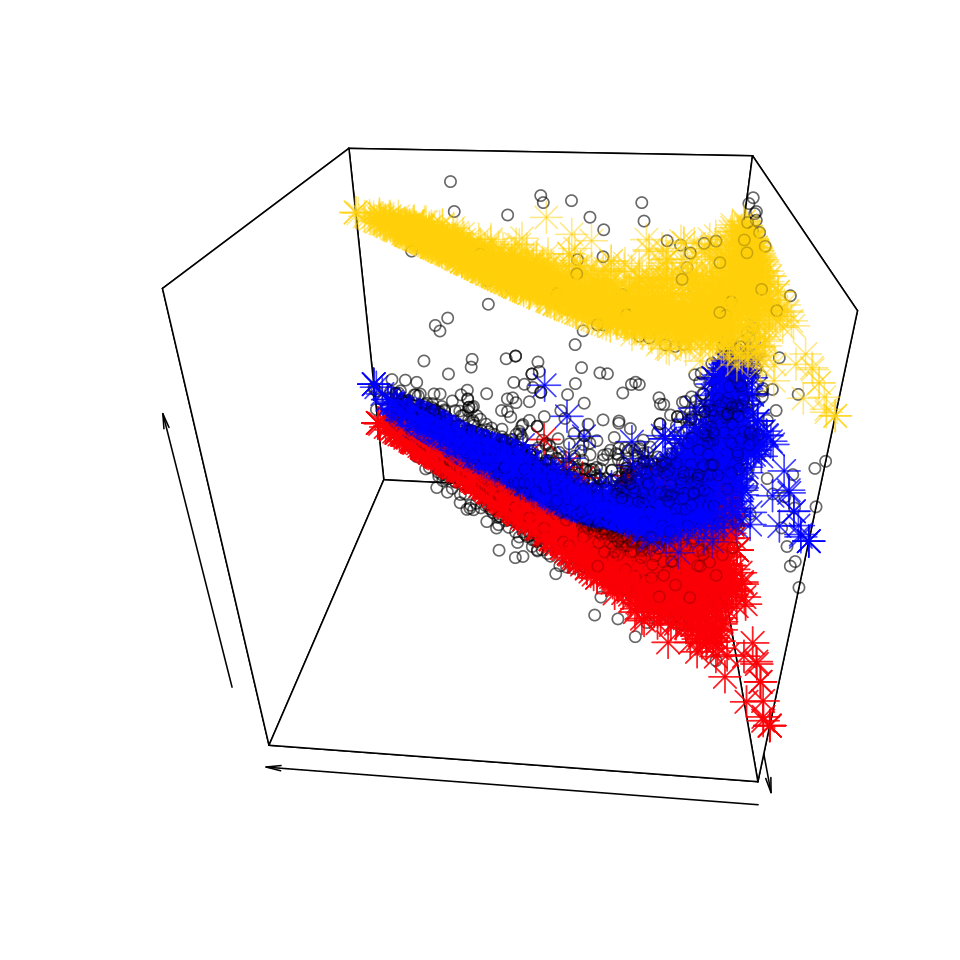}
\caption{2004-2005\\ (\textit{SvM-c-3})}
\end{subfigure}\\
\begin{subfigure}{.18\textwidth}
\centering
\includegraphics[width = 1\textwidth]{logistic_gp_dir_mixture_comp_hier_p/l_02_real_results/year_16_gp_dir_location_comp_mixture_results.png}
\caption{2005-2006\\ (\textit{SvM-c-2})}
\end{subfigure}
\begin{subfigure}{.18\textwidth}
\centering
\includegraphics[width = 1\textwidth]{logistic_gp_dir_mixture_comp_hier_p/l_02_mixture_3_real_results/year_17_gp_mixture_comp_K_results.png}
\caption{2006-2007\\ (\textit{SvM-p-3})}
\end{subfigure}
\begin{subfigure}{.2\textwidth}
\centering
\includegraphics[width = 1\textwidth]{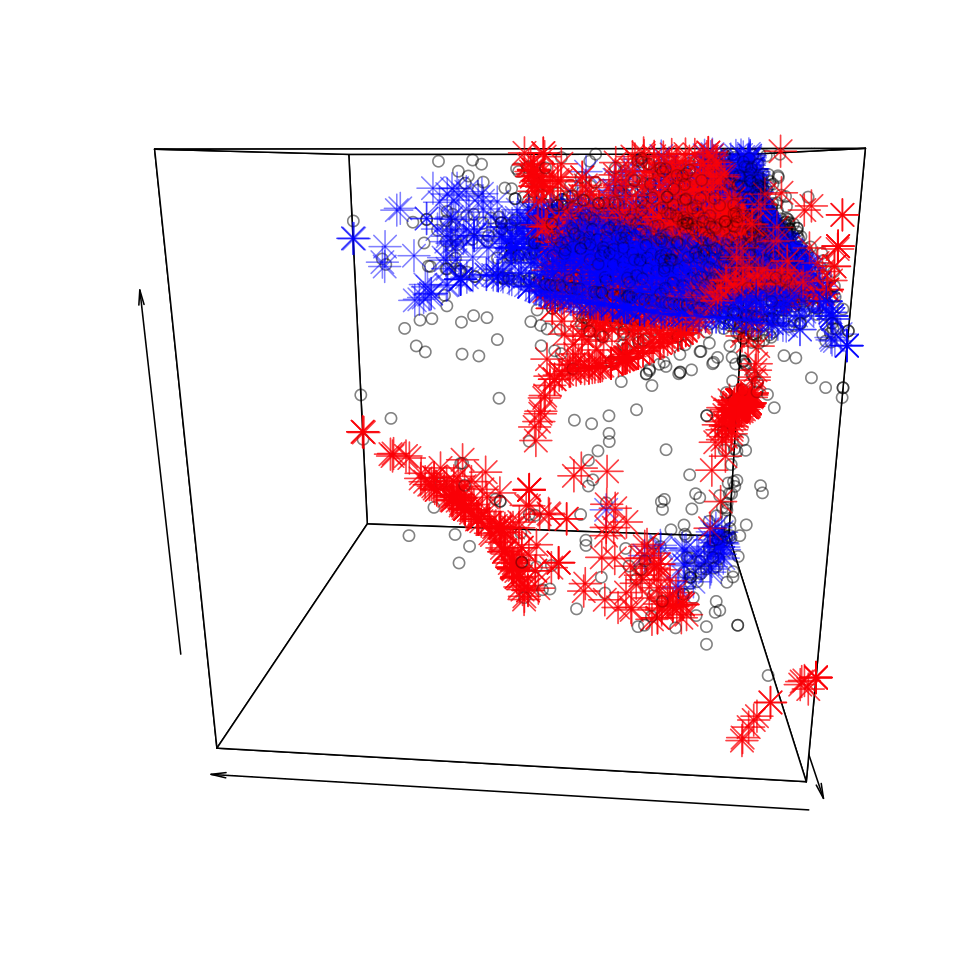}
\caption{2007-2008\\\textit{SvM-c}}
\end{subfigure}
\begin{subfigure}{.18\textwidth}
\centering
\includegraphics[width = 1\textwidth]{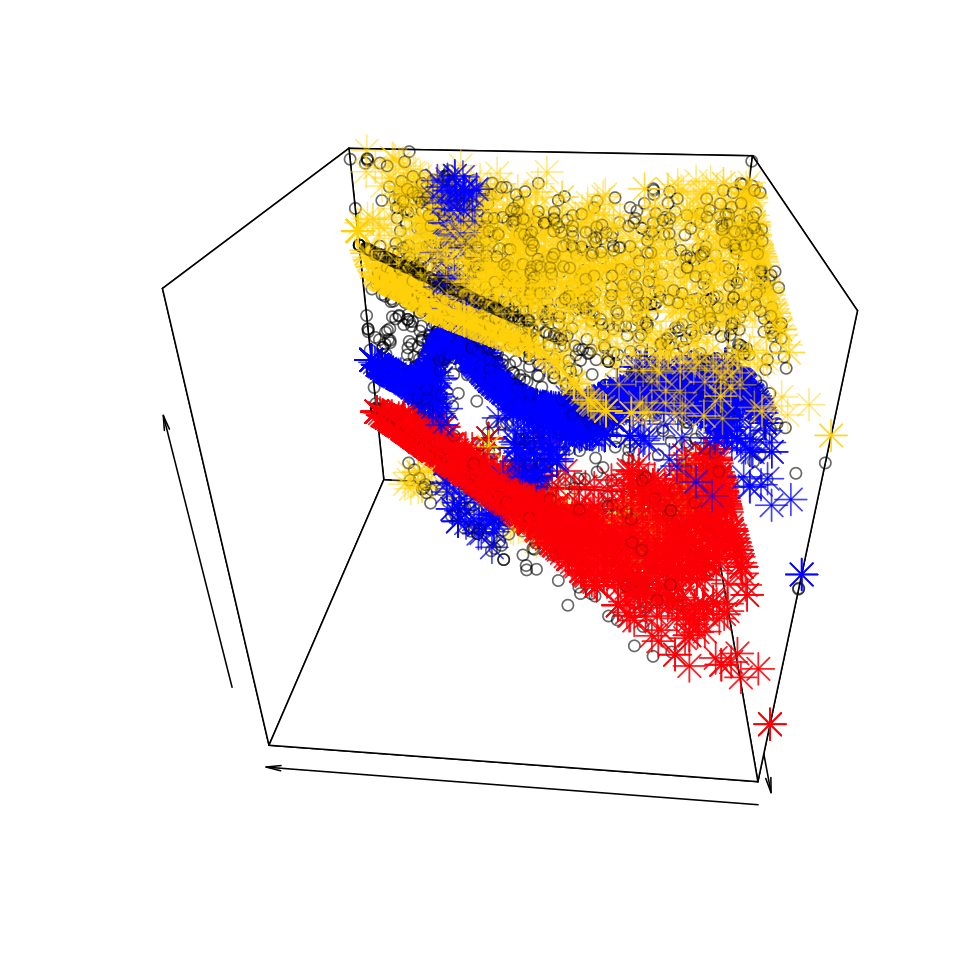}
\caption{2008-2009\\ (\textit{SvM-c 3})}
\end{subfigure}
\begin{subfigure}{.18\textwidth}
\centering
\includegraphics[width = 1\textwidth]{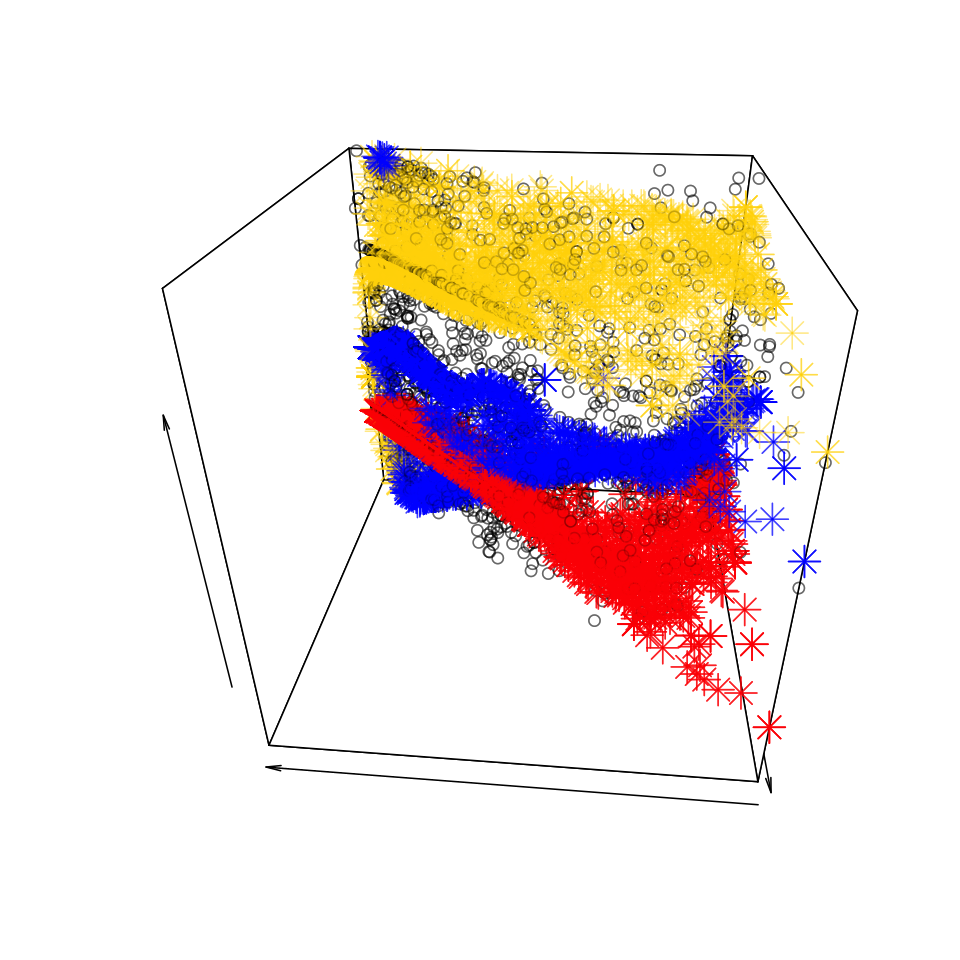}
\caption{2009-2010\\ (\textit{SvM-c-3})}
\end{subfigure}
\caption{Plots showing the observed random direction not withheld and the fitted mean surface of the model selected by the posterior predictive log probability in \eqref{eq:post_pred_prob}. The front axis represents the proportion in the first income category and the side axis represents the proportion in the second income category. The up-down axis represents the direction. The start of the arrow indicates a value of zero whereas the end indicates a value of $2\pi$ for angles and 1 for proportions.}
\label{fig:real_data_fitted_results_all}
\end{figure}

\begin{table}[!htbp]
\resizebox{0.75\textwidth}{!}{
\begin{tabular}{@{\extracolsep{5pt}} cccc} 
\\[-1.8ex]\hline 
\hline \\[-1.8ex] 
\multirow{2}{*}{\shortstack[c]{Year\\(Model)}} & \multirow{2}{*}{\shortstack[c]{ $\bm{m}$}} & \multirow{2}{*}{\shortstack[c]{ $\bm{\rho}$}} & \multirow{2}{*}{\shortstack[c]{ $\bm{\lambda}$}} \\ 
& & &\\
\hline \\[-1.8ex] 
& & &\\
\multirow{2}{*}{\shortstack[c]{1990-1991\\ \textit{SvM-c-2}}} & $\overline{m}_1$ = 5.97 (5.39, 0.25) & $\overline{\rho}_1$ = 1.10 (0.90, 1.31) & $\lambda_1$ = 0.46 (0.37, 0.55)\\
& $\overline{m}_2$ = 5.51 (5.30, 5.70) & $\overline{\rho}_2$ = 4.92 (3.82, 6.73) & $\lambda_2$ = 0.54 (0.45, 0.63)\\
& & &\\
\multirow{3}{*}{\shortstack[c]{1991-1992\\ \textit{SvM-c-3}}} & $\overline{m}_1$ = 1.34 (0.95, 1.62) & $\overline{\rho}_1$ = 85.20 (70.34, 97.40) & $\lambda_1$ = 0.05 (0.04, 0.07)\\
& $\overline{m}_2$ = 2.77 (1.82, 3.72) & $\overline{\rho}_2$ = 0.97 (0.50, 1.72) & $\lambda_2$ = 0.16 (0.11, 0.22)\\
& $\overline{m}_3$ = 5.28 (5.11, 5.45) & $\overline{\rho}_3$ = 2.78 (2.30, 3.35) & $\lambda_3$ = 0.79 (0.73, 0.84)\\
& & &\\
\multirow{3}{*}{\shortstack[c]{1992-1993\\ \textit{SvM-c-3}}} & $\overline{m}_1$ = 1.40 (1.03, 1.75) & $\overline{\rho}_1$ = 14.61  (5.07, 20.76) & $\lambda_1$ = 0.04 (0.02, 0.07)\\
& $\overline{m}_2$ = 3.00 (2.14, 3.96) & $\overline{\rho}_2$ = 1.77 (1.12, 2.93) & $\lambda_2$ = 0.15 (0.10, 0.21)\\
& $\overline{m}_3$ = 4.96 (4.81, 5.12) & $\overline{\rho}_3$ = 2.87 (2.43, 3.42) & $\lambda_3$ = 0.80 (0.74, 0.85)\\
& & &\\
\multirow{3}{*}{\shortstack[c]{1993-1994\\ \textit{SvM-c-3}}} & $\overline{m}_1$ = 1.56 (1.13, 1.83) & $\overline{\rho}_1$ = 82.14 (61.00, 96.64) & $\lambda_1$ = 0.05 (0.03, 0.06)\\
& $\overline{m}_2$ = 5.38 (5.16, 5.64) & $\overline{\rho}_2$ = 1.65 (1.33, 2.09) & $\lambda_2$ = 0.86 (0.77, 0.93)\\
& $\overline{m}_3$ = 5.15 (4.48, 5.85) & $\overline{\rho}_3$ = 0.05 (6.29e-6, 92.93) & $\lambda_3$ = 0.10 (0.02, 0.18)\\
& & &\\
\multirow{2}{*}{\shortstack[c]{1994-1995\\ \textit{SvM-c-2}}} & $\overline{m}_1$ = 1.87 (1.35, 2.43) & $\overline{\rho}_1$ = 2.10 (1.30, 3.60) & $\lambda_1$ = 0.24 (0.17, 0.32)\\
& $\overline{m}_2$ = 4.71 (4.45, 5.00) & $\overline{\rho}_2$ = 1.62 (1.35, 1.95) & $\lambda_2$ = 0.76 (0.83, 0.68)\\
& & &\\
\multirow{2}{*}{\shortstack[c]{1995-1996\\ \textit{SvM-c-2}}} & $\overline{m}_1$ = 2.29 (1.48, 3.32) & $\overline{\rho}_1$ = 1.65 (1.22, 2.25) & $\lambda_1$ = 0.38 (0.27, 0.47)\\
& $\overline{m}_2$ = 4.64 (4.16, 5.06) & $\overline{\rho}_2$ = 1.54 (1.25, 1.90) & $\lambda_2$ = 0.62 (0.53, 0.73)\\
& & &\\
\multirow{2}{*}{\shortstack[c]{1996-1997\\ \textit{SvM-c-2}}} & $\overline{m}_1$ = 2.25 (1.69, 2.77) & $\overline{\rho}_1$ = 1.70 (1.24, 2.42) & $\lambda_1$ = 0.41 (0.32, 0.50)\\
& $\overline{m}_2$ = 4.70 (4.33, 5.05) & $\overline{\rho}_2$ = 1.71 (1.35, 2.20) & $\lambda_2$ = 0.59 (0.50, 0.68)\\
& & &\\
\multirow{2}{*}{\shortstack[c]{1997-1998\\ \textit{SvM-c-2}}} & $\overline{m}_1$ = 2.00 (1.73, 2.29) & $\overline{\rho}_1$ = 2.05 (1.49, 2.76) & $\lambda_1$ = 0.65 (0.54, 0.78)\\
& $\overline{m}_2$ = 4.92 (4.27, 5.58) & $\overline{\rho}_2$ = 1.43 (0.96, 2.82) & $\lambda_2$ = 0.35 (0.22, 0.46)\\
& & &\\
\multirow{2}{*}{\shortstack[c]{1998-1999\\ \textit{SvM-c-2}}} & $\overline{m}_1$ = 1.69 (1.34, 2.02) & $\overline{\rho}_1$ = 1.63 (1.31, 2.02) & $\lambda_1$ = 0.63 (0.55, 0.71)\\
& $\overline{m}_2$ = 5.30 (4.84, 5.71) & $\overline{\rho}_2$ = 1.64 (1.24, 2.24) & $\lambda_2$ = 0.37 (0.29, 0.45)\\
& & &\\
\multirow{2}{*}{\shortstack[c]{1999-2000\\ \textit{SvM-c-2}}} & $\overline{m}_1$ = 1.72 (1.33, 2.09) & $\overline{\rho}_1$ = 1.62 (1.21, 2.08) & $\lambda_1$ = 0.75 (0.65, 0.83)\\
& $\overline{m}_2$ = 5.27 (4.63, 0.01) & $\overline{\rho}_2$ = 1.72 (1.08, 3.68) & $\lambda_2$ = 0.25 (0.17, 0.35)\\
& & &\\
\multirow{2}{*}{\shortstack[c]{2000-2001\\ \textit{SvM-c-2}}} & $\overline{m}_1$ = 1.82 (0.93, 3.44) & $\overline{\rho}_1$ = 30.49 (1.33, 96.14) & $\lambda_1$ = 0.20 (0.08, 0.56)\\
& $\overline{m}_2$ = 4.81 (3.81, 5.50) & $\overline{\rho}_2$ = 1.11 (0.78, 2.45) & $\lambda_2$ = 0.80 (0.44, 0.92)\\
& & &\\
\multirow{2}{*}{\shortstack[c]{2001-2002\\ \textit{SvM-c-2}}} & $\overline{m}_1$ = 1.70 (1.34, 2.03) & $\overline{\rho}_1$ = 1.97 (1.48, 2.56) & $\lambda_1$ = 0.64 (0.52, 0.73)\\
& $\overline{m}_2$ = 5.59 (4.93, 0.01) & $\overline{\rho}_2$ = 1.31 (0.93, 2.04) & $\lambda_2$ = 0.62 (0.53, 0.72)\\
& & &\\
\multirow{2}{*}{\shortstack[c]{2002-2003\\ \textit{SvM-c-2}}} & $\overline{m}_1$ = 1.82 (1.51, 2.10) & $\overline{\rho}_1$ = 1.88 (1.53, 2.32) & $\lambda_1$ = 0.65 (0.57, 0.72)\\
& $\overline{m}_2$ = 5.32 (4.70, 5.90)" & $\overline{\rho}_2$ = 1.33 (0.99, 1.83) & $\lambda_2$ = 0.35 (0.28, 0.43)\\
& & &\\
\multirow{2}{*}{\shortstack[c]{2003-2004\\ \textit{SvM-c-2}}} & $\overline{m}_1$ = 2.03 (1.86, 2.21) & $\overline{\rho}_1$ = 4.68 (3.33, 6.42) & $\lambda_1$ = 0.67 (0.53, 0.84)\\
& $\overline{m}_2$ = 1.87 (0.59, 3.13) & $\overline{\rho}_2$ = 1.02 (0.70, 1.31) & $\lambda_2$ = 0.33 (0.16, 0.47)\\
& & &\\
\multirow{3}{*}{\shortstack[c]{2004-2005\\ \textit{SvM-c-3}}} & $m_1$ = 1.29 (0.68, 1.77) & $\rho_1$ = 3.07 (9.50e-4, 19.22) & $\overline{\lambda}_1$ = 0.05 (0.01, 0.10)\\
& $m_2$ = 2.12 (1.98, 2.25) & $\rho_2$ = 4.89 (4.21, 5.75) & $\overline{\lambda}_2$ = 0.85 (0.80, 0.89)\\
& $m_3$ = 5.19 (4.63, 5.71) & $\rho_3$ = 0.03 (2.31e-5, 1.03) & $\overline{\lambda}_3$ = 0.10 (0.05, 0.14)\\
& & &\\
\multirow{2}{*}{\shortstack[c]{2005-2006\\ \textit{SvM-c-2}}} & $\overline{m}_1$ = 2.37 (2.21, 2.52) & $\overline{\rho}_1$ = 4.52 (3.77, 5.54) & $\lambda_1$ = 0.77 (0.70, 0.82)\\
& $\overline{m}_2$ = 1.29 (0.84, 1.84) & $\overline{\rho}_2$ = 1.05 (0.78, 1.37) & $\lambda_2$ = 0.23 (0.18, 0.30)\\
& & &\\
\multirow{3}{*}{\shortstack[c]{2006-2007\\ \textit{SvM-c-3}}} & $m_1$ = 1.05 (0.55, 1.56) & $\rho_1$ = 0.01 (1.10e-5, 0.61) & $\overline{\lambda}_1$ = 0.17 (0.03, 0.26)\\
& $m_2$ = 4.42 (3.78, 5.54) & $\rho_2$ = 3.37 (1.04, 14.43) & $\overline{\lambda}_2$ = 0.21 (0.12, 0.36)\\
& $m_3$ = 5.39 (5.08, 5.66) & $\rho_3$ = 2.45 (1.82, 3.12) & $\overline{\lambda}_3$ = 0.62 (0.55, 0.70)\\
& & &\\
\multirow{2}{*}{\shortstack[c]{2007-2008\\ \textit{SvM-c-2}}} & $\overline{m}_1$ = 5.17 (4.33, 6.19) & $\overline{\rho}_1$ =  1.17 (0.80, 1.58) & $\lambda_1$ = 0.17 (0.12, 0.22)\\
& $\overline{m}_2$ = 5.22 (5.12, 5.32) & $\overline{\rho}_2$ = 8.81 (7.50, 10.38) & $\lambda_2$ = 0.83 (0.78, 0.88)\\
& & &\\
\multirow{3}{*}{\shortstack[c]{2008-2009\\ \textit{SvM-c-3}}} & $\overline{m}_1$ = 1.36 (0.96, 1.71) & $\overline{\rho}_1$ = 16.99 (9.22, 21.09) & $\lambda_1$ = 0.04 (0.03, 0.06)\\
& $\overline{m}_2$ = 3.32 (2.58, 3.98) & $\overline{\rho}_2$ = 1.90 (1.26, 3.40) & $\lambda_2$ = 0.15 (0.09, 0.21)\\
& $\overline{m}_3$ = 5.19 (5.00, 5.37) & $\overline{\rho}_2$ = 3.68 (3.09, 4.39) & $\lambda_2$ = 0.80 (0.75, 0.86)\\
& & &\\
\multirow{2}{*}{\shortstack[c]{2008-2009\\ \textit{SvM-c-3}}} & $\overline{m}_1$ = 1.33 (1.17, 1.49) & $\overline{\rho}_1$ = 86.74 (75.96, 97.37) & $\lambda_1$ = 0.09 (0.07, 0.11)\\
& $\overline{m}_2$ = 2.50 (2.06, 2.95) & $\overline{\rho}_2$ = 1.52 (1.21, 1.94) & $\lambda_2$ = 0.53 (0.45, 0.62)\\
& $\overline{m}_3$ = 5.10 (4.82, 5.43) & $\overline{\rho}_3$ = 2.61 (1.88, 3.89) & $\lambda_3$ = 0.37 (0.29, 0.45)\\
& & &\\
\hline \\[-1.8ex] 
\end{tabular} 
}
\caption{Circular mean posterior values and 95\% credible intervals for the parameters in the Von Mises distribution and mixing probability are shown for the models fitted to the income proportions in Los Angeles County from 1990 to 2010. Parameters with bars over them were averaged across all locations.}
\label{table:real_data_results_param_summary}
\end{table}

\subsection{Additional Simulation Plots}
\begin{table}
\resizebox{0.95\textwidth}{!}{
\begin{tabular}{@{\extracolsep{5pt}} ccccccc} 
\\[-1.8ex]\hline 
\hline \\[-1.8ex] 
 & \textit{iV} & \textit{iVM} & \textit{SvM} mean $\pi$ & \textit{SvM-c} & \textit{SvM-p} &  \textit{SvM} mean $0$\\ 
\hline \\[-1.8ex] 
\textit{SvM} cf.~\eqref{model:SvM} & 4528 & 3039 & 5209 & 2663 & 3376
& 5488\\
& & & & &\\
\textit{SvM-c} cf.~\eqref{model:SvM-c} & 23738
& 16442 & 29486 & 22624 & 16546 & 30571\\
& & & & &\\
\textit{SvM-p} cf.~\eqref{model:SvM-p} & 813 & 1188 & 816 & 556 & 866 & 535\\
& & &\\
\hline \\[-1.8ex] 
\end{tabular} 
}
\caption{Time it takes for our models to run four chains for different models in seconds according to the sampling scheme outlined in Section \ref{ssection:sim_results} on four cores of a cluster using 2x 3.0 GHz Intel Xeon Gold 6154 as its processor.}
\end{table}

\begin{figure}[!h]
\captionsetup[subfigure]{justification=centering}
\centering
% \begin{subfigure}{.18\textwidth}
% \centering
% \includegraphics[width = .8\textwidth]{}
% \caption{iV model,\\cf.~\eqref{model:iV}}
% \end{subfigure}
% \begin{subfigure}{.18\textwidth}
% \centering
% \includegraphics[width = .8\textwidth]{logistic_gp_dir_hier_p/l_01_s_1_500_von_mises/sim_ind_von_mises_mixture_500_gp_dir_all_norm_hier_p_output_01_plot.png}
% \caption{iVM model,\\cf.~\eqref{model:iVM}}
% \end{subfigure}
\begin{subfigure}{.23\textwidth}
\centering
\includegraphics[width = 1\textwidth]{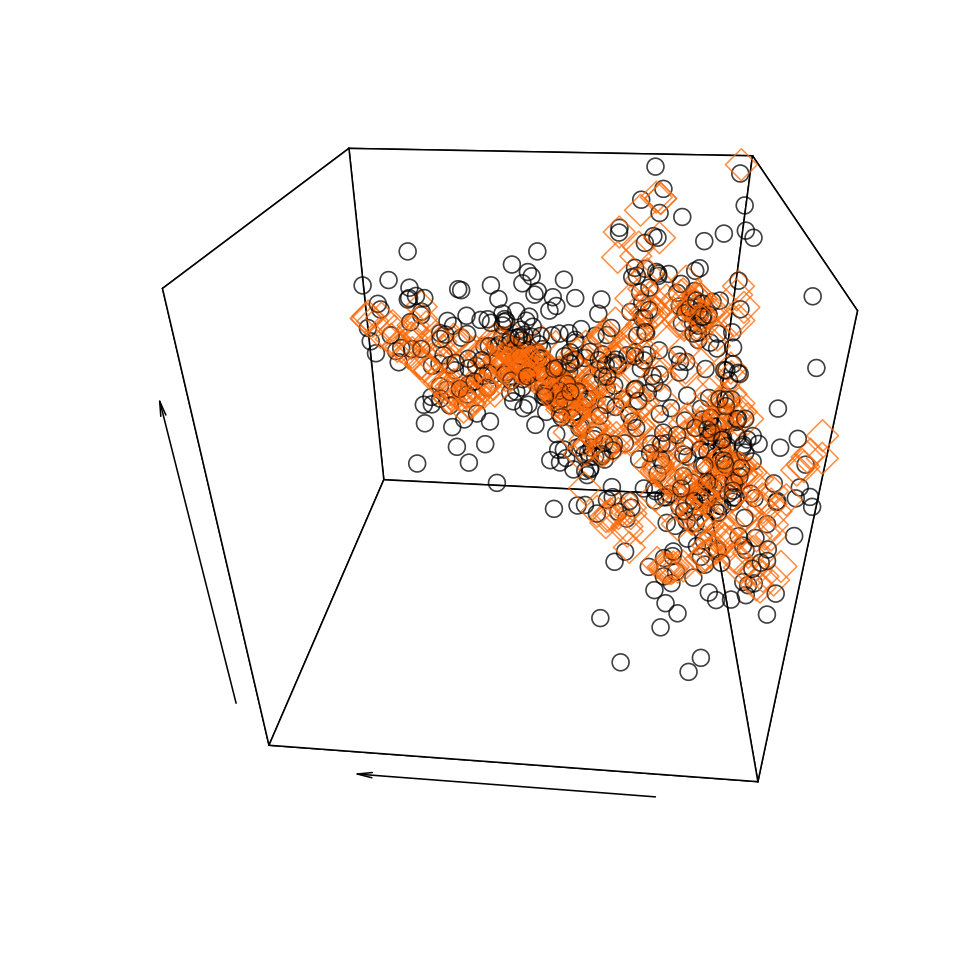}
\caption{Simulated data with\\means in orange}
\end{subfigure}
\begin{subfigure}{.23\textwidth}
\centering
\includegraphics[width = 1\textwidth]{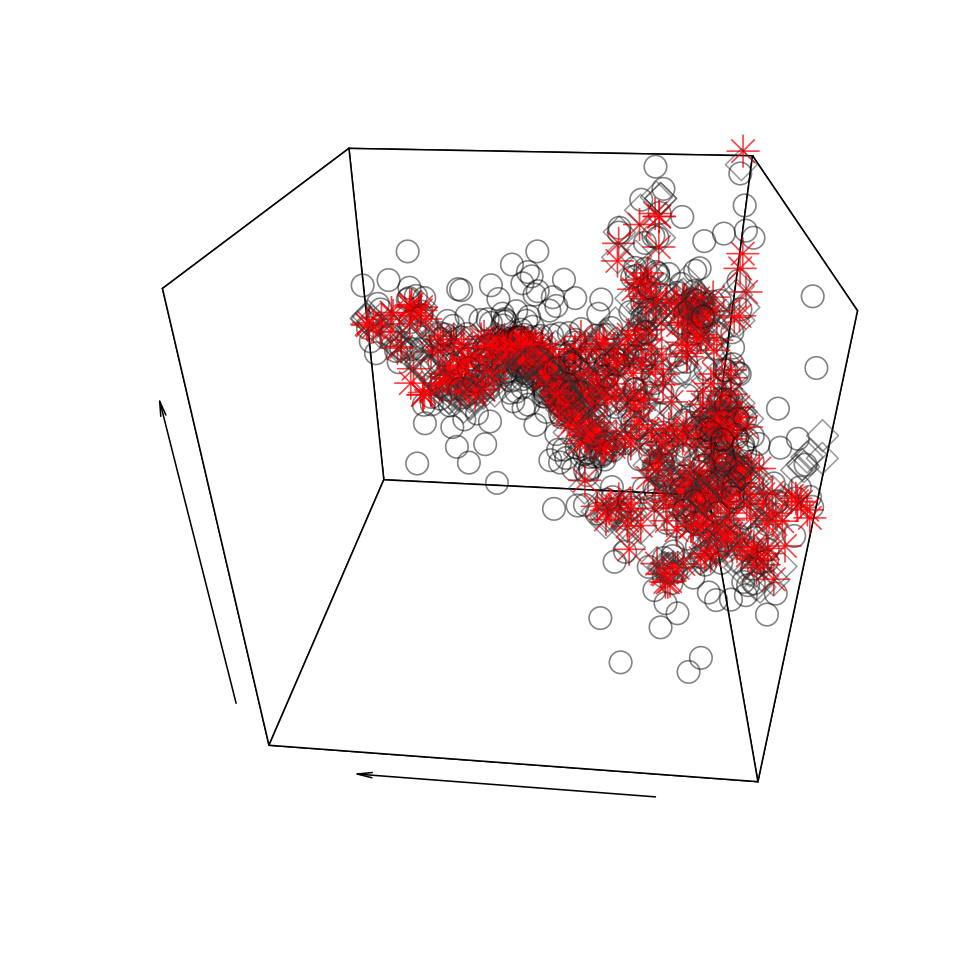}
\caption{SvM model (cf.~\eqref{model:SvM})\\fitted means in red}
\end{subfigure}
\begin{subfigure}{.23\textwidth}
\centering
\includegraphics[width = 1\textwidth]{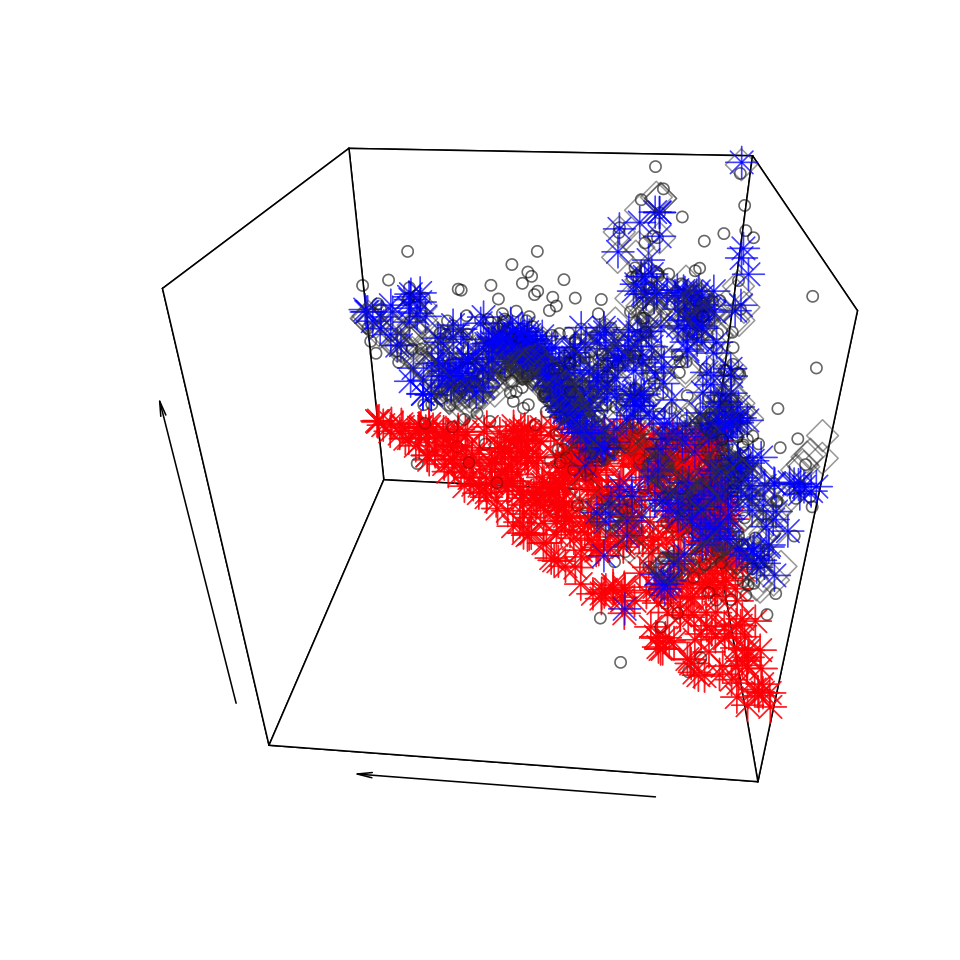}
\caption{SvM-c model (cf.~\eqref{model:SvM-c})\\fitted means in red and blue}
\end{subfigure}
\begin{subfigure}{.23\textwidth}
\centering
\includegraphics[width = 1\textwidth]{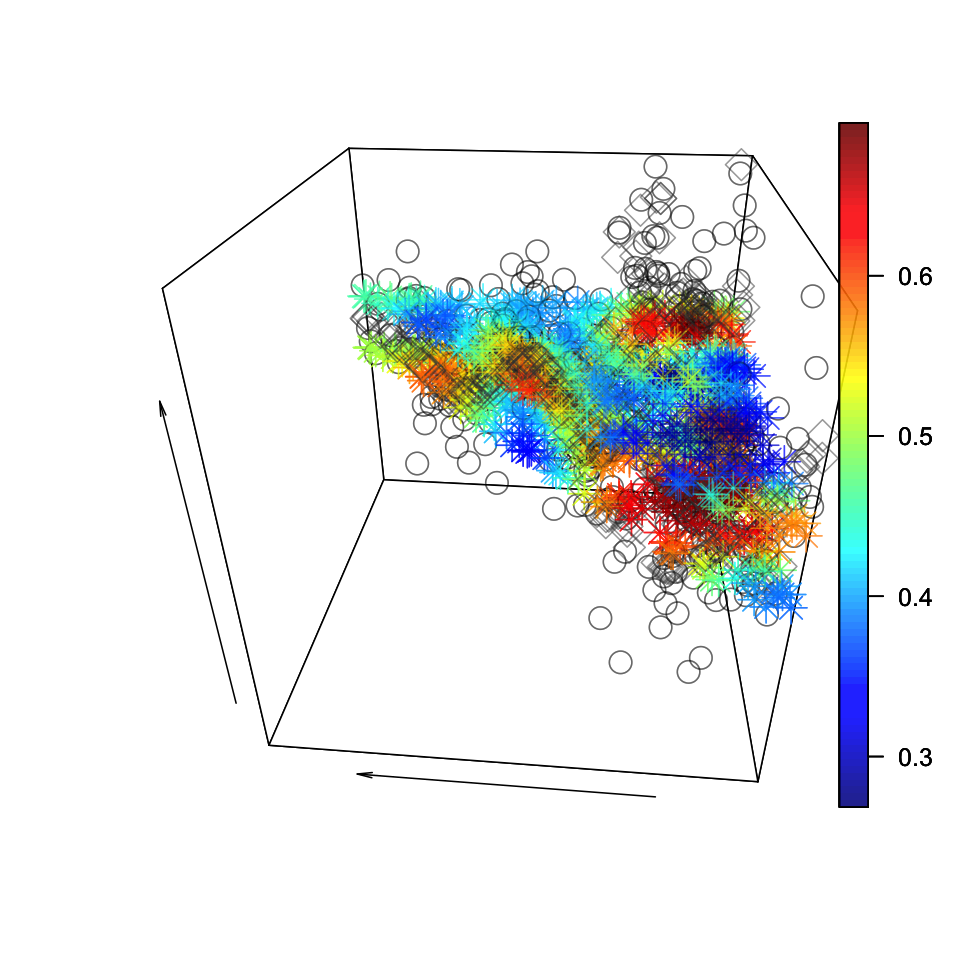}
\caption{SvM-p model (cf.~\eqref{model:SvM-p})\\fitted mean colored by $\lambda_1$}
\end{subfigure}\\
\caption{Plots of the mean component fitted by different models for observations simulated according to the $\textit{SvM}$ model in \eqref{model:SvM}. The fitted mean components are shown in colored asterisks with the simulated observations as circles and the simulated means as grey rhombuses for different scenarios. The x-y axis are the observations' location in the first two dimensions of a three dimensional simplex; the z-axis represent the observations.}
\label{fig:model_fitted_SvM_plots}

\resizebox{0.95\textwidth}{!}{
\begin{tabular}{@{\extracolsep{5pt}} cccc} 
\\[-1.8ex]\hline 
\hline \\[-1.8ex] 
 & $\bm{m}$ & $\bm{\rho}$ & $\bm{\lambda}$ \\ 
\hline \\[-1.8ex] 
\textbf{Simulation} & $\overline{m}$ = 3.23 (1.78, 4.86)  & $\overline{\rho} = 3.00 (2.54, 3.50)$ & ---\\
& & &\\
\textit{SvM} cf.~\eqref{model:SvM} & $\overline{m}$ = 3.28 (2.89, 3.68) & $\overline{\rho}$ = 3.14 (2.70, 3.63) & ---\\
& & &\\
\multirow{2}{*}{\shortstack[c]{\textit{SvM-c} cf.~\eqref{model:SvM-c}}} & $\overline{m}_1$ = 1.58 (0.39, 2.78) & $\overline{\rho}_1$ = 0.37 (0.00, 89.98) & $\lambda_1$ = 0.02 (9.07e-4, 0.04)\\
& $\overline{m}_2$ = 3.32 (2.90, 3.76) & $\overline{\rho}_2$ = 3.45 (2.93, 4.09) & $\lambda_2$ = 0.98 (0.96, 0.9991)\\
& & &\\
\multirow{2}{*}{\shortstack[c]{\textit{SvM-p} cf.~\eqref{model:SvM-p}}} & $m_1$ = 2.94 (2.78, 3.10) & $\rho_1$ = 2.90 (2.15, 3.78) & $\overline{\lambda_1}$ = 0.51 (0.31, 0.72)\\
& $m_2$ = 3.81 (3.60, 3.99) & $\rho_2$ = 1.65 (1.15, 2.30) & $\overline{\lambda_2}$ = 0.49 (0.28, 0.69)\\
& & &\\
\hline \\[-1.8ex] 
\end{tabular} 
}
\captionof{table}{Circular mean posterior values and 95\% credible intervals for the parameters in the Von Mises distribution and mixing probability are shown for the models fitted to the simulation data in Figure \ref{fig:model_fitted_SvM_plots}. Parameters with bars over them were averaged across all locations.}
\end{figure}

\begin{figure}[!h]
\captionsetup[subfigure]{justification=centering}
\centering
\begin{subfigure}{.23\textwidth}
\centering
\includegraphics[width = 1\textwidth]{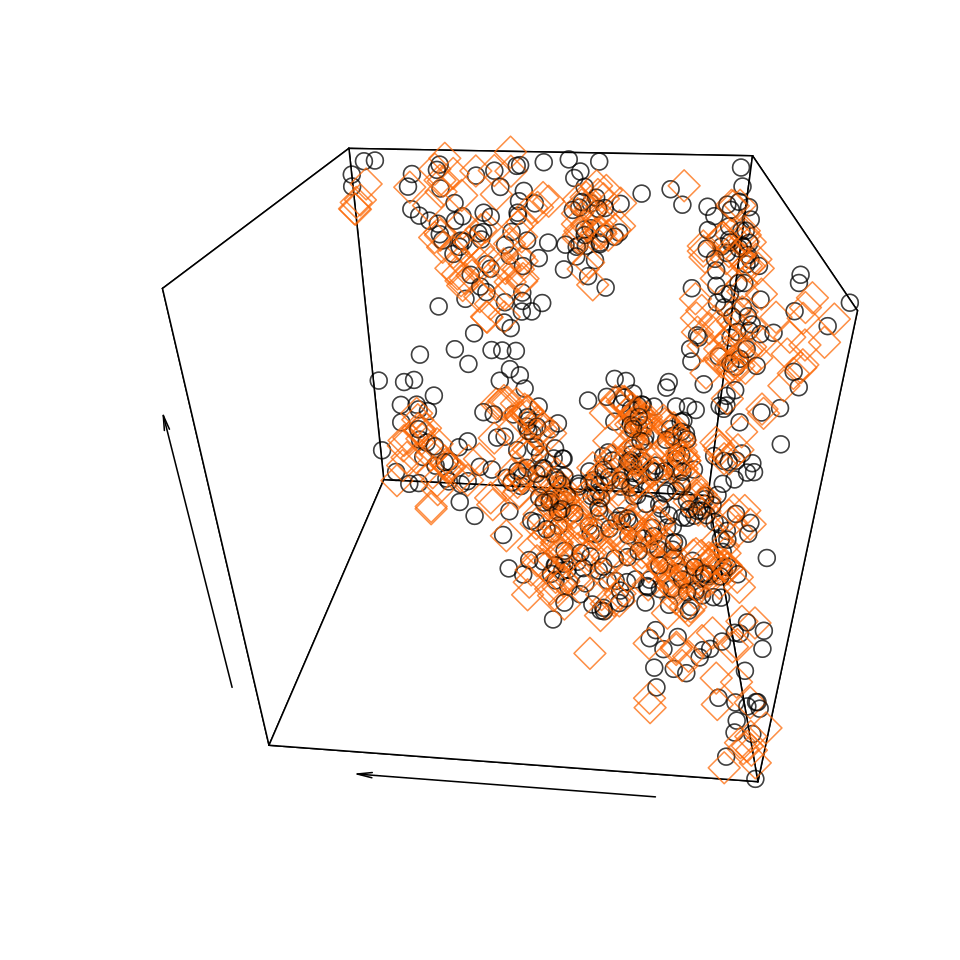}
\caption{Simulated data with\\means in orange}
\end{subfigure}
\begin{subfigure}{.23\textwidth}
\centering
\includegraphics[width = 1\textwidth]{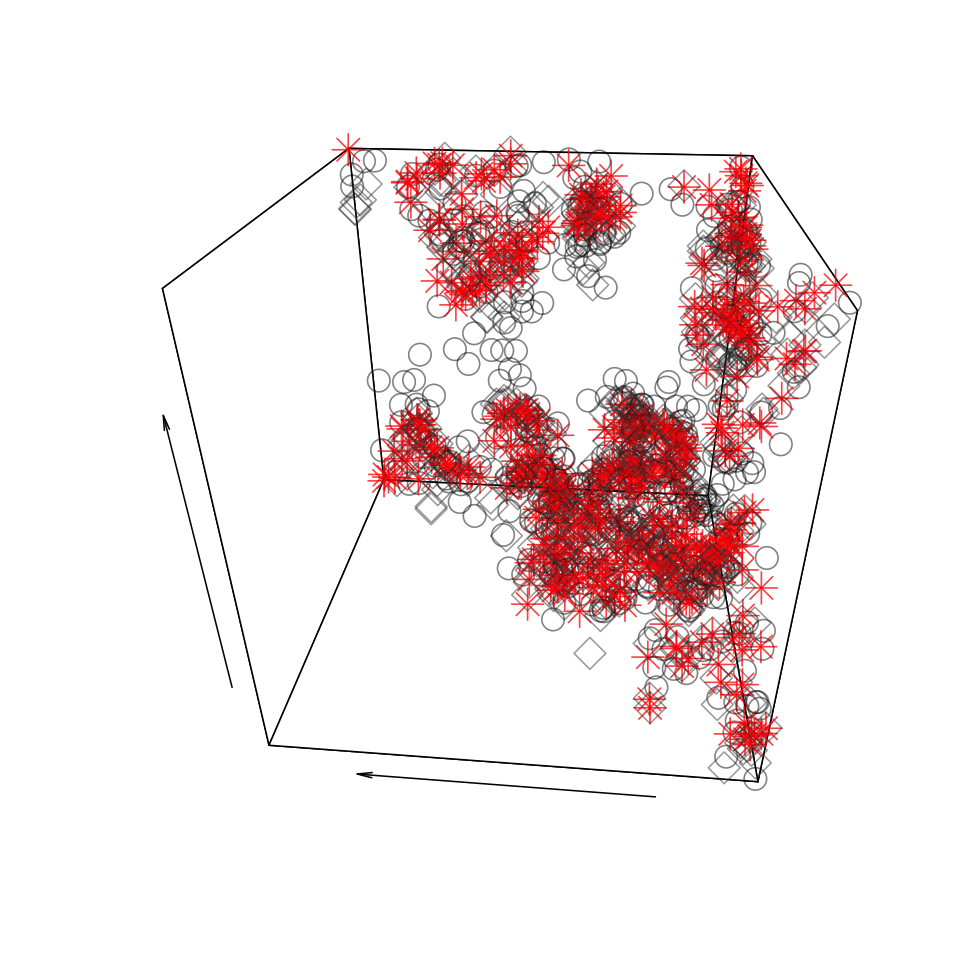}
\caption{SvM model (cf.~\eqref{model:SvM})\\fitted means in red}
\end{subfigure}
\begin{subfigure}{.23\textwidth}
\centering
\includegraphics[width = 1\textwidth]{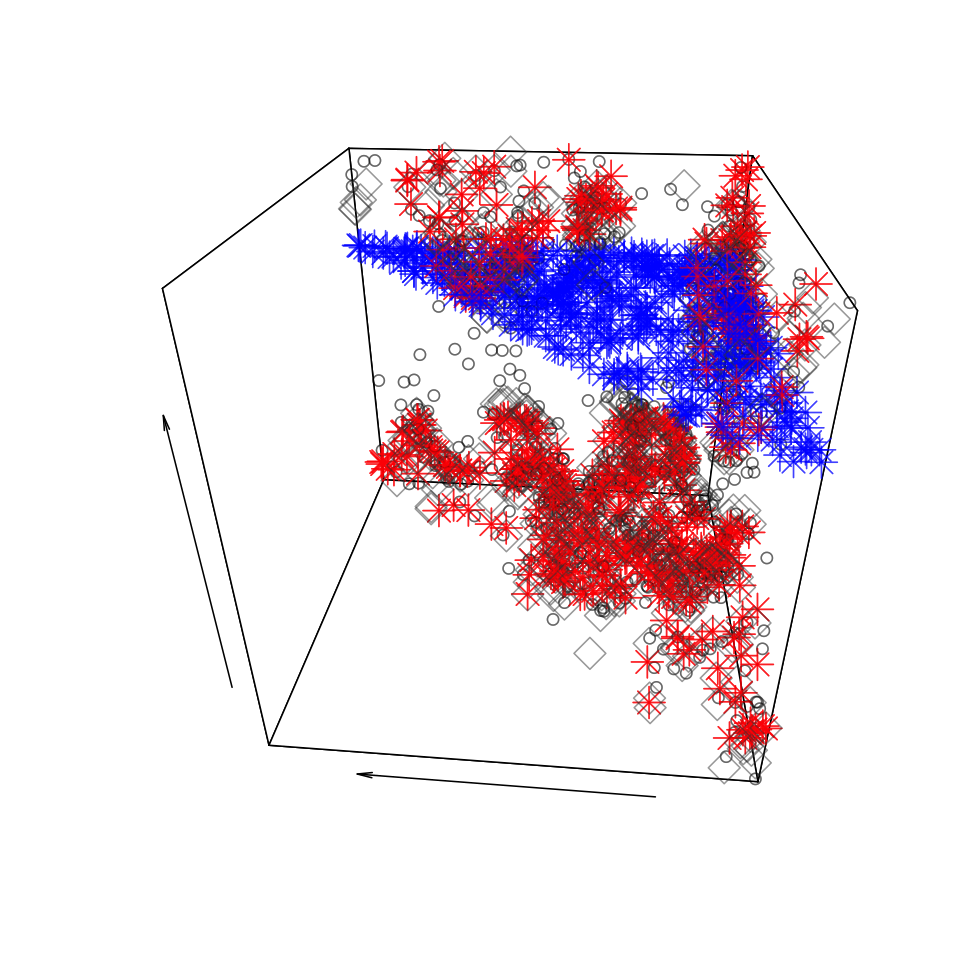}
\caption{SvM-c model (cf.~\eqref{model:SvM-c})\\fitted means in red and blue}
\end{subfigure}
\begin{subfigure}{.23\textwidth}
\centering
\includegraphics[width = 1\textwidth]{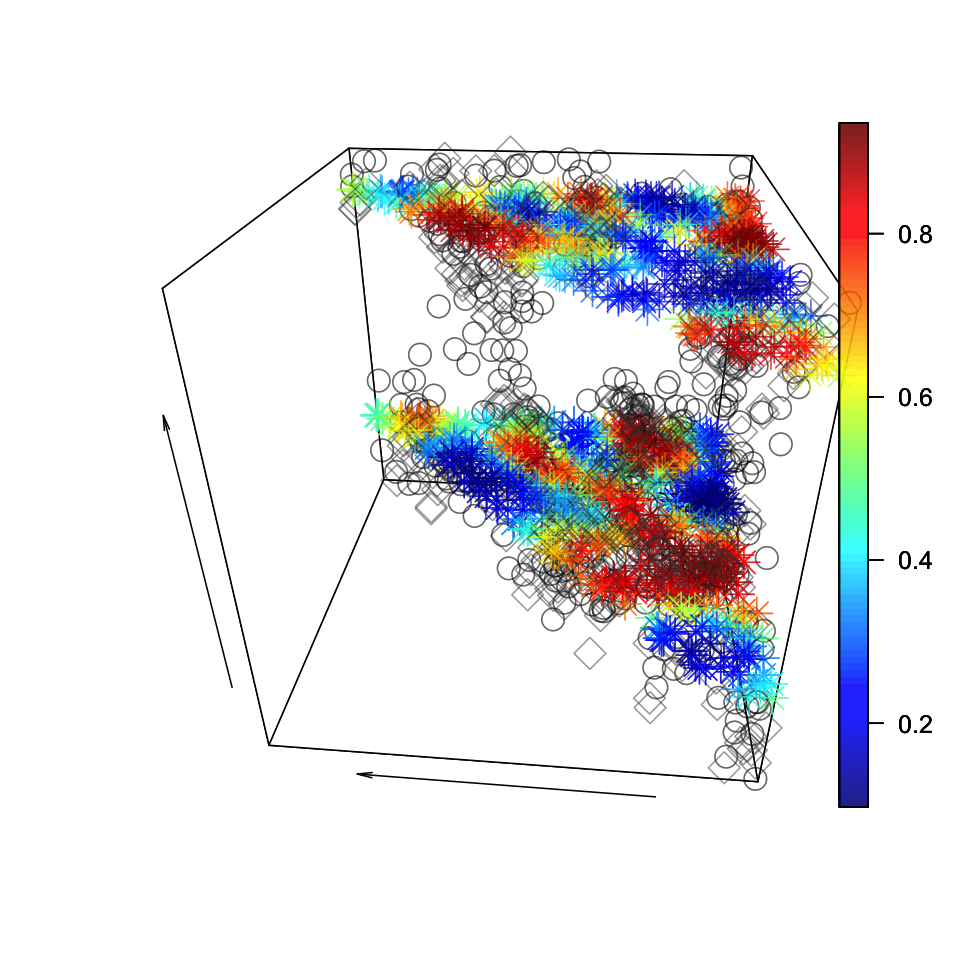}
\caption{SvM-p model (cf.~\eqref{model:SvM-p})\\fitted mean colored by $\lambda_1$}
\end{subfigure}\\
\caption{Plots of the mean component fitted by different models for observations simulated according to the $\textit{SvM}$ model in \eqref{model:SvM}. The fitted mean components are shown in colored asterisks with the simulated observations as circles and the simulated means as grey rhombuses for different scenarios. The x-y axis are the observations' location in the first two dimensions of a three dimensional simplex; the z-axis represent the observations.}
\label{fig:model_fitted_alt_SvM_plots}

\resizebox{0.95\textwidth}{!}{
\begin{tabular}{@{\extracolsep{5pt}} cccc} 
\\[-1.8ex]\hline 
\hline \\[-1.8ex] 
 & $\bm{m}$ & $\bm{\rho}$ & $\bm{\lambda}$ \\ 
\hline \\[-1.8ex] 
\textbf{Simulation} & $\overline{m}$ = 0.68 (4.17, 2.38)  & $\overline{\rho} = 2.99 (2.51, 3.45)$ & ---\\
& & &\\
\textit{SvM} cf.~\eqref{model:SvM} & $\overline{m}$ = 0.64 (6.27, 1.27) & $\overline{\rho}$ = 2.65 (2.24, 3.12) & ---\\
& & &\\
\multirow{2}{*}{\shortstack[c]{\textit{SvM-c} cf.~\eqref{model:SvM-c}}} & $\overline{m}_1$ = 0.67 (0.10, 1.24) & $\overline{\rho}_1$ = 2.66 (2.24, 3.19)  & $\lambda_1$ = 0.99 (0.95, 0.9996)\\
& $\overline{m}_2$ = 4.72 (3.52, 5.91) & $\overline{\rho}_2$ = 0.12 (3.40e-5, 225.56) & $\lambda_2$ = 0.01 (4.51e-4, 0.05\\
& & &\\
\multirow{2}{*}{\shortstack[c]{\textit{SvM-p} cf.~\eqref{model:SvM-p}}} & $m_1$ = 1.48 (1.34, 1.61) & $\rho_1$ = 2.18 (1.77, 2.64) & $\overline{\lambda_1}$ = 0.53 (0.31, 0.75)\\
& $m_2$ = 5.64 (5.46, 5.83) & $\rho_2$ = 1.60 (1.26, 1.98) & $\overline{\lambda_2}$ = 0.47 (0.25, 0.69)\\
& & &\\
\hline \\[-1.8ex] 
\end{tabular} 
}
\captionof{table}{Circular mean posterior values and 95\% credible intervals for the parameters in the Von Mises distribution and mixing probability are shown for the models fitted to the simulation data in Figure \ref{fig:model_fitted_SvM_plots}. Parameters with bars over them were averaged across all locations.}
\end{figure}

\begin{figure}[!h]
\begin{subfigure}{.23\textwidth}
\centering
\includegraphics[width = 1\textwidth]{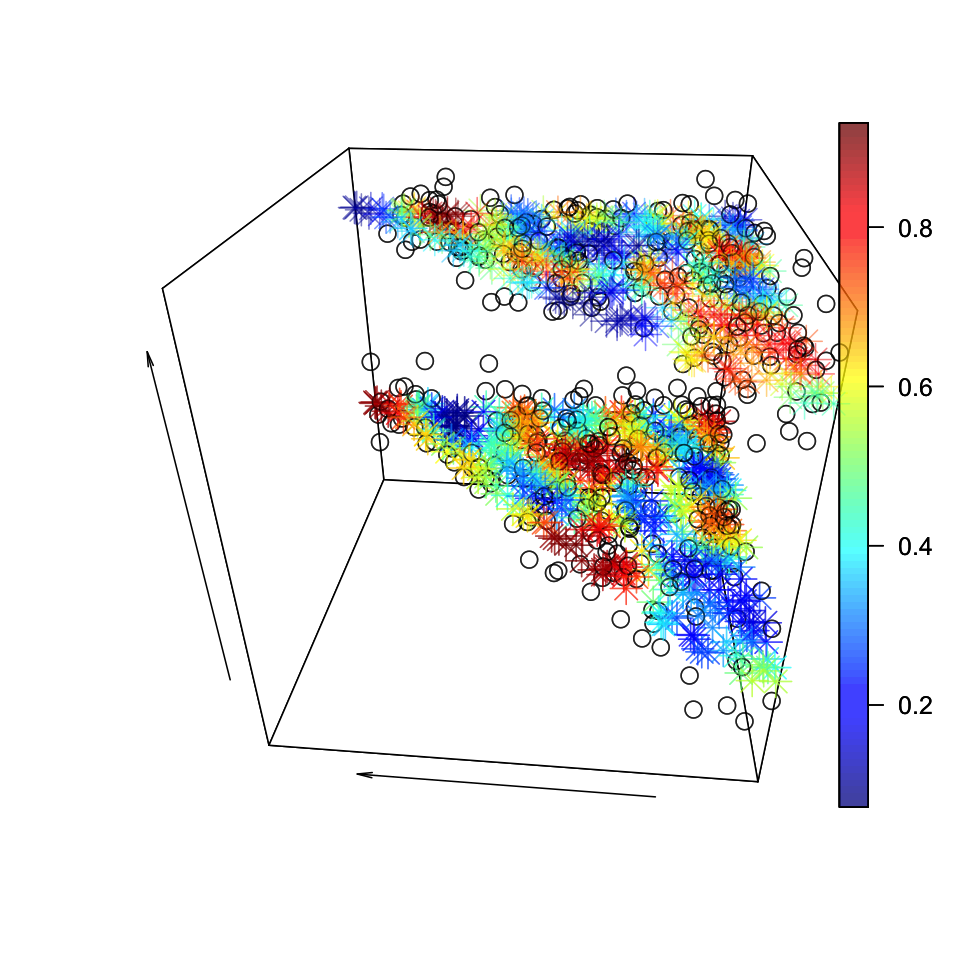}
\caption{Simulated data with\\means colored by $\lambda_1$}
\end{subfigure}
\begin{subfigure}{.23\textwidth}
\centering
\includegraphics[width = 1\textwidth]{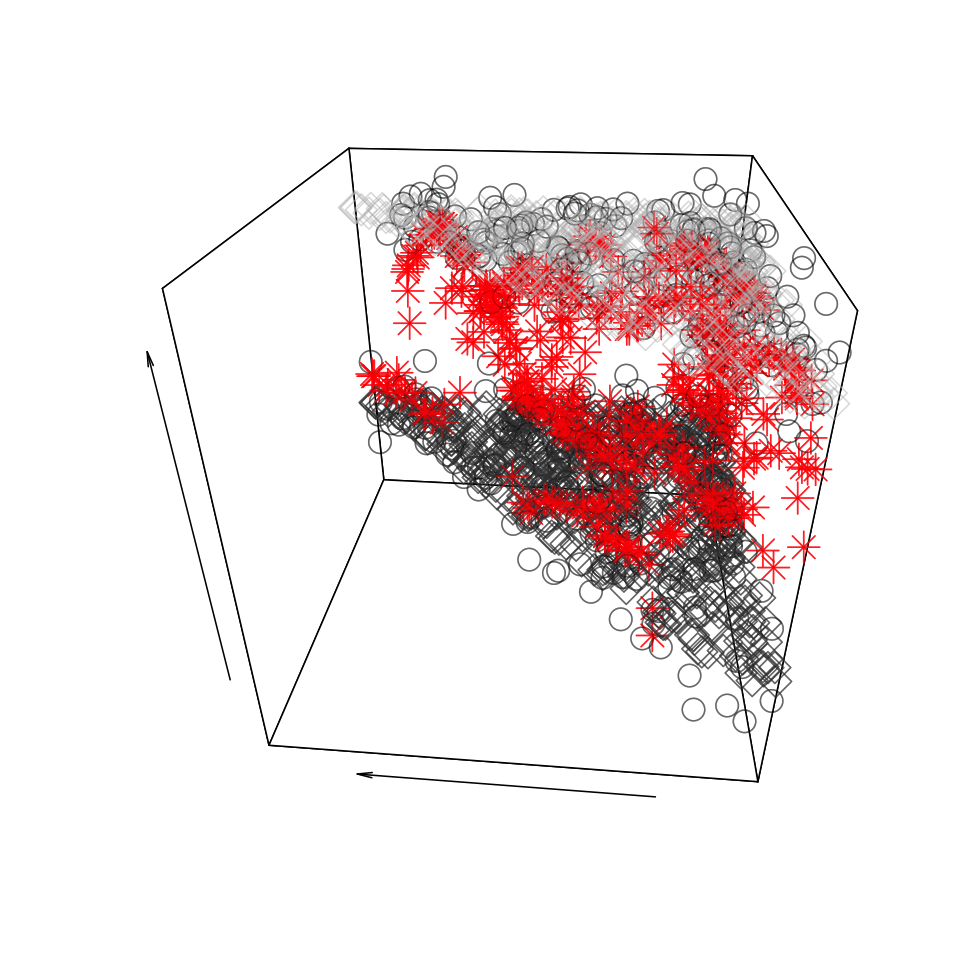}
\caption{SvM model,\\cf.~\eqref{model:SvM}}
\end{subfigure}
\begin{subfigure}{.23\textwidth}
\centering
\includegraphics[width = 1\textwidth]{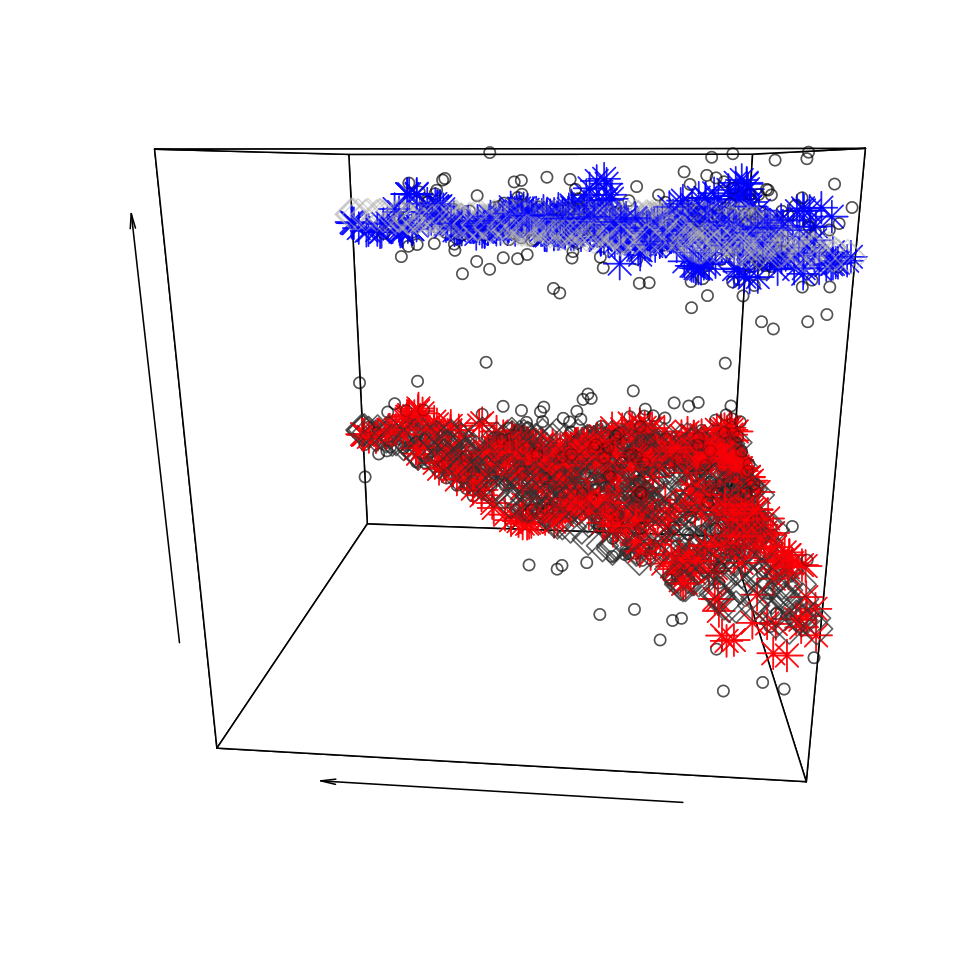}
\caption{SvM-c model,\\cf.~\eqref{model:SvM-c}}
\end{subfigure}
\begin{subfigure}{.23\textwidth}
\centering
\includegraphics[width = 1\textwidth]{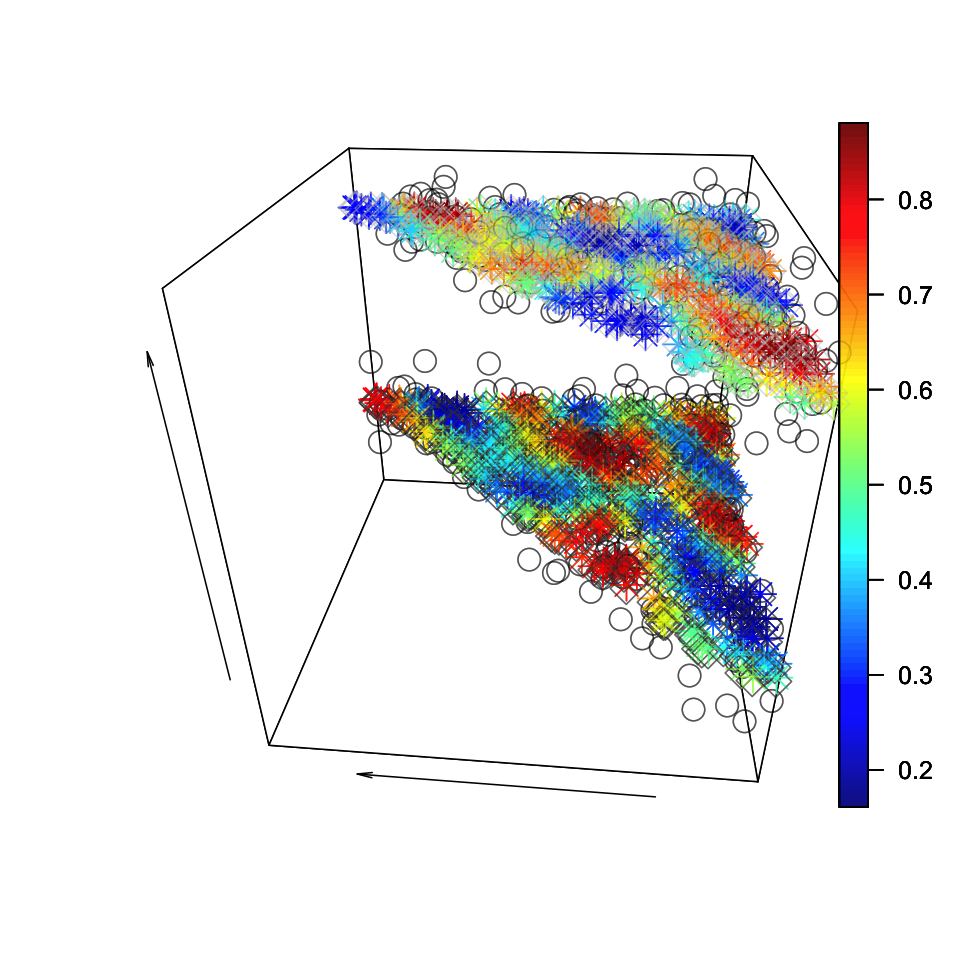}
\caption{SvM-p model,\\cf.~\eqref{model:SvM-p}}
\end{subfigure}\\
\caption{Plots of the mean component fitted by different models for observations simulated according to the $\textit{SvM-p}$ model \eqref{model:SvM} in the bottom row. The fitted mean components are shown in red asterisks with the simulated observations as circles and the simulated means as grey rhombuses for different scenarios. The x-y axis are the observations' location in the first two dimensions of a three dimensional simplex; the z-axis represent the observations.}
\label{fig:model_fitted_SvM-p_plots}

\centering
\resizebox{0.95\textwidth}{!}{
\begin{tabular}{@{\extracolsep{5pt}} cccc} 
\\[-1.8ex]\hline 
\hline \\[-1.8ex] 
 & $\bm{m}$ & $\bm{\rho}$ & $\bm{\lambda}$ \\ 
\hline \\[-1.8ex] 
\multirow{2}{*}{\textbf{Simulation}} & $m_1$ = 1.57 & $\rho_1 = 5$ & $\overline{\lambda}_1 = 0.51 (0.16, 0.91)$\\
& $m_2$ = 4.71 & $\rho_2 = 10$ & $\overline{\lambda}_2 = 0.49 (0.09, 0.84)$\\
& & &\\
\textit{SvM} cf.~\eqref{model:SvM} & $\overline{m}$ = 3.22 (1.87, 4.56) & $\overline{\rho}$ = 0.91 (0.71, 1.12) & ---\\
& & &\\
\multirow{2}{*}{\shortstack[c]{\textit{SvM-c} cf.~\eqref{model:SvM-c}}} & $\overline{m}_1$ = 1.63 (1.23, 2.02) & $\overline{\rho}_1$ = 5.98 (4.86, 7.30) & $\lambda_1$ = 0.53 (0.48, 0.57)\\
& $\overline{m}_2$ =  4.72 (4.34, 5.08) & $\overline{\rho}_2$ = 11.39 (9.02, 14.31) & $\lambda_2$ = 0.47 (0.43, 0.52)\\
& & &\\
\multirow{2}{*}{\shortstack[c]{\textit{SvM-p} cf.~\eqref{model:SvM-p}}} & $m_1$ = 1.62 (1.57, 1.68) & $\rho_1$ = 5.36 (4.55, 6.25) & $\overline{\lambda_1}$ = 0.52 (0.30, 0.73)\\
& $m_2$ = 4.71 (4.67, 4.75) & $\rho_2$ = 9.85 (8.19, 11.67) & $\overline{\lambda_2}$ = 0.48 (0.27, 0.70)\\
& & &\\
\hline \\[-1.8ex] 
\end{tabular} 
}
\captionof{table}{Circular mean posterior values and 95\% credible intervals for the parameters in the Von Mises distribution and mixing probability are shown for the models fitted to the simulation data in Figure \ref{fig:model_fitted_SvM-p_plots}. Parameters with bars over them were averaged across all locations.}
\end{figure}

\end{document}